\newcommand{\subscript}[2]{$#1 _ #2$}
\newcommand{\B}[1]{\mathbf{#1}} 
\newcommand{\pr}[1]{\left(#1\right)} 
\newcommand{\br}[1]{\left[#1\right]} 
\newcommand{\bbr}[1]{\left\{#1\right\}} 
\newcommand{\nr}[1]{\left\|#1\right\|} 
\newcommand{\joint}[1]{\Pi_{\rho}#1} 
\newcommand{\marginal}[1]{\pi_{\rho}#1}
\newcommand{\R}{\mathbb{R}}
\newcommand{\Z}{\mathbb{Z}} 
\newcommand{\btheta}{\boldsymbol{\theta}}
\newcommand{\bz}{\B{z}}
\newcommand{\bZ}{\B{Z}}
\newcommand{\bx}{\B{x}}
\newcommand{\by}{\B{y}}
\newcommand{\F}{\mathcal{F}}
\newcommand{\tA}{\tilde{A}}
\newcommand{\tbz}{\tilde{\bz}}
\newcommand{\grad}{\nabla}
\newcommand{\mrd}{\mathrm{d}}
\renewcommand{\l}{\left}
\renewcommand{\r}{\right}
\newcommand{\E}{\mathbb{E}}
\newcommand{\PP}{\mathbb{P}}
\newcommand{\econst}{\mathrm{e}}
\newcommand{\bthetastar}{\boldsymbol{\theta^\star}}
\newcommand{\inner}[2]{\ensuremath{%
\left\langle#1,#2\right\rangle%
}}
\newcommand{\eqsp}{\vspace{0ex}}
\newcommand{\dd}{\mathrm{d}}
\newcommand{\asszero}{$(A_0)$}
\newcommand{\assone}{$(A_1)$}
\newcommand{\asstwo}{$(A_2)$}
\newcommand{\assfour}{$(A_3)$}
\newcommand{\assfive}{$(A_4)$}
\newcommand{\asssix}{$(A_5)$}
\newcommand{\assseven}{$(A_6)$}
\newtheorem{assumption}{Assumption}
\newcommand\Laplace{\Delta}
\DeclarePairedDelimiterX{\infdivx}[2]{(}{)}{%
  #1\;\delimsize|\delimsize|\;#2%
}
\newcommand{\kld}[2]{\ensuremath{D_{KL}\infdivx{#1}{#2}}\xspace}
\newcommand{\thickhline}{%
    \noalign {\ifnum 0=`}\fi \hrule height 1pt
    \futurelet \reserved@a \@xhline
}
\begin{document}

\title{Efficient MCMC Sampling with Dimension-Free \\ Convergence Rate using ADMM-type Splitting}

\author{\name Maxime Vono\hspace{0.3mm}\thanks{Both authors contributed equally.} \email maxime.vono@huawei.com \\
       \addr Lagrange Mathematics and Computing Research Center, Huawei\\
       75007 Paris, France
       \AND
       \name Daniel Paulin\hspace{0.3mm}\footnotemark[1] \email paulindani@gmail.com  \\
       \addr School of Mathematics\\
       University of Edinburgh, United Kingdom 
       \AND
       \name Arnaud Doucet \email doucet@stats.ox.ac.uk  \\
       \addr Department of Statistics\\
       University of Oxford, United Kingdom
       }

\editor{}

\maketitle

\begin{abstract}%
    Performing exact Bayesian inference for complex models is computationally intractable. Markov chain Monte Carlo (MCMC) algorithms can provide reliable approximations of the posterior distribution but are expensive for large data sets and high-dimensional models. 
    A standard approach to mitigate this complexity consists in using subsampling techniques or distributing the data across a cluster. 
    However, these approaches are typically unreliable in high-dimensional scenarios. 
    We focus here on a recent alternative class of MCMC schemes exploiting a splitting strategy akin to the one used by the celebrated alternating direction method of multipliers (ADMM) optimization algorithm. 
    These methods appear to provide empirically state-of-the-art performance but their theoretical behavior in high dimension is currently unknown.
    In this paper, we propose a detailed theoretical study of one of these algorithms known as the split Gibbs sampler.
    Under regularity conditions, we establish explicit convergence rates for this scheme using Ricci curvature and coupling ideas. 
    We support our theory with numerical illustrations.
\end{abstract}

\begin{keywords}
    ADMM, approximate Bayesian inference, convergence rates, Markov chain Monte Carlo, splitting
\end{keywords}


\section{Introduction}
\label{sec:introdution}
 We are interested in performing Bayesian inference for large data sets and potentially high-dimensional models. For complex models, the posterior distribution is intractable and needs to be approximated. 
 To this end, many Markov chain Monte Carlo (MCMC) schemes  have been proposed over the past five years; see for instance \cite{bardenet2017markov} for a recent overview.

These methods can be loosely speaking divided into two groups: subsampling-based techniques and divide-and-conquer approaches. Subsampling-based approaches are MCMC techniques that only require accessing a subsample of the observations at each iteration: these include the popular stochastic gradient Langevin dynamics (SGLD) \citep{Welling2011,dubey2016variance,brosse2018promises,chatterji2018theory,baker2019}, subsampling versions of the Metropolis--Hastings algorithm \citep{Bardenet2014,Korattikara2014,bardenet2017markov,quiroz2014speeding,Cornish2019} and methods based on piecewise-deterministic MCMC schemes  \citep{bouchard2015bouncy,bierkens2016zigzag}. However, all the subsampling methods accessing $\mathcal{O}(1)$ data points at each iteration only provide reliable posterior approximations if they rely on some control variate ideas which require estimating the mode of the posterior and this posterior to be concentrated \citep{Welling2011,dubey2016variance,bardenet2017markov,brosse2018promises,chatterji2018theory,baker2019,Cornish2019}. Practically, as pointed out in \citet{bardenet2017markov,Cornish2019}, this means that such methods are of limited practical interest as they only work well in scenarios where the Bernstein-von Mises approximation of the target is excellent. Divide-and-conquer methods are techniques which consider the common scenario where the data are distributed across a cluster. These schemes run independent MCMC chains to estimate ``local'' posteriors on each node of the cluster and then recombine these ``local'' posteriors to obtain an approximation of the full posterior \citep{Wang2013,Neiswanger2014,Minsker2014,Wang2015,Scott2016,Scott2017,hasenclever2017distributed}. However, these methods often use parametric or kernel density approximations of the local posteriors so as to combine them. This can be unreliable in high-dimensional scenarios; see \cite{bardenet2017markov} and \cite{Rendell2018} for a detailed discussion.

An alternative approach to perform MCMC, amenable to a distributed implementation, has been recently introduced independently in \cite{Vono2019} and \cite{Rendell2018}; see also \cite{dai2012sampling,Chowdhury2018shepherding} and \cite{barbos2017clone} for earlier related ideas. 
It is inspired by the well-known variable splitting technique used in optimization, for instance  by quadratic penalty approaches or  the alternating direction method of multipliers (ADMM), see  \citet{Boyd2011}. 
In the sampling context, this corresponds to defining an artificial hierarchical Bayesian model where the parameter of interest is becoming a ``master'' parameter which is artificially replicated as many times as one ``splits'' the target distribution. In this context, we can then  develop MCMC schemes which alternate sampling the node parameters given the master parameter then the master parameter given the node parameters.  Experimentally, these methods appear promising but it is yet unclear how such schemes behave in high-dimensional scenarios. 
This paper aims at studying theoretically one of these samplers called split Gibbs sampler (SGS).

\textit{Contributions.} Our contributions are as follows.
\begin{itemize}
    \item We present non-asymptotic bounds on the total variation (TV) and 1-Wasserstein distances between the original posterior distribution and the distribution targeted by this class of MCMC schemes. This allows us to quantify the ``bias'' introduced by these methods and significantly sharpens and complements previous results in \cite{Vono2019_AXDA}. 
    \item Using Ricci curvature and coupling techniques, we establish explicit dimension-free convergence rates for SGS.
    Combining our bounds on the bias and convergence rates, we provide mixing time bounds with explicit dependencies with respect to (w.r.t.) the dimension of the problem, its associated condition number and the prescribed precision.
    In both 1-Wasserstein and TV distances, we show that our complexity results are competitive with those recently derived for MCMC schemes based on Langevin or Hamiltonian dynamics. 
    \item We illustrate these theoretical results on several applications, demonstrating the benefits of SGS over state-of-the-art MCMC approaches.
\end{itemize}

\textit{Notations and conventions.} We denote by $\mathcal{B}(\mathbb{R}^d)$ the Borel $\sigma$-field of $\mathbb{R}^d$.
The total variation norm between two probability measures $\mu$ and $\nu$ on $(\mathbb{R}^d,\mathcal{B}(\mathbb{R}^d))$ is defined by 
\begin{equation*}
    \nr{\mu - \nu}_{\mathrm{TV}} = \sup_{f \in \mathbb{M}(\mathbb{R}^d), \nr{f}_{\infty} \leq 1} \left|\int_{\btheta \in \mathbb{R}^d}f(\btheta)\mathrm{d}\mu(\btheta) - \int_{\btheta \in \mathbb{R}^d}f(\btheta)\mathrm{d}\nu(\btheta) \right|,
\end{equation*}
 where $\mathbb{M}(\mathbb{R}^d)$ denotes the set of all Borel measurable functions $f$ on $\mathbb{R}^d$ and $\nr{f}_{\infty} = \sup_{\btheta \in \mathbb{R}^d}|f(\btheta)|$.
Let $\mu$, $\nu$ be two probability measures on $(\R^d,\mathcal{B}(\R^d))$. 
Define the Kullback-Leibler (KL) divergence of $\mu$ from $\nu$ by
\begin{equation*}
D_{\mathrm{KL}} (\mu||\nu) = 
\begin{cases}
  \int_{\R^d} \frac{\mathrm{d} \mu}{\dd \nu}(\btheta)\log\pr{\frac{\mathrm{d} \mu}{\mathrm{d} \nu}(\btheta)}\,\mathrm{d} \nu(\btheta)\eqsp, & \text{if $\mu \ll \nu$}\\
  +\infty & \text{otherwise.}
\end{cases}  
\end{equation*} 
For $1\le p<\infty$, and a metric $w:\mathbb{R}^d\times \mathbb{R}^d\to \R$, the Wasserstein distance of order $p$ between two probability measures $\mu$ and $\nu$ on $(\mathbb{R}^d,\mathcal{B}(\mathbb{R}^d))$ is defined by
\begin{equation*}
    W_p^w(\mu,\nu) = \pr{\inf_{\pi \in \mathcal{U}(\mu,\nu)} \int_{\btheta,\btheta' \in \mathbb{R}^d}w(\btheta,\btheta')^p\mathrm{d}\pi(\btheta,\btheta')}^{1/p}\ ,
\end{equation*}
where $\mathcal{U}(\mu,\nu)$ is the set of all probability measures which admit $\mu$ and $\nu$ as marginals. For $p=\infty$, the Wasserstein distance of order $\infty$ is defined as
\begin{equation*}
    W_{\infty}^w(\mu,\nu) = \inf_{\pi \in \mathcal{U}(\mu,\nu), (X,Y)\sim \pi} \mathrm{ess}\sup w(X,Y) \ .
\end{equation*}
In the case when $w$ is the Euclidean metric, we will denote these by $W_p(\mu,\nu)$. 
For the sake of simplicity, with little abuse, we shall use the same notations for
a probability distribution and its associated probability density function.
For a Markov chain with transition kernel $\B{P}$ on $\mathbb{R}^d$ and invariant distribution $\pi$, we define the $\epsilon$-mixing time associated to a statistical distance $D$, precision $\epsilon > 0$ and initial distribution $\nu$, by
\begin{equation*}
t_{\mathrm{mix}}(\epsilon;\nu) = \min \bbr{t \geq 0 \ \big\vert \ D(\nu\B{P}^t,\pi) \leq \epsilon}\ ,
\end{equation*}
which stands for the minimum number of steps of the Markov chain such that its distribution is at most at an $\epsilon$ $D$-distance from the invariant distribution $\pi$. 
The Euclidean norm on $\mathbb{R}^d$ is denoted by $\nr{\cdot}$.
For $n \geq 1$, we refer to the set of integers between $1$ and $n$ with the notation $[n]$.
The $d$-multidimensional Gaussian probability distribution with mean $\boldsymbol{\mu}$ and covariance matrix $\B{\Sigma}$ is denoted by $\mathcal{N}(\cdot;\boldsymbol{\mu},\B{\Sigma})$.
The parabolic cylinder special function is defined, for all $d>0$ and $z \in \mathbb{R}$, by $D_{-d}(z) = \exp(-z^2/4)\Gamma(d)^{-1}\int_0^{+\infty}e^{-xz - x^2/2}x^{d-1}\mathrm{d}x$, where $\Gamma(\cdot)$ denotes the Gamma function.
For $0\leq i < j$, we use the notation $\B{u}_{i:j}$ to refer to the vector $[\B{u}_i^\top,\B{u}_{i+1}^\top,\hdots,\B{u}_{j}^\top]^\top$ built by stacking $j-i+1$ vectors ($\B{u}_k; k \in \{i,i+1,\hdots,j\}$).
For $f: \mathbb{R}^d \rightarrow \mathbb{R}$ and any  $\btheta \in \mathbb{R}^d$, we use the notations $f(\btheta)_{-} = -\min(f(\btheta),0)$ and $f(\btheta)_{+} = \max(f(\btheta),0)$.

\section{Background and Problem Formulation}
\label{sec:split_Gibbs_sampler}

This section sets up the simulation problem considered in this paper and briefly reviews the approximate Bayesian approach proposed by \cite{Vono2019} and \cite{Rendell2018}.

\subsection{Bayesian Model}
\label{subsec:problem_statement}
    
We consider the situation where one is interested in carrying out Bayesian inference about a parameter $\btheta \in \mathbb{R}^d$ based on observed data $\mathrm{D} = \{\B{x}_j,y_j\}_{j=1}^n$, where for any $j \in [n]$, $\B{x}_j$ are covariates (also called features) associated to observation $y_j$.
    We assume that the number of observations $n$ is large so that the data set $\mathrm{D}$ is partitioned into $S \in [n]$ subsets $\{\mathrm{D}_s\}_{s=1}^S$, called \emph{shards}, such that $\sqcup_{s=1}^S \mathrm{D}_s = \mathrm{D}$.
    Under this framework, the posterior distribution of interest is assumed to admit a density w.r.t. the Lebesgue measure of the form 
    \begin{align}
        \label{eq:target_density_0}
        \pi(\btheta \mid \mathrm{D}) \propto p(\btheta) \prod_{s=1}^S  \pi_s(\mathrm{D}_s \mid \btheta) \ ,
    \end{align}
    where $\{\pi_s(\mathrm{D}_s \mid \btheta)\}_{s=1}^S$ are likelihood functions associated to $\{\mathrm{D}_s\}_{s=1}^S$ and $p(\btheta)$ is the prior density for $\btheta$.
    Contrary to the majority of divide-and-conquer MCMC approaches, we do not assume that the prior factorizes across shards, that is $p(\btheta) \propto \prod_{s=1}^S p_s(\btheta)$.
    In the sequel, it will be convenient to characterize the posterior distribution via potential functions. 
    To this end, we assume that the posterior density defined in \eqref{eq:target_density_0} can be re-written as $\pi(\btheta \mid \mathrm{D}) \propto \exp(-U(\btheta))$ where  
    \begin{align}
        U(\btheta) = \sum_{i=1}^b  U_i(\B{A}_i\btheta)\ ,
    \label{eq:target_density}
    \end{align}
    for $b \in \mathbb{N} \setminus \{0\}$, some matrices $\B{A}_{i} \in \mathbb{R}^{d_i \times d}$ and potential functions $U_{i}:\mathbb{R}^{d_i}\rightarrow \mathbb{R}$ with $i\in[b]$. 
    This definition of the posterior encompasses two main scenarios that are ubiquitous in Bayesian machine learning and illustrated in Examples \ref{example:linar_regression} and \ref{example:logistic}.   
    More precisely, if $p(\btheta) \propto \prod_{s=1}^S p_s(\btheta)$, then $b=S$ and for any $i \in [b]$, $U_i(\B{A}_i\btheta) = - \log p_i(\btheta) - \log \pi_i(\mathrm{D}_i \mid \btheta)$.
    Conversely, if $p(\btheta)$ does not factorize across shards, one can set $b=S+1$ by assigning,  for $s \in [S]$, one potential $U_s$ to each likelihood contribution $\pi_s(\mathrm{D}_s \mid \btheta)$, and one potential $U_{S+1}$ to the prior $p(\btheta)$. 
    In both cases, for any $i \in [b]$, the potential $U_i$ is assumed to be dependent on the subset $\mathrm{D}_i$ of the observations; potentially $\mathrm{D}_i = \{\emptyset\}$ if $U_i$ refers to the prior.
    To simplify notation, this dependence is notationally omitted and we will denote by $\pi(\btheta)$ the posterior distribution in the rest of the paper. 
    We give hereafter two illustrative standard statistical machine learning examples that fit into the considered Bayesian framework.
\begin{example}
\label{example:linar_regression} 
Bayesian ridge linear regression. We consider the model defined by
    \begin{align*}
        &y_j \sim \mathcal{N}(\B{x}_j^\top\btheta,\sigma^2)\ , \quad \forall j \in [n]\ , \\
        &\btheta \sim \mathcal{N}\pr{\B{0}_d,\tau \B{I}_d}\ ,
    \end{align*}
    where $\btheta \in \mathbb{R}^d$ are the unknown regression parameters and $\tau > 0$ is a fixed regularization parameter.
    In this case, the posterior density writes
    $$
    \pi(\btheta) \propto \exp\pr{-\frac{1}{2\tau}\nr{\btheta}^2} \prod_{j=1}^n\exp\pr{-\frac{1}{2\sigma^2}(y_j - \B{x}_j^\top\btheta)^2}\ .
    $$
    By dividing the data set $\mathrm{D} = \{\B{x}_j,y_j\}_{j =1}^n$ into $S$ shards $\{\mathrm{D}_{s}\}_{s=1}^b$,
    the posterior density can be re-written as in \eqref{eq:target_density_0}, that is
    $$
    \pi(\btheta) \propto  \exp\pr{-\frac{1}{2\tau}\nr{\btheta}^2}\prod_{s=1}^S\exp\pr{- \frac{1}{2\sigma^2}\sum_{\{\B{x}_j,y_j\} \in \mathrm{D}_s}(y_j - \B{x}_j^\top\btheta)^2} \ .
    $$
    Under this factorization, one can characterize $\pi(\btheta)$ via \eqref{eq:target_density} by setting $b=S+1$ with the choices $\B{A}_{S+1} = \B{I}_d$, $U_{S+1} = \nr{\btheta}^2/(2\tau)$, and for any $s \in [S]$, $\B{A}_s = \B{I}_d$, $U_s(\btheta) = \sum_{\{\B{x}_j,y_j\} \in \mathrm{D}_s}(y_j - \B{x}_j^\top\btheta)^2/(2\sigma^2)$.
    In this case, note that $D_{S+1} = \{\emptyset\}$. 
    Robust linear regression also falls into this framework. In this case, for any $j \in [n]$, $y_j$ is distributed according to Student's t-distribution.
\end{example}

\begin{example}
    \label{example:logistic} 
    Bayesian logistic regression with Zellner prior. Consider the model defined by
    \begin{align}
        &y_j \sim \mathrm{Bernoulli}\pr{\sigma\pr{\B{x}_j^\top\btheta}}\ , \ \forall j \in [n]\ , \label{eq:logistic_likelihood}\\
        &\btheta \sim \mathcal{N}\pr{\B{0}_d,\B{\Sigma}}\ ,\label{eq:logistic_prior}
    \end{align}
    where $\btheta \in \mathbb{R}^d$ are the unknown regression parameters, $\sigma(u) = 1/(1+\mathrm{e}^{-u})$ is the logistic link, and $\B{\Sigma}^{-1} = \alpha \sum_{j=1}^n\B{x}_j\B{x}_j^\top$, with $\alpha = 3d/(\pi^2 n)$ which corresponds to a Zellner prior \citep{SabanesBove2011,Hanson2014}.
    In this scenario, the posterior density writes $\pi \propto \mathrm{e}^{-U}$ with
    \begin{equation*}
        U(\btheta) = \sum_{j=1}^n y_j\B{x}_j^\top\btheta + \log\br{1 + \exp\pr{-\B{x}_j^\top\btheta}} + \frac{\alpha}{2}\nr{\B{x}_j^\top\btheta}^2\ .
    \end{equation*}
    This posterior density can be re-written as in \eqref{eq:target_density} by setting $b=n$ and for  $i \in [b]$, $d_i=1$, $\B{A}_i = \B{x}_i^\top$ and $U_i(u) = y_i u + \log(1+\mathrm{e}^{-u}) + \alpha u^2/2$. 
    In this case, $\mathrm{D}_i = \{\B{x}_i,y_i\}$ for any $i \in [b]$. 
    Similarly to the logistic regression, other Bayesian generalized linear models such as multinomial logistic regression and Poisson regression also fall into this framework, see \citet{mccullagh2019generalized} for more examples.
\end{example}

Sampling from $\pi$ defined in \eqref{eq:target_density} is challenging because both the number of data $n$ and the dimension $d$ can be large.
In addition, the data set $\mathrm{D}$ might be distributed over a cluster, which complicates the inference procedure.

\subsection{Instrumental Hierarchical Bayesian Model}
\label{subsec:model_algorithm}

To address these issues, \cite{Vono2019} and \cite{Rendell2018} introduced an artificial/instrumental Bayesian hierarchical model to ease posterior computation. 
The idea is to introduce an auxiliary variable $\B{z}_{i} \in \mathbb{R}^{d_i}$ for some factors $i\in [b]$ such that, under an instrumental prior distribution, these variables are conditionally independent given $\btheta$.
Depending on the structure of the initial posterior distribution, different instrumental hierarchical models have been considered by the aforementioned authors.
In this paper, we will study the instance where $\B{z}_{i} \sim \mathcal{N}(\B{A}_i\btheta,\rho^2\B{I}_{d_i})$ for some $\rho>0$. 
Under this model, the artificial joint posterior distribution $\joint(\btheta,\B{z}_{1:b})\propto \exp(-U(\btheta,\B{z}_{1:b}))$ admits a potential function $U$ defined by
\begin{align}
  U(\btheta,\B{z}_{1:b}) = \sum_{i=1}^b U_i(\B{z}_{i}) + \dfrac{\nr{\B{z}_i-\B{A}_i\btheta}^2}{2\rho^2}\ . \label{eq:split_density_generalized}
\end{align}

Figure \ref{fig:DAG} shows the directed acyclic graph (DAG) associated to this instrumental Bayesian hierarchical model.
\begin{figure}
\centering
{\begin{tikzpicture}
  \node[circle,draw=gray,inner sep=2mm] (x) at (0,2) {$\btheta$};
  \node[circle,draw=gray,inner sep=2mm] (y_j) at (0,0) {$y_i$};
  \draw[->] (x) -- (y_j) node[pos=0.5,left] {};
  \node (dim_j) at (0,-0.7) {\small $i \in [n]$};
  \node[draw,fit=(y_j) (dim_j)] {};
\end{tikzpicture}} \hspace{1cm}
{\begin{tikzpicture}
  \node[circle,draw=gray,inner sep=2mm] (x) at (0,3) {$\btheta$};
  \node[circle,draw=gray,inner sep=2mm] (z_1) at (-2,1.5) {$\B{z}_1$};
  \node[circle,draw=gray,inner sep=2mm] (y_1) at (-2,0) {$\boldsymbol{y}_1$};
  \draw[->] (z_1) -- (y_1) node[pos=0.5,left] {};
  \node at (-1,0) {$\hdots$};
  \node[circle,draw=gray,inner sep=2mm] (z_j) at (0,1.5) {$\B{z}_i$};
  \node[circle,draw=gray,inner sep=2mm] (y_j) at (0,0) {$\boldsymbol{y}_i$};
  \draw[->] (z_j) -- (y_j) node[pos=0.5,left] {};
  \node at (1,0) {$\hdots$};
  \node[circle,draw=gray,inner sep=2mm] (z_b) at (2,1.5) {$\B{z}_b$};
  \node[circle,draw=gray,inner sep=2mm] (y_b) at (2,0) {$\boldsymbol{y}_b$};
  \draw[->] (z_b) -- (y_b) node[pos=0.5,left] {};
  \node at (1,1.4) {$\hdots$};
  \node at (-1,1.4) {$\hdots$};
  \draw[->] (x) -- node [below,midway] {} (z_1);
  \draw[->] (x) -- node [below,midway] {} (z_j);
  \draw[->] (x) -- node [below,midway] {} (z_b);
\end{tikzpicture}}
\caption{DAGs for (left) the original model \eqref{eq:target_density} and (right) the instrumental model \eqref{eq:split_density_generalized}.
For any $i \in [b]$, the notation $\boldsymbol{y}_i$ refers to the subset $\{y_j \mid y_j \in \mathrm{D}_i\}$.
Note that we do not illustrate the dependencies on covariates which can be used to define the matrices $\{\B{A}_i\}_{i=1}^b$ as in Example \ref{example:logistic}.}
\label{fig:DAG}
\vspace{-0.3cm}
\end{figure}
We could have considered an alternative prior for $\B{z}_{i}$ as in \citet{dai2012sampling,Rendell2018,Vono2019_AXDA} but this choice is motivated by the fact that the corresponding quadratic potential enjoys attractive properties such as smoothness and strong convexity. 
Conditions ensuring that $\joint(\btheta,\B{z}_{1:b})$ is a probability density function are detailed in Propositions  \ref{prop:integrability_simpler_condition} and \ref{prop:integrability_ergodicity}.

A key property of this artificial posterior distribution is that the resulting \emph{marginal posterior distribution} \begin{equation}\label{eq:pirhodef}
    \marginal(\btheta) = \int \joint(\btheta,\B{z}_{1:b})\mathrm{d}\B{z}_{1:b}
\end{equation}
converges to the posterior distribution of interest $\pi(\btheta)$ in total variation norm as $\rho\rightarrow0$. 
This follows directly from the fact that $\mathcal{N}(\B{z}_i;\B{A}_i\btheta,\rho^2\B{I}_{d_i})$ weakly converges towards the Dirac distribution $\delta_{\B{A}_i\btheta}(\B{z}_i)$ when $\rho \rightarrow 0$ by Scheff\'e's lemma \citep{Scheffe1947}.

Another key property of the marginal posterior distribution $\marginal$ is that it can be expressed in terms of convolutions in an explicit form.  
Suppose that $\min_{i\in [b]}\inf_{
\bz_i}U_i(\B{z}_i)>-\infty$ and let
\begin{align}
\nonumber U_i^{\rho}(\B{A}_i\btheta)&:=-\log \int_{\R^{d_i}}\exp\l(-U_i(\B{z}_i)-\frac{\|\B{z}_i-\B{A}_i\btheta\|^2}{2\rho^2}\r)\cdot \frac{\mathrm{d} \B{z}_i}{(2\pi \rho^2)^{d_i/2}} \text{ for }i\in [b] \ , \text{ and}\\
U^{\rho}(\btheta)&:=\sum_{i=1}^{b} U_i^{\rho}(\B{A}_i\btheta) \ .
\label{eq:Urhodef}
\end{align}
Then, Proposition \ref{prop:integrability_ergodicity} shows that $\pi_{\rho}(\btheta)\propto \exp(-U^{\rho}(\btheta))$ whenever $\exp(-U^{\rho}(\btheta))$ is integrable on $\R^d$.

The instrumental potential \eqref{eq:split_density_generalized} slightly generalizes the approach from \citet{Vono2019,Rendell2018}. 
In \cite{Rendell2018}, only the case $\B{A}_i=\B{I}_{d}$ is considered so that $\B{z}_i\in\mathbb{R}^{d_i}$ where $d_i=d$. This can be very inefficient. 
In many applications, we can indeed define auxiliary variables $\B{z}_i$ taking values in $\mathbb{R}^{d_i}$ where $d_i\ll d$. 
For instance, in the logistic regression example presented in Example \ref{example:logistic}, we have $d_i=1$ for $i\in[n]$ while $d$ can be large. 
Hence, simulation from $\joint(\B{z}_i|\btheta)$ is expected to be much cheaper. 
Efficient sampling from such conditionals is a key ingredient to SGS as described in the next section.

\subsection{Split Gibbs Sampler}
\label{subsec:algorithm}

The main benefit of working with the artificial target distribution $\joint(\btheta,\B{z}_{1:b})$ defined by \eqref{eq:split_density_generalized} instead of $\pi(\btheta)$ is the fact
that, under $\joint$, the conditional distribution of the auxiliary variables $\B{z}_{1:b}$ given $\btheta$ factorizes across $i\in[b]$, that is $\joint(\B{z}_{1:b}|\btheta) = \prod_{i=1}^b \joint(\B{z}_i|\btheta)$.
Hence these simulation steps can be performed in parallel.
This suggests using a Gibbs sampler to sample from $\joint(\btheta,\B{z}_{1:b})$. 
The resulting so-called split Gibbs sampler is described in Algorithm \label{eq:conditionalunderartificialtargetz}\ref{algo:Gibbs}.
Simple conditions ensuring the ergodicity of SGS are given in \ref{appendix:integrability_ergodicity} 
In the following paragraphs, we detail such conditional sampling problems and discuss the applicability of SGS.
\renewcommand{\baselinestretch}{0.8}
\begin{algorithm}
    \caption{Split Gibbs Sampler (SGS)}
    \label{algo:Gibbs}
     \SetKwInOut{Input}{Input}
     \Input{Potentials $\{U_{i}\}_{i\in [b]}$, penalty parameter $\rho$, initialization $\btheta^{[0]}$ and nb. of iterations $T$.}
   \For{$t \leftarrow 1$ \KwTo $T$}{%
   \For{$i\gets1$ \KwTo $b$}{
   $\B{z}_i^{[t]} \sim  \joint(\B{z}_i|\btheta^{[t-1]})$ (see Equation \ref{eq:conditionalunderartificialtargetzi})
   }
   $\btheta^{[t]} \sim  \joint(\btheta|\B{z}_{1:b}^{[t]})$ (see Equation \ref{eq:conditionalunderartificialtargettheta})
   }
\end{algorithm}
\renewcommand{\baselinestretch}{1.3}

\subsubsection{Sampling the Auxiliary Variables}
\label{subsubsec:sampling_zi}

As emphasized previously, SGS is an attractive sampler since the auxiliary variables $\{\bz_i\}_{i=1}^b$ can be sampled in parallel given $\btheta$ from the conditional distributions
\begin{align}
\label{eq:conditionalunderartificialtargetzi}
  \quad \joint(\B{z}_i|\btheta)\propto \exp\pr{-U_i(\B{z}_i)-\frac{1}{2\rho^2}\nr{\B{z}_i-\B{A}_i\btheta}^2}\ .
\end{align}
Additionally each conditional $\joint(\B{z}_i|\btheta,\B{y})$ only depends on $\B{y} = \{y_j\}_{j=1}^n$ through the subset of observations $\mathrm{D}_i$, see Figure \ref{fig:DAG}.
This is particularly interesting in scenarios where observations $\by$ are distributed over a set of nodes within a cluster such that each node involves the subset $\boldsymbol{y}_i$, see the work by \cite{Rendell2018} for more details.
Not only parallel and possibly distributed sampling from \eqref{eq:conditionalunderartificialtargetzi} is possible but this conditional distribution is far simpler than the original target distribution \eqref{eq:target_density}.
Indeed, while the latter involves a composite potential function $U$ with matrices $\{\B{A}_i\}_{i=1}^b$ acting on $\btheta$, \eqref{eq:conditionalunderartificialtargetzi} only involves a single potential $U_i$ and an isotropic Gaussian term without any matrix acting on $\bz_i$.
Hence, sampling from \eqref{eq:conditionalunderartificialtargetzi} is expected to be easier and cheaper.

In the literature, sampling from this conditional distribution has been performed via two main approaches: exact sampling and Metropolis-Hastings schemes.
For example, the authors in \citet{Vono2019} considered a linear Gaussian inverse problem where \eqref{eq:conditionalunderartificialtargetzi} was a Gaussian distribution. Apart from the Gaussian case, exact and efficient sampling is for instance possible when considering generalized (non-)linear models and Gaussian prior distributions for $\btheta$.
Similarly to the Bayesian logistic regression example presented in Example \ref{example:logistic}, one can indeed assign one potential per univariate observation $y_i$ leading to univariate potentials $\{U_i\}_{i=1}^n$.
Hence, in this case one can sample from $\joint(\bz_i|\btheta,y_i)$ for $i \in [n]$ by using adaptive rejection sampling \citep{Gilks1992,Martino2011}.

When exact sampling was not possible in practice, the authors in \citet{Rendell2018} considered a Metropolis-Hastings scheme to sample from $\Pi_{\rho}(\bz_i|\btheta)$.
This defines a Metropolis-within-SGS scheme which can be shown to admit $\Pi_{\rho}$ as stationary distribution under mild assumptions.

In this paper, we are interested in providing explicit and non-asymptotic theoretical convergence guarantees for Algorithm  \ref{algo:Gibbs}.
Since no explicit convergence result exists for special instances of Algorithm \ref{algo:Gibbs}, we choose to focus on the simplest scenario where exact sampling from \eqref{eq:conditionalunderartificialtargetzi} is considered.
One of the aim of our theoretical analysis is to show that, under this exact sampling assumption, we are able to sample efficiently from a close approximation of $\pi$ in high-dimensional settings involving a large number of data.
To this end, we have to ensure that each conditional sampling step involved in Algorithm \ref{algo:Gibbs} can be performed efficiently. This is established in Proposition \ref{prop:rejectionsamplingcomplexity} below which shows that, if $\rho$ is sufficiently small, sampling $\bz_i$ given $\btheta$ can be performed using rejection sampling with $\mathcal{O}(1)$ expected evaluations of $U_i$ and its gradient.

\begin{proposition}[Complexity of rejection sampling]
\label{prop:rejectionsamplingcomplexity}
For any $i \in [b]$, suppose that $U_i$ is $M_i$-gradient Lipschitz for some $M_i > 0$ and that $U_i$ is $m_i$-strongly convex for some $m_i\ge 0$ (possibly zero). 
Let \[V_i(\bz_i):=U_i(\bz_i)+\frac{\|\B{A}_i\btheta-\bz_i\|^2}{2\rho^2},\]
$\bz_i^*(\btheta)$ be the unique minimizer of $V_i$, and $\tbz_i(\btheta)$ be another point (an approximation of $\bz_i^*(\btheta)$). We let 
\begin{equation*}\label{eq:tAidef}
    \tA_i=\frac{1}{\rho^2}+m_i+\frac{\|\grad V_i(\tbz_i(\btheta))\|^2}{2d_i}-\sqrt{\frac{\|\grad V_i(\tbz_i(\btheta))\|^4}{4d_i^2} + \frac{\l(1/\rho^2+m_i\r)\|\grad V_i(\tbz_i(\btheta))\|^2}{d_i}}\ ,
\end{equation*}
and set $\nu_{\btheta}(\bz_i):=\mathcal{N}(\bz_i;\tbz_i(\btheta),(\tA_i)^{-1}\cdot \B{I}_{d_i})$. 

Suppose that we take samples $\bZ_1,\bZ_2,\ldots$ from $\nu_{\btheta}$, and accept them with probability
\begin{align*}\PP(\bZ_j\text{ is accepted})=
\exp\left(-\frac{\|\nabla V_i(\tbz_i(\btheta))\|^2}{2(1/\rho^2+m_i-\tilde{A}_i)} - [V_i(\bZ_j)-V_i(\tbz_i(\btheta))] +\frac{\tilde{A}_i \|\bZ_j-\tbz_i(\btheta)\|^2}{2}\right).
\end{align*}
Then, these accepted samples are distributed according to $\joint(\bz_i|\btheta)$. Moreover, the expected number of samples taken until one is accepted is equal to 
\begin{equation}\label{eq:cor2Eidef}E_i:=\l(\frac{1/\rho^2+M_i}{\tilde{A}_i}\r)^{d_i/2}\cdot \exp\l[\frac{\|\grad V_i(\tbz_i(\btheta))\|^2}{2}\l(\frac{1}{1/\rho^2+m_i-\tA_i}-\frac{1}{1/\rho^2+M_i}\r)\r] \ ,
\end{equation}
which is less than or equal to 2 if 
\begin{equation}\label{eq:lessthan2bnd}\rho^2(2d_i (M_i-m_i)-m_i)\le 1\quad\text{ and }\quad \|\grad V_i(\tbz_i(\btheta))\|\le \frac{2}{7}\cdot \frac{\sqrt{1/\rho^2+m_i}}{\sqrt{d_i}}\ .
\end{equation}
\end{proposition}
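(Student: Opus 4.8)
The plan is to read the scheme as a textbook accept--reject sampler with a Gaussian proposal, then optimize and bound its acceptance rate. Write $\alpha_i:=1/\rho^2+m_i$ and $\beta_i:=1/\rho^2+M_i$; by Assumptions \asstwo\ and \assfour\ plus the quadratic penalty (which adds $1/\rho^2$ to both the strong-convexity and smoothness constants), $V_i$ is $\alpha_i$-strongly convex and $\beta_i$-smooth. Abbreviate $\B{g}:=\grad V_i(\tbz_i(\btheta))$, $s:=\nr{\B{g}}^2$, $c:=s/(2d_i)$, and note $0<\tA_i<\alpha_i$ (the upper bound since $\sqrt{c^2+2\alpha_i c}>c$). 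To establish correctness I first check the displayed acceptance probability lies in $[0,1]$: rewriting it as $\exp\pr{-[V_i(\bz)-V_i(\tbz_i)]+\frac{\tA_i}{2}\nr{\bz-\tbz_i}^2-\frac{s}{2(\alpha_i-\tA_i)}}$, it suffices that $V_i(\bz)-V_i(\tbz_i)-\frac{\tA_i}{2}\nr{\bz-\tbz_i}^2\ge -\frac{s}{2(\alpha_i-\tA_i)}$ for all $\bz$, which is exactly the $\alpha_i$-strong-convexity lower bound $V_i(\bz)\ge V_i(\tbz_i)+\inner{\B{g}}{\bz-\tbz_i}+\frac{\alpha_i}{2}\nr{\bz-\tbz_i}^2$ minimized over $\bz-\tbz_i$ (the minimum being $-s/(2(\alpha_i-\tA_i))$, finite because $\tA_i<\alpha_i$). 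Standard rejection-sampling theory then gives that accepted draws follow $\joint(\bz_i\mid\btheta)\propto\exp(-V_i)$.

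For the expected number of proposals, the trial count before acceptance is geometric, with mean equal to the ratio $\hat C/Z_i$ of the envelope constant $\hat C=(2\pi/\tA_i)^{d_i/2}\exp\pr{-V_i(\tbz_i)+s/(2(\alpha_i-\tA_i))}$ (read off from Step~1) to the target normalizer $Z_i=\int\exp(-V_i)$. To turn this into the closed form of \eqref{eq:cor2Eidef} I lower-bound $Z_i$ by smoothness: $V_i(\bz)\le V_i(\bz_i^*)+\frac{\beta_i}{2}\nr{\bz-\bz_i^*}^2$ gives $Z_i\ge(2\pi/\beta_i)^{d_i/2}\exp(-V_i(\bz_i^*))$, and the smoothness gradient bound at the minimizer, $V_i(\bz_i^*)\le V_i(\tbz_i)-s/(2\beta_i)$, removes $\bz_i^*$. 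Substituting reproduces the bound $E_i$.

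The crux is $E_i\le2$, and the key observation is that $\tA_i$ is precisely the minimizer of $E_i$ over the proposal precision: differentiating $\log E_i$ yields the stationarity equation $d_i(\alpha_i-\tA_i)^2=s\,\tA_i$, and one checks the stated $\tA_i$ is its root in $(0,\alpha_i)$, i.e. $(\alpha_i-\tA_i)^2=(s/d_i)\,\tA_i$. I then split
\[
\log E_i=\frac{d_i}{2}\log\frac{\beta_i}{\alpha_i}+\frac{d_i}{2}\log\frac{\alpha_i}{\tA_i}+\frac{s}{2}\pr{\frac{1}{\alpha_i-\tA_i}-\frac{1}{\beta_i}}.
\]
Using $\log(1+x)\le x$, the first term is at most $\frac{d_i}{2}\cdot\frac{\rho^2(M_i-m_i)}{1+\rho^2 m_i}$, which the left inequality in \eqref{eq:lessthan2bnd} forces to be $\le 1/4$. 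For the remaining two I set $\epsilon:=s/(2d_i\alpha_i)$, so the right inequality in \eqref{eq:lessthan2bnd} gives $\epsilon\le 2/(49d_i^2)$, write $\tA_i/\alpha_i=1+\epsilon-\sqrt{\epsilon^2+2\epsilon}$, and use the stationarity relation to reduce both terms to $\frac{\sqrt{d_i}\,\nr{\B{g}}}{2\sqrt{\alpha_i}}$ at leading order, each then bounded by $1/7$ via \eqref{eq:lessthan2bnd}. Summed, $\log E_i$ is, to leading order, at most $1/4+1/7+1/7=15/28<\log2$.

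The routine parts are the correctness check and the expected-count computation. The main obstacle is making the last step rigorous rather than leading-order: one must control $\tA_i/\alpha_i=1+\epsilon-\sqrt{\epsilon^2+2\epsilon}$ and the square-root stationarity relation with explicit remainder bounds, uniformly over $0\le\epsilon\le 2/(49d_i^2)$ (the worst case being $d_i=1$), and verify that the specific constant $2/7$---together with the factor $2d_i$ in the first condition---is exactly what keeps the three contributions summing below $\log2\approx0.693$. I expect the constant $2/7$ to be essentially tight, and the bookkeeping of the higher-order square-root corrections to be the only genuinely delicate computation.
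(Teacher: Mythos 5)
Your proposal is, in structure and in every key computation, the paper's own proof (Appendix~\ref{sec:RS}): the Gaussian envelope established from the $\l(1/\rho^2+m_i\r)$-strong convexity of $V_i$, the expected trial count as envelope constant over normalizer, the lower bound on $Z_i$ via $\beta_i$-smoothness together with $V_i(\bz_i^*)\le V_i(\tbz_i)-\nr{\grad V_i(\tbz_i)}^2/(2\beta_i)$, the identification of $\tA_i$ as the minimizer of $E_i$ through the stationarity equation $d_i(\alpha_i-\tA_i)^2=\nr{\grad V_i(\tbz_i)}^2\,\tA_i$, and the identical three-term split of $\log E_i$ with the first term forced below $1/4$ by the left inequality in \eqref{eq:lessthan2bnd}. (One side remark: both you and the paper's lemma actually prove the expected count is \emph{at most} $E_i$; the ``equal to'' in the proposition statement is a wording slip, and your reading is the correct one.)

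The only place you diverge is the final numerical step, and there your leading-order accounting is slightly off, though fixable. Writing $G=\nr{\grad V_i(\tbz_i(\btheta))}$ and $c=\frac{G^2/(2d_i)}{\alpha_i}$: for the term $\frac{d_i}{2}\log(\alpha_i/\tA_i)$ no remainder analysis is needed, because $\log\l(\frac{1}{1+c-\sqrt{c^2+2c}}\r)=\mathrm{arccosh}(1+c)\le \sqrt{2c}$ holds \emph{exactly} for all $c\ge 0$ (since $\cosh\sqrt{2c}\ge 1+c$), giving precisely $\frac{\sqrt{d_i}\,G}{2\sqrt{\alpha_i}}\le \frac17$ — this is the inequality the paper invokes. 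For the term $\frac{G^2}{2(\alpha_i-\tA_i)}$, however, your stationarity relation gives the exact value $\frac{\sqrt{d_i}\,G}{2\sqrt{\tA_i}}$, which is strictly \emph{larger} than your claimed $\frac{\sqrt{d_i}\,G}{2\sqrt{\alpha_i}}$ because $\tA_i<\alpha_i$; so ``each term $\le 1/7$'' is not literally true. It closes easily: under \eqref{eq:lessthan2bnd} one has $\sqrt{2c}\le \frac{2}{7d_i}\le\frac27$, hence $\tA_i/\alpha_i\ge 1-\sqrt{2c}\ge \frac57$, and the term is at most $\frac17\sqrt{7/5}<0.17$, giving a total below $\frac14+\frac17+0.17<\log 2$. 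The paper instead uses the cruder bound $\frac{1}{\sqrt{1+x}-1}\le\frac{2}{\sqrt{x}}$ for $x\ge 2$ (valid here since $G\le\sqrt{2d_i\alpha_i}$), which bounds this term by $\frac{\sqrt{d_i}\,G}{\sqrt{\alpha_i}}\le\frac27$ and discards the helpful $-\frac{G^2}{2\beta_i}$ correction, so its budget is $\frac14+\frac37\approx 0.679<\log 2\approx 0.693$; this confirms your suspicion that $2/7$ is essentially tight under the paper's bounding, while your sharper treatment of the optimal $\tA_i$ actually leaves more slack ($\approx 0.56$) than you feared.
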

\begin{proof}
    The proof is postponed to \ref{sec:RS}
\end{proof}
\begin{remark}
The choice of the approximate minimizer  $\tbz_i(\btheta)$ that we are using in our implementation is built via a few steps of gradient descent started from $\tbz_i^{[0]}(\btheta)=\B{A}_i\btheta$, with step size $\frac{1}{1/\rho^2+M_i}$, that is for $j\ge 1$,
\[\tbz_i^{[j]}(\btheta)=\tbz_i^{[j-1]}(\btheta)-\grad V_i(\tbz_i^{[j-1]}(\btheta))\cdot \frac{1}{1/\rho^2+M_i}\ .\]
We stop once the condition $\|\grad V_i(\tbz_i^{[j]}(\btheta))\|\le \frac{2}{7}\cdot \frac{\sqrt{1/\rho^2+m_i}}{\sqrt{d_i}}$ is satisfied, and set $\tbz_i$ to $\tbz_i^{[j]}$. 
Since the condition number of the function $V_i$ equals 
$\kappa_i=\frac{1+\rho^2 M_i}{1+\rho^2 m_i}$, and the gradient descent decreases the norm of the gradient by a factor of $1-1/\kappa_i$ at each iteration, it follows that we need at most 
\[\l\lceil \frac{\log \|\grad V_i(\B{A}_i\btheta)\|-\log\l(\frac{2}{7}\cdot \frac{\sqrt{1/\rho^2+m_i}}{\sqrt{d_i}}\r)}{\log(1/(1-1/\kappa_i))} \r\rceil\]
iterations before stopping.
\end{remark}

Proposition \ref{prop:rejectionsamplingcomplexity} shows that if $\rho^2 \le 1/(2 d_i M_i)$, then one can use rejection sampling to sample efficiently from \eqref{eq:conditionalunderartificialtargetzi}.
We would like to emphasize that this condition on $\rho^2$ is not limiting if our goal is to sample from a close approximation of $\pi$ using Algorithm \ref{algo:Gibbs}.
Indeed, it follows from Propositions \ref{prop:IUUrho} and \ref{proposition:2} that the bias between $\pi_{\rho}$ defined in \eqref{eq:pirhodef} and $\pi$ in total variation is of the order $\mathcal{O}(\rho^2)\sum_{i=1}^b d_i M_i$.
Hence, if we want to ensure that the bias in total variation is at most $\epsilon$, for $\epsilon >0$, $\rho^2$ has to be chosen of the order $\mathcal{O}(\epsilon)/(\sum_{i=1}^b d_i M_i)$ which is more restrictive than \eqref{eq:lessthan2bnd} in Proposition \ref{prop:rejectionsamplingcomplexity}.

\subsubsection{Sampling the Master Parameter}
\label{subsubsec:sampling_theta}

Regarding the master parameter $\btheta$, it follows from elementary calculations that the conditional distribution of $\btheta$ given $\B{z}_{1:b}$ is Gaussian, that is 
\begin{align}
\label{eq:conditionalunderartificialtargettheta}
  \quad \joint(\btheta|\B{z}_{1:b})=\mathcal{N}(\boldsymbol{\mu}_{\btheta}(\B{z}_{1:b}),\B{\Sigma}_{\btheta})\ ,
\end{align}
with $\B{\Sigma}_{\btheta} = \rho^2 (\sum_{i=1}^b\B{A}_i^\top\B{A}_i)^{-1}$ and $\boldsymbol{\mu}_{\btheta}(\B{z}_{1:b}) = (\sum_{i=1}^b\B{A}_i^\top\B{A}_i)^{-1} \sum_{i=1}^b\B{A}_i^\top\B{z}_i$. 
To ensure that this normal distribution is non-degenerate, the block matrix $[\B{A}_1^\top \hdots \B{A}_b^\top]$ must have full row rank. 
The matrix $\B{\Sigma}_{\btheta}$ is constant across iterations so its Cholesky decomposition, necessary to sample from $\eqref{eq:conditionalunderartificialtargettheta}$, can be pre-computed in a preliminary step.
In cases where the cost of Cholesky decomposition becomes prohibitive (for instance, in high-dimensional scenarios), a lot of methods have been proposed to sample exactly or approximately from a given Gaussian distribution \citep{Vono_2020_Gaussian}.
For instance, we could use samplers inspired from numerical linear algebra such as conjugate-gradient and Lanczos samplers \citep{Ilic2004,Parker2012,Chow2014}. 

\subsection{Connections with Optimization Methods}

\label{subsec:connections_quadratic_penalty}

The SGS whose main steps are described in Algorithm \ref{algo:Gibbs} can be related to common optimization approaches.
More precisely, it can be seen as the stochastic counterpart of alternating minimization (AM) algorithms based on the classical quadratic penalty method \citep[Chapter 7]{NoceWrig06}.
Instead of minimizing a given composite objective function, these algorithms transform this unconstrained minimization problem into a constrained one via a so-called variable splitting technique.
This constraint is then relaxed by adding a ``seemingly naive'' quadratic term to the initial objective function before performing alternating minimization.
In the sequel, we detail such an optimization approach and draw connections between the latter and Algorithm \ref{algo:Gibbs}.

\textit{Quadratic penalty method.} We consider the maximum a posteriori estimation problem under the posterior distribution $\pi$ in \eqref{eq:target_density}, that is
\begin{align}
	\min_{\btheta \in \mathbb{R}^d} \sum_{i=1}^b U_i(\B{A}_i\btheta)\ . \label{eq:joint_minimization_1}
\end{align}
Similarly to direct sampling from $\pi$, solving directly this minimization problem might be computationally demanding because the objective function is a sum of $b$ composite terms, the presence of linear operators acting on $\btheta$, non-differentiability or a possible distributed architecture.
To bypass these issues, some authors \citep{Wang2008,Afonso2010,Leeuwen_2015} proposed to build on variable splitting by introducing a set of auxiliary variables $\{\bz_i\}_{i \in [b]}$ to reformulate \eqref{eq:joint_minimization_1} into the constrained minimization problem
\begin{align*}
	\begin{split}
	&\min_{\btheta \in \mathbb{R}^d, \bz_{1} \in \mathbb{R}^{d_1},\hdots, \bz_{b} \in \mathbb{R}^{d_b}} \sum_{i=1}^b U_i(\B{z}_i) \\
	&\text{subject to} \quad \B{z}_i = \B{A}_i\btheta, i \in [b]\ . 
	\end{split}
\end{align*}
The constraint $\B{z}_i = \B{A}_i\btheta$ is then relaxed by adding a quadratic penalty term in the objective function.
This yields the approximate joint minimization problem
\begin{align*}
	\min_{\btheta \in \mathbb{R}^d, \bz_{1} \in \mathbb{R}^{d_1},\hdots, \bz_{b} \in \mathbb{R}^{d_b}} U(\btheta,\bz_{1:b}) \coloneqq \sum_{i=1}^b U_i(\B{z}_i) + \dfrac{\nr{\B{z}_i - \B{A}_i\btheta}^2}{2\rho^2}\ .
\end{align*}
This optimization problem can be solved by alternating minimization \citep{Beck2015}. 
For fixed $\btheta \coloneqq \btheta^{[t-1]}$, one first minimizes $U(\btheta,\bz_{1:b})$ w.r.t. $\bz_i$ for each factor $i \in [b]$ before minimizing, for fixed $\bz_{1:b} \coloneqq \bz_{1:b}^{[t]}$, $U(\btheta,\bz_{1:b})$ w.r.t. $\btheta$.
Similarly to SGS and at the price of an approximation, the main benefit of this approach is that the minimization problems w.r.t. each auxiliary variable now only involve the sum of a single potential $U_i$ without any operator and a quadratic term.

\textit{SGS and quadratic penalty methods.}
Interestingly, these AM steps stand for the deterministic counterpart of the conditional sampling steps in Algorithm \ref{algo:Gibbs}.
Indeed, instead of drawing a random variable following each conditional, these minimization steps only find the mode associated to each conditional probability distribution and can be related to iterated conditional modes in image processing \citep{Besag1986}.
This shows another interesting link between optimization and simulation and complements earlier connections between these two fields.
For instance, we can mention the celebrated one-to-one equivalence between gradient descent and discretized Langevin dynamics \citep{Roberts1996,Pereyra2016B,Durmus2018} and more recently the use of Hamiltonian dynamics to define first-order descent schemes achieving linear convergence \citep{Duane1987,Maddison2018}.

\textit{Connections and differences with ADMM.} Similar to SGS and quadratic penalty approaches introduced above, the alternating direction method of multipliers (ADMM) also builds on a variable splitting trick to ease an inference task, see \cite{Boyd2011} for a recent comprehensive overview.
However, the connection between SGS and ADMM stops here.
Indeed, contrary to SGS and quadratic penalty methods, ADMM resorts to the so-called augmented Lagrangian and as such involves some dual variables $\B{u}_{1:b}$ in the quadratic penalty terms and their iterative updates via dual ascent steps, see Algorithm \ref{algo:ADMM}.

\renewcommand{\baselinestretch}{0.8}
\begin{algorithm}
    \caption{Alternating Direction Method of Multipliers (ADMM)}
    \label{algo:ADMM}
     \SetKwInOut{Input}{Input}
     \Input{Potentials $U_{i}$ for $i\in[b]$, penalty parameter $\rho$, initialization $\btheta^{[0]}$, $\B{u}_{1:b}^{[0]}$ and nb. of iterations $T$.}
   \For{$t \leftarrow 1$ \KwTo $T$}{%
   \For{$i\gets1$ \KwTo $b$}{
   $\B{z}_i^{[t]} = \underset{\bz_i}{\arg \min} \ U_i(\bz_i) + \dfrac{1}{2\rho^2}\nr{\B{z}_i - \B{A}_i\btheta^{[t-1]} + \B{u}_i^{[t-1]}}^2$ 
   }
   $\btheta^{[t]} = \underset{\btheta}{\arg \min}  \dfrac{1}{2\rho^2} \displaystyle\sum_{i=1}^b\nr{\B{z}_i^{[t]} - \B{A}_i\btheta + \B{u}_i^{[t-1]}}^2$ \\
   \For{$i\gets1$ \KwTo $b$}{
   $\B{u}_i^{[t]} = \B{u}_i^{[t-1]} + \bz_i^{[t]} - \B{A}_i\btheta^{[t]}$
   }
   }
\end{algorithm}
\renewcommand{\baselinestretch}{1.3}

\section{Quantitative Results on the Bias of the Approximate Model}
\label{sec:non_asymptotic_properties}

In order to establish explicit non-asymptotic mixing time bounds for SGS described in Algorithm \ref{algo:Gibbs}, we will first provide in this section quantitative bounds on the bias between $\pi_{\rho}$ and $\pi$ in both total variation and 1-Wasserstein distances. 

\subsection{Results}
\label{subsec:results}

To prove these non-asymptotic results, we shall introduce various regularity conditions listed in Assumption \ref{assumptions}.
\begin{assumption}[General assumptions]
  \label{assumptions}
    \text{}\\[-1.9em]
    \begin{enumerate}[label=(\subscript{A}{{\arabic*} })]
    \setlength\itemsep{0.05em}
    \setcounter{enumi}{-1}
    \item For any $i\in [b]$, $U_i:\R^{d_i}\to \R$ is Borel measurable,  $\inf_{\bz_i\in \R^{d_i}}U_i(\bz_i)>-\infty$, and $\exp(-U^{\rho}(\btheta))$ is integrable on $\R^d$ ($U^{\rho}$ was defined in Equation \ref{eq:Urhodef}).
    \item For any $i\in [b]$, $U_i$ is $L_i$-Lipschitz, that is there exists $L_i \geq 0$ such that $|U_i(\B{z}_i')-U_i(\B{z}_i)| \leq L_i\nr{\B{z}_i'-\B{z}_i}$, $\forall \B{z}_i,\B{z}_i' \in \mathbb{R}^{d_i}$.
    \item For any $i\in [b], \B{z}_i \in \mathbb{R}^{d_i}$, $U_i$ is twice continuously differentiable and $-M_i \B{I}_d\preceq \grad^2 U_i(\bz_i)\preceq M_i \B{I}_d$.
    \item For any $i\in [b]$, $U_i$ is convex, that is for any $\alpha\in [0,1]$, $\B{z}_i,\B{z}_i'\in \R^{d_i}$, we have
    $U_i(\alpha \B{z}_i+(1-\alpha)\B{z}_i')\le 
    \alpha U_i( \B{z}_i)+(1-\alpha) U_i( \B{z}_i')$.
    \item For any $i\in [b]$, $U_i$ is $m_i$-strongly convex, that is there exists $m_i \ge 0$ such that $U_i(\B{z}_i)-\frac{m_i\|\B{z}_i\|^2}{2}$ is convex. 
    \item $d_1=\ldots=d_b=d$ and $\B{A}_1=\ldots=\B{A}_b=\B{I}_d$.
    \item For any $i\in [b]$, $U_i$ is centered, that is $\nabla U_i(\B{A}_i\bthetastar) = \B{0}_d$, where $\bthetastar$ is the global minimum of $U$.
  \end{enumerate}
  
\end{assumption}
If some potentials $U_i$ do not verify \assseven, one can first find the global minimum $\bthetastar$ using optimization (typically it takes only a small number of iterations to do this up to machine precision for smooth and strongly convex potentials), and perform a linear shift and define their centered version $\tilde{U}_i$ as $\tilde{U}_i(\B{A}_i\btheta) = U_i(\B{A}_i\btheta) - \langle\B{A}_i\btheta, \nabla U_i(\B{A}_i\bthetastar)\rangle$.  \assseven\  is not just a technical assumption that requires additional work in implementation, without making any difference. It is not difficult to construct an example when $b=2$, $\B{A}_1=\B{A}_2=\B{I}_d$, and $U_1$ and $U_2$ are two quadratics with different covariances whose minimizers are not at the same point, but at distance $D$. In such situations, one can show that the bias between $\pi_{\rho}$ and $\pi$ (in Wasserstein and total variational distance) can depend strongly on $D$, and cannot be bounded based only on the usual smoothness and strong convexity parameters $m_i$, $M_i$. Hence centering has a beneficial effect in reducing the bias of $\pi_{\rho}$ in such situations.
In Section \ref{sec:experimental_results}, we will see that working with the centered potentials $\tilde{U}_i$ does not increase significantly the computational complexity of SGS when rejection sampling is used to sample the auxiliary variables $\bz_i$ conditionally upon $\btheta$.

Our first proposition provides a simple way to verify that \asszero\, holds.
\begin{proposition}[Sufficient conditions for integrability]
\label{prop:integrability_simpler_condition}
Suppose that for any $i\in [b]$, $U_i$ is Borel measurable, and we have potentials $V_i:\R^{d_i}\to \R$ satisfying that
\begin{enumerate}
\item $\inf_{\bz_i\in \R^{d_i}} V_i(\bz_i)>-\infty$,
\item $V_i$ is $L_i$-Lipschitz,  that is $|V_i(\bz_i)-V_i(\bz_i')|\le L_i\|\bz_i-\bz_i'\|$ for any $\bz_i, \bz_i'\in \R^{d_i}$,
\item $V_i$ lower bounds $U_i$, that is $V_i(\bz_i)\le U_i(\bz_i)$ for any $\bz_i \in \R^{d_i}$,
\item $\exp\l(-\sum_{i\in [b]} V_i(\B{A}_i \btheta)\r)$ is integrable on $\R^{d}$.
\end{enumerate}
Then \asszero\, holds.
\end{proposition}
\begin{remark}
It is easy to check that the Lipschitz conditions are satisfied by the potential terms $\{U_i\}_{i\in[b]}$ of the logistic regression, when there is no Gaussian prior. So in this case, $\exp(-U^{\rho}(\btheta))$ is integrable whenever $\exp(-U(\btheta))$ is integrable. If $\min_{i\in[b]}\inf_{\bz_i\in \R^{d_i}} U_i(\bz_i)>-\infty$, and there is a positive definite Gaussian term of dimension $d$ (typically the prior) among the $U_i$s, then this again can be easily lower bounded by a Lipschitz function satisfying the conditions of Proposition \ref{prop:integrability_simpler_condition}, hence  \asszero\, holds. An alternative sufficient condition has been proposed in Proposition 1 of \cite{plassier2021dglmc}.
\end{remark}
\begin{proof}
The proof is postponed to \ref{appendix:integrability_ergodicity}
\end{proof}

The following proposition establishes the ergodicity of SGS.
\begin{proposition}[Integrability and Ergodicity of SGS]
\label{prop:integrability_ergodicity}
Under \asszero,
$\joint(\btheta,\B{z}_{1:b})$ defines a joint probability density function $\pi_{\rho}(\btheta)\propto \exp(-U^{\rho}(\btheta))$, and 
SGS is $\pi_{\rho}$-irreducible and aperiodic.
\end{proposition}
\begin{proof}
The proof is postponed to \ref{appendix:integrability_ergodicity}
\end{proof}

We now provide results giving non-asymptotic bounds on the bias between $\pi_{\rho}$ and $\pi$.
Only assuming a Lipschitz continuity property on the individual potential functions $\{U_{i}\}_{i \in [b]}$, Proposition \ref{proposition:1} shows that this bias is of the order $\mathcal{O}(\rho\sum_i \sqrt{d_i})$ when $\rho$ is sufficiently small.
This result requires neither differentiability nor convexity and covers standard loss functions used in statistical machine learning such as Huber, pinball or logistic losses \citep{Vono2019_AXDA}.
\begin{proposition}
\label{proposition:1}
    Suppose that $\pi$ satisfies $(A_0)$.
    Let $\pi$ and $\pi_{\rho}$ be as defined in \eqref{eq:target_density} and \eqref{eq:split_density_generalized}.
  For any $i \in [b]$, let $U_i$ satisfy \assone.
  Then, for any $\rho > 0$, we have
  \begin{align}
    \nr{\marginal-\pi}_{\mathrm{TV}} \leq 1 - \prod_{i=1}^b\Delta_{d_i}^{(i)}(\rho)\ , \label{eq:theorem_1_main}
  \end{align}
  where for any $i \in [b]$,
  \begin{align*}
    \Delta_{d_i}^{(i)}(\rho) = \dfrac{D_{-d_i}(L_i\rho)}{D_{-d_i}(-L_i\rho)}\ .
  \end{align*}
  The function $D_{-d_i}$ is the parabolic cylinder special function defined at the end of Section \ref{sec:introdution}.
  In addition, for  $\rho$ sufficiently small, the bound \eqref{eq:theorem_1_main} satisfies
  \begin{align*}
      \nr{\pi_{\rho}-\pi}_{\mathrm{TV}} \leq 2\rho \sum_{i=1}^b d_i^{1/2}L_i + o(\rho)\ .
  \end{align*}
\end{proposition}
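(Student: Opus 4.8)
The plan is to exploit the convolution characterization of Proposition \ref{prop:convcharacterization}, which writes the approximate potential as $U^{\rho}(\btheta)=\sum_{i=1}^{b}U_i^{\rho}(\B{A}_i\btheta)$ where $\exp(-U_i^{\rho}(\B{u}))$ equals the convolution of $\exp(-U_i)$ with the density of $\mathcal{N}(\B{0}_{d_i},\rho^2\B{I}_{d_i})$ evaluated at $\B{u}=\B{A}_i\btheta$. First I would use the $L_i$-Lipschitz property \assone\ to sandwich each smoothed factor. Writing $\B{W}_i\sim\mathcal{N}(\B{0}_{d_i},\rho^2\B{I}_{d_i})$ and inserting $|U_i(\B{u}+\B{w})-U_i(\B{u})|\le L_i\nr{\B{w}}$ into the convolution integral gives, for every $\B{u}\in\mathbb{R}^{d_i}$,
\[C_i^-(\rho)\, e^{-U_i(\B{u})} \le e^{-U_i^{\rho}(\B{u})} \le C_i^+(\rho)\, e^{-U_i(\B{u})}, \qquad C_i^{\pm}(\rho):=\E\!\left[\exp\!\pr{\pm L_i\nr{\B{W}_i}}\right].\]
Multiplying these inequalities over $i\in[b]$ yields $C^-(\rho)\,e^{-U(\btheta)}\le e^{-U^{\rho}(\btheta)}\le C^+(\rho)\,e^{-U(\btheta)}$, with $C^{\pm}(\rho)=\prod_{i=1}^{b}C_i^{\pm}(\rho)$.

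The next step is to identify the ratio $C_i^-(\rho)/C_i^+(\rho)$ with $\Delta_{d_i}^{(i)}(\rho)$. Since $\nr{\B{W}_i}=\rho\nr{\B{G}_i}$ with $\B{G}_i\sim\mathcal{N}(\B{0}_{d_i},\B{I}_{d_i})$, the variable $\nr{\B{G}_i}$ is chi-distributed with $d_i$ degrees of freedom; passing to its radial density and substituting $x=s/\rho$ gives
\[C_i^{\pm}(\rho)=\frac{2^{1-d_i/2}}{\Gamma(d_i/2)}\int_0^{+\infty} x^{d_i-1}\exp\!\pr{\pm L_i\rho x-\tfrac{x^2}{2}}\mathrm{d}x.\]
Comparing with the integral representation of $D_{-d}$ recalled at the end of Section \ref{sec:introdution}, each integral equals $\Gamma(d_i)\exp((L_i\rho)^2/4)\,D_{-d_i}(\mp L_i\rho)$ up to the common prefactor, so that the prefactor and the $\exp((L_i\rho)^2/4)$ factor cancel in the quotient, leaving $C_i^-(\rho)/C_i^+(\rho)=D_{-d_i}(L_i\rho)/D_{-d_i}(-L_i\rho)=\Delta_{d_i}^{(i)}(\rho)\le 1$.

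To convert the pointwise density bound into a TV bound I would normalize. Integrating the sandwich gives $C^-(\rho)Z_{\pi}\le Z_{\pi_{\rho}}\le C^+(\rho)Z_{\pi}$, and combining with the pointwise lower bound on $e^{-U^{\rho}}$ yields $\pi_{\rho}(\btheta)\ge (C^-(\rho)/C^+(\rho))\,\pi(\btheta)$ for all $\btheta$. Using $\nr{\pi_{\rho}-\pi}_{\mathrm{TV}}=1-\int\min(\pi_{\rho},\pi)\,\mathrm{d}\btheta$ together with $\min(\pi_{\rho},\pi)\ge (C^-(\rho)/C^+(\rho))\pi$ (valid since the ratio is at most $1$) gives
\[\nr{\pi_{\rho}-\pi}_{\mathrm{TV}}\le 1-\frac{C^-(\rho)}{C^+(\rho)}=1-\prod_{i=1}^{b}\Delta_{d_i}^{(i)}(\rho),\]
which is exactly \eqref{eq:theorem_1_main}.

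Finally, for the small-$\rho$ expansion I would Taylor-expand $C_i^{\pm}(\rho)=1\pm L_i\rho\,\E[\nr{\B{G}_i}]+O(\rho^2)$, legitimate since $\E[\exp(t\nr{\B{G}_i})]<\infty$ for all $t$, so $\Delta_{d_i}^{(i)}(\rho)=1-2L_i\rho\,\E[\nr{\B{G}_i}]+O(\rho^2)$; taking the product and bounding $\E[\nr{\B{G}_i}]\le\sqrt{\E[\nr{\B{G}_i}^2]}=\sqrt{d_i}$ by Jensen yields $1-\prod_i\Delta_{d_i}^{(i)}(\rho)\le 2\rho\sum_i\sqrt{d_i}\,L_i+o(\rho)$. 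I expect the main obstacle to be the bookkeeping of the second step: reducing the $d_i$-dimensional Gaussian expectation to a one-dimensional radial integral and matching it, with the correct index and sign of the argument, to the parabolic cylinder representation. The first and third steps are the Lipschitz convolution estimate and a standard ``pointwise domination implies TV bound'' argument, and the last step is a routine expansion once the closed form is in hand.
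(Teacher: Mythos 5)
Your proof is correct, and it follows essentially the same route as the omitted argument the paper defers to \citep[Corollary 3]{Vono2019_AXDA}: sandwich each smoothed factor $e^{-U_i^{\rho}}$ via the Lipschitz property, recognise $\E[e^{\pm L_i\rho\|\B{G}_i\|}]$ as a chi-distribution radial integral matching the paper's parabolic cylinder representation $D_{-d_i}(\mp L_i\rho)$ up to common factors, and convert the resulting pointwise ratio bound $\pi_{\rho}\ge (C^-/C^+)\pi$ into the TV bound via $\nr{\pi_{\rho}-\pi}_{\mathrm{TV}}=1-\int\min(\pi_{\rho},\pi)$. Note only that this uses the half-$L^1$ normalisation of total variation, which is the convention the paper itself employs in its Appendix B computations, so your argument is consistent with the stated bound.
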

\begin{proof}
    The proof is a straightforward extension of \citet[Corollary 3]{Vono2019_AXDA} and is omitted.
\end{proof}
Our next result allows us to bound the TV, KL and 2-Wasserstein biases in terms of a single quantity.
\begin{proposition}\label{prop:IUUrho}
Let 
\begin{equation}\label{eq:Ipipirhodef}I(U,U_{\rho}):=\int_{ \mathbb{R}^d} \pi(\btheta) \cdot (U(\btheta)-U^{\rho}(\btheta))_- \mathrm{d}\btheta+ \l(\log\l(\frac{Z_{\pi_{\rho}}}{Z_{\pi}}\r)\r)_+\ ,
\end{equation}
where $Z_{\pi}:=\int_{ \R^d}\exp(-U(\btheta))d\btheta$ and $Z_{\pi_{\rho}}:=\int_{ \R^d}\exp(-U^{\rho}(\btheta))d\btheta$ are the normalizing constants associated with $\pi$ and $\pi_\rho$, respectively. Then, we have
\[ \nr{\marginal-\pi}_{\mathrm{TV}}\le I(U,U_{\rho})\ , \quad \text{ and }\quad \kld{\pi}{\pi_{\rho}}\le I(U,U_{\rho})\ ,\]
that is the same bound holds for total variation distance and KL-divergence. For the 2-Wasserstein distance, assuming that $U$ is $m$-strongly convex for $m>0$, we have
\[W_2(\pi,\pi_{\rho})\le \sqrt{\frac{2}{m}\cdot I(U,V_{\rho})} \ .\]
\end{proposition}
\begin{proof}
    The proof is postponed to \ref{appendix:A_1}
\end{proof}
If the potentials $\{U_i\}_{i\in[b]}$ are now strongly convex and continuously differentiable with a Lipschitz-continuous gradient, the total variation bias is of order $\mathcal{O}(\rho^2\sum_i d_i)$ for $\rho$ sufficiently small.
\begin{proposition}
    \label{proposition:2}
    Let $\pi$ and $\pi_{\rho}$ as defined in \eqref{eq:target_density} and \eqref{eq:split_density_generalized}, respectively. 
    Suppose first that $\pi$ satisfies \asszero, $b=1$, $d_1=d$, $\B{A}_1$ is full rank and that $(A_2)$ holds (convexity is not required in this case). Let $I(U,U_{\rho})$ be as in \eqref{eq:Ipipirhodef}. Then for any $\rho>0$,
   \begin{align*}
    I(U,U_{\rho})\le \frac{\rho^2  d M_1}{2}\ . 
    \label{eq:proof_theorem_2_b1}
  \end{align*}
  In the general multiple splitting case, suppose that Assumptions \asszero, \asstwo, \assfive \ and \assseven \ hold, and  $\det(\sum_{i=1}^b m_i \B{A}_i^\top \B{A}_i)>0$.
  Then $U$ is $m_U$-strongly convex for 
  \begin{equation*}\label{eq:mdef}
    m_U=\lambda_{\min}\left(\sum_{i=1}^b m_i \B{A}_i^\top \B{A}_i\right).
  \end{equation*}
Let $\B{A}=[\B{A}_1^\top, \hdots, \B{A}_b^\top]^\top$ ($\B{A}_1,\ldots, \B{A}_b$ are stacked one upon another), and
\begin{equation}\label{eq:sigma2def}\sigma^2_U:=\|\B{A}^\top \B{A}\| (\max_{i\le b}M_i)^2 \cdot m_U^{-1}\ .\end{equation}
  Then, for  $0<\rho^2\le \frac{1}{6\sigma_U^2}$, we have
  \begin{align}
I(U,U_{\rho})    &\le\frac{\rho^2}{2}\pr{\sum_{i=1}^b d_i M_i} +\l(2+\frac{3}{2}d\r)\rho^4 \sigma_U^4\ .
    \label{eq:proof_theorem_2}
  \end{align}
\end{proposition}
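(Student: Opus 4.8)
The plan is to treat the single-splitting case $b=1$ and the general case separately, since only the former enjoys an exact convolution structure in $\btheta$-space.

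\textbf{Case $b=1$.} After the invertible linear change of variables $\phi=\B{A}_1\btheta$, which leaves total variation unchanged, I may assume $\B{A}_1=\B{I}_d$ and $U=U_1$ with $-M_1\B{I}_d\preceq\grad^2 U_1\preceq M_1\B{I}_d$. By Proposition~\ref{prop:convcharacterization}, $\pi_\rho=\pi*\mathcal{N}(\cdot;\B{0}_d,\rho^2\B{I}_d)$, and convolution with a probability density preserves mass, so $Z_{\pi_\rho}=Z_\pi$. Introducing the heat flow $u(s)=\pi*\mathcal{N}(\cdot;\B{0}_d,s\B{I}_d)$, which solves $\partial_s u=\tfrac12\Delta u$ with $u(0)=\pi$ and $u(\rho^2)=\pi_\rho$, I would write
\begin{equation*}
\nr{\pi_\rho-\pi}_{\mathrm{TV}}=\tfrac12\nr{u(\rho^2)-u(0)}_{L^1}\le\tfrac14\int_0^{\rho^2}\nr{\Delta u(s)}_{L^1}\,\mrd s\le\tfrac{\rho^2}{4}\nr{\Delta\pi}_{L^1},
\end{equation*}
where the last step uses $\Delta u(s)=(\Delta\pi)*\mathcal{N}(\cdot;\B{0}_d,s\B{I}_d)$ and Young's convolution inequality. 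It then suffices to show $\nr{\Delta\pi}_{L^1}\le 2dM_1$: writing $\Delta\pi=(\nr{\grad U_1}^2-\Delta U_1)\pi$ and combining the integration-by-parts identity $\E_\pi\nr{\grad U_1}^2=\E_\pi\Delta U_1$ with the pointwise bound $|\Delta U_1|=|\mathrm{tr}\,\grad^2 U_1|\le dM_1$ gives $\nr{\Delta\pi}_{L^1}\le\E_\pi\nr{\grad U_1}^2+\E_\pi|\Delta U_1|\le 2dM_1$, hence the claimed $\tfrac{\rho^2 dM_1}{2}$. Note no convexity is used, as promised.

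\textbf{General case.} Here the convolution structure in $\btheta$ is lost because the $\B{A}_i$ differ, so I would compare potentials instead. By Proposition~\ref{prop:convcharacterization}, $\pi_\rho=\pi\,e^{-g}/\E_\pi[e^{-g}]$ with $g:=U^\rho-U=\sum_{i=1}^b[U_i^\rho(\B{A}_i\btheta)-U_i(\B{A}_i\btheta)]$, i.e.\ $\pi_\rho$ is an exponential tilt of $\pi$, and $\nr{\pi_\rho-\pi}_{\mathrm{TV}}=\tfrac12\,\E_\pi\big|e^{-g}/\E_\pi[e^{-g}]-1\big|$. The first ingredient is a per-factor smoothing estimate. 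Using the identities $\grad U_i^\rho(w)=\E_{\mu_w}[\grad U_i(\bz_i)]$ and $\grad^2 U_i^\rho(w)=\rho^{-2}\B{I}-\rho^{-4}\mathrm{Cov}_{\mu_w}(\bz_i)$, where $\mu_w\propto\exp(-U_i(\bz_i)-\nr{\bz_i-w}^2/(2\rho^2))$, the Brascamp--Lieb inequality applied to $\mu_w$ (whose potential has Hessian in $[(\rho^{-2}+m_i)\B{I},(\rho^{-2}+M_i)\B{I}]$ under \asstwo\ and \assfive) yields $\tfrac{m_i}{1+\rho^2 m_i}\B{I}\preceq\grad^2 U_i^\rho\preceq\tfrac{M_i}{1+\rho^2 M_i}\B{I}$. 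A second-order expansion in $\rho^2$ then gives $U_i^\rho(w)-U_i(w)=\tfrac{\rho^2}{2}\big(\Delta U_i(w)-\nr{\grad U_i(w)}^2\big)+R_i(w)$, with the remainder $R_i$ of order $\rho^4$ controlled by the above Hessian bounds.

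Feeding this into the tilt, the trace parts $\tfrac{\rho^2}{2}\Delta U_i(\B{A}_i\btheta)$, bounded pointwise by $\tfrac{\rho^2}{2}d_iM_i$, are what produce the leading term $\tfrac{\rho^2}{2}\sum_i d_iM_i$. The delicate point---and the main obstacle---is the gradient-squared contributions $\nr{\grad U_i(\B{A}_i\btheta)}^2$: these are \emph{not} controlled by the smoothness constants, they grow quadratically away from $\bthetastar$, and the associated reweighting $\exp\!\big(\tfrac{\rho^2}{2}\sum_i\nr{\grad U_i(\B{A}_i\btheta)}^2\big)$ inflates the tails of $\pi$ by an amount that is a priori of the same order $\rho^2$ as the leading term. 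Here the centering assumption \assseven, which forces $\grad U_i(\B{A}_i\bthetastar)=\B{0}$ and hence $\nr{\grad U_i(\B{A}_i\btheta)}\le M_i\nr{\B{A}_i(\btheta-\bthetastar)}$, together with the $m_U$-strong log-concavity of $\pi$, is what keeps this reweighted measure concentrated and supplies the moment control $\E_\pi\nr{\B{A}_i(\btheta-\bthetastar)}^2\le \mathrm{tr}(\B{A}_i\B{A}_i^T)/m_U$; the condition $\rho^2\le 1/(6\sigma_U^2)$, with $\sigma_U$ as in \eqref{eq:sigma2def}, is precisely what guarantees $\exp(\tfrac{\rho^2}{2}M^2\nr{\btheta-\bthetastar}^2)$ remains integrable against $\pi$ and that $\E_\pi[e^{-g}]$ stays bounded away from degeneracy. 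The genuinely technical heart of the argument is the careful second-order bookkeeping that reconciles the trace and gradient contributions at order $\rho^2$, leaving the clean leading term $\tfrac{\rho^2}{2}\sum_i d_iM_i$ and pushing everything else into the explicit $(2+\tfrac32 d)\rho^4\sigma_U^4$ correction; this is where \assseven, the strong convexity, and the restriction on $\rho$ all enter simultaneously.
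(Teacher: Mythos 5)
Your heat-flow argument is correct and genuinely different from the paper's. The paper proves this case by writing $\nr{\marginal-\pi}_{\mathrm{TV}}=\int\pi\,(1-\marginal/\pi)_{+}$, linearizing with the elementary inequality $(1-\econst^x)_{+}\le x_{-}$, and invoking an exact Gaussian-integral lower bound on $U-U^{\rho}$ (Lemma \ref{lem:UrhoUratiobnd}) together with $Z_{\pi}=Z_{\pi_{\rho}}$ (Lemma \ref{lem:ratioofnormalizingconstants}); the term $\frac{d}{2}\log(1+\rho^2 M_1)\le \frac{\rho^2 dM_1}{2}$ then gives the bound. You instead exploit $\marginal=\pi*\mathcal{N}(\B{0}_d,\rho^2\B{I}_d)$ and integrate the heat equation, reducing everything to $\nr{\Delta\pi}_{L^1}\le \E_{\pi}\nr{\grad U_1}^2+\E_{\pi}|\Delta U_1|\le 2dM_1$. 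This is essentially the TV analogue of the heat-semigroup argument the paper uses for its Wasserstein bound (Proposition \ref{thm:Wassersteinpipirhosinglesplitting}), and it recovers exactly the same constant. One point deserves a line of justification: the identity $\E_{\pi}\nr{\grad U_1}^2=\E_{\pi}\Delta U_1$ under only \asszero\ and \asstwo\ (the paper imposes explicit growth conditions when it performs the analogous integration by parts); a cutoff argument does work here, since \asstwo\ forces $\nr{\grad U_1}$ to grow at most linearly, $|\Delta U_1|\le dM_1$ pointwise, and $\pi$ is integrable, so the boundary terms vanish along a sequence of radii.

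\textbf{Multiple splitting: a genuine gap.} Your general-case treatment is a plan rather than a proof; the two quantitative steps that constitute the actual argument are asserted, not established. First, the expansion $U_i^{\rho}-U_i=\frac{\rho^2}{2}\l(\Delta U_i-\nr{\grad U_i}^2\r)+R_i$ with $R_i=\mathcal{O}(\rho^4)$ cannot be obtained by Taylor expansion in $\rho^2$ under \asstwo: a second-order remainder requires control of third and fourth derivatives of $U_i$, which are not assumed, and your (correct) Brascamp--Lieb bounds on $\grad^2 U_i^{\rho}$ do not by themselves produce a signed remainder with explicit constants. The paper avoids any expansion: by completing the square in the defining Gaussian integrals it proves the exact two-sided sandwich
\begin{equation*}
\sum_{i=1}^{b}\l(\frac{\rho^2}{2(1+\rho^2 M_i)}\nr{\grad U_i(\B{A}_i\btheta)}^2-\frac{d_i}{2}\log(1+\rho^2 M_i)\r)\;\le\; U(\btheta)-U^{\rho}(\btheta)\;\le\;\frac{\rho^2}{2}\sum_{i=1}^{b}\nr{\grad U_i(\B{A}_i\btheta)}^2,
\end{equation*}
valid for all $\rho$, and the $\rho^4$ correction arises from the gap between the coefficients $\frac{\rho^2}{2}$ and $\frac{\rho^2}{2(1+\rho^2M_i)}$, not from a Taylor remainder.

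Second, and this is the heart of the matter: controlling the tilt $e^{-g}/\E_{\pi}[e^{-g}]$ to the stated accuracy requires a \emph{quantitative} upper bound on $\E_{\pi}\exp\l(\frac{\rho^2}{2}\sum_i\nr{\grad U_i(\B{A}_i\btheta)}^2\r)$, not merely its finiteness. You correctly note that \assseven\ yields $\nr{\grad U_i(\B{A}_i\btheta)}\le M_i\nr{\B{A}_i(\btheta-\bthetastar)}$ and that $m_U$-strong convexity controls second moments, but a second moment does not bound an exponential moment with the constant $(2+\frac{3}{2}d)\rho^4\sigma_U^4$. The paper spends three lemmas on exactly this point: the Lipschitz estimate for $\beta(\btheta)=\l(\sum_i\nr{\grad U_i(\B{A}_i\btheta)}^2\r)^{1/2}$ with constant $L_{\beta}=\nr{\B{A}^T\B{A}}^{1/2}\max_i M_i$ (Lemma \ref{lem:betaLipschitz}), a log-Sobolev inequality for $\pi$ combined with Herbst's argument, and a moment-series estimate, which together give $\E_{\pi}(\econst^{s\beta^2})\le \econst^{s\E_{\pi}(\beta^2)}\econst^{s^2(8\sigma_U^4+4(\E_{\pi}\beta)^2\sigma_U^2)}$ for $s\le \frac{1}{12\sigma_U^2}$ together with $\E_{\pi}(\beta^2)\le d\sigma_U^2$ (Lemma \ref{lem:momgenbeta2}); this is precisely what produces the $\rho^4$ term and the threshold $\rho^2\le\frac{1}{6\sigma_U^2}$. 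Since your write-up explicitly defers ``the careful second-order bookkeeping'' in which \assseven, strong convexity, and the restriction on $\rho$ all enter, the multiple-splitting bound \eqref{eq:proof_theorem_2} is not proved in your proposal; only the single-splitting part stands as a complete (and nicely alternative) argument.
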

\begin{remark}
It is possible to reduce the bias and the constraint on $\rho^2$ in situations where some of the $U_i$'s are quadratic, such as when there is a Gaussian prior (and also more generally in situations where for some indices $j \in [b]$, $e^{-U_j(\bz_j)}$ can be written as the 
convolution of another function and a Gaussian density). In this case, instead of applying the algorithm on the original $U$, we can replace $U_j$ by another quadratic potential  $U_j'$ such that $(U_j')^{\rho}=U_j$, that is the convolution of $e^{-U_j'(\bz_j)}$ and $\frac{1}{(2\pi \rho^2)^{d_j/2}} e^{-\|\bz_j\|^2/(2\rho^2)}$ equals $e^{-U_j(\bz_j)}$. 
Let $[b]^{n.q.}$ denote the set of indices that correspond to non-quadratic potentials. By a straightforward modification of the proof of Proposition \ref{proposition:2} (elimination of the error terms caused by the difference between $U_j$ and $U_j^{\rho}$ for quadratics), one can show that the results hold with $\B{A}$ changed to only contain
$\B{A}_i$ for $i\in [b]^{n.q.}$, $\sigma^2_U$ updated to $\sigma^2_U:=\|\B{A}^\top \B{A}\| (\max_{i\in  [b]^{n.q.}}M_i)^2 \cdot m_U^{-1}$, and the final bound changed to
\[
I(U,U_{\rho}) \le\frac{\rho^2}{2}\pr{\sum_{i \in  [b]^{n.q.}} d_i M_i} +\l(2+\frac{3}{2}d\r)\rho^4 \sigma_U^4 \ .\]
This can improve the dimension dependence in situations when the number of data points is smaller than the dimension $d$, which is often the case for latent Gaussian models.
\end{remark}
\begin{proof}
    The proof is postponed to \ref{appendix:A_1}
\end{proof}

In the single splitting case corresponding to $b=1$, Proposition \ref{thm:Wassersteinpipirhosinglesplitting} builds on the heat equation to derive an explicit and simple bound on the bias between $\pi_{\rho}$ and $\pi$ in 1-Wasserstein distance.
%
\begin{proposition}
    \label{thm:Wassersteinpipirhosinglesplitting}
    Suppose that \asszero \ and \asssix \ hold, $b=1$, and $U_1$ is twice continuously differentiable, and satisfies 
    \begin{equation}\label{eq:fradunbnd}
    U_1(\btheta)\ge a_1+a_2\|\btheta\|^{\alpha} \quad \text{ and }\quad \|\grad U_1(\btheta)\|\le a_3+a_4\|\btheta\|^{\beta}\ ,
    \end{equation}
    for some $a_2>0$, $\alpha>0$, $\beta>0$, $a_1,a_3,a_4\in \R$. 
    Then, we have 
    \begin{equation*}\label{eq:W1bndgeneral}
    W_1(\pi, \pi_{\rho})\le \min\l(\rho \sqrt{d}, \frac{1}{2}\rho^2 \int_{\mathbb{R}^d} \|\grad U_1(\btheta)\| \pi(\btheta) \mathrm{d}\btheta\r)\ .
    \end{equation*}
    Moreover, if $U_1$ satisfies Assumptions \asstwo \ and \assfive \ (gradient Lipschitz and strong convexity properties), then \eqref{eq:fradunbnd} holds, and we have
   \begin{equation*}\label{eq:W1bndstrongconvexsmooth}W_1(\pi, \pi_{\rho})\le \min\l(\rho \sqrt{d}, \frac{1}{2}\rho^2 \sqrt{M_1 d}\r)\ .
   \end{equation*}
\end{proposition}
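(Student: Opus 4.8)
The plan is to bound the two quantities inside the minimum separately, exploiting that under \asszero\ and \asssix\ with $b=1$ we have $\B{A}_1=\B{I}_d$, so that by Proposition~\ref{prop:convcharacterization} the marginal is exactly the Gaussian smoothing $\pi_\rho=\pi*\mathcal{N}(\cdot;\B{0}_d,\rho^2\B{I}_d)$. For the first term I would use the coupling $Y:=X+\rho Z$, where $X\sim\pi$ and $Z\sim\mathcal{N}(\B{0}_d,\B{I}_d)$ are independent: then $Y\sim\pi_\rho$, so $W_1(\pi,\pi_\rho)\le \E\nr{X-Y}=\rho\,\E\nr{Z}\le \rho\,(\E\nr{Z}^2)^{1/2}=\rho\sqrt d$ by Jensen's inequality.

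For the second, sharper, term I would realize the Gaussian smoothing as a heat flow. Set $u(t,\cdot):=\pi*\mathcal{N}(\cdot;\B{0}_d,t\B{I}_d)$, which solves $\partial_t u=\tfrac12\Laplace u$ with $u(0,\cdot)=\pi$ and $u(\rho^2,\cdot)=\pi_\rho$. By Kantorovich--Rubinstein duality it suffices to bound $\int f\,\mathrm{d}\pi-\int f\,\mathrm{d}\pi_\rho$ uniformly over $1$-Lipschitz $f$. Writing this difference as a time integral and integrating by parts in space,
\begin{align*}
\int f\,\mathrm{d}\pi-\int f\,\mathrm{d}\pi_\rho
&=-\int_0^{\rho^2}\!\!\int_{\R^d} f\,\partial_t u\,\mathrm{d}\btheta\,\mathrm{d}t
=-\frac12\int_0^{\rho^2}\!\!\int_{\R^d} f\,\Laplace u\,\mathrm{d}\btheta\,\mathrm{d}t \\
&=\frac12\int_0^{\rho^2}\!\!\int_{\R^d}\inner{\grad f}{\grad u}\,\mathrm{d}\btheta\,\mathrm{d}t
\le \frac12\int_0^{\rho^2}\!\!\int_{\R^d}\nr{\grad u(t,\btheta)}\,\mathrm{d}\btheta\,\mathrm{d}t,
\end{align*}
where the inequality uses $\nr{\grad f}\le 1$ almost everywhere.

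The key structural fact is that the inner integral is non-increasing along the flow. Each coordinate $\partial_i u$ again solves the heat equation, so $\grad u(t,\cdot)=(\grad u(0,\cdot))*p_t$ componentwise, with $p_t$ the density of $\mathcal{N}(\B{0}_d,t\B{I}_d)$; the triangle inequality under the convolution together with Fubini then gives $\int_{\R^d}\nr{\grad u(t,\btheta)}\,\mathrm{d}\btheta\le \int_{\R^d}\nr{\grad u(0,\btheta)}\,\mathrm{d}\btheta=\int_{\R^d}\nr{\grad U_1(\btheta)}\pi(\btheta)\,\mathrm{d}\btheta$, since $\grad\pi=-\grad U_1\,\pi$. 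Substituting and carrying out the trivial time integral produces the factor $\tfrac12\rho^2$ and hence the second bound.

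For the ``moreover'' part I would first check that \asstwo\ and \assfive\ imply \eqref{eq:fradunbnd}: strong convexity yields a quadratic lower bound on $U_1$ (so $\alpha=2$), while the gradient-Lipschitz property yields $\nr{\grad U_1(\btheta)}\le \nr{\grad U_1(\B{0}_d)}+M_1\nr{\btheta}$ (so $\beta=1$). The explicit constant then follows from Cauchy--Schwarz combined with the integration-by-parts identity $\E_\pi\nr{\grad U_1}^2=\E_\pi\Laplace U_1$ (valid for $\pi\propto e^{-U_1}$) and the Hessian bound $\Laplace U_1=\mathrm{tr}\,\grad^2 U_1\le M_1 d$, giving $\int\nr{\grad U_1}\pi\le(\E_\pi\Laplace U_1)^{1/2}\le\sqrt{M_1 d}$. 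The main obstacle is purely analytic: justifying differentiation under the integral sign and the vanishing of the boundary terms in the spatial integration by parts. Here the growth conditions \eqref{eq:fradunbnd}, which control the decay of $\pi$ and the growth of $f$ and $\grad u$, together with the smoothing effect of the heat flow (so that $u(t,\cdot)$ is $C^\infty$ with fast decay for $t>0$), are exactly what makes these manipulations rigorous; the non-differentiability of a generic $1$-Lipschitz $f$ is handled by a standard mollification argument.
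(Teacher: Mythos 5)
Your proof is correct and its overall architecture coincides with the paper's: the $\rho\sqrt{d}$ bound via the coupling $Y=X+\rho Z$, the $\tfrac{1}{2}\rho^{2}$ bound via Kantorovich--Rubinstein duality along the heat flow $u(t,\cdot)=\pi*\mathcal{N}(\cdot;\B{0}_d,t\B{I}_d)$, and the constant $\sqrt{M_1 d}$ via Cauchy--Schwarz together with the coordinatewise integration by parts giving $\E_{\pi}\nr{\grad U_1}^{2}\le M_1 d$, with boundary terms killed by \eqref{eq:fradunbnd} — exactly as in the paper. Where you genuinely diverge is the key monotonicity step. The paper defines the functional $\F(\mu)=\sup_{g\in C^1(\R^d):\,\nr{\grad g}_{\infty}\le 1}\tfrac{1}{2}\int g\,\Laplace\mu$, observes that it is convex and shift-invariant, and deduces $\F(\pi_{\rho})\le\F(\pi)$ from the general monotonicity of such functionals under the heat semigroup, citing \cite{bennett2015generating}; it then bounds $\F(\pi)\le\tfrac{1}{2}\int\nr{\grad U_1}\pi$ by integration by parts. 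You instead identify the monotone quantity as the gradient $L^1$-norm itself: since $\grad u(t,\cdot)=(\grad\pi)*p_t$ componentwise, the triangle inequality under convolution with a probability density plus Fubini give $\int\nr{\grad u(t,\btheta)}\mathrm{d}\btheta\le\int\nr{\grad\pi(\btheta)}\mathrm{d}\btheta=\int\nr{\grad U_1(\btheta)}\pi(\btheta)\mathrm{d}\btheta$. These arguments control the same object — after integration by parts the paper's functional satisfies $\F(\mu)=\tfrac{1}{2}\int\nr{\grad\mu}$, half the total-variation seminorm — so the resulting bounds are identical; what your route buys is self-containedness (no appeal to the external semigroup lemma) and an explicit description of what is being contracted, at the modest extra cost of justifying the commutation of gradient and convolution, which is routine here since $\grad\pi=-\grad U_1\,\pi\in L^1(\R^d)$ under \eqref{eq:fradunbnd}. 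Your treatment of the remaining analytic points (differentiation under the integral, vanishing boundary terms, mollifying a general $1$-Lipschitz $f$) is at the same level of rigor as the paper's own proof, which likewise restricts the dual supremum to $C^1$ test functions with $\nr{\grad g}_{\infty}\le 1$ by a density argument.
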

\begin{proof}
    The proof is given in \ref{appendix:subsec:bounds_bias_Wasserstein} Although we believe that a similar bound also holds for multiple splitting, unfortunately we have not found a way to obtain such a result (the heat equation argument used within the proof is not straightforward to adapt to the $b>1$ case). 
    Nevertheless, Proposition \ref{proposition:2} provides an alternative Wasserstein bound for strongly convex $U$ via Proposition \ref{prop:IUUrho} (which provides a bound not as sharp as this result in the single splitting case).
\end{proof}

Table \ref{table:equivalents} summarizes the non-asymptotic bounds on the bias we have obtained.
\begin{table}
\renewcommand{\arraystretch}{1.3}
\centering
\begin{tabular}{l l l}
\thickhline
\bfseries Distance & \bfseries Assumptions & \bfseries Upper bound \\
\hline 
\multirow{3}{*}{$\nr{\marginal-\pi}_{\mathrm{TV}}$} & \asszero, \assone & $\rho\displaystyle\sum_{i=1}^b 2\sqrt{d_i} L_i + o(\rho)$ \\
& \multirow{1}{*}{\asszero, \asstwo, $b=1$} & $\displaystyle \frac{1}{2}\rho^2 M_1d$
\\
& \multirow{1}{*}{\asszero, \asstwo, \assfive, \assseven} & $\displaystyle \frac{1}{2}\rho^2\sum_{i=1}^b M_i d_i + o(\rho^2)$
\\
\hline
$W_1(\marginal,\pi)$ & \asszero, \asstwo, \assfive, \asssix, $b=1$ & $\min\l(\rho \sqrt{d}, \frac{1}{2}\rho^2 \sqrt{M_1 d}\r)$  \\
\thickhline
\end{tabular}
\caption{Non-asymptotic bounds given in Propositions \ref{proposition:1}, \ref{proposition:2} and \ref{thm:Wassersteinpipirhosinglesplitting}.}
\label{table:equivalents}
\end{table}
In the single splitting case with $\B{A}_1=\B{I}_d$, we have the following two steps when moving from $\btheta^{[t]}$ to $\btheta^{[t+1]}$.
 \begin{enumerate}
     \item Sample $\bz^{[t]}\sim \joint(\bz|\btheta^{[t]})\propto \exp\left(-U(\bz)-\|\bz-\btheta^{[t]}\|^2/(2\rho^2)\right)$.
     \item Sample $\btheta^{[t+1]}\sim \mathcal{N}(\bz^{[t]},\rho^2 \B{I}_d)$.
 \end{enumerate}
By Taylor's expansion, we can see that in the $\rho\to 0$ limit, we have $\joint(\bz|\btheta^{[t]})\approx \mathcal{N}(\btheta^{[t]}-\rho^2 \grad U(\btheta^{[t]}), \rho^2 \B{I}_d)$, and hence $\btheta^{[t+1]}$ is approximately distributed as $\mathcal{N}(\btheta^{[t]}-\rho^2 \grad U(\btheta^{[t]}), 2\rho^2 \B{I}_d)$, which corresponds to one ULA step with stepsize $h=\rho^2$. 
One can show that the same approximation holds in the multiple splitting case as well, after appropriate preconditioning. 
Thus, in the $\rho\to 0$ limit, SGS can be interpreted as another discretization of the Langevin diffusion.
Hence it is natural to compare the bias bounds of Table \ref{table:equivalents} with the best available bias bounds for ULA; as far as we know, these were stated in \cite{durmus2019analysis}.
We do this in Table \ref{table:equivalentsULA}, for ULA with step size $h=\rho^2$. 
\begin{table}
\renewcommand{\arraystretch}{1.3}
\centering
\begin{tabular}{l l l}
\thickhline
\bfseries Distance & \bfseries Assumptions & \bfseries Upper bound \\
\hline 
$\nr{\pi_{\mathrm{ULA}}-\pi}_{\mathrm{TV}}$ & \asstwo, \assfour & $\sqrt{2 M_1 d} \cdot \rho$ \\
\hline
$W_1(\pi_{\mathrm{ULA}},\pi)$ & \asstwo, \assfive & $\sqrt{2 \frac{M_1}{m_1} d}\cdot \rho$ \\
\thickhline
\end{tabular}
\caption{Non-asymptotic bounds for ULA with step size $h=\rho^2$, see Theorem 12 of \cite{durmus2019analysis}.}
\label{table:equivalentsULA}
\end{table}
SGS has a significantly smaller bias than ULA both in total variation, and Wasserstein distances, suggesting a higher order approximation. Indeed the bias is $\mathcal{O}(\rho^2)$ for 1-Wasserstein and total variation distances for SGS, while it is $\mathcal{O}(\rho)$ for both distances for ULA. 
This means that SGS can be seen as a discretization of the Langevin diffusion, whose stationary distribution is significantly less biased compared to ULA. The key reason why we have been able to provide these bounds is that we have access to a quite explicit form of the stationary distribution for SGS (see Equation \ref{eq:Urhodef} and Proposition \ref{prop:integrability_ergodicity}), while there is no such explicit form available for ULA. Another valuable property of SGS in the single splitting case is that the posterior mean of $\marginal$ is the same as the posterior mean of the target $\pi$, since $\marginal$ is formed by the convolution of $\pi$ and a Gaussian.
Finally, we would like to highlight that SGS in the single splitting case can be used to sample from $\pi$ without adding any bias by considering the marginal distribution of $\bz$ under $\Pi_{\rho}(\btheta,\bz)$ (which equals $\pi$) instead of $\pi_{\rho}(\btheta)$ \citep{pmlr-v134-lee21a,Liang2021}. We do not consider this alternative since our primary focus is on the multiple splitting scenario where this exact sampling approach is not possible.

\subsection{Illustrations on a Toy Gaussian Model}
\label{subsec:toy_Gaussian_model}

We perform a sanity check of the tightness of the upper bounds derived in Section \ref{subsec:results} on a toy Gaussian model for which a closed-form expression is available for both $\pi_{\rho}$ and the considered  statistical distances.
The target distribution is chosen as a scalar Gaussian 
\begin{equation*}
    \pi(\theta) = \mathcal{N}\pr{\theta;\mu,\frac{\sigma^2}{b}},
\end{equation*} 
where $b \geq 1$ and $\sigma > 0$.
In the sequel, we set $\mu=0$, $\sigma = 3$ and $b=10$.
To satisfy the assumptions associated to each distance (see Table \ref{table:equivalents} for a summary), we consider two splitting strategies.

\textit{Splitting strategy 1.} Since the bound on $\nr{\pi-\pi_{\rho}}_{\mathrm{TV}}$ is valid for any number of splitting operations, we set $U_i(\theta) = (2\sigma^2)^{-1}(\theta-\mu)^2$ for any $i \in [b]$. 
The marginal of $\theta$ under the instrumental hierarchical model in \eqref{eq:split_density_generalized} has the closed-form expression 
\begin{equation*}
    \pi_{\rho}(\theta) = \mathcal{N}\pr{\theta;\mu,\frac{\sigma^2 + \rho^2}{b}}.
\end{equation*}

\textit{Splitting strategy 2.} As the bound in 1-Wasserstein distance has only been established for a single splitting operation, we set $U(\theta) \coloneqq U_1(\theta) = b(2\sigma^2)^{-1}(\theta-\mu)^2$. This yields
\begin{equation*}
    \pi_{\rho}(\theta) = \mathcal{N}\pr{\theta;\mu,\frac{\sigma^2}{b} + \rho^2}.
\end{equation*}

Figure \ref{fig:toy_gaussian_1} illustrates the bounds derived in Section \ref{subsec:results} for both TV (with splitting strategy 1) and 1-Wasserstein (with splitting strategy 2) distances.
The 1-Wasserstein distance has been calculated by numerical integration using the identity $W_1(\pi,\pi_{\rho}) = \int_{\mathbb{R}}|F(u) - F_{\rho}(u)|\mathrm{d}u$
where $F$ and $F_{\rho}$ are the cumulative distribution functions (c.d.f.) associated to $\pi$ and $\pi_{\rho}$, respectively.
For this simple problem, these bounds manage to achieve the correct decay in $\mathcal{O}(\rho^2)$ for small values of $\rho$: the quantitative bound on the 1-Wasserstein distance is particularly tight.
\begin{figure}
\centering
  \text{}\hspace{-0.5cm}
  \mbox{{\includegraphics[scale=0.5]{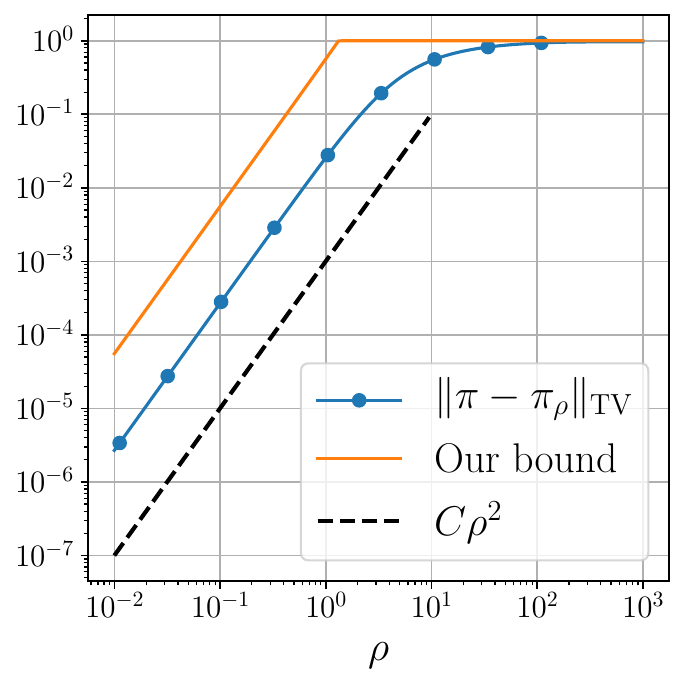}}}
  \mbox{{\includegraphics[scale=0.5]{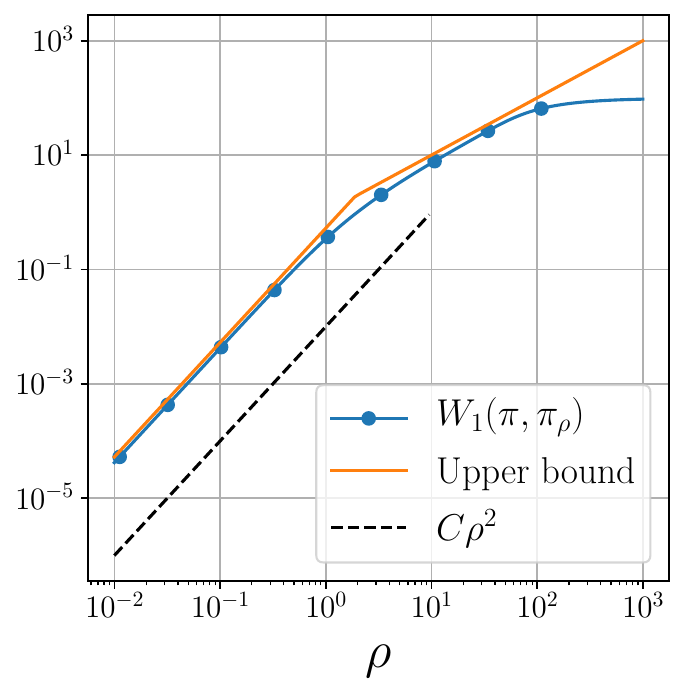}}}\\
  \caption{$\nr{\pi-\pi_{\rho}}_{\mathrm{TV}}$ (left) and $W_1(\pi,\pi_{\rho})$ (right) as a function of $\rho$ along with the bounds established in Section \ref{subsec:results} for the toy Gaussian model considered in Section \ref{subsec:toy_Gaussian_model}.
  The dashed line shows the slope associated to a decay in $\rho^2$ in log-log scale (the parameter $C$ stands for a positive constant).}
  \label{fig:toy_gaussian_1}
\end{figure}

\section{Main Results: Explicit Mixing Time Bounds}
\label{sec:main_result_convergence_rates}

We now state our main results regarding the non-asymptotic bounds on the mixing time of SGS.

\subsection{Explicit Convergence Rates}

In this section, we first prove a key result related to the Ricci curvature of SGS which allows us to derive explicit convergence rates for this algorithm.

\subsubsection{Lower Bound on the Ricci Curvature of the SGS Kernel}
The SGS sampler described in Algorithm \ref{algo:Gibbs} generates a Markov chain $(\btheta^{[t]})_{t\geq 1}$ of transition kernel $\B{P}_{\mathrm{SGS}}$ defined by
\begin{equation*}
    \B{P}_{\mathrm{SGS}}(\btheta,\btheta')=\int_{\bz_{1:b}}\joint(\bz_{1:b}|\btheta) \joint(\btheta'|\bz_{1:b})\mathrm{d}\bz_{1:b}\ ,
\end{equation*}
where the conditional distributions associated to $\Pi_{\rho}$ are defined in \eqref{eq:conditionalunderartificialtargetzi} and \eqref{eq:conditionalunderartificialtargettheta}. 
For any $\btheta\neq \btheta'\in \R^d$, given a metric $
w: \R^d\times \R^d\to \R^{+}$, the coarse Ricci curvature $K(\btheta,\btheta')$ of $\B{P}_{\mathrm{SGS}}$, introduced by \cite{ollivier2009ricci}, equals
\begin{equation*}\label{eq:KdefW1}
    K(\btheta,\btheta')=1-\frac{W_{1}^{w}(\B{P}_{\mathrm{SGS}}(\btheta,\cdot),\B{P}_{\mathrm{SGS}}(\btheta',\cdot))}{w(\btheta,\btheta')}\ ,
\end{equation*}
for any $\btheta\ne \btheta'\in \R^d$.
We can also define this quantity for $p$-Wasserstein  distances for any $1
\le p \le \infty$ by
\begin{equation*}\label{eq:KdefWp}
    K_p(\btheta,\btheta')=1-\frac{W_{p}^{w}(\B{P}_{\mathrm{SGS}}(\btheta,\cdot),\B{P}_{\mathrm{SGS}}(\btheta',\cdot))}{w(\btheta,\btheta')}\ .
\end{equation*}
In Theorem \ref{thm:RicciSGS}, we show under Assumption \assfive\, that for any 1$\le p\le \infty$ and a suitable metric $w$, $K_p(\btheta,\btheta')$ is lower bounded by a simple quantity having an explicit dependence w.r.t. the tolerance parameter $\rho$ and the strong convexity constants of the potential functions $\{U_{i}\}_{i\in [b]}$.

\begin{theorem}
    \label{thm:RicciSGS}
     Suppose that $\pi$ satisfies $(A_0)$ and that \assfive \ holds. Define the metric \begin{equation}\label{ew:wmetricdef}w(\btheta,
    \btheta')=\left\|\left(\sum_{i=1}^{b}\B{A}_{i}^\top \B{A}_i\right)^{1/2}(\btheta-\btheta')\right\|\ .\end{equation}
    Let
    \begin{equation}\label{eq:kappasgsdef}K_{\mathrm{SGS}}:=1-
    \left\| \left(\sum_{i=1}^{b} \B{A}_i^\top \B{A}_i\right)^{-1/2} \left(\sum_{i=1}^{b}\frac{\B{A}_i^\top \B{A}_i}{1+m_i\rho^2}\right)
    \left(\sum_{i=1}^{b} \B{A}_i^\top \B{A}_i\right)^{-1/2}
    \right\|\ . 
    \end{equation}
    Then for the transition kernel $\B{P}_{\mathrm{SGS}}$ of SGS, $K_p(\btheta,\btheta')\ge K_{\mathrm{SGS}}$ for any $\btheta\neq \btheta'\in \R^d$ and any $1\le p\le \infty$. 
\end{theorem}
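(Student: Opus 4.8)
The plan is to prove the equivalent one-step contraction statement
\[
W_p^w\pr{\B{P}_{\mathrm{SGS}}(\btheta,\cdot),\B{P}_{\mathrm{SGS}}(\btheta',\cdot)}\le (1-K_{\mathrm{SGS}})\,w(\btheta,\btheta'),
\]
since by definition this is exactly $K_p(\btheta,\btheta')\ge K_{\mathrm{SGS}}$. The central idea is to exhibit a \emph{single} coupling of the two one-step transitions whose endpoints $(\tilde\btheta,\tilde\btheta')$ obey the \emph{almost sure} bound $w(\tilde\btheta,\tilde\btheta')\le (1-K_{\mathrm{SGS}})\,w(\btheta,\btheta')$. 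An a.s. bound of this form controls $W_p^w$ for every $1\le p\le\infty$ simultaneously (take $p$-th moments when $p<\infty$, essential supremum when $p=\infty$), so the uniformity over $p$ claimed in the theorem requires no separate argument.

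To build the coupling, set $\B{S}:=\sum_{i=1}^b \B{A}_i^T\B{A}_i$ and recall from \eqref{eq:conditionalunderartificialtargettheta} that the $\btheta$-update is Gaussian with the \emph{fixed} covariance $\B{\Sigma}_{\btheta}=\rho^2\B{S}^{-1}$ and mean $\B{S}^{-1}\sum_i\B{A}_i^T\bz_i$, linear in $\bz_{1:b}$. I couple the two chains by (i) drawing, independently across $i$, a pair $(\bz_i,\bz_i')$ from a coupling of $\joint(\bz_i|\btheta)$ and $\joint(\bz_i|\btheta')$ specified below, and (ii) using the \emph{same} Gaussian noise in the final $\btheta$-update of both chains. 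Since the covariance is common and the mean is linear, the noise cancels, giving
\[
\tilde\btheta-\tilde\btheta'=\B{S}^{-1}\sum_{i=1}^b \B{A}_i^T(\bz_i-\bz_i'),\qquad w(\tilde\btheta,\tilde\btheta')=\nr{\B{S}^{-1/2}\sum_{i=1}^b \B{A}_i^T(\bz_i-\bz_i')}.
\]

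The key per-factor estimate is that $\joint(\bz_i|\btheta)$ depends on $\btheta$ only through $\B{A}_i\btheta$ and is $\tfrac{1}{1+m_i\rho^2}$-Lipschitz in $W_\infty$: there is a coupling with $\nr{\bz_i-\bz_i'}\le\tfrac{1}{1+m_i\rho^2}\nr{\B{A}_i(\btheta-\btheta')}$ a.s. Indeed, $\joint(\bz_i|\btheta)\propto\exp\pr{-U_i(\bz_i)-\tfrac{1}{2\rho^2}\nr{\bz_i-\B{A}_i\btheta}^2}$ is a Gaussian tilt of $\exp(-U_i)$, an exponential family in the natural parameter $\B{A}_i\btheta/\rho^2$, and is $(m_i+\rho^{-2})$-strongly log-concave by \assfive. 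Transporting the measure as $\B{A}_i\btheta$ moves along the segment to $\B{A}_i\btheta'$ yields a velocity field $\grad\phi$ solving a Poisson equation for the Langevin generator of $\joint(\bz_i|\cdot)$, with right-hand side of constant gradient $\B{A}_i(\btheta'-\btheta)/\rho^2$; the Bakry--Émery gradient estimate $\nr{\grad P_t g}_\infty\le e^{-(m_i+\rho^{-2})t}\nr{\grad g}_\infty$ then gives the pointwise bound $\nr{\grad\phi}\le\tfrac{\nr{\B{A}_i(\btheta-\btheta')}}{1+m_i\rho^2}$, and integrating the flow produces the a.s. displacement bound. I then assemble the factors by Cauchy--Schwarz. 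Writing $\B{M}:=\B{S}^{-1/2}\pr{\sum_{i=1}^b\tfrac{\B{A}_i^T\B{A}_i}{1+m_i\rho^2}}\B{S}^{-1/2}$, so that $\nr{\B{M}}=1-K_{\mathrm{SGS}}$, and expanding $w(\tilde\btheta,\tilde\btheta')=\sup_{\nr{\B{u}}=1}\sum_i\inner{\B{A}_i\B{S}^{-1/2}\B{u}}{\bz_i-\bz_i'}$, the weighted Cauchy--Schwarz inequality with weights $(1+m_i\rho^2)^{-1}$ bounds this by $\pr{\B{u}^T\B{M}\B{u}}^{1/2}\pr{(\B{S}^{1/2}(\btheta-\btheta'))^T\B{M}(\B{S}^{1/2}(\btheta-\btheta'))}^{1/2}\le\nr{\B{M}}\,w(\btheta,\btheta')$, which is precisely the desired a.s. contraction. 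Notably this step absorbs the supremum over directions, so no alignment between $\bz_i-\bz_i'$ and $\B{A}_i(\btheta-\btheta')$ is needed and a pure magnitude bound per factor suffices.

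The main obstacle is the per-factor $W_\infty$ contraction with the \emph{exact} constant $\tfrac{1}{1+m_i\rho^2}$. One must justify the transport argument rigorously (regularity and integrability of $\phi$, existence of the flow, and the applicability of the gradient estimate for the Langevin semigroup of a strongly log-concave measure), and handle the degenerate case $m_i=0$, where the factor degenerates to $1$ but the bound still holds because the quadratic term alone makes $\joint(\bz_i|\btheta)$ $\rho^{-2}$-strongly log-concave. An alternative route through Brascamp--Lieb controls only the Jacobian $\tfrac{1}{\rho^2}\mathrm{Cov}_{\joint(\bz_i|\cdot)}(\bz_i)\preceq\tfrac{1}{1+m_i\rho^2}\B{I}$ of the conditional \emph{mean} map, which would deliver a $W_1$ or $W_2$ statement but not the uniform-in-$p$, almost-sure bound; it is the full transport estimate that yields the $W_\infty$ control underpinning the clean operator-norm form of $K_{\mathrm{SGS}}$.
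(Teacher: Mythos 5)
Your proposal is correct in outline and shares the paper's overall coupling architecture: an almost-sure per-factor bound $\nr{\bz_i-\bz_i'}\le \frac{1}{1+m_i\rho^2}\nr{\B{A}_i(\btheta-\btheta')}$, a synchronous coupling of the Gaussian $\btheta$-updates (valid because $\B{\Sigma}_{\btheta}$ is common to both chains and the mean is linear in $\bz_{1:b}$), and a reduction of the resulting displacement to the operator norm defining $K_{\mathrm{SGS}}$; like yours, the paper's almost-sure bound handles all $1\le p\le\infty$ at once. Where you genuinely diverge is in the proofs of the two key ingredients. For the per-factor estimate, the paper (Propositions \ref{prop:graddiffWass} and \ref{prop:contractionSGS}) runs three Langevin diffusions driven by a shared Brownian motion and lets $t\searrow 0$ to get $W_p(\mu,\mu')\le \nr{\grad\log\mu-\grad\log\mu'}_{L^p(\mu)}/m(\mu')$, which, applied to the constant gradient difference $\B{A}_i(\btheta-\btheta')/\rho^2$ and strong convexity constant $m_i+\rho^{-2}$, gives exactly your constant; your Kim--Milman-style transport argument (Poisson equation for the Langevin generator plus the Bakry--\'Emery commutation $\nr{\grad P_t g}_{\infty}\le e^{-(m_i+\rho^{-2})t}\nr{\grad g}_{\infty}$, then integrating the flow) is a legitimate alternative, and it has the pleasant feature that the flow map is itself a deterministic coupling with a pointwise displacement bound, so the $p=\infty$ case requires no separate limiting argument. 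For the assembly, the paper invokes Lemma \ref{lem:symmmxscaling} to write $\bz_i-\bz_i'=\B{W}_i\B{A}_i(\btheta-\btheta')/(1+m_i\rho^2)$ with symmetric $-\B{I}\preceq\B{W}_i\preceq\B{I}$ and then uses Loewner monotonicity of the operator norm; your weighted Cauchy--Schwarz (splitting $(1+m_i\rho^2)^{-1}$ into two square roots and bounding $\B{u}^T\B{M}\B{u}\le\nr{\B{M}}$ on each side, using that $\B{M}$ is positive semi-definite) reaches the same bound $\nr{\B{M}}\,w(\btheta,\btheta')$ and is arguably more elementary, correctly exploiting that no alignment between $\bz_i-\bz_i'$ and $\B{A}_i(\btheta-\btheta')$ is needed.

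One genuine gap remains in your key lemma. Theorem \ref{thm:RicciSGS} assumes only \asszero\ and \assfive, so the $U_i$ need not be differentiable, let alone $C^2$, and the Poisson-equation/semigroup machinery you rely on is not directly available for such potentials. You flag ``regularity of $\phi$'' and the $m_i=0$ degeneracy, but the missing step is a concrete regularization-and-limit scheme: the paper replaces $U_i$ by its Moreau--Yosida envelope $U_i^{\lambda}$ (which is $\lambda^{-1}$-gradient-Lipschitz and $\frac{m_i}{1+\lambda m_i}$-strongly convex), proves the contraction with constant $\l(1+\rho^2 m_i/(1+m_i\lambda)\r)^{-1}$, and then shows $W_p(\mu_i^{\lambda},\mu_i)\to 0$ as $\lambda\to 0$ via a dominated-convergence argument before passing to $p=\infty$. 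Your proof needs the same (or an equivalent) approximation spelled out to be complete; as written, your per-factor $W_\infty$ bound is only established for smooth $U_i$. Your closing remark is also on point: Brascamp--Lieb control of the conditional-mean Jacobian would indeed not suffice, since it delivers $W_1$/$W_2$-type statements rather than the almost-sure displacement bound on which the uniform-in-$p$ conclusion rests.
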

\begin{proof}
    The proof is postponed to \ref{appendix_proof_theorem_K_SGS}
\end{proof}

As shown in the following corollary, Theorem \ref{thm:RicciSGS} implies that the convergence rate of SGS towards its invariant distribution is governed by the constant $K_{\mathrm{SGS}}$ defined in \eqref{eq:kappasgsdef}.

\begin{corollary}
    \label{coro:convergence_rates_SGS}
     Suppose that $\pi$ satisfies $(A_0)$ and that \assfive \ holds. 
    Then, for any $1\le p\le \infty$ and any initial distribution $\nu$ on $\R^d$, we have
    \begin{align}
        \label{eq:W1convergencesec4}
        W_{p}^{w}(\nu \B{P}_{\mathrm{SGS}}^t, \pi_{\rho})&\le W_{p}^{w}(\nu, \pi_{\rho})\cdot (1-K_{\mathrm{SGS}})^t\ ,\\
        \label{eq:TVconvergencesec4}\|\nu \B{P}_{\mathrm{SGS}}^t-\pi_{\rho}\|_{\mathrm{TV}}&\le \mathrm{Var}_{\pi_{\rho}}\l(\frac{\mathrm{d}\nu}{\mathrm{d} \pi_{\rho}}\r)\cdot (1-K_{\mathrm{SGS}})^t\ ,\nonumber
    \end{align}
    where $W_p^w$ denotes the Wasserstein distance of order $p$ w.r.t. the metric $w$ defined in \eqref{ew:wmetricdef}.
\end{corollary}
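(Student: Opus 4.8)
The plan is to deduce both inequalities directly from the pointwise curvature estimate of Theorem \ref{thm:RicciSGS}, but to treat the two distances by different mechanisms: the Wasserstein bound follows by lifting the one-step contraction between Dirac masses to a contraction between arbitrary measures, whereas the total variation bound exploits the reversibility and positive semi-definiteness of the SGS kernel to convert the curvature estimate into an $L^2$ spectral gap.

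For the Wasserstein inequality \eqref{eq:W1convergencesec4}, I would first note that Theorem \ref{thm:RicciSGS} is precisely the statement $W_p^w(\B{P}_{\mathrm{SGS}}(\btheta,\cdot),\B{P}_{\mathrm{SGS}}(\btheta',\cdot))\le (1-K_{\mathrm{SGS}})\,w(\btheta,\btheta')$ for every $\btheta\neq\btheta'\in\R^d$ and every $1\le p\le\infty$. The goal is to upgrade this to $W_p^w(\mu\B{P}_{\mathrm{SGS}},\nu\B{P}_{\mathrm{SGS}})\le (1-K_{\mathrm{SGS}})\,W_p^w(\mu,\nu)$ for arbitrary $\mu,\nu$. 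I would do this by gluing couplings: take a $W_p^w$-optimal coupling $\Gamma$ of $\mu$ and $\nu$, and from each pair $(\btheta,\btheta')$ run the explicit synchronous coupling of $\B{P}_{\mathrm{SGS}}(\btheta,\cdot)$ and $\B{P}_{\mathrm{SGS}}(\btheta',\cdot)$ constructed in the proof of Theorem \ref{thm:RicciSGS} (reusing the same Gaussian noise in the $\bz_{1:b}$- and $\btheta$-updates, so that this coupling depends measurably on $(\btheta,\btheta')$ and no abstract measurable-selection argument is needed). The resulting plan couples $\mu\B{P}_{\mathrm{SGS}}$ and $\nu\B{P}_{\mathrm{SGS}}$, and integrating the pointwise bound against $\Gamma$ delivers the one-step contraction. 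Since $\pi_{\rho}$ is invariant, $\pi_{\rho}\B{P}_{\mathrm{SGS}}=\pi_{\rho}$, and iterating the contraction $t$ times with $\nu$ replaced by the initial law and $\pi_{\rho}$ yields \eqref{eq:W1convergencesec4}.

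For the total variation inequality, I would use the two-block structure of SGS. Writing one step as ``sample $\bz_{1:b}\sim\joint(\bz_{1:b}|\btheta)$ then $\btheta'\sim\joint(\btheta'|\bz_{1:b})$'', a short computation using $\pi_{\rho}(\btheta)\joint(\bz_{1:b}|\btheta)=\joint(\btheta,\bz_{1:b})=\joint(\bz_{1:b})\joint(\btheta|\bz_{1:b})$ shows that $\B{P}_{\mathrm{SGS}}$ is reversible w.r.t. $\pi_{\rho}$ and, as a data-augmentation kernel, has the form $\B{P}_{\mathrm{SGS}}=AA^{*}$ with $A$ the conditional-expectation operator $f\mapsto \E[f(\bz_{1:b})\mid\btheta]$; hence $\B{P}_{\mathrm{SGS}}$ is a self-adjoint, positive semi-definite contraction on $L^2(\pi_{\rho})$ with spectrum in $[0,1]$. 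The curvature bound then yields a spectral gap: the $W_2^w$-contraction gives $\mathrm{Lip}_w(\B{P}_{\mathrm{SGS}}g)\le(1-K_{\mathrm{SGS}})\mathrm{Lip}_w(g)$, and through the standard link between coarse Ricci curvature and spectral gap for reversible chains \citep{ollivier2009ricci} the norm of $\B{P}_{\mathrm{SGS}}$ on the mean-zero subspace of $L^2(\pi_{\rho})$ is at most $1-K_{\mathrm{SGS}}$. Writing $h=\mathrm{d}\nu/\mathrm{d}\pi_{\rho}$, reversibility gives that $\B{P}_{\mathrm{SGS}}^t h$ is the density of $\nu\B{P}_{\mathrm{SGS}}^t$ w.r.t. $\pi_{\rho}$, so $\|\B{P}_{\mathrm{SGS}}^t(h-1)\|_{L^2(\pi_{\rho})}\le(1-K_{\mathrm{SGS}})^t\sqrt{\mathrm{Var}_{\pi_{\rho}}(h)}$; bounding the total variation distance by the $L^2$ norm of the centred density $\B{P}_{\mathrm{SGS}}^t(h-1)$ then gives the geometric decay \eqref{eq:TVconvergencesec4}.

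The main obstacle is this last mechanism, namely converting the geometric Wasserstein contraction into the $L^2$/spectral-gap statement required for the total variation bound. The delicate point is that eigenfunctions of $\B{P}_{\mathrm{SGS}}$ need not be Lipschitz, so the implication ``curvature $\ge K_{\mathrm{SGS}}$ $\Rightarrow$ gap $\ge K_{\mathrm{SGS}}$'' needs either the abstract argument of \cite{ollivier2009ricci} or, exploiting the Gaussian conditional $\joint(\btheta|\bz_{1:b})$ together with the positive semi-definiteness established above, a direct control of the contraction of $\B{P}_{\mathrm{SGS}}$ on $L^2(\pi_{\rho})$. The Wasserstein part, by contrast, is essentially bookkeeping once the explicit coupling from the proof of Theorem \ref{thm:RicciSGS} is reused.
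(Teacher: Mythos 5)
Your proposal is correct and follows essentially the same route as the paper: the Wasserstein bound via an iterated coupling driven by the pointwise Ricci estimate of Theorem \ref{thm:RicciSGS} (the paper glues stepwise $W_p^w$-optimal couplings where you reuse the explicit synchronous one, and treats $p=\infty$ by a limiting argument), and the TV bound via reversibility of the data-augmentation kernel, Ollivier's curvature-to-spectral-gap result, and an $L^2(\pi_{\rho})$ bound on the centred density, which is exactly the paper's Corollary \ref{cor:spectralgapSGS} combined with Proposition \ref{prop:tvfromspectralgap}. Note only that your final estimate $\frac{1}{2}\sqrt{\mathrm{Var}_{\pi_{\rho}}(\mathrm{d}\nu/\mathrm{d}\pi_{\rho})}\,(1-K_{\mathrm{SGS}})^t$ coincides with what the paper's own argument yields, differing only cosmetically from the constant displayed in the corollary's statement.
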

\begin{proof}
    The proof is postponed to \ref{appendix_proof_coro_convergence_rates}
\end{proof}

An attractive property of the convergence rate $K_{\mathrm{SGS}}$ is that it is dimension-free: it only depends on $b$, $\rho^2$ and the strong convexity parameter $m_i$, and neither requires differentiability nor smoothness of the potential functions $\{U_{i}\}_{i \in [b]}$.
This is of interest since Corollary \ref{coro:convergence_rates_SGS} can be applied to many problems where non-differentiable potential functions are considered; see \cite{Li2010_elasticnet,Gu_2014,Xu2015_grouplasso}.

\subsubsection{Illustrations on the Toy Gaussian Example}

Before proving our mixing time bounds for the SGS, we perform a simple sanity check on the toy Gaussian example considered in Section \ref{subsec:toy_Gaussian_model} in order to assess the tightness of the convergence bounds stated in Corollary \ref{coro:convergence_rates_SGS}. In this case, the $\theta$-chain follows a Gaussian auto-regressive process of order 1. We can thus compute analytically the Markov transition kernel $\nu P_{\mathrm{SGS}}^t$ and the total variation and 1-Wasserstein distances between this kernel and the invariant distribution $\pi_{\rho}$; see \ref{appendix:toy_Gaussian_example_2} for details.
For this toy Gaussian example, the convergence rate of SGS is governed by 
\begin{align*}
    K_{\mathrm{SGS}} &= \frac{\rho^2}{\sigma^2+\rho^2}\ , \ \text{for the splitting strategy 1}, \\
    K_{\mathrm{SGS}} &= \frac{b\rho^2}{\sigma^2+b\rho^2}\ , \ \text{for the splitting strategy 2}.
\end{align*}
Figure \ref{fig:toy_gaussian_2} illustrates our convergence bounds for each splitting strategy and associated statistical distance.
For the total variation case, the slope in log-scale associated to our bound, which equals $\log(1 - K_{\mathrm{SGS}})$, appears to be sharp since it matches the slope associated to the observed convergence rate.
Regarding the Wasserstein scenario, the slope associated to our bound is roughly equal to twice the real slope in log-scale, and hence is a bit conservative.
\begin{figure}
\centering
  \mbox{{\includegraphics[scale=0.5]{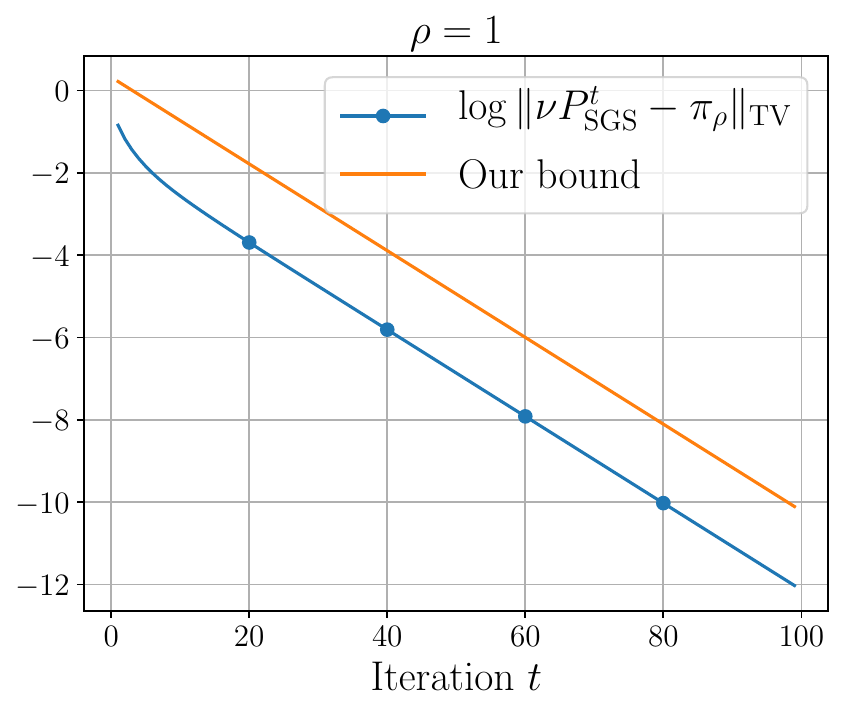}}}
  \mbox{{\includegraphics[scale=0.5]{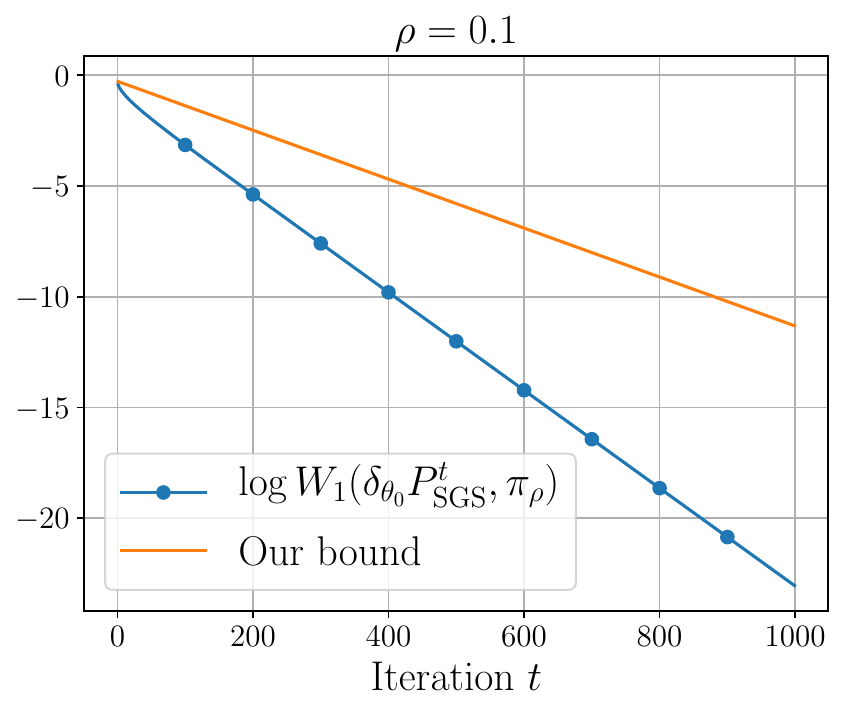}}}
  \caption{From left to right: $\nr{\nu P_{\mathrm{SGS}}^t-\pi_{\rho}}_{\mathrm{TV}}$ with $\nu(\theta) = \mathcal{N}(\theta;\mu,\sigma^2/b)$ and $W_1(\delta_{\theta_0} P_{\mathrm{SGS}}^t,\pi_{\rho})$ with $\theta_0 = 0$ along with the bounds shown in Theorem \ref{thm:RicciSGS} for the toy Gaussian model considered in Section \ref{subsec:toy_Gaussian_model}.}
  \label{fig:toy_gaussian_2}
\end{figure}

We are now ready to prove our main results, namely mixing time bounds associated to SGS when apply to an initial target density $\pi$ which is both smooth and strongly log-concave.
These assumptions will be weakened in Section \ref{subsec:weakly_log_concave}.

\subsection{User-Friendly Mixing Time Bounds}
\label{subsec:mixing_time_bounds}

We consider two cases, namely the single splitting strategy where $b=1$ and the multiple one where the initial density $\pi$ involves $b \geq 1$ composite potential functions.
In both cases, we derived explicit expressions for the mixing time of SGS and compared them to the ones recently obtained in the MCMC literature including results associated to common subsampling MCMC approaches.

\subsubsection{Single Splitting Strategy}
\label{subsec:single_splitting}

We begin by considering the case $b=1$ corresponding to a single splitting operation of the potential function $U \coloneqq U_1$.
Since sampling from the conditional \eqref{eq:conditionalunderartificialtargetzi} is as difficult as sampling from the initial target $\pi$, this scheme is not particularly relevant from a practical point of view.
Nevertheless, the convergence analysis of the scheme and its comparison with state-of-the-art MCMC approaches are still worth studying from a theoretical point of view.
Indeed, the non-asymptotic results we derive in the single splitting case are simpler and allow practitioners to have a first theoretical understanding of the convergence behavior of Algorithm \ref{algo:Gibbs} in high-dimensional settings.
By combining our error bounds on $W_1(\pi,\pi_{\rho})$ from Proposition \ref{thm:Wassersteinpipirhosinglesplitting} to the convergence bound \eqref{eq:W1convergencesec4} in Corollary \ref{coro:convergence_rates_SGS}, we obtain the following complexity result.

\begin{table}
\centering
\begin{tabular}{llll}
\thickhline
     \multicolumn{1}{c}{\bfseries Reference} &
      \multicolumn{1}{c}{\bfseries Method} &
      \multicolumn{1}{c}{\bfseries Validity} &
      \multicolumn{1}{c}{\bfseries Evals}\\
    \hline
    \cite{durmus2019analysis} & Unadjusted Langevin & $0< \epsilon\le 1$& $\mathcal{O}^*\l( \frac{\kappa}{ \epsilon^2}\r)$\\
   \cite{Dalalyan2019} & SGLD & $0< \epsilon\le 1$& $\mathcal{O}^*\l( \frac{\kappa}{ \epsilon^2}\r)$\\
    \cite{pmlr-v75-cheng18a} & Underdamped Langevin& $0< \epsilon\le 1$ & 
    $\mathcal{O}^*\l( \frac{\kappa^2}{\epsilon}\r)$
    \\
    \cite{dalalyan2018sampling}& Underdamped Langevin & $0<\epsilon\le \frac{1}{\sqrt{\kappa}}$ & $\mathcal{O}^*\pr{\frac{\kappa^{3/2}}{\epsilon}}$\\
    \cite{chen2019optimal} & Hamiltonian Dynamics & $0<\epsilon\le 1$ & $\mathcal{O}^*\pr{\frac{\kappa^{3/2}}{\epsilon}}$\\
    this paper & SGS with single splitting & $0<\epsilon\le \frac{1}{d \sqrt{\kappa}}$ & $\mathcal{O}^*\pr{\frac{\kappa^{1/2}}{\epsilon}}$\\
    \hline
\thickhline
\end{tabular}
\caption{Comparison of convergence rates in Wasserstein distance with the literature, starting from the minimizer $\btheta^\star$ of the $m_1$-strongly convex and $M_1$-smooth potential $U_1(\btheta)$, with condition number $\kappa=M_1/m_1$. SGS with single splitting is implemented based on rejection sampling. $\mathcal{O}^*(\cdot)$ denotes $\mathcal{O}(\cdot)$ up to polylogarithmic factors. In the last column, the complexity stands for the number of gradient and function evaluations to get a $W_1$ error of $\frac{\epsilon \sqrt{d}}{\sqrt{m}_1}$.
The acronym SGLD refers to stochastic gradient Langevin dynamics.}
\label{table:comparison_wass}
\end{table}

\begin{theorem}[Complexity bound for Wasserstein distance]\label{prop:compcomplexityWass}
 Suppose that $\pi$ satisfies $(A_0)$.
 Suppose that $b=1$ and that Assumptions \asstwo, \assfive \ and \asssix \ hold. Let $\bthetastar$ be the unique minimizer of $U_1(\btheta)$ and let $\nu = \delta_{\bthetastar}$ be the initial distribution. Suppose that $\epsilon\le 1$. Then, with the choice 
\begin{equation}
    \label{eq:rho2singlesplittingwass}
    \rho^2=\max\l(\frac{\epsilon^2}{4 m_1},\frac{\epsilon}{\sqrt{m_1M_1}}\r)\ ,
\end{equation}
and number of iterations $t \ge t_{\mathrm{mix}}(\epsilon\sqrt{d/m_1};\nu)$ where
\begin{equation}
    \label{eq:ksinglesplittingwass}
    t_{\mathrm{mix}}(\epsilon\sqrt{d/m_1};\nu) =  \frac{\log\l(\frac{3}{\epsilon}\r)}{\log\l(1+\max\l(\frac{\epsilon^2}{4},\epsilon\sqrt{\frac{m_1}{M_1}}\r)\r)}\ ,
\end{equation}
we have
\begin{equation*}
    W_1(\nu P_{\mathrm{SGS}}^{t},\pi)\le \frac{\epsilon}{\sqrt{m_1}}\sqrt{d}\ .
\end{equation*}
This implies that, using $t$ steps of SGS, we can obtain a sample that has a  Wasserstein distance from the target $\pi$ at most equal to $\frac{\epsilon \sqrt{d}}{\sqrt{m_1}}$.
\end{theorem}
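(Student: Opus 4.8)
The plan is to bound $W_1(\nu \B{P}_{\mathrm{SGS}}^{t}, \pi)$ via the triangle inequality
\begin{equation*}
W_1(\nu \B{P}_{\mathrm{SGS}}^{t}, \pi) \le W_1(\nu \B{P}_{\mathrm{SGS}}^{t}, \pi_{\rho}) + W_1(\pi_{\rho}, \pi),
\end{equation*}
which separates the genuine mixing of SGS towards its own stationary law $\pi_{\rho}$ from the bias $W_1(\pi_{\rho},\pi)$ already controlled by Proposition \ref{thm:Wassersteinpipirhosinglesplitting}. The strategy is to allocate exactly half of the target budget $\tfrac12\epsilon\sqrt{d/m_1}$ to each summand. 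First I would record that under \asssix\ we have $\B{A}_1=\B{I}_d$, so $\sum_i \B{A}_i^T\B{A}_i=\B{I}_d$ and the metric $w$ of \eqref{ew:wmetricdef} is Euclidean; hence $W_1^w=W_1$, and the curvature constant \eqref{eq:kappasgsdef} collapses to $K_{\mathrm{SGS}}=m_1\rho^2/(1+m_1\rho^2)$, i.e. $1-K_{\mathrm{SGS}}=(1+m_1\rho^2)^{-1}$.

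For the bias term I would substitute the two branches of the choice \eqref{eq:rho2singlesplittingwass} into $W_1(\pi,\pi_{\rho})\le \min\pr{\rho\sqrt{d},\,\tfrac12\rho^2\sqrt{M_1 d}}$. The role of the $\max$ in \eqref{eq:rho2singlesplittingwass} is precisely that, whichever branch of $\rho^2$ is active, the matching branch of the $\min$ evaluates to $\tfrac12\epsilon\sqrt{d/m_1}$: when $\rho^2=\epsilon/\sqrt{m_1M_1}$ the smoothness branch gives $\tfrac12\rho^2\sqrt{M_1 d}=\tfrac12\epsilon\sqrt{d/m_1}$, and when $\rho^2=\epsilon^2/(4m_1)$ the Lipschitz branch gives $\rho\sqrt{d}=\tfrac12\epsilon\sqrt{d/m_1}$. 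Thus the bias never exceeds half the budget.

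Next I would bound the initial distance $W_1(\nu,\pi_{\rho})$ with $\nu=\delta_{\bthetastar}$. A second triangle inequality gives $W_1(\delta_{\bthetastar},\pi_{\rho})\le W_1(\delta_{\bthetastar},\pi)+W_1(\pi,\pi_{\rho})$, where $W_1(\delta_{\bthetastar},\pi)=\int\nr{\btheta-\bthetastar}\pi(\btheta)\mathrm{d}\btheta$. The key estimate is the second-moment bound $\int\nr{\btheta-\bthetastar}^2\pi(\btheta)\mathrm{d}\btheta\le d/m_1$, obtained by integration by parts: since $\grad\pi=-\grad U_1\cdot\pi$, one gets $\int\inner{\btheta-\bthetastar}{\grad U_1(\btheta)}\pi\,\mathrm{d}\btheta=d$, while $m_1$-strong convexity from \assfive\ together with $\grad U_1(\bthetastar)=\B{0}_d$ yields $\inner{\btheta-\bthetastar}{\grad U_1(\btheta)}\ge m_1\nr{\btheta-\bthetastar}^2$. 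Jensen's inequality then gives $W_1(\delta_{\bthetastar},\pi)\le\sqrt{d/m_1}$, and since the bias is $\le\tfrac12\epsilon\sqrt{d/m_1}\le\tfrac12\sqrt{d/m_1}$ by $\epsilon\le 1$, we obtain $W_1(\delta_{\bthetastar},\pi_{\rho})\le\tfrac32\sqrt{d/m_1}$.

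Finally I would apply the contraction \eqref{eq:W1convergencesec4} of Corollary \ref{coro:convergence_rates_SGS} with $p=1$,
\begin{equation*}
W_1(\nu\B{P}_{\mathrm{SGS}}^{t},\pi_{\rho})\le W_1(\nu,\pi_{\rho})\,(1-K_{\mathrm{SGS}})^t\le \tfrac32\sqrt{d/m_1}\,(1+m_1\rho^2)^{-t}.
\end{equation*}
Because the chosen $\rho^2$ makes $m_1\rho^2=\max(\epsilon^2/4,\epsilon\sqrt{m_1/M_1})$ the base of the logarithm in \eqref{eq:ksinglesplittingwass}, the stated $t$ gives $(1+m_1\rho^2)^{t}=3/\epsilon$, so the convergence term is $\le\tfrac32\sqrt{d/m_1}\cdot\tfrac{\epsilon}{3}=\tfrac12\epsilon\sqrt{d/m_1}$. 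Adding the two halves yields $W_1(\nu\B{P}_{\mathrm{SGS}}^{t},\pi)\le\epsilon\sqrt{d}/\sqrt{m_1}$. I expect the only delicate part to be the constant bookkeeping—verifying that each active branch of the $\max$/$\min$ closes at exactly half the budget and that the factor $3$ inside the logarithm is what cancels the initial $\tfrac32$ factor—rather than any conceptual difficulty, since all the analytic work is already packaged in Proposition \ref{thm:Wassersteinpipirhosinglesplitting} and Corollary \ref{coro:convergence_rates_SGS}.
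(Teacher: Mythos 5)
Your proposal is correct and follows essentially the same route as the paper's proof: the same triangle-inequality decomposition, the bias bound from Proposition \ref{thm:Wassersteinpipirhosinglesplitting} giving exactly $\tfrac{\epsilon}{2}\sqrt{d/m_1}$ under the choice \eqref{eq:rho2singlesplittingwass}, the initial-distance bound $W_1(\delta_{\bthetastar},\pi_{\rho})\le\tfrac{3}{2}\sqrt{d/m_1}$, and the geometric contraction at rate $1-K_{\mathrm{SGS}}=(1+m_1\rho^2)^{-1}$ closing the budget via $(1+m_1\rho^2)^t\ge 3/\epsilon$. The only cosmetic difference is that you reprove the second-moment bound $\int\nr{\btheta-\bthetastar}^2\pi(\btheta)\mathrm{d}\btheta\le d/m_1$ by integration by parts, where the paper simply cites Proposition 1(ii) of \cite{durmus2018high}, and you invoke Corollary \ref{coro:convergence_rates_SGS} where the paper cites Corollary 21 of \cite{ollivier2009ricci} directly---both pairs are interchangeable.
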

\begin{proof}
    The proof is postponed to \ref{subsec:mixing_time_bounds_Wasserstein}
\end{proof}

Several comments can be made on the result stated in Theorem \ref{prop:compcomplexityWass}.
The expressions of both the choice of the tolerance parameter \eqref{eq:rho2singlesplittingwass} and the mixing time \eqref{eq:ksinglesplittingwass} are simple and can be computed in practice.
These nice properties along with the explicit dependencies of the mixing time of SGS w.r.t. the condition number $\kappa \coloneqq M_1/m_1$ of $U_1$ and the desired precision $\epsilon$ make Theorem \ref{prop:compcomplexityWass} of particular interest for practitioners.
In addition, under smoothness and strong convexity of the potential $U_1$ (see Assumption \ref{assumptions}), one can show that $W_1(\delta_{\bthetastar},\pi) \le \sqrt{d/m_1}$  \citep[Proposition 1]{durmus2018high}.
This quantity can be interpreted as the typical deviation associated to the sampling problem.
Under the assumptions of Theorem \ref{prop:compcomplexityWass}, it follows that $W_1(\nu P_{\mathrm{SGS}}^{t},\pi)$ is upper bounded by $\epsilon$ times this typical deviation.
Note that considering the relative precision $\epsilon \sqrt{d/m_1}$ yields a mixing time bound \eqref{eq:ksinglesplittingwass} which is invariant to the scaling of $U$ (that is replacing $U$ by $\alpha U$ with $\alpha$ > 0).  

For a fixed condition number $\kappa$ and a sufficiently small precision $\epsilon$, \eqref{eq:ksinglesplittingwass} implies that the mixing time of SGS scales as $\mathcal{O}(\sqrt{\kappa} \epsilon^{-1}\log(3\epsilon^{-1}))$.
To be competitive with other MCMC algorithms, such as those based on Langevin or Hamiltonian dynamics, we have to ensure that the auxiliary variable $\bz_1$ can be efficiently drawn at each iteration of Algorithm \ref{algo:Gibbs}.  
In Proposition \ref{prop:rejectionsamplingcomplexity} in Section \ref{subsubsec:sampling_zi}, we established that this is possible by showing that if $\epsilon\le 1/(d\sqrt{\kappa})$, then sampling $\bz_1$ given $\btheta$ can be performed by rejection sampling with $\mathcal{O}(1)$ expected evaluations of $U_1$ and its gradient. 
Based on this rejection sampling scheme, Table \ref{table:comparison_wass} compares our complexity result for SGS with single splitting with the ones derived recently in the literature. It shows that SGS compares favourably to standard MCMC methods when $0<\epsilon\le 1/(d\sqrt{\kappa})$ including the commonly-used subsampling approach called stochastic gradient Langevin dynamics (SGLD) \citep{Welling2011}. 
An explanation for this improved performance in terms of precision $\epsilon$ and condition number $\kappa$ is the fact that SGS admits a stationary distribution with an explicit form, which we were able to exploit to establish smaller bounds on the bias for the same step size.

\begin{theorem}[Complexity bound for TV distance, single splitting]
\label{PROP:COMPCOMPLEXITYTV_SINGLE}
Suppose that $b=1$, $d_1=d$, $\B{A}_1$ is invertible, and that Assumptions \asszero, \asstwo \ and \assfive \ hold, with $m_1>0$. Let $\bthetastar$ be the unique minimizer of $\btheta \mapsto U(\btheta)=U_1(\B{A}_1\btheta)$.
Let $\nu(\btheta):=\mathcal{N}(\btheta;\bthetastar,(M_1\B{A}_1^\top\B{A}_1)^{-1})$ be the initial distribution. Then for any $0< \epsilon\le 1$, with the choice 
\begin{equation*}
    \rho^2 \le \frac{\epsilon}{dM_1}\ ,
\end{equation*}
and number of iterations $t \geq t_{\mathrm{mix}}(\epsilon;\nu)$ where
\begin{equation*}
    t_{\mathrm{mix}}(\epsilon;\nu) = \frac{\log\l(\frac{2}{\epsilon}\r) + C/2}{K_{\mathrm{SGS}}}\ ,
\end{equation*} 
for $K_{\mathrm{SGS}}=\frac{m_1\rho^2}{1+m_1\rho^2}$ and
\begin{equation*}
C=\frac{5d}{8} + \frac{d}{2}\log\l(\frac{M_1}{m_1}\r)\ ,
\end{equation*}
we have 
\begin{equation*}
\|\nu P_{\mathrm{SGS}}^{t} - \pi\|_{\mathrm{TV}}\leq \epsilon\ .
\end{equation*}
This means that starting from $\nu$, after $t$ step of SGS, we are at a TV-distance at most $\epsilon$ from $\pi$.
\end{theorem}
\begin{proof}
    The proof is postponed to  \ref{subsec:mixing_time_bounds_TV}
\end{proof}

\subsubsection{Multiple Splitting Strategy}
\label{subsec:multiple_splitting}

In this section, we consider the general case where $b \ge 1$ potential functions have been split as in \eqref{eq:split_density_generalized}.
For this scenario, the following theorem states explicit mixing time bounds in total variation distance.

\begin{theorem}[Complexity bound for TV distance, multiple splitting]
\label{prop:compcomplexityTV}
Assume that \asszero, \asstwo, \assfive \ and \assseven \ hold, and $\mathrm{det}\l(\sum_{i=1}^b m_i\B{A}_i^\top\B{A}_i\r)>0$.
Let $\bthetastar$ be the unique minimizer of  $U(\btheta)=\sum_{i=1}^{b}U_i(\B{A}_i\btheta)$.
Let \[\nu(\btheta):=\mathcal{N}\l(\btheta;\bthetastar,\l(\sum_{i=1}^b M_i\B{A}_i^\top\B{A}_i\r)^{-1}\r)\] be the initial distribution. Then for any $0< \epsilon\le 1$, with the choice 
\begin{equation}\label{eq:rho2TVquadratic}
    \rho^2  \le \frac{\displaystyle\sum_{i=1}^bd_i M_i\pr{\sqrt{1 + 8\epsilon \sigma_U^4\pr{2+\frac{3}{2}d} \pr{\displaystyle\sum_{i=1}^bd_i M_i}^{-2}}-1}}{4\sigma_U^4\pr{2+\frac{3}{2}d}} \wedge \frac{1}{6 \sigma^2_U}
\end{equation}
and number of iterations $t \geq t_{\mathrm{mix}}(\epsilon;\nu)$ where
\begin{equation}
    \label{eq:kTV}
    t_{\mathrm{mix}}(\epsilon;\nu) = \frac{\log\l(\frac{2}{\epsilon}\r) + C/2}{K_{\mathrm{SGS}}}\ ,
\end{equation} 
for  $K_{\mathrm{SGS}}$ defined in \eqref{eq:kappasgsdef}, and
\[C=d\sigma^2_U + \rho^4\l(2+d\r)\sigma_U^4 + \frac{17}{32} \sum_{i=1}^{b}d_i + \frac{1}{2}\log\l(\frac{\mathrm{det}\l(\sum_{i=1}^b M_i\B{A}_i^\top\B{A}_i\r)}{\mathrm{det}\l(\sum_{i=1}^b m_i\B{A}_i^\top\B{A}_i\r)}\r)\ ,
\]
we have 
\[\|\nu P_{\mathrm{SGS}}^{t} - \pi\|_{\mathrm{TV}}\leq \epsilon\ .\]
This means that starting from $\nu$, after $t$ step of SGS, we are at a TV-distance at most $\epsilon$ from $\pi$.
\end{theorem}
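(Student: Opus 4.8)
The natural plan is to separate the sampling error from the bias through the triangle inequality
\[\nr{\nu \B{P}_{\mathrm{SGS}}^{t}-\pi}_{\mathrm{TV}}\le \nr{\nu \B{P}_{\mathrm{SGS}}^{t}-\pi_{\rho}}_{\mathrm{TV}}+\nr{\pi_{\rho}-\pi}_{\mathrm{TV}},\]
and to force each summand below $\epsilon/2$. For the bias term I would invoke Proposition \ref{proposition:2}: viewing its bound $\tfrac{\rho^2}{2}\sum_{i}d_iM_i+(2+\tfrac32 d)\rho^4\sigma_U^4$ as a quadratic in $\rho^2$, the expression prescribed for $\rho^2$ in \eqref{eq:rho2TVquadratic} is exactly its largest root when set equal to $\epsilon/2$; since this quadratic is increasing in $\rho^2\ge 0$, intersecting with the validity range $\rho^2\le 1/(6\sigma_U^2)$ guarantees $\nr{\pi_{\rho}-\pi}_{\mathrm{TV}}\le\epsilon/2$.

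For the sampling error I would use the contraction \eqref{eq:TVconvergencesec4} of Corollary \ref{coro:convergence_rates_SGS}, namely $\nr{\nu \B{P}_{\mathrm{SGS}}^{t}-\pi_{\rho}}_{\mathrm{TV}}\le \mathrm{Var}_{\pi_{\rho}}(\mathrm{d}\nu/\mathrm{d}\pi_{\rho})\,(1-K_{\mathrm{SGS}})^{t}$. Using $-\log(1-K_{\mathrm{SGS}})\ge K_{\mathrm{SGS}}$, the choice $t\ge [\log(2/\epsilon)+\tfrac12\log\mathrm{Var}_{\pi_{\rho}}(\mathrm{d}\nu/\mathrm{d}\pi_{\rho})]/K_{\mathrm{SGS}}$ already forces this term below $\epsilon/2$; comparing with \eqref{eq:kTV}, the whole statement thus reduces to the $\chi^2$-type estimate $\log \mathrm{Var}_{\pi_{\rho}}(\mathrm{d}\nu/\mathrm{d}\pi_{\rho})\le C/2$, which is where the substance lies.

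To establish this estimate I would first record the regularity of the smoothed potentials $U_i^{\rho}$ of \eqref{eq:Uirhodef}. Differentiating twice yields the identity $\grad^2 U_i^{\rho}(\B{A}_i\btheta)=\rho^{-2}\B{I}_{d_i}-\rho^{-4}\,\mathrm{Cov}_{\joint(\bz_i|\btheta)}(\bz_i)$; since $\joint(\bz_i|\btheta)$ is $(1/\rho^2+m_i)$-strongly log-concave and $(1/\rho^2+M_i)$-log-smooth, the Brascamp--Lieb inequality together with its smoothness counterpart sandwiches this covariance and gives $\tfrac{m_i}{1+m_i\rho^2}\B{I}\preceq\grad^2 U_i^{\rho}\preceq\tfrac{M_i}{1+M_i\rho^2}\B{I}$. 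Writing $H_M=\sum_i M_i\B{A}_i^T\B{A}_i$, $H_m=\sum_i m_i\B{A}_i^T\B{A}_i$ and their smoothed analogues $H_M^{\rho}\preceq H_M$, $H_m^{\rho}$, this propagates to $H_m^{\rho}\preceq\grad^2 U^{\rho}\preceq H_M^{\rho}$. Because $\nu=\mathcal{N}(\bthetastar,H_M^{-1})$, the second moment $1+\mathrm{Var}_{\pi_{\rho}}(\mathrm{d}\nu/\mathrm{d}\pi_{\rho})=Z_{\pi_{\rho}}\int \nu(\btheta)^2\,\econst^{U^{\rho}(\btheta)}\mathrm{d}\btheta$ can be controlled by replacing $U^{\rho}$ with its second-order Taylor expansions at $\bthetastar$: the smoothness bound reduces $\int\nu^2\econst^{U^{\rho}}$ to a Gaussian integral with precision $2H_M-H_M^{\rho}\succeq H_M$, while the strong-convexity bound bounds $Z_{\pi_{\rho}}$ from above. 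After the $\econst^{\pm U^{\rho}(\bthetastar)}$ and $(2\pi)^{d/2}$ factors cancel, one is left with a determinant ratio of order $\tfrac12\log(\det H_M/\det H_m^{\rho})$ and a quadratic form in $\grad U^{\rho}(\bthetastar)$.

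The step I expect to be most delicate, and the one where Assumption \assseven\ is essential, is controlling this residual gradient. Integration by parts gives $\grad U_i^{\rho}(\B{A}_i\btheta)=\E_{\joint(\bz_i|\btheta)}[\grad U_i(\bz_i)]$, so that at $\btheta=\bthetastar$, where $\grad U_i(\B{A}_i\bthetastar)=\B{0}$ by \assseven, the $M_i$-Lipschitzness of $\grad U_i$ yields $\nr{\grad U_i^{\rho}(\B{A}_i\bthetastar)}\le M_i\,\E_{\joint(\bz_i|\bthetastar)}\nr{\bz_i-\B{A}_i\bthetastar}$, which a self-consistent argument (using $\mathrm{Cov}\preceq\rho^2\B{I}$ and $\E[\bz_i]-\B{A}_i\bthetastar=-\rho^2\grad U_i^{\rho}(\B{A}_i\bthetastar)$) bounds by $\mathcal{O}(M_i\sqrt{d_i}\,\rho)$. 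Propagating this through $\grad U^{\rho}(\bthetastar)=\sum_i\B{A}_i^T\grad U_i^{\rho}(\B{A}_i\bthetastar)$ and the inverse-Hessian weighting produces the $d\sigma_U^2+\rho^4(2+d)\sigma_U^4$ contribution to $C$, while the gap between the $\tfrac12\log(\det H_M/\det H_m^{\rho})$ obtained above and the $\tfrac12\log(\det H_M/\det H_m)$ entering $C$ is absorbed by the Stirling-type slack $\tfrac{17}{32}\sum_i d_i$ and the $d\sigma_U^2$ term, exploiting $\rho^2\le 1/(6\sigma_U^2)$. Combining the resulting bound $\log\mathrm{Var}_{\pi_{\rho}}(\mathrm{d}\nu/\mathrm{d}\pi_{\rho})\le C/2$ with the contraction and the bias estimate then yields $\nr{\nu \B{P}_{\mathrm{SGS}}^{t}-\pi}_{\mathrm{TV}}\le\epsilon$ for all $t\ge t_{\mathrm{mix}}(\epsilon;\nu)$.
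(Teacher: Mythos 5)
Your proposal is correct in outline, and its core technical step takes a genuinely different route from the paper. The outer skeleton coincides: triangle inequality, bias handled by Proposition \ref{proposition:2} with \eqref{eq:rho2TVquadratic} read off as the positive root of the resulting quadratic in $\rho^2$ intersected with $\rho^2\le 1/(6\sigma_U^2)$, and the sampling error handled through the spectral gap (Corollary \ref{cor:spectralgapSGS} together with Proposition \ref{prop:tvfromspectralgap}). Where you diverge is the warm-start estimate. The paper does not compute the chi-square divergence by Gaussian integrals: it proves the stronger \emph{pointwise} bound $\nu(\btheta)/\pi_{\rho}(\btheta)\le C_{\rho}$ (Lemma \ref{lem:Crho}), obtained from the explicit Gaussian-convolution sandwich on $U_i^{\rho}$ (displays \eqref{eq:proof_theorem_2_2_needed_for_Lemma_26} and \eqref{eq:proof_theorem_2_4_needed_for_Lemma_26}) combined with an upper bound on $Z_{\pi_{\rho}}/Z_{\pi}$ (Lemma \ref{lem:ratioofnormalizingconstants}) that rests on a log-Sobolev/Herbst concentration argument for the Lipschitz function $\beta(\btheta)=\pr{\sum_i\nr{\grad U_i(\B{A}_i\btheta)}^2}^{1/2}$ (Lemmas \ref{lem:betaLipschitz}--\ref{lem:momgenbeta2}); this is where the terms $d\sigma_U^2+\rho^4(2+d)\sigma_U^4$ in $C$ originate, and \assseven\ enters there via $\E_{\pi}(\beta^2)\le d\sigma_U^2$. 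You instead derive the Hessian sandwich $\frac{m_i}{1+m_i\rho^2}\B{I}\preceq\grad^2 U_i^{\rho}\preceq\frac{M_i}{1+M_i\rho^2}\B{I}$ from the identity $\grad^2 U_i^{\rho}=\rho^{-2}\B{I}-\rho^{-4}\mathrm{Cov}_{\joint(\bz_i|\btheta)}(\bz_i)$ via Brascamp--Lieb and Cram\'er--Rao covariance bounds, Taylor-expand $U^{\rho}$ at $\bthetastar$, and evaluate $Z_{\pi_{\rho}}\int\nu^2\econst^{U^{\rho}}$ directly, controlling the residual gradient through the correct identities $\grad U_i^{\rho}(\B{A}_i\btheta)=\E[\grad U_i(\bz_i)]$ and $\E[\bz_i]-\B{A}_i\bthetastar=-\rho^2\grad U_i^{\rho}(\B{A}_i\bthetastar)$, which give the self-consistent bound $\nr{\grad U_i^{\rho}(\B{A}_i\bthetastar)}\le M_i\rho\sqrt{d_i}/\sqrt{1-M_i^2\rho^4}$. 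These steps are sound under the stated assumptions, and since $\rho^2\sigma_U^2\le 1/6$ and $\sum_i d_i\ge d$, the slack in $C$ comfortably absorbs both your determinant correction $\frac{d}{2}\log(1+\rho^2\max_i m_i)$ (from replacing $\det(\sum_i\frac{m_i}{1+m_i\rho^2}\B{A}_i^T\B{A}_i)$ by $\det(\sum_i m_i\B{A}_i^T\B{A}_i)$) and the quadratic form in $\grad U^{\rho}(\bthetastar)$. The trade-off: the paper's $L^{\infty}$ ratio bound is stronger than the $L^2$ quantity you need and recycles machinery already built for Proposition \ref{proposition:2}, while your argument is more self-contained for the chi-square target (no log-Sobolev input), and the Hessian sandwich on $U^{\rho}$ is of independent interest since it reproduces $K_{\mathrm{SGS}}$ at the level of the marginal potential.

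One bookkeeping slip to fix: with $t_{\mathrm{mix}}=(\log(2/\epsilon)+C/2)/K_{\mathrm{SGS}}$, the statement reduces to $\log\mathrm{Var}_{\pi_{\rho}}\l(\frac{\mathrm{d}\nu}{\mathrm{d}\pi_{\rho}}\r)\le C$, not $\le C/2$, provided you use the chi-square form $\nr{\nu\B{P}_{\mathrm{SGS}}^t-\pi_{\rho}}_{\mathrm{TV}}\le\sqrt{\mathrm{Var}_{\pi_{\rho}}(\mathrm{d}\nu/\mathrm{d}\pi_{\rho})}\,(1-K_{\mathrm{SGS}})^t$ of Proposition \ref{prop:tvfromspectralgap}, as your opening display implicitly does and as the paper's appendix proof does. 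The target $\log\mathrm{Var}\le C/2$ would actually be unreachable by your estimates, since they necessarily produce the full term $\frac{1}{2}\log\l(\det\l(\sum_i M_i\B{A}_i^T\B{A}_i\r)/\det\l(\sum_i m_i\B{A}_i^T\B{A}_i\r)\r)$, which must be compared against the identical term in $C$, not against half of it. With the factor corrected, your bound fits within $C$ and the proof closes.
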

\begin{proof}
    The proof is postponed to \ref{subsec:mixing_time_bounds_TV}
\end{proof}

Again, note that both the tolerance parameter \eqref{eq:rho2TVquadratic} and the $\epsilon$-mixing time \eqref{eq:kTV} are explicit and can be computed in practice.
If we denote the condition number of the potential $U$ by $\kappa \coloneqq M/m$, this theorem implies that $t_{\mathrm{mix}}(\epsilon;\nu)$ scales as $\mathcal{O}(d^2\kappa/\epsilon)$ up to polylogarithmic factors.
In this scenario, Table \ref{table:comparison_TV} compares our complexity results for SGS implemented using rejection sampling with existing results in the literature. 
For the same initialization $\nu$, we have better dependencies than ULA w.r.t. both $\kappa$, $d$ and $\epsilon$.
However, MALA seems to have better convergence rates in total variation distance in general, except in badly conditioned situations, where the rates for SGS can be better.
Moreover, compared to MALA and ULA, SGS with multiple splitting is amenable to distributed and parallel computations. In this distributed environment, the complexity results shown in this table suggest that SGS is an attractive approach to sample from a smooth, strongly log-concave and composite target distribution.

\begin{table}
\centering
\begin{tabular}{llll}
\thickhline
     \multicolumn{1}{c}{\bfseries Reference} &
      \multicolumn{1}{c}{\bfseries Method} &
      \multicolumn{1}{c}{\bfseries Validity} &
      \multicolumn{1}{c}{\bfseries Evals} \\
    \hline
    \cite{Durmus2017} & ULA, $\nu = \delta_{\bthetastar}$& $0\le \epsilon\le 1$ & $\mathcal{O}^*\l(\kappa^2 d/\epsilon^2 \r)$\\
     $\left\{
        \begin{array}{l}
            \text{\cite{pmlr-v83-cheng18a}}\\
            \text{\cite{durmus2019analysis}}
        \end{array}
    \right.$ & ULA, $\nu = \nu_m$ & $0< \epsilon\le 1$& $\mathcal{O}^*\l(\kappa^2 d/\epsilon^2\r)$\\
    \cite{Dalalyan2017} & ULA, $\nu = \nu_M$& $0\le \epsilon\le 1$ & $\mathcal{O}^*\l(\kappa^2 d^3/\epsilon^2 \r)$\\
    \cite{durmus2019analysis} & SGLD, $\nu = \nu_M$& $0\le \epsilon\le 1$ & $\mathcal{O}^*\l(\kappa^2 d^3/\epsilon^2 \r)$\\
      \cite{Dwivedi2019}& MALA, $\nu = \nu_M$ & $0< \epsilon\le 1$& $\mathcal{O}\l(\kappa^2 d^2 \log^{1.5}\l(\frac{\kappa}{\epsilon^{1/d}} \r)\r)$\\
this paper & SGS, $\nu = \nu_M$ & $0<\epsilon\le 1$ & $\mathcal{O}^*(\kappa d^2 /\epsilon)$\\
    \hline
\thickhline
\end{tabular}
\caption{Comparison of convergence rates in TV distance with the literature, starting from a Gaussian distribution centered at the minimizer $\btheta^\star$ of the $m$-strongly convex and $M$-smooth potential $U(\btheta)$, with condition number $\kappa=\frac{M}{m}$. SGS is implemented based on rejection sampling. $\mathcal{O}^*(\cdot)$ denotes $\mathcal{O}(\cdot)$ up to polylogarithmic factors, $\nu_m(\btheta) = \mathcal{N}(\btheta;\btheta^\star,\frac{\B{I}_d}{m})$ and $\nu_M(\btheta) = \mathcal{N}(\btheta;\btheta^\star,\frac{\B{I}_d}{M})$.
The notation $\nu$ stands for the initialization of each method.}
\label{table:comparison_TV}
\end{table}

\subsection{Non-Strongly Log-Concave Target Density}
\label{subsec:weakly_log_concave}

The complexity results shown in Section \ref{subsec:mixing_time_bounds} assume that each potential $\{U_{i}\}_{i\in [b]}$ is strongly convex.
In cases where there are $b-1$ convex potential functions and the $b$-th one stands for an isotropic quadratic term (coming from the prior distribution for instance), this strongly-convex assumption can be met.
Indeed, one can decompose the quadratic potential into $b-1$ strongly convex terms and add each of them to each individual convex potential $\{U_{i}\}_{i\in[b-1]}$.
Nevertheless, the strongly-convex assumption is still restrictive.
In this section, we extend our explicit mixing time bound in the multiple splitting scenario to densities which are smooth (see Assumption \asstwo) but such that each individual potential $\{U_{i}\}_{i\in[b]}$ only satisfies the standard convexity assumption \assfour \ instead of satisfying the strong convexity assumption \assfive.

Similarly to \cite{Dalalyan2017} and \cite{Dwivedi2019}, we will weaken our strongly-convex assumption \assfive \ by approximating each potential $U_i$ with a strongly convex one and then applying our previous proof techniques to this approximation.
More precisely, instead of the initial target density $\pi$ in \eqref{eq:target_density}, we now consider the approximate density $\tilde{\pi}(\btheta) \propto \exp(-\tilde{U}(\btheta))$ with
\begin{equation}
    \tilde{U}(\btheta) = \sum_{i=1}^b U_i(\B{A}_i\btheta) + \dfrac{\lambda}{2}\nr{\btheta - \bthetastar}^2,\label{eq:tildepi}
\end{equation}
where $\lambda > 0$ and $\bthetastar$ stands for a minimizer of $U$.
This approximation allows us to apply Theorems \ref{PROP:COMPCOMPLEXITYTV_SINGLE} and \ref{prop:compcomplexityTV} with the new smooth and strongly-convex constants $\tilde{M}_i = M_i + \lambda$ and $\tilde{m}_i = \lambda$ in order to find the minimum number of SGS steps such that the TV distance from $\tilde{\pi}$ is less than $\epsilon$.
To achieve an $\epsilon$ TV-distance from the initial target density $\pi$, we have to consider an additional error term to bound, namely $\nr{\pi - \tilde{\pi}}_{\mathrm{TV}}$.
If $\int_{\mathbb{R}^d} \nr{\btheta-\bthetastar}^4\pi(\btheta)\mathrm{d}\btheta \leq d^2R^2$ with $R > 0$, then with the choice $\lambda = 4\epsilon/(3bdR)$, we have $\nr{\pi - \tilde{\pi}}_{\mathrm{TV}} \leq \epsilon/3$ \citep[Lemma 3]{Dalalyan2017}.
Combining this result with Theorem \ref{PROP:COMPCOMPLEXITYTV_SINGLE}, the following corollary states a complexity result for the single splitting strategy.
The one corresponding to the multiple splitting one can be obtained using Theorem \ref{prop:compcomplexityTV} in a similar manner but is omitted here for simplicity.

\begin{corollary}[Complexity bound for TV distance, no strong convexity]
\label{coro:weakly_strongly_convex}
Suppose that $b=1$, Assumptions \asszero, \asstwo \ and \assfour \ hold and 
\[\int_{\mathbb{R}^d} \nr{\btheta-\bthetastar}^4\pi(\btheta)\mathrm{d}\btheta \leq d^2R^2\]
for some $R > 0$. Let $\tilde{\pi}$ be defined as in \eqref{eq:tildepi}. 
Let $\nu(\btheta):=\mathcal{N}(\btheta;\bthetastar,(\tilde{M_1}\B{A}_1^\top\B{A}_1)^{-1})$ be the initial distribution with $\tilde{M_1} = M_1 + \lambda$. Then for any $0< \epsilon\le 1$, with the choices $\lambda = 4\epsilon/(3dR)$ and
\begin{equation*}\label{eq:rho2TVquadratic_single}
    \rho^2 \le \frac{2\epsilon}{3d(M_1+\lambda)}\ ,
\end{equation*}
and number of iterations 
\begin{equation*}
    \label{eq:kTV_single}
    t \geq \frac{\log\l(\frac{3}{\epsilon}\r) + C/2}{K_{\mathrm{SGS}}}\ ,
\end{equation*} 
for
\begin{equation*}
 K_{\mathrm{SGS}} = \frac{\lambda \rho^2}{1 + \lambda \rho^2} \text{ and } C=\frac{5d}{8} + \frac{d}{2}\log\l(\frac{M_1 + \lambda}{\lambda}\r)\ ,
\end{equation*}
we have 
\begin{equation*}
\|\nu P_{\mathrm{SGS}}^{t} - \pi\|_{\mathrm{TV}}\leq \epsilon\ .
\end{equation*}
This means that starting from $\nu$, after $t$ step of SGS applied to the approximate density $\tilde{\pi}$, we are at a TV-distance at most $\epsilon$ from $\pi$.
\end{corollary}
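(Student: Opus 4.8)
The plan is to reduce this weakly-log-concave statement to the strongly-log-concave Theorem~\ref{PROP:COMPCOMPLEXITYTV_SINGLE} by a triangle inequality. Since SGS is run on the \emph{regularized} target $\tilde{\pi}\propto\exp(-\tilde{U})$ with $\tilde{U}(\btheta)=U_1(\B{A}_1\btheta)+\frac{\lambda}{2}\nr{\btheta-\bthetastar}^2$, the chain contracts towards $\tilde{\pi}$ (not $\pi$), so the first step is to split
\[
\nr{\nu\B{P}_{\mathrm{SGS}}^{t}-\pi}_{\mathrm{TV}}\le \nr{\nu\B{P}_{\mathrm{SGS}}^{t}-\tilde{\pi}}_{\mathrm{TV}}+\nr{\tilde{\pi}-\pi}_{\mathrm{TV}},
\]
and to control the two terms by $2\epsilon/3$ and $\epsilon/3$ respectively.

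For the \textbf{approximation error} $\nr{\tilde{\pi}-\pi}_{\mathrm{TV}}$ I would invoke \citep[Lemma~3]{Dalalyan2017}, which bounds the TV distance between a log-concave density and its quadratic perturbation in terms of the fourth moment $\int\nr{\btheta-\bthetastar}^4\pi(\btheta)\mathrm{d}\btheta$. Inserting the hypothesis $\int\nr{\btheta-\bthetastar}^4\pi\,\mathrm{d}\btheta\le d^2R^2$ together with the choice $\lambda=4\epsilon/(3dR)$ yields exactly $\nr{\tilde{\pi}-\pi}_{\mathrm{TV}}\le \epsilon/3$, as already noted in the text preceding the corollary specialized to $b=1$.

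For the \textbf{mixing error} $\nr{\nu\B{P}_{\mathrm{SGS}}^{t}-\tilde{\pi}}_{\mathrm{TV}}$ I would verify that $\tilde{U}$ satisfies the hypotheses of Theorem~\ref{PROP:COMPCOMPLEXITYTV_SINGLE}. Because $U_1$ is convex by \assfour\ and $M_1$-smooth by \asstwo, adding $\frac{\lambda}{2}\nr{\btheta-\bthetastar}^2$ makes $\tilde{U}$ strongly convex with constant $\tilde{m}_1=\lambda>0$ and smooth with constant $\tilde{M}_1=M_1+\lambda$; this is precisely what turns the target into a strongly-log-concave one to which the earlier theorem applies. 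Running that theorem with target precision $2\epsilon/3$ in place of $\epsilon$ and with $(\tilde{m}_1,\tilde{M}_1)$ in place of $(m_1,M_1)$ produces $\rho^2\le\frac{2\epsilon/3}{d\tilde{M}_1}=\frac{2\epsilon}{3d(M_1+\lambda)}$, rate $K_{\mathrm{SGS}}=\frac{\lambda\rho^2}{1+\lambda\rho^2}$, constant $C=\frac{5d}{8}+\frac{d}{2}\log\bigl(\frac{M_1+\lambda}{\lambda}\bigr)$, and iteration count $t\ge\frac{\log(2/(2\epsilon/3))+C/2}{K_{\mathrm{SGS}}}=\frac{\log(3/\epsilon)+C/2}{K_{\mathrm{SGS}}}$, which are exactly the quantities displayed in the corollary. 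Hence $\nr{\nu\B{P}_{\mathrm{SGS}}^{t}-\tilde{\pi}}_{\mathrm{TV}}\le 2\epsilon/3$, and combining with the approximation bound closes the estimate at $2\epsilon/3+\epsilon/3=\epsilon$.

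The routine part is the algebraic check that the displayed $\rho^2$, $C$ and $t$ are literally the $\epsilon'=2\epsilon/3$ specialization of Theorem~\ref{PROP:COMPCOMPLEXITYTV_SINGLE}, and that the initialization $\nu=\mathcal{N}(\btheta;\bthetastar,(\tilde{M}_1\B{A}_1^T\B{A}_1)^{-1})$ is the warm start for which that theorem is stated (so its initial distance is the one absorbed into $C$). The one point deserving genuine care — and the main obstacle — is confirming that the quadratic regularizer fits the single-potential composite form $\tilde{U}(\btheta)=\tilde{U}_1(\B{A}_1\btheta)$ required there, with the clean constants $(\lambda,\,M_1+\lambda)$: for $\B{A}_1=\B{I}_d$ this is immediate, whereas for a general invertible $\B{A}_1$ one must rewrite $\frac{\lambda}{2}\nr{\btheta-\bthetastar}^2$ in the transformed coordinate $\B{u}=\B{A}_1\btheta$ and check that the resulting strong-convexity and smoothness constants still match the claimed values (adjusting $\lambda$ by the conditioning of $\B{A}_1$ if necessary).
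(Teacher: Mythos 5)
Your proposal is correct and is essentially the paper's own argument: the paper's (one-line) proof is the same triangle inequality combined with Theorem \ref{PROP:COMPCOMPLEXITYTV_SINGLE} applied to the regularized target $\tilde{\pi}$ with $(\tilde{m}_1,\tilde{M}_1)=(\lambda,\,M_1+\lambda)$ at precision $2\epsilon/3$, while $\nr{\pi-\tilde{\pi}}_{\mathrm{TV}}\le\epsilon/3$ follows from Lemma 3 of \cite{Dalalyan2017} with $\lambda=4\epsilon/(3dR)$, and your algebraic specialization ($\rho^2$, $K_{\mathrm{SGS}}$, $C$, and $t$ from $\epsilon'=2\epsilon/3$) matches the displayed quantities exactly. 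Your closing caveat — that for a general invertible $\B{A}_1$ the regularizer rewritten in the coordinate $\B{u}=\B{A}_1\btheta$ has strong-convexity constant $\lambda\,\lambda_{\min}\l((\B{A}_1^T\B{A}_1)^{-1}\r)$ rather than $\lambda$ unless $\B{A}_1$ is orthogonal — is a legitimate point that the paper's proof also glosses over, so it is a strength of your write-up rather than a gap.
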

\begin{proof}
    The proof is straightforward. It follows from the triangle inequality and Theorem \ref{PROP:COMPCOMPLEXITYTV_SINGLE}.
\end{proof}

Compared to our mixing time bound derived under the assumption that the potential $U_1$ is strongly convex, Corollary \ref{coro:weakly_strongly_convex} shows that relaxing the strongly convex assumption affects negatively the dependence w.r.t. both the dimension $d$ and the precision $\epsilon$, as it scales as $\mathcal{O}^*(M_1 d^2/\epsilon^2)$.  Nevertheless, this complexity result improves upon that in \citet{Dalalyan2017,Dwivedi2019} for the unadjusted Langevin algorithm (ULA) and the Metropolized random walk (MRW), which respectively scale as $\mathcal{O}^*(M_1^2d^3/\epsilon^4)$ and $\mathcal{O}^*(M_1^2d^3/\epsilon^2)$.

\subsection{Comparison with Existing Divide-and-Conquer and Subsampling-Based MCMC Schemes}

So far, we have mainly compared the theoretical behavior of SGS in high-dimensional scenarios with common MCMC schemes such as those derived from Langevin and Hamiltonian dynamics, see Tables \ref{table:comparison_wass}
 and \ref{table:comparison_TV}.
 In this section, we discuss and compare when possible the theoretical results associated to SGS with those associated to existing divide-and-conquer and subsampling-based MCMC approaches. Regarding divide-and-conquer approaches, although a lot of algorithms have been proposed over the past ten years (see Section \ref{sec:introdution}), very few non-asymptotic and explicit convergence results exist up to the authors' knowledge \citep{plassier2021dglmc}.
Among available results, we can cite those associated to non-parametric approaches proposed by \citet{Wang2013,Neiswanger2014,Wang2015} which showed that these methods scaled exponentially with respect to the dimension $d$ because of the use of kernel density estimates.
For general subsampling-based approaches which do not resort to the Bernstein-von Mises approximation, explicit bounds have been recently derived for (variance-reduced)  stochastic gradient MCMC algorithms such as SGLD \citep{Dalalyan2019,durmus2019analysis}.
As illustrated in Tables \ref{table:comparison_wass} and \ref{table:comparison_TV}, our non-asymptotic theoretical analysis shows that SGS is competitive with other state-of-the-art MCMC algorithms.

\section{Numerical Illustrations}
\label{sec:experimental_results}

This section aims at illustrating the main theoretical results of Section \ref{sec:main_result_convergence_rates}. 
We consider three different examples which satisfy all the assumptions required in our main statements. 
The first experiment considers the case where the target $\pi$ is a multivariate Gaussian density while the second one sets $\pi$ to be a mixture of two multivariate Gaussian densities.
Finally, the third experiment considers a Bayesian binary logistic regression problem with a Gaussian prior.
For all approaches and experiments, the initial distribution will be set to $\nu = \mathcal{N}(\bthetastar,(\sum_{i=1}^b M_i\B{A}_i^\top\B{A}_i)^{-1})$ for the TV distance and to $\nu = \delta_{\bthetastar}$ for the 1-Wasserstein one.
Although SGS is amenable to a distributed implementation \citep{Rendell2018}, all the experiments have been run on a serial computer to emphasize that it is even beneficial in this context.
The experiments have been carried out on a Dell Latitude 7390 laptop equipped with an Intel(R) Core(TM) i5-8250U 1.60 GHz processor, with 16.0 GB of RAM, running Windows 10. 

\subsection{Multivariate Gaussian Density}

In this example, we want to verify empirically the dependencies of the mixing times derived in Section \ref{sec:main_result_convergence_rates} w.r.t. the dimension $d$, the desired precision $\epsilon$ and the condition number $\kappa$ of the potential $U$.
We consider a target zero-mean Gaussian density on $\mathbb{R}^d$
\begin{equation*}
    \pi(\btheta) \propto \exp\pr{-\dfrac{1}{2}\btheta^\top\B{Q}\btheta},
\end{equation*}
where $\B{Q} \in \mathbb{R}^{d \times d}$ is a positive definite precision matrix. 
In the sequel, $\B{Q}$ will be chosen to be diagonal and anisotropic, that is $\B{Q} = \mathrm{diag}(q_1,\hdots,q_d)$, with $q_i \neq q_j$ for $ i \neq j$.
The resulting potential function $U \coloneqq U_1 = \btheta^\top\B{Q}\btheta / 2$ is strongly convex and smooth with parameters $m = \min_{i \in [d]} q_i$ and $M = \max_{i \in [d]} q_i$.
Since computing the total variation distance between continuous and multidimensional measures is challenging, we discretized the latter over a set of bins and consider the error between the empirical marginal densities associated to the least favorable direction, that is along the eigenvector associated to $m$.
In the following, we will illustrate our mixing time results for both 1-Wasserstein and total variation distances in the strongly log-concave case.

\begin{figure}
\centering
  \mbox{{\includegraphics[scale=0.4]{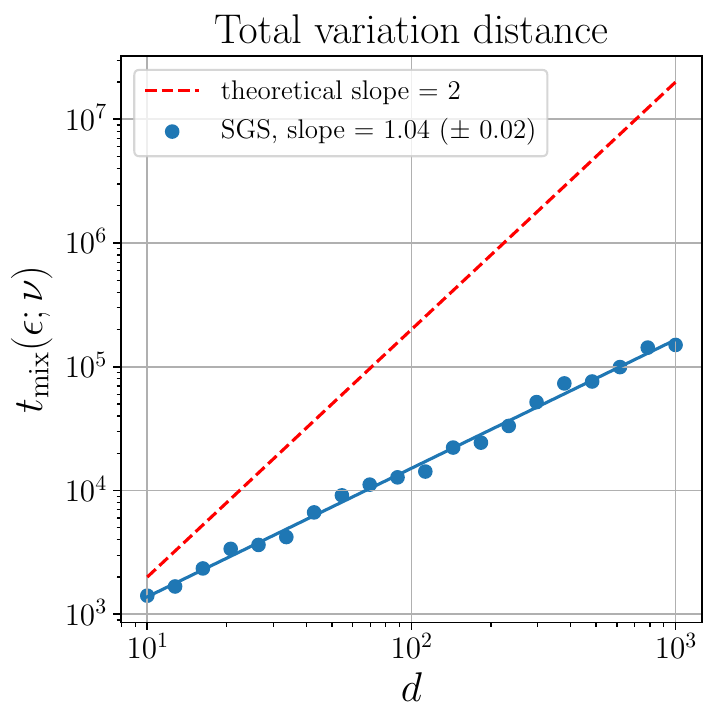}}}
  \mbox{{\includegraphics[scale=0.4]{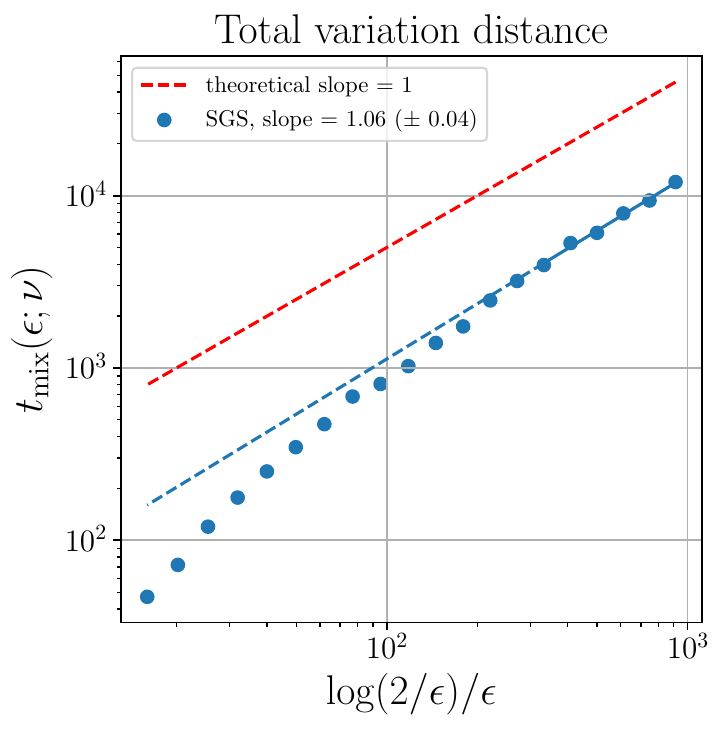}}}
  \mbox{{\includegraphics[scale=0.4]{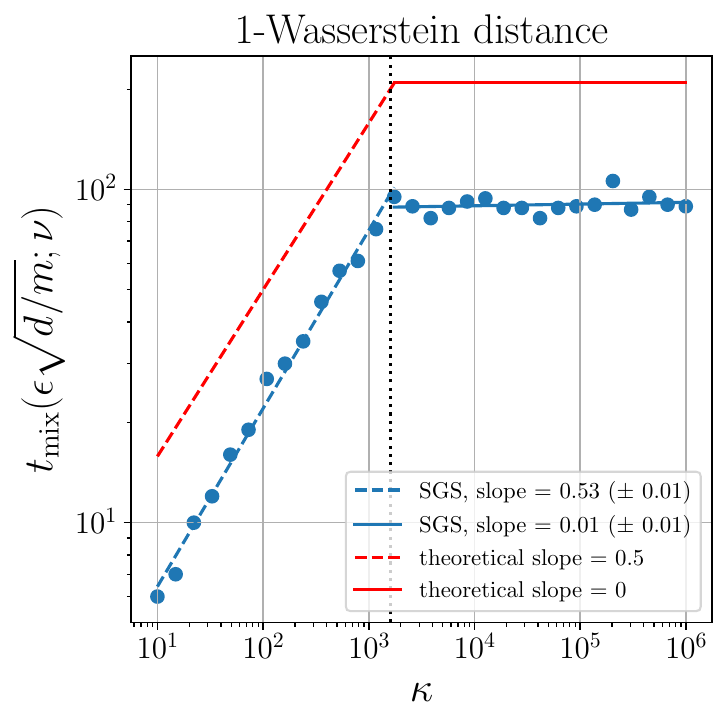}}}
  \caption{Multivariate Gaussian. (left and middle) $\epsilon$-mixing times for the total variation distance and (right) $\epsilon \sqrt{d/m}$-mixing times for the 1-Wasserstein distance.}
  \label{fig:exp1_strongly}
\end{figure}

\textit{Dimension dependence.} We set here $\epsilon = 0.1$,  $m = 1/4$, $M = 1$ such that $\kappa \coloneqq M/m = 4$ and are interested in the dimension dependence of our $\epsilon$-mixing time result for SGS.
We let the dimension $d$ vary between $10^1$ and $10^3$ and ran SGS for each case.
We measured its $\epsilon$-mixing time by recording the smallest iteration such that the discrete total variation error falls below the desired precision $\epsilon$.
The mixing time has been averaged over $10$ independent runs.
Figure \ref{fig:exp1_strongly} illustrates the behavior of the mixing time of SGS w.r.t. the dimension $d$ in log-log scale.
In order to assess the dimension dependency, we performed a linear fit and reported the slope of the linear model.
According to Table \ref{table:comparison_TV}, the dimension dependence is of order $\mathcal{O}(d^2)$.
Interestingly, we found in this example that the dimension dependence of the mixing time of SGS is linear w.r.t. $d$.

\textit{Precision dependence.} We set here $d = 2$ and  $\kappa = 3$ while the prescribed precision $\epsilon$ varies between $6\times10^{-3}$ and $1.6\times10^{-1}$, and ran SGS for each case.
As before, we measured its $\epsilon$-mixing time by recording the smallest iteration such that the discrete total variation error falls below the desired precision $\epsilon$.
Figure \ref{fig:exp1_strongly} shows the behavior of the mixing time of SGS w.r.t. $\log(2/\epsilon)\epsilon^{-1}$ in log-log scale.
For sufficiently small precisions, this figure confirms our theoretical result which states that the mixing time of SGS scales as $\mathcal{O}(\log(2/\epsilon)\epsilon^{-1})$.

\textit{Condition number dependence.} Regarding the 1-Wasserstein distance and the complexity results depicted in Table \ref{table:comparison_wass}, the main difference between existing MCMC approaches is the dependence w.r.t. the condition number $\kappa$ of the potential function $U$.
Here, we aim at verifying the latter numerically. 
To this purpose, we set $d = 10$, $\epsilon = 0.1$ and let $\kappa$ vary between $10^1$ and $10^6$.
From \eqref{eq:ksinglesplittingwass}, it appears that the dependence of the mixing time of SGS depends on $\max\{\epsilon^2/4,\epsilon/\sqrt{\kappa}\}$.
This quantity equals $\epsilon/\sqrt{\kappa}$ for $\kappa \leq 1600$ and $\epsilon^2/4$ otherwise. Hence, we are expecting to retrieve a dependence in $\kappa^{1/2}$ for small and moderate $\kappa$ and a mixing time only depending on $\epsilon$ for larger values of the condition number.
We performed 50 independent runs of SGS and stopped them when their empirical Wasserstein error fell below $\epsilon\sqrt{d/m}$.
The results are depicted on Figure \ref{fig:exp1_strongly} in log-log scale.
As before, we did a linear fit to assess the dependency of the mixing time w.r.t. the condition number $\kappa$.
The slope of the linear model for SGS equals $0.53$ for $\kappa \leq 1600$ (depicted with a black dotted vertical line) which confirms the theoretical dependence of the order $\mathcal{O}(\kappa^{1/2})$.
As expected, the mixing time of SGS becomes independent of $\kappa$ for larger values.

\subsection{Gaussian Mixture}
\label{subsec:gaussian_mixture}

\begin{figure}
    \centering
    \mbox{{\includegraphics[scale=0.39]{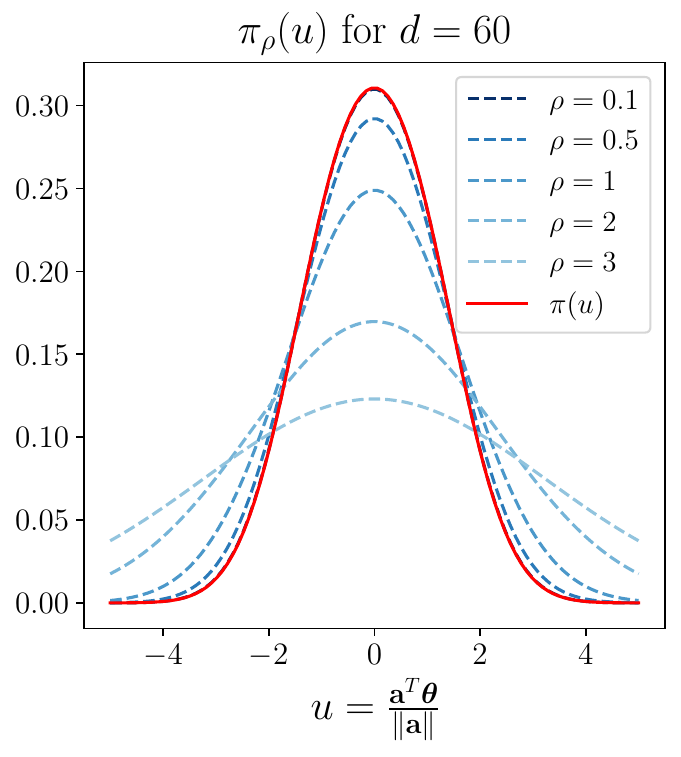}}}
    \mbox{{\includegraphics[scale=0.39]{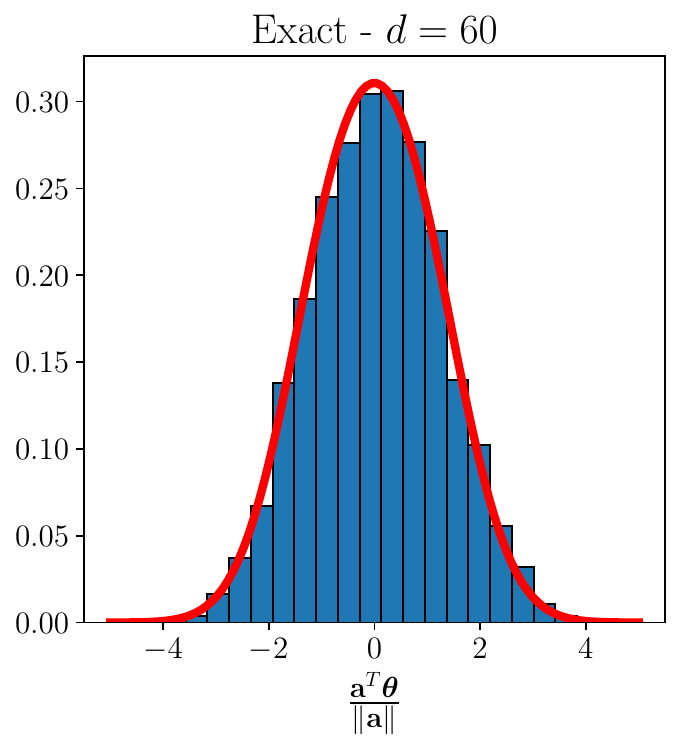}}}
    \mbox{{\includegraphics[scale=0.39]{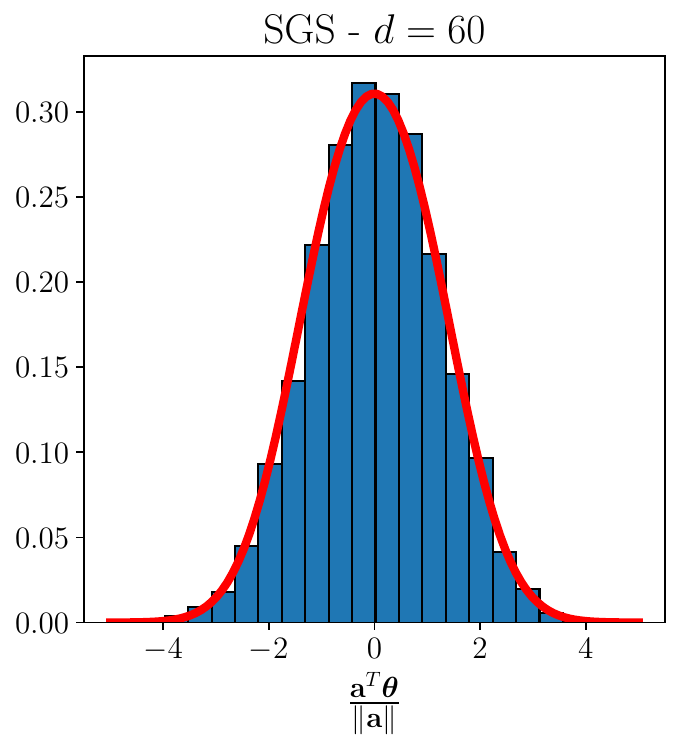}}}
  \caption{Gaussian mixture with $d=60$. From left to right: behavior of $\pi_{\rho}(u)$ w.r.t. $\rho$ with $u = \B{a}^\top\btheta/\nr{\B{a}}$; empirical distribution obtained by exact sampling from $\pi$; empirical distribution obtained by sampling from $\pi_{\rho}$ with the guidelines recommended in Theorem \ref{prop:compcomplexityTV}.
  The histograms have been computed using 2500 independent samples and the precision has been set to $\epsilon = 0.1$.
  In all figures, the red curve stands for $\pi(u)$.}
  \label{fig:exp2}
\end{figure}

In this second experiment, also considered by \cite{Dalalyan2017} and \cite{Dwivedi2019}, we show that the values of the tolerance parameter $\rho$ and the mixing time $t_{\mathrm{mix}}(\epsilon;\nu)$ recommended by Theorem \ref{PROP:COMPCOMPLEXITYTV_SINGLE} indeed yield approximate samples having a distribution close to $\pi$.
We also verify that the running time required to generate such samples is reasonable, and compare it to the running time of ULA to achieve the same prescribed precision $\epsilon$.
To this purpose, let us consider the simple problem of generating samples from a mixture of two Gaussian densities with density $\pi$ defined, for all $\btheta \in \mathbb{R^d}$, by
\begin{align*}
    \pi(\btheta) &= \dfrac{1}{2(2\pi)^{d/2}}\pr{\exp\pr{- \dfrac{\nr{\btheta - \mathbf{a}}^2}{2}} + \exp\pr{- \dfrac{\nr{\btheta + \mathbf{a}}^2}{2}}} \nonumber \\
    &\propto \exp\pr{- U(\btheta)},
\end{align*}
where 
\begin{equation*}
    U(\btheta) = \dfrac{1}{2}\nr{\btheta - \mathbf{a}}^2 - \log\pr{1 + e^{-2\btheta^\top\mathbf{a}}},
\end{equation*}
and $\mathbf{a} \in \mathbb{R}^d$ is a fixed vector involved in the mean of each Gaussian density.
If $\nr{\mathbf{a}} < 1$, one can show that $U$ is $M$-smooth and $m$-strongly convex with $m = 1 - \nr{\mathbf{a}}^2$ and $M=1$.
In the sequel, we choose $\B{a}$ such that $\nr{\B{a}} = 1/\sqrt{2}$, which also implies that the global minimizer of $U$ is $\bthetastar = \B{0}_d$.
Since $\pi$ admits a finite second order moment, all the assumptions required in Theorem \ref{PROP:COMPCOMPLEXITYTV_SINGLE} are verified.
We now consider a single splitting strategy on $U$ leading to the joint approximate density $\joint(\btheta,\B{z})$ defined in \eqref{eq:split_density_generalized} with $b=1$ and $\B{A}_1 = \B{I}_d$.
Under this distribution, the marginal density $\pi_{\rho}(\btheta)$ writes
\begin{align*}
    \pi_{\rho}(\btheta) &= \dfrac{1}{2(2\pi(1+\rho^2))^{d/2}}\pr{\exp\pr{- \dfrac{\nr{\btheta - \mathbf{a}}^2}{2(1+\rho^2)}} + \exp\pr{- \dfrac{\nr{\btheta + \mathbf{a}}^2}{2(1+\rho^2)}}},
\end{align*}
and simply corresponds to a mixture of the two initial Gaussian densities but with respective variance now inflated by a factor $\rho^2$.
The one-dimensional approximate density $\pi_{\rho}(u)$ of $u = \B{a}^\top\btheta/\nr{\B{a}}$ is depicted in Figure \ref{fig:exp2} for $d = 60$ and compared to the true target $\pi(u)$.

\textit{Illustrations of Theorem \ref{PROP:COMPCOMPLEXITYTV_SINGLE}.}
We now illustrate the guidelines for $\rho$ and the number of iterations $t$, stated in Theorem \ref{PROP:COMPCOMPLEXITYTV_SINGLE}, to achieve an $\epsilon$-error in total variation distance.
To this purpose, we set $\epsilon = 0.1$, $d = 60$ and launched $2500$ independent runs of SGS.
The conditional distribution of $\B{z}$ given $\btheta$ is a mixture of two Gaussians with common covariance matrix $\B{\Sigma}=\rho^2/(1+\rho^2)\B{I}_d$, respective mean vectors $\boldsymbol{\mu}_1 = (\btheta + \B{a}\rho^2)/(1+\rho^2)$ and $\boldsymbol{\mu}_2 = (\btheta - \B{a}\rho^2)/(1+\rho^2)$ and respective weights $w_1 = 1$ and $w_2 = \exp(-4\btheta^\top\B{a}/(2(1+\rho^2))$.
We can sample exactly from this mixture by first drawing a Bernoulli random variable $B$ with probability $p = w_1/(w_1+w_2)$ and then setting $\B{z} = B (\boldsymbol{\xi} + \boldsymbol{\mu}_1) + (1-B) (\boldsymbol{\xi} + \boldsymbol{\mu}_2)$ where $\boldsymbol{\xi} \sim \mathcal{N}(\B{0}_d,\B{\Sigma})$.
In order to assess the relevance of the samples generated with SGS, we generated $2500$ independent samples directly from $\pi$ by an exact sampler similar to the one used to sample $\B{z}$.
To provide an illustration of the quality of the samples drawn with SGS, we computed the one-dimensional projection $u = \B{a}^\top\btheta/\nr{\B{a}}$ and showed its empirical distribution in Figure \ref{fig:exp2}. The empirical distribution of the samples drawn using SGS is indeed close to $\pi$ and is visually indistinguishable from the one of the exact samples.

\begin{table}
    \centering
    \begin{tabular}{cllllllll}
        \thickhline
        Dimension $d$ & 4 & 8 & 12 & 16 & 20 & 30 & 40 & 60 \\
        \hline
        $t_{\mathrm{mix}}(\epsilon;\nu)$ ($\times 10^3$) for SGS & 3 & 10 & 23 & 40 & 62 & 138 & 244 & 548 \\
        $t_{\mathrm{mix}}(\epsilon;\nu)$ ($\times 10^3$) for ULA & 29 & 87 & 184 & 330 & 532 & 1,350 & 2,729 & 7,742 \\
        Efficiency of SGS w.r.t. ULA & 10.8 & 8.6 & 8.2 & 8.3 & 8.6 & 9.8 & 11.1 & 14.1 \\
        \thickhline
        CPU time [s] for SGS & 1 & 7 & 29 & 62 & 114 & 335 & 749 & 2,416 \\
        CPU time [s] for ULA & 6 & 31 & 135 & 302 &
       589 & 1,974 & 4,766 & 15,096 \\
        Efficiency of SGS w.r.t. ULA & 5.6 & 4.6 & 4.7 & 4.9 & 5.2 & 5.9 & 6.4 & 6.2 \\
        \thickhline
    \end{tabular}
    \caption{Gaussian mixture. Comparison between SGS and ULA for a prescribed precision $\epsilon = 0.1$. For SGS, $t_{\mathrm{mix}}(\epsilon;\nu)$ has been computed by using Theorem \ref{PROP:COMPCOMPLEXITYTV_SINGLE} while for ULA, the mixing time bound derived in \citet[Corollary 1]{Dalalyan2017} has been used.
    CPU time information corresponds to the running time necessary to draw $10^3$ independent samples having a distribution at most $\epsilon$ total variation distance from $\pi$.}
    \label{table:exp2}
\end{table}

\textit{Computational complexity of SGS.}
We now verify empirically the computational complexity of SGS, that is the number of iterations and the overall running time for generating samples with some prescribed precision $\epsilon$. We compare this complexity to that of ULA \citep{Dalalyan2017}.
Starting from the same initial distribution $\nu = \mathcal{N}(\bthetastar,M^{-1}\B{I}_d)$ and with $\epsilon = 0.1$, Table \ref{table:exp2} reports the number of iterations $t_{\mathrm{mix}}(\epsilon;\nu)$ required, in theory, to obtain a sample whose distribution is at most $\epsilon$ in total variation from $\pi$ and the CPU time needed to generate $10^3$ such samples.
For ULA, $t_{\mathrm{mix}}(\epsilon;\nu)$ has been computed by using the mixing time bound derived in \citet[Corollary 1]{Dalalyan2017}.
We observe that for $d \in [4,60]$, both the number of iterations and the running time for generating $10^3$ independent samples with SGS are much smaller than that of ULA.

This second experiment confirms our theoretical statement that SGS is able to generate accurate samples for a reasonable computational budget compared to popular alternatives such as ULA.

\subsection{Bayesian Binary Logistic Regression}
\label{subsec:logistic_regression}

The previous two sections illustrated our theoretical results for a single splitting strategy and $\B{A}_1 = \B{I}_d$.
In this section, we consider a more challenging problem, namely Bayesian binary logistic regression.
This model involves $b > 1$ potential functions, matrices $\{\B{A}_{i}\}_{i\in[b]}$ not equal to the identity, and is such that the observations might be distributed over a set of $b$ nodes within a cluster.
As introduced in Section \ref{sec:split_Gibbs_sampler}, SGS is of interest for this scenario since it allows to sample from the posterior distribution of interest in such distributed environments.
In the sequel, we will also show the benefits of splitting $\B{A}_i\btheta$ instead of only splitting the variable of interest $\btheta$ as in \citet{Vono2019,Rendell2018}. 
This goal will be conducted by illustrating numerically our mixing time bounds and assessing the efficiency of the rejection sampling procedure (see Proposition \ref{prop:rejectionsamplingcomplexity}) used to sample the auxiliary variables $\bz_{1:b}$, in both cases.

\subsubsection{Problem Formulation}

As introduced in Example \ref{example:logistic}, the logistic regression problem considers a set of observed data $\{\B{x}_i,y_i\}_{i \in [n]}$ where the binary labels $y_i \in \{0,1\}$ are related to the unknown regression parameter $\btheta$ via the model \eqref{eq:logistic_likelihood}. In a Bayesian framework, a standard approach consists of assigning a zero-mean Gaussian prior to $\btheta$ with diagonal precision matrix $\B{\Sigma}^{-1} = \tau \B{I}_d$ as in \eqref{eq:logistic_prior}; see  \citet{Albert1993,Holmes2006}.
Instead, we set here $\B{\Sigma}^{-1} = \alpha \sum_{i=1}^n\B{x}_i\B{x}_i^\top$, with $\alpha = 3d/(\pi^2 n)$ which corresponds to a Zellner prior \citep{SabanesBove2011,Hanson2014}.
Such a choice leads to a posterior density $\pi(\btheta) \propto \exp(-U(\btheta))$ with
\begin{equation*}
    U(\btheta) = \sum_{i=1}^n y_i\B{x}_i^\top\btheta + \log\br{1 + \exp\pr{-\B{x}_i^\top\btheta}} + \frac{\alpha}{2}\nr{\B{x}_i^\top\btheta}^2.
\end{equation*}
Sampling from this posterior density can be conducted by exploiting the mixture representation of the binomial likelihood which involves the Polya-Gamma distribution, and then performing Gibbs sampling \citep{Polson2013}.
Nevertheless, although this algorithm has been shown to be uniformly ergodic w.r.t. the TV distance, the best known explicit result for its ergodicity constant degrades exponentially quickly with $n$ and $d$, see \cite{Choi2013}. 
We propose here to sample approximately from the posterior using SGS. We will consider and compare two splitting strategies.

\textit{Splitting strategy 1.} The first strategy sets $b=n$, $\B{A}_i = \B{x}_i^\top$ for $i \in [b]$ and leads to the approximate posterior density \eqref{eq:split_density_generalized} with
\begin{align*}
    U_i(z_i) = y_i z_i + \log\br{1 + \exp\pr{-z_i}} + \frac{\alpha}{2}z_i^2\ , \quad \forall i \in [b]\ .
\end{align*}
In this case, $U_i$ is $m_i$-strongly convex and $M_i$-smooth with $m_i = \alpha$ and $M_i = \alpha + 1/4$, respectively, and hence verifies \asstwo \ and \assfive. As detailed after Proposition \ref{proposition:2}, we can verify \assseven \ by centering $U_i$ with a simple linear shift.
In the sequel, we will assume that such a shifting has been performed which implies that all the assumptions in Theorem \ref{prop:compcomplexityTV} are verified.
The interest of this first splitting strategy is that the conditional posterior probability densities of $z_i$ given $\btheta$ are univariate and easy to sample.

\textit{Splitting strategy 2.} The second strategy mimics the one used by \cite{Rendell2018} and considers that the data $\{\B{x}_i,y_i\}_{i \in [n]}$ is divided into $b$ shards $\{\mathrm{D}_{i}\}_{i \in [b]}$.
For simplicity, we will assume that $n$ is a multiple of $b$ such that card($\mathrm{D}_i$) $= n/b$ for all $i \in [b]$. 
In contrast to the first splitting strategy, we use here $\B{A}_i = \B{I}_d$ for $i \in [b]$. This yields
\begin{equation*}
    U_i(\bz_i) = \sum_{j \in \mathrm{D}_i} y_j\B{x}_j^\top\bz_i + \log\br{1 + \exp\pr{-\B{x}_j^\top\bz_i}} + \frac{\alpha}{2}\nr{\B{x}_j^\top\bz_i}^2\ , \quad \forall i \in [b]\ .
\end{equation*}
Here $U_i$ is $m_i$-strongly convex and $M_i$-smooth with $m_i = \alpha \lambda_{\mathrm{min}}(\sum_{j \in \mathrm{D}_i}\B{x}_j\B{x}_j^\top)$ and $M_i = (\alpha + 1/4)\lambda_{\mathrm{max}}(\sum_{j \in \mathrm{D}_i}\B{x}_j\B{x}_j^\top)$, where $\lambda_{\mathrm{min}}(\B{M})$ and $\lambda_{\mathrm{max}}(\B{M})$ stand for the smallest and largest eigenvalues of a matrix $\B{M}$, respectively.
As before, we assume that an appropriate centering of $U_i$ has been performed to satisfy \assseven.
In some scenarios, such a splitting strategy is expected to be less efficient than the first one for two main reasons.
First, the conditional density of $\bz_i$ given $\btheta$ is $d$-dimensional and sampling from it is as difficult as sampling from $\pi$.
Second, the condition number $\kappa = \sum_i M_i/\sum_i m_i$ associated with this strategy (denoted $\kappa_2$) might be very large compared to the one associated to the splitting strategy 1 (denoted $\kappa_1$).
Indeed, the ratio of these two condition numbers is 
\begin{equation}
    \frac{\kappa_2}{\kappa_1} = \dfrac{\displaystyle\sum_{i=1}^b\lambda_{\mathrm{max}}\pr{\sum_{j \in \mathrm{D}_i}\B{x}_j\B{x}_j^\top}}{\displaystyle\sum_{i=1}^b\lambda_{\mathrm{min}}\pr{\sum_{j \in \mathrm{D}_i}\B{x}_j\B{x}_j^\top}}. \label{eq:ratio}
\end{equation}
This ratio is expected to be large when $d$ is large and the correlation between the covariates within each group is high. The splitting strategy 1 can be thought of as a preconditioning technique whose efficiency is measured by the ratio \eqref{eq:ratio}.

\subsubsection{Efficient Sampling with the SGS}

\begin{figure}
    \centering
    \mbox{{\includegraphics[scale=0.35]{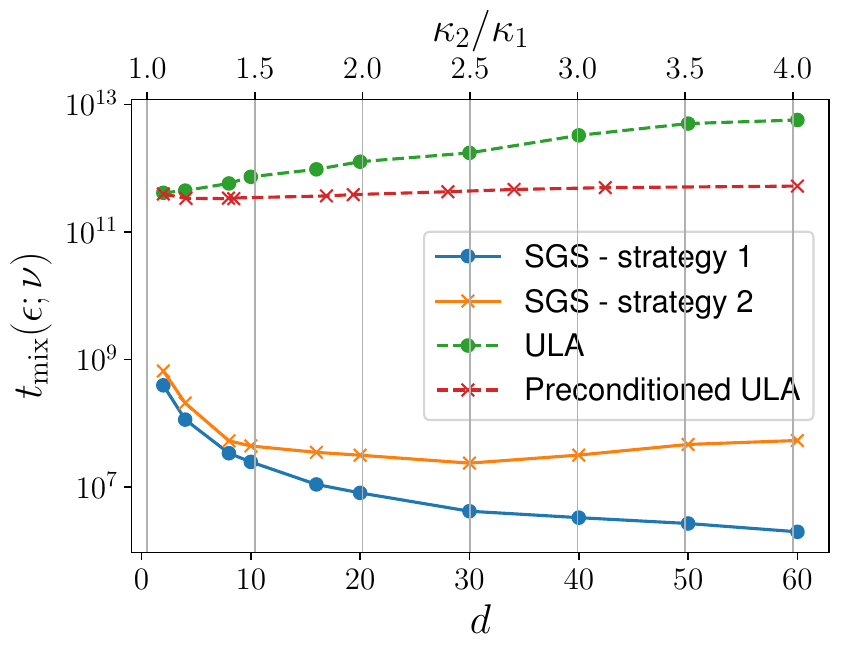}}}
    \mbox{{\includegraphics[scale=0.35]{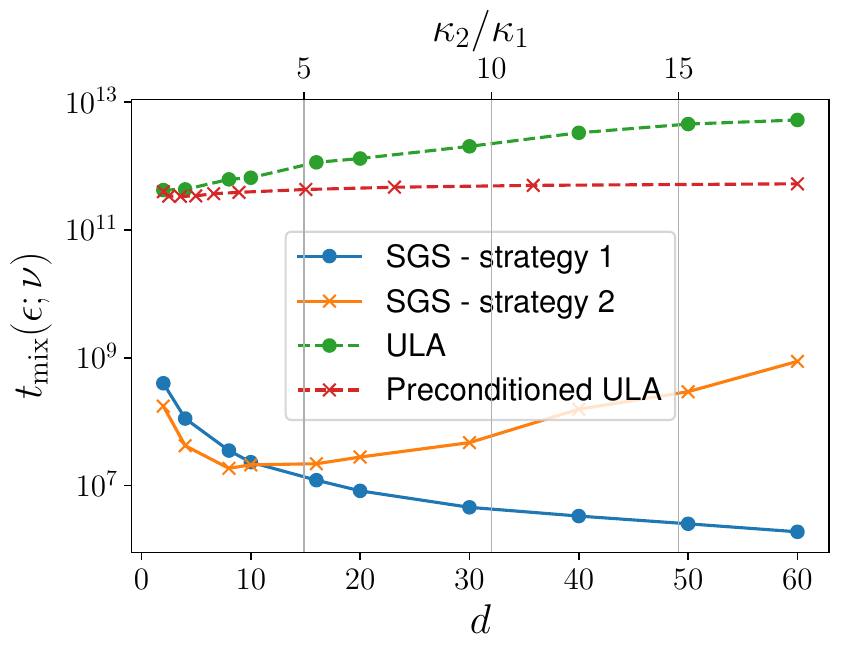}}}
    \mbox{{\includegraphics[scale=0.35]{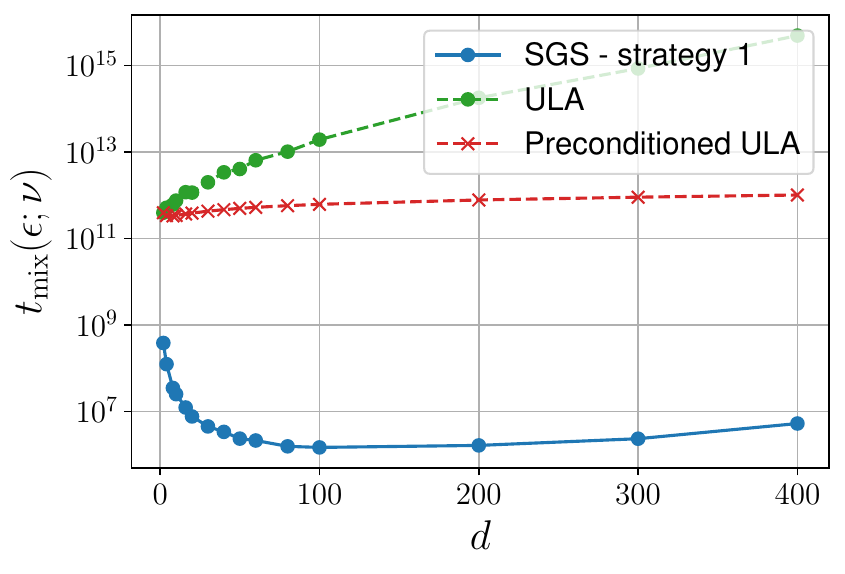}}}
  \caption{Logistic regression. (left and middle) Behavior of the mixing times of the two splitting strategies w.r.t. the ratio $\kappa_1/\kappa_2$. For strategy 2, $b=5$ (left) and $b=10$ (middle), while for strategy 1, $b=n=1000$ on both figures.
  (right) Behavior of the splitting strategy 1 in higher dimensions.}
  \label{fig:exp3}
\end{figure}

For this experiment, also considered in \cite{Dalalyan2017}, we generated a synthetic data set $\{\B{x}_i,y_i\}_{i=1}^n$ by drawing the covariates $\B{x}_i$ from a Rademacher distribution before normalizing the latter such that $\nr{\B{x}_i} = 1$.
Each binary label $y_i$ was then drawn from a Bernoulli distribution with probability of success equal to $\sigma(\B{x}_i^\top\btheta_{\mathrm{true}})$, where $\sigma(\cdot)$ is the sigmoid function and $\btheta_{\mathrm{true}} = \B{1}_d$.

\textit{Mixing times.} In this first sub-experiment, we compare our mixing time bounds in Theorem \ref{prop:compcomplexityTV} for the two splitting strategies detailed previously.
We set $\epsilon = 0.01$, $n=1,000$ and let $b$ vary from $b=5$ to $b=10$ for the splitting strategy 2.
The theoretical $\epsilon$-mixing times for the TV distance associated to the two splitting strategies are reported in Figure \ref{fig:exp3}.
To give an idea of the order of magnitude of these mixing times, the ones associated to ULA and its preconditioned version \citep{Dalalyan2017} using the same starting distribution $\nu$ are also displayed.
As expected, the splitting strategy 1 needs, in theory, less iterations than the splitting strategy 2 to achieve a prescribed precision $\epsilon$ when the ratio $\kappa_2/\kappa_1$ is large.
Finally, it is clear that the mixing times of SGS are again competitive compared to the ones derived in the recent literature for other MCMC algorithms.

\textit{Efficiency of the rejection sampling scheme.} In this second sub-experiment, we complement the previous analysis by showing that drawing the auxiliary variables $\B{z}_i$ given $\btheta$ can indeed be conducted efficiently with rejection sampling.
For each instance of SGS (with either splitting strategy 1 or 2), we used the rejection sampling scheme detailed in Proposition \ref{prop:rejectionsamplingcomplexity} in Section \ref{subsubsec:sampling_zi} to sample the auxiliary variables.
To this purpose, we considered different scenarios where $d \in \{2,10,50\}$ and $n \in \{200,10^3,10^4\}$.
We ran SGS over $T = 100$ iterations and averaged the number of rejection steps over these iterations for each auxiliary variable $\B{z}_i$.
In Table \ref{table:exp3}, we reported the largest average number of rejection steps per iteration obtained among the $b$ auxiliary variables for each splitting strategy.
For the nine different scenarios, the average number of rejection steps per iteration is near 1 which confirms the theoretical results of Proposition \ref{prop:rejectionsamplingcomplexity}.
Overall, SGS appears to be a promising and efficient approach to sample from smooth and strongly log-concave distributions. 

\begin{table}
    \centering
    \begin{tabular}{clllllllllllll}
        \thickhline
        $d$ & \multicolumn{3}{c}{2} & \multicolumn{3}{c}{10} & \multicolumn{3}{c}{50} \\
        $n$ & 200 & 1,000 & 10,000 & 200 & 1,000 & 10,000 & 200 & 1,000 & 10,000 \\
        \hline
        SGS 1 ($b=n$)& 1.04 & 1.04 & 1.03 & 1.03 & 1.05 & 1.03 & 1.06 & 1.05 & 1.03 \\
        SGS 2 ($b=2$) & 1.13 & 1.16 & 1.34 & 1.26 & 1.08 & 1.16 & 1.06 & 1.27 & 1.03  \\
        SGS 2 ($b=5$) & 1.22 & 1.14 & 1.26 & 1.08 & 1.14 & 1.08 & 1.00 & 1.10 & 1.34 \\
        SGS 2 ($b=10$) & 1.12 & 1.07 & 1.05 & 1.31 & 1.17 & 1.16 & 1.41 & 1.07 & 1.13  \\
        \thickhline
    \end{tabular}
        \caption{Logistic regression. Average number of samples proposed until one is accepted per iteration for SGS 1 (associated to splitting strategy 1) and SGS 2 (associated to splitting strategy 2).}
    \label{table:exp3}
\end{table}

\section{Conclusion}

In this paper, we have provided a detailed theoretical study of a recent and promising MCMC algorithm, namely SGS, which is amenable to a distributed implementation and shares strong similarities with quadratic penalty approaches in optimization.
Under a strong log-concavity assumption, we have obtained explicit dimension-free convergence rates for this sampler under both Wasserstein and total variation distances.
Combined with quantitative bounds on the bias induced by this algorithm, we have derived explicit bounds on its mixing time under reasonable assumptions which can be easily verified in practice.
In addition to be amenable to distributed and parallel computations, these results showed that SGS can compete and even improve upon standard MCMC schemes in terms of computational complexity.
Our theoretical results have been supported with numerical illustrations which confirmed the efficiency of SGS even on a serial computer.

There are a few additional interesting questions to address. 
All our theoretical results assume that the auxiliary variables $\B{z}_i$ are drawn from the exact conditional probability density at each iteration of SGS. Although this is possible for interesting models such as logistic regression, one might have to sample approximately these variables using Metropolis-Hastings or proximal MCMC scheme \citep{Pereyra2016B,Durmus2018,Vargas2019} in more complex scenarios and it would be interesting to extend our results to such settings.
Another interesting extension would be to consider whether using a sequence $\{\rho_t\}_{t \in \mathbb{N}}$ instead of a fixed parameter $\rho$ could be beneficial by determining convergence rates in this scenario.

\acks{This material is based upon work supported in part by the U.S. Army Research Laboratory and the U.S. Army Research Office, by the U.K. Ministry of Defence (MoD), and by the U.K. Engineering and Physical Research Council (EPSRC) under grant number EP/R013616/1. It is also supported by EPSRC grants EP/R034710/1 and EP/R018561/1. The authors thank the GdR ISIS and R\'emi Bardenet from Universit\'e de Lille for funding MV's visit to Oxford through the internationalization grant ``Effet tunnel''.
Part of this work has been supported by the ANR-3IA Artificial and Natural Intelligence Toulouse Institute (ANITI).
We thank Solomon Jacobs and Andreas Eberle for pointing out an error in the proof of Proposition \ref{proposition:2} in a previous version of this paper.}


\appendix
\renewcommand{\thesection}{Appendix \Alph{section}}
\renewcommand{\thesubsection}{Appendix \Alph{section}.\arabic{subsection}.}
\renewcommand{\thesubsubsection}{Appendix \Alph{section}.\arabic{subsection}.\arabic{subsubsection}.}

\section{Additional Details and Proofs for Section \ref{sec:split_Gibbs_sampler}}
\label{appendix:sec_2}

This section aims at proving the results claimed in Section \ref{sec:split_Gibbs_sampler}.

\subsection{Integrability of $\pi_{\rho}$ and Ergodicity of SGS}
\label{appendix:integrability_ergodicity}
\begin{proof}[Proof of Proposition \ref{prop:integrability_ergodicity}]
Let $U(\btheta,\B{z}_{1:b})$ be defined as in \eqref{eq:split_density_generalized}, then
by repeated application of Tonelli's theorem \citep{tonelli1909sull} to integrate out $\bz_1,\ldots \bz_b$, we have
\begin{align*}
&\int_{\btheta, \B{z}_{1:b}} \exp(-U(\btheta,\B{z}_{1:b}))\mathrm{d}\btheta \mathrm{d}\B{z}_{1:b}
=\int_{\btheta, \B{z}_{1:b}}\exp\left(- \sum_{i=1}^b U_i(\B{z}_{i}) + \dfrac{\nr{\B{z}_i-\B{A}_i\btheta}^2}{2\rho^2}\right)\mathrm{d}\btheta \mathrm{d}\B{z}_{1:b}\\
&=\int_{\btheta}\exp\left(-\sum_{i=1}^b U_i^{\rho}(\B{A}_i\btheta)\right)\mathrm{d}\btheta=\int_{\btheta}\exp(-U^{\rho}(\btheta))\mathrm{d}\btheta.
\end{align*}
By Assumption \asszero, $\exp(-U^{\rho}(\btheta))$ is integrable, hence $\exp(-U(\btheta,\B{z}_{1:b})$ is also integrable, and $\joint(\btheta,\B{z}_{1:b})$ is a probability density.
The $\pi$-irreducibility and aperiodicity of SGS follows because SGS defined on the extended state space including $\bz_{1:b}$ is a Gibbs sampler with systematic scan, and it satisfies the positivity condition of Gibbs sampling (since the densities are always positive); see for instance \cite{roberts1994simple}.
\end{proof}
\begin{proof}[Proof of Proposition \ref{prop:integrability_simpler_condition}]
Note that we have
\begin{align*}
\exp(-U_i^{\rho}(\B{w}_i))&=\int_{\B{z}_i\in \R^d}\exp\l(-U_i(\B{z}_i)-\frac{\|\B{z}_i-\B{w}_i\|^2}{2\rho^2}\r)\cdot \frac{\mathrm{d} \B{z}_i}{(2\pi \rho^2)^{d_i/2}}\\
&\le \int_{\B{z}_i\in \R^d}\exp\l(-V_i(\B{z}_i)-\frac{\|\B{z}_i-\B{w}_i\|^2}{2\rho^2}\r)\cdot \frac{\mathrm{d} \B{z}_i}{(2\pi \rho^2)^{d_i/2}}\\
&\le \exp\l(-V_i(\B{w}_i)\r) \cdot \int_{\B{z}_i\in \R^d}\exp\l(L_i\|\B{z}_i-\B{w}_i\| -\frac{\|\B{z}_i-\B{w}_i\|^2}{2\rho^2}\r)\cdot \frac{\mathrm{d} \B{z}_i}{(2\pi \rho^2)^{d_i/2}}.
\end{align*}
It is easy to show that 
$L_i\|\B{z}_i-\B{w}_i\| -\frac{\|\B{z}_i-\B{w}_i\|^2}{2\rho^2}\le -\frac{\|\B{z}_i-\B{w}_i\|^2}{4\rho^2}$ whenever $\|\B{z}_i-\B{w}_i\|\ge 2\rho^2 L_i$, hence this integral is finite, and 
$\exp(-U_i^{\rho}(\B{w}_i))\le \exp\l(-V_i(\B{w}_i)\r) C_i$ for some $C_i<\infty$. Hence, we have that $\exp(-U^{\rho}(\btheta))\le \exp\l(-\sum_{j\in [b]} V_j(\B{A}_j \btheta)\r)\cdot \prod_{j\in [b]} C_j$. The integrability of $\exp(-U^{\rho}(\btheta))$ now follows from our assumption that $\exp\l(-\sum_{j\in [b]} V_j(\B{A}_j \btheta)\r)$ is integrable.
\end{proof}

\section{Proofs for the Results of Section \ref{sec:non_asymptotic_properties}}
\label{appendix:A}

This section gives the proofs and technical details associated to the results presented in Section \ref{sec:non_asymptotic_properties}.

\subsection{Non-Asymptotic Bound for $I(U,U^{\rho})$}
\label{appendix:A_1}
In this section, we are going to bound the bias of the stationary distribution of SGS ($\pi_{\rho}$) from $\pi$. We start by the proof of Propostion \ref{prop:IUUrho}, which shows that we can bound the total variation, KL and Wasserstein-2 distances between $\pi_{\rho}$ and $\pi$ in terms of $ I(U,U_{\rho})$.
\begin{proof}[Proof of Proposition \ref{prop:IUUrho}]
By using the notations $f(\btheta)_{-} = -\min(f(\btheta),0)$ and $f(\btheta)_{+} = \max(f(\btheta),0)$, note that 
\begin{align}
    \nr{\marginal-\pi}_{\mathrm{TV}} &=
    \dfrac{1}{2}\int_{\btheta \in \mathbb{R}^d} |\pi(\btheta) - \marginal(\btheta)|\mathrm{d}\btheta \nonumber\\
    &=\int_{\btheta \in \mathbb{R}^d} \l(\pi(\btheta) - \marginal(\btheta)\r)_{-}\mathrm{d}\btheta
    =\int_{\btheta \in \mathbb{R}^d} \l(\pi(\btheta) - \marginal(\btheta)\r)_{+}\mathrm{d}\btheta\nonumber\\
    &=\int_{\btheta \in \mathbb{R}^d} \pi(\btheta)\l( 1 - \frac{\marginal(\btheta)}{\pi(\btheta)}\r)_{+}\mathrm{d}\btheta\label{eq:TVpospartbnd},
\end{align}
since 
\begin{align*}
    &\int_{\btheta \in \mathbb{R}^d} \l(\pi(\btheta) - \marginal(\btheta)\r)_{+}\mathrm{d}\btheta-\int_{\btheta \in \mathbb{R}^d} \l(\pi(\btheta) - \marginal(\btheta)\r)_{-}\mathrm{d}\btheta=\int_{\btheta \in \mathbb{R}^d} \l(\pi(\btheta) - \marginal(\btheta)\r)\mathrm{d}\btheta=0,\\
    &|\pi(\btheta) - \marginal(\btheta)|=(\pi(\btheta) - \marginal(\btheta))_+ + (\pi(\btheta) - \marginal(\btheta))_-.
\end{align*}
Using the definitions of $\pi$ and $\pi_{\rho}$, we have
\begin{align}
    \nr{\marginal-\pi}_{\mathrm{TV}}&= 
    \int_{\btheta \in \mathbb{R}^d} \pi(\btheta)\l(1 - \exp\l(U(\btheta)-U^{\rho}(\btheta)\r) 
    \cdot \frac{Z_{\pi}}{Z_{\pi_{\rho}}}
    \r)_{+}\mathrm{d}\btheta \nonumber
    \\
   \intertext{using the fact that $(1-\exp(x))_+ \le x_{-}$ for any $x\in \R$,}
   \label{eq:TVpipirhoZpiZhroUUrho2}&\le \int_{\btheta \in \mathbb{R}^d} \pi(\btheta)\l(
   \log\l(\frac{Z_{\pi}}{Z_{\pi_{\rho}}}\r)+
   U(\btheta)-U^{\rho}(\btheta) 
    \r)_{-}\mathrm{d}\btheta
    \\
    \label{eq:TVintermediate}&\le \l(\log\l(\frac{Z_{\pi_{\rho}}}{Z_{\pi}}\r)\r)_+
    +\int_{\btheta \in \mathbb{R}^d} \pi(\btheta)
   \l(U^{\rho}(\btheta)-U(\btheta)
    \r)_{+} \mathrm{d}\btheta = I(U,U_{\rho}),
\end{align}
hence the TV bound follows. For KL-divergence, note that
\eqref{eq:TVpipirhoZpiZhroUUrho2} satisfies that
\[\kld{\pi}{\pi_{\rho}}=
\int_{\mathbb{R}^d} \pi(\btheta) \log\left(\frac{\pi(\btheta)}{\pi_{\rho}(\btheta)}\right)\mathrm{d} \btheta
\le \int_{\mathbb{R}^d} \pi(\btheta)\l(
   \log\l(\frac{Z_{\pi}}{Z_{\pi_{\rho}}}\r)+
   U^{\rho}(\btheta)-U(\btheta)
    \r)_{+}\mathrm{d}\btheta,
\]
hence this is also bounded by $I(U,U_{\rho})$.
Finally, the Wasserstein bound follows from the KL bound and Lemma 9 of \cite{pmlr-v83-cheng18a}.
\end{proof}

Now we will state a few definitions and prove some auxiliary lemmas, and then prove Proposition \ref{proposition:2}. 
We have 
$U(\btheta)=\sum_{i=1}^{b} U_i(\B{A}_i\btheta)$, and 
\[\pi(\btheta)=\frac{\exp(-U(\btheta))}{Z_{\pi}}, \text{ for a normalising constant } Z_{\pi}=\int_{\btheta}\exp(-U(\btheta)) \mathrm{d}\btheta.\]
Similarly, by Proposition \ref{prop:integrability_ergodicity}, we have \[\marginal(\btheta)=\frac{\exp(-U^{\rho}(\btheta))}{Z_{\pi_{\rho}}}.\]
The following lemma states some bounds on $U(\btheta)-U^{\rho}(\btheta)$.

\begin{lemma}\label{lem:UrhoUratiobnd}
Let
\begin{align*}
     \overline{B}(\btheta) &\coloneqq \dfrac{\rho^2}{2}\sum_{i=1}^{b}\nr{\nabla U_i(\B{A}_i\btheta)}^2,\\
    \underline{B}(\btheta) &\coloneqq 
    \sum_{i=1}^{b}\pr{\dfrac{\rho^2}{2(1+\rho^2M_i)}\nr{\nabla U_i(\B{A}_i\btheta)}^2 -\frac{d_i}{2} \log(1+\rho^2M_i)}.
\end{align*}
Then assuming \asszero \ and \asstwo, we have $\underline{B}(\btheta)\le U(\btheta)-U^{\rho}(\btheta)$.
Assuming \asszero, \asstwo \ and \assfive, we have $U(\btheta)-U^{\rho}(\btheta)\le \overline{B}(\btheta)$.
\end{lemma}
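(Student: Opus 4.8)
The plan is to prove both inequalities factor by factor, since $U(\btheta)-U^{\rho}(\btheta)=\sum_{i=1}^{b}\br{U_i(\B{A}_i\btheta)-U_i^{\rho}(\B{A}_i\btheta)}$ and both $\overline{B}$ and $\underline{B}$ are sums of the corresponding per-factor quantities. Fixing $i$ and writing $\B{u}=\B{A}_i\btheta\in\R^{d_i}$, I would first rewrite the difference as a log-Laplace transform: pulling $U_i(\B{u})$ inside the integral defining $U_i^{\rho}$ in \eqref{eq:Uirhodef} and substituting $\bz_i=\B{u}+\rho\B{w}$ gives
$$U_i(\B{u})-U_i^{\rho}(\B{u})=\log\E_{\B{w}}\br{\exp\pr{U_i(\B{u})-U_i(\B{u}+\rho\B{w})}},$$
where $\B{w}\sim\mathcal{N}(\B{0},\B{I}_{d_i})$. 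Both bounds then follow from pointwise estimates on the exponent $U_i(\B{u})-U_i(\B{u}+\rho\B{w})$ combined with an explicit Gaussian integral.

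For the upper bound (using \asszero, \asstwo, \assfive) I would invoke convexity of $U_i$, which \assfive guarantees, in the form of the tangent-line inequality $U_i(\B{u}+\rho\B{w})\ge U_i(\B{u})+\rho\langle\grad U_i(\B{u}),\B{w}\rangle$, so that $U_i(\B{u})-U_i(\B{u}+\rho\B{w})\le-\rho\langle\grad U_i(\B{u}),\B{w}\rangle$. Since $-\rho\langle\grad U_i(\B{u}),\B{w}\rangle$ is a centered Gaussian of variance $\rho^2\nr{\grad U_i(\B{u})}^2$ under $\B{w}$, its moment generating function at $1$ equals $\exp\pr{\rho^2\nr{\grad U_i(\B{u})}^2/2}$, and monotonicity of $\log$ yields $U_i(\B{u})-U_i^{\rho}(\B{u})\le\rho^2\nr{\grad U_i(\B{u})}^2/2$. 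Summing over $i$ gives $U(\btheta)-U^{\rho}(\btheta)\le\overline{B}(\btheta)$.

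For the lower bound (using only \asszero and \asstwo) I would instead use the upper Hessian bound $\grad^2U_i\preceq M_i\B{I}$ from \asstwo to obtain the quadratic majorization $U_i(\B{u}+\rho\B{w})\le U_i(\B{u})+\rho\langle\grad U_i(\B{u}),\B{w}\rangle+M_i\rho^2\nr{\B{w}}^2/2$, hence $U_i(\B{u})-U_i(\B{u}+\rho\B{w})\ge-\rho\langle\grad U_i(\B{u}),\B{w}\rangle-M_i\rho^2\nr{\B{w}}^2/2$. Plugging this into the log-expectation and evaluating the resulting Gaussian integral by completing the square (with effective precision $1+M_i\rho^2$) gives
$$U_i(\B{u})-U_i^{\rho}(\B{u})\ge\frac{\rho^2\nr{\grad U_i(\B{u})}^2}{2(1+M_i\rho^2)}-\frac{d_i}{2}\log(1+M_i\rho^2),$$
and summing over $i$ yields $\underline{B}(\btheta)\le U(\btheta)-U^{\rho}(\btheta)$.

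The computations are short; the only points requiring care are the bookkeeping of the change of variables and checking that the directions of the convexity/smoothness inequalities match each bound (convexity for the upper bound, the Hessian upper bound for the lower bound), together with the completion of the square in the $d_i$-dimensional Gaussian integral that produces both the $(1+M_i\rho^2)^{-1}$ factor in the exponent and the $-\frac{d_i}{2}\log(1+M_i\rho^2)$ normalisation term. This Gaussian integral is the step I would treat most carefully, since it is where the lower bound genuinely departs from the upper one.
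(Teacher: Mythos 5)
Your proof is correct and follows essentially the same route as the paper's: the same per-factor decomposition of $U-U^{\rho}$ as a log-Gaussian-convolution, the Hessian upper bound $\grad^2 U_i \preceq M_i\B{I}$ with completion of the square for the lower bound, and a convexity-based linear minorant with a Gaussian moment computation for the upper bound. The only (harmless) difference is that for the upper bound you invoke plain convexity directly, whereas the paper retains the $m_i$-strong convexity terms in an intermediate bound of the form $\prod_i(1+\rho^2 m_i)^{-d_i/2}\exp\bigl(\sum_i \rho^2\nr{\grad U_i(\B{A}_i\btheta)}^2/(2(1+\rho^2 m_i))\bigr)$ before relaxing $m_i\ge 0$ — a sharper intermediate estimate it reuses later (in Lemma \ref{lem:Crho}), but one that is not needed for the lemma as stated.
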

\begin{proof}
First, note that
\[\exp(U(\btheta)-U^{\rho}(\btheta))=\exp\l(\sum_{i=1}^{b} 
    \l(U_i(\B{A}_i\btheta)- U_i^{\rho}(\B{A}_i\btheta)\r) \r).\]
From \eqref{eq:Urhodef}, it is clear that
\begin{equation}\label{eq:Uirhodiff}
\exp(U_i(\B{A}_i\btheta)-U_i^{\rho}(\B{A}_i\btheta))= \int_{\B{z}_i\in \R^d}\exp\l(U_i(\B{A}_i\btheta)-U_i(\B{z}_i)-\frac{\|\B{z}_i-\B{A}_i\btheta\|^2}{2\rho^2}\r)\cdot \frac{\mathrm{d} \B{z}_i}{(2\pi \rho^2)^{d_i/2}}.
\end{equation}
Using \asstwo, and second order Taylor expansion, for each $i \in [b]$, we have
\begin{align*}
      U_i(\B{A}_i\btheta) - U_i(\B{z}_i) \geq \nabla U_i(\B{A}_i\btheta)^\top(\B{A}_i\btheta-\B{z}_i)-\frac{M_i}{2} \|\B{A}_i\btheta-\B{z}_i\|^2. 
  \end{align*}
  Hence, using \eqref{eq:Uirhodiff}, we have
  \begin{align}
      &\exp\l(\sum_{i=1}^{b} 
    \l(U_i(\B{A}_i\btheta)- U_i^{\rho}(\B{A}_i\btheta)\r) \r) \nonumber\\
    &\geq \prod_{i=1}^b (2\pi\rho^2)^{-d_i/2}\int_{\B{z}_i \in \mathbb{R}^{d_i}}\exp\pr{\nabla U_i(\B{A}_i\btheta)^\top(\B{A}_i\btheta-\B{z}_i) - \pr{\dfrac{1+\rho^2M_i}{2\rho^2}}\nr{\B{A}_i\btheta-\B{z}_i}^2}\mathrm{d}\B{z}_i \nonumber\\
      &= \prod_{i=1}^b\pr{\exp\pr{\dfrac{\rho^2}{2(1+\rho^2M_i)}\nr{\nabla U_i(\B{A}_i\btheta)}^2}\pr{\dfrac{1}{1+\rho^2M_i}}^{d_i /2}} \label{eq:proof_theorem_2_4_needed_for_Lemma_26}\\
      &=\exp\pr{\sum_{i=1}^{b}\pr{\dfrac{\rho^2}{2(1+\rho^2M_i)}\nr{\nabla U_i(\B{A}_i\btheta)}^2 -\frac{d_i}{2} \log(1+\rho^2M_i)} }
      = \exp(\underline{B}(\btheta)), \label{eq:proof_theorem_2_4}
  \end{align}  
  hence the lower bound follows.

For the upper bound, we now use \assfive \ (convexity of the individual potential functions $U_i$ for $i \in [b]$), which by Taylor expansion yields that for every $i \in [b]$,
 \begin{align*}
      U_i(\B{A}_i\btheta) - U_i(\B{z}_i) \leq \nabla U_i(\B{A}_i\btheta)^\top(\B{A}_i\btheta-\B{z}_i)-\frac{m_i}{2}\|\B{A}_i\btheta-\B{z}_i\|^2.
  \end{align*}
Then, it follows that
    \begin{align}
      &\exp(U(\btheta)-U^{\rho}(\btheta))=\exp\l(\sum_{i=1}^{b} 
    \l(U_i(\B{A}_i\btheta)- U_i^{\rho}(\B{A}_i\btheta)\r) \r) \nonumber\\
      &\leq \prod_{i=1}^{b}(2\pi\rho^2)^{-d_i/2}\int_{\B{z}_i \in \mathbb{R}^{d_i}}\exp\pr{\nabla U_i(\B{A}_i\btheta)^\top(\B{A}_i\btheta-\B{z}_i) - \dfrac{1+\rho^2m_i}{2\rho^2}\nr{\B{A}_i\btheta-\B{z}_i}^2}\mathrm{d}\B{z}_i \nonumber\\
      &=\prod_{i=1}^{b}\frac{1}{\l(1+\rho^2 m_i\r)^{d_i/2}} \cdot \exp\l(\sum_{i=1}^b \frac{\rho^2 \nr{\grad U_i(\B{A}_i\btheta)}^2}{2 (1+\rho^2 m_i)}\r)
      \label{eq:proof_theorem_2_2_needed_for_Lemma_26}
      \\
      &\le \exp\pr{\dfrac{\rho^2}{2}\sum_{i=1}^{b}\nr{\nabla U_i(\B{A}_i\btheta)}^2}
      = \exp(\overline{B}(\btheta)),
      \label{eq:proof_theorem_2_2}
  \end{align}
  hence the upper bound follows.
\end{proof}
\begin{lemma}\label{lem:betaLipschitz}
Suppose that Assumptions \asstwo\, and \assfive\, hold. Let 
\[\beta(\btheta):=\left(\sum_{i=1}^{b}\nr{\nabla U_i(\B{A}_i\btheta)}^2\right)^{1/2},\] and $\B{A}$ be the $(d_1+\ldots d_b)\times d$ matrix created by stacking $\B{A}_1,\ldots ,\B{A}_b$ one upon another, starting with $\B{A}_1$ on the top and ending with $\B{A}_b$. Then $\beta$ is a Lipschitz function with respect to the Euclidean distance, with Lipschitz constant
\begin{equation*}\label{eq:Lbeta}
    L_{\beta}=\|\B{A}^\top \B{A}\|^{1/2} \max_{i\le b}M_i.
\end{equation*}
\end{lemma}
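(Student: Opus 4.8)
The plan is to observe that $\beta$ is nothing more than the Euclidean norm of the stacked gradient map, so that the square-root nonlinearity can be dispatched immediately by the reverse triangle inequality, reducing the whole problem to a linear Lipschitz estimate. Concretely, I would define the stacked vector $g(\btheta):=\pr{\grad U_1(\B{A}_1\btheta),\ldots,\grad U_b(\B{A}_b\btheta)}\in \R^{d_1+\ldots+d_b}$, so that $\beta(\btheta)=\nr{g(\btheta)}$. Since the Euclidean norm is $1$-Lipschitz, the reverse triangle inequality gives $\l|\beta(\btheta)-\beta(\btheta')\r|=\l|\nr{g(\btheta)}-\nr{g(\btheta')}\r|\le \nr{g(\btheta)-g(\btheta')}$, and it therefore suffices to prove that $g$ itself is $L_\beta$-Lipschitz with respect to the Euclidean distance.

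For the second step I would bound each block of $g(\btheta)-g(\btheta')$ using the smoothness assumption \asstwo. Indeed, $-M_i\B{I}\preceq \grad^2 U_i\preceq M_i\B{I}$ implies that $\grad U_i$ is $M_i$-Lipschitz, whence $\nr{\grad U_i(\B{A}_i\btheta)-\grad U_i(\B{A}_i\btheta')}\le M_i\nr{\B{A}_i(\btheta-\btheta')}$ for every $i\in[b]$. The third step then assembles the blocks: squaring and summing yields $\nr{g(\btheta)-g(\btheta')}^2\le \sum_{i=1}^{b}M_i^2\nr{\B{A}_i(\btheta-\btheta')}^2\le (\max_{i\le b}M_i)^2\sum_{i=1}^{b}\nr{\B{A}_i(\btheta-\btheta')}^2$. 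The key structural observation is that, since $\B{A}$ is formed by stacking the $\B{A}_i$, one has $\sum_{i=1}^{b}\nr{\B{A}_i x}^2=\nr{\B{A}x}^2=x^T\B{A}^T\B{A}x\le \nr{\B{A}^T\B{A}}\,\nr{x}^2$. Applying this with $x=\btheta-\btheta'$ and taking square roots produces $\nr{g(\btheta)-g(\btheta')}\le \nr{\B{A}^T\B{A}}^{1/2}(\max_{i\le b}M_i)\nr{\btheta-\btheta'}$, which is precisely the asserted constant $L_\beta$.

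I do not expect a genuine obstacle here: once one recognizes that $\beta=\nr{g(\cdot)}$, the argument is essentially bookkeeping, and the only delicate point is the very first reduction, where the reverse triangle inequality lets the nondifferentiable square root pass through cleanly. It is worth noting in passing that the proof uses only the smoothness assumption \asstwo, so the strong convexity hypothesis \assfive\ stated in the lemma is not actually needed for this particular bound; the constant $L_\beta$ is moreover attained in the purely linear (quadratic-potential) case, so it cannot be improved without further structural assumptions.
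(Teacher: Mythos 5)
Your proof is correct, but it takes a genuinely different route from the paper's. The paper works infinitesimally: wherever $\beta(\btheta)>0$ it computes the gradient explicitly, $\grad \beta(\btheta)=\B{A}^T\B{D}\B{w}/\|\B{w}\|$ with $\B{w}$ the stacked gradient vector and $\B{D}$ the block-diagonal matrix of Hessians $\grad^2 U_i(\B{A}_i\btheta)$, bounds $\|\grad\beta(\btheta)\|\le \|\B{D}\|\,\|\B{A}\B{A}^T\|^{1/2}\le \l(\max_{i\le b}M_i\r)\|\B{A}^T\B{A}\|^{1/2}$, and then has to patch up the one point where $\beta$ may vanish (and hence fail to be differentiable) by a limiting argument — and it is precisely in arguing that $\beta>0$ away from $\bthetastar$ that strong convexity gets invoked. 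You instead work globally: the reverse triangle inequality $|\,\|g(\btheta)\|-\|g(\btheta')\|\,|\le \|g(\btheta)-g(\btheta')\|$ reduces everything to the Lipschitz constant of the stacked gradient map $g$, which you assemble blockwise from the $M_i$-Lipschitz bounds on each $\grad U_i$ (a valid consequence of the two-sided Hessian bound in \asstwo) together with the identity $\sum_{i=1}^b\|\B{A}_i x\|^2=\|\B{A}x\|^2\le \|\B{A}^T\B{A}\|\,\|x\|^2$. Your approach buys three things: it needs no differentiability of $\beta$ and hence no limiting argument; it makes transparent — as you correctly observe — that \assfive\ is never used, so the hypothesis can be dropped from the lemma; and your closing remark that the constant is attained for quadratic potentials (take all $U_i(\bz)=\frac{M}{2}\|\bz\|^2$, $\btheta'=\B{0}$, and $\btheta$ the top right-singular vector of $\B{A}$, giving $\beta(\btheta)-\beta(\btheta')=M\|\B{A}^T\B{A}\|^{1/2}\|\btheta-\btheta'\|$) shows $L_\beta$ is sharp, which the paper does not address. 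Both proofs yield the same constant; note that the paper's bound $\|\B{A}\B{A}^T\|=\|\B{A}^T\B{A}\|$ is exactly your spectral-norm step in disguise.
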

\begin{proof}
Assuming that $\btheta\neq \bthetastar$, we have $\beta(\btheta)>0$, and thus $\grad \beta(\btheta)$ exists, and it has the form
\begin{align*}
    \grad \beta(\btheta)=\frac{\sum_{i=1}^{b} \B{A}_i^\top \grad^2 U_i(\B{A}_i \btheta) \grad U_i(\B{A}_i \btheta)}{\left(\sum_{i=1}^b \|\grad U_i(\B{A}_i \btheta)\|^2\right)^{1/2}}.
\end{align*}
Let $\B{w}\coloneqq (\grad U_1(\B{A}_1\btheta),\ldots,  \grad U_b(\B{A}_b\btheta) )\in \R^{d_1+\ldots+d_b}$, and $\B{D}\coloneqq \mathrm{diag}(\grad^2 U_1(\B{A}_1 \btheta), \ldots, \grad^2 U_b(\B{A}_b \btheta))$ (a block matrix with diagonal blocks corresponding $\grad^2 U_1(\B{A}_1 \btheta), \ldots, \grad^2 U_b(\B{A}_b \btheta)$). Then we have 
\begin{align*}
    \grad \beta(\btheta)=&\frac{\B{A}^\top \B{D} \B{w}}{\|\B{w}\|},\text{ hence}\\
    \|\grad \beta(\btheta)\|^2&=\frac{\B{w}^\top \B{D} \B{A} \B{A}^\top  \B{D} \B{w}}{\|\B{w}\|^2}\\
    &\le \|\B{D}\|^2 \|\B{A} \B{A}^\top\|\\
    &\le \|\B{A}^\top \B{A}\| (\max_{i\le b}M_i)^2,
\end{align*}
so $\|\grad \beta(\btheta)\|\le L_{\beta}$. Since this bound holds everywhere except possibly at $\bthetastar$, the Lipschitz property  is easy to show by a limiting argument.
\end{proof}

The next result is a technical lemma that will be used in the proof.
\begin{lemma}\label{lem:f2momgenbnd}
Suppose that $\mu(\btheta) \propto \exp(-U(\btheta))$ is a density on $\R^d$ satisfying that $U$ is twice continuously differentiable, and $\grad^2 U(\btheta)\succeq m \B{I}_d$ for every $\btheta\in \R^d$ for some $m>0$. Let $f:\R^d
\to \R$ be an $L$-Lipschitz function with $\E_{\mu}(f)=0$. Then for $0\le \lambda\le \frac{m}{6L^2}$, we have
\begin{equation}\label{eq:f2momgenbnd}\E_{\mu} \left[\econst^{ \lambda (f^2-\E_{\mu}(f^2))^2}\right] \le \econst^{\frac{4\lambda^2  L^4 }{m^2}}.\end{equation}
\end{lemma}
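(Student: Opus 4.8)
As written the bound is false. Set $g:=f^2-\E_{\mu}(f^2)$, so the exponent is $\lambda g^2\ge 0$. Since $\grad^2 U\succeq m\B{I}_d$ makes $f$ sub-Gaussian (see below), $g$ is sub-exponential, i.e. $\PP_{\mu}(|g|\ge t)$ decays only like $\econst^{-ct}$; hence $\PP_{\mu}(g^2\ge u)$ behaves like $\econst^{-c\sqrt{u}}$ and $\E_{\mu}[\econst^{\lambda g^2}]=+\infty$ for every $\lambda>0$, whereas the right-hand side is finite. The superfluous square is a typo: the hypothesis $\lambda\le m/(6L^2)$ forces $\lambda$ to scale like $v^{-1}$ with $v:=L^2/m$, which matches an exponent that is \emph{linear} in $f^2$, and the way the lemma is invoked in the proof of Proposition \ref{proposition:2} (with $f=\beta$ from Lemma \ref{lem:betaLipschitz}, $\lambda=\rho^2/2$, producing exactly the $\rho^4\sigma_U^4$ terms there) confirms that the intended inequality is
\[\E_{\mu}\l[\econst^{\lambda(f^2-\E_{\mu}(f^2))}\r]\le \econst^{4\lambda^2 L^4/m^2},\qquad 0\le\lambda\le\frac{m}{6L^2}.\]
I prove this (correct) version.

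\textbf{Sub-Gaussian inputs.} The single structural input is that $\grad^2 U\succeq m\B{I}_d$ makes $\mu$ strongly log-concave, so by the Bakry--\'Emery curvature condition it enjoys Gaussian concentration: for any $L$-Lipschitz $f$ with $\E_{\mu}(f)=0$,
\[\E_{\mu}\l[\econst^{s f}\r]\le \econst^{s^2 v/2}\quad (s\in\R),\qquad v:=L^2/m.\]
From this I would record two facts: (i) the Poincar\'e bound $\E_{\mu}(f^2)=\mathrm{Var}_{\mu}(f)\le v$; and (ii) the even-moment estimates $\E_{\mu}(f^{2k})\le 2\,k!\,(2v)^k$, obtained by integrating the Chernoff tail $\PP_{\mu}(|f|\ge t)\le 2\econst^{-t^2/(2v)}$.

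\textbf{Cancellation and summation.} The key device is to factor out the mean and to exploit that $\E_{\mu}(f^2)$ is exactly the linear coefficient of the power series of $\E_{\mu}[\econst^{\lambda f^2}]$:
\[\E_{\mu}\l[\econst^{\lambda(f^2-\E_{\mu}f^2)}\r]=\econst^{-\lambda\E_{\mu}f^2}\l(1+\lambda\E_{\mu}f^2+\sum_{k\ge 2}\frac{\lambda^k}{k!}\E_{\mu}(f^{2k})\r).\]
By (ii) the tail is a convergent geometric series, $\sum_{k\ge 2}\frac{\lambda^k}{k!}\E_{\mu}(f^{2k})\le 2\sum_{k\ge 2}(2v\lambda)^k$, convergence being ensured by $\lambda\le m/(6L^2)$, which gives $2v\lambda\le 1/3$. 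Writing $u:=\lambda\E_{\mu}f^2\in[0,v\lambda]$ and using $\econst^{-u}(1+u)\le 1$ together with $\econst^{-u}\le 1$, the expression collapses to $1+(\text{tail})$ and the whole quantity is bounded by $\econst^{C v^2\lambda^2}$ for an explicit constant $C$.

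\textbf{Main obstacle.} Beyond recognising and correcting the typo (the principal issue, since the literal statement has an infinite left-hand side), the only genuine work is constant tracking. The crude moment bound (ii) already delivers the sub-exponential form $\econst^{C v^2\lambda^2}$ but with a non-optimal $C$ (the above route gives $C=12$). Recovering precisely $C=4$ on the range $x:=v\lambda\le 1/6$ requires the sharp, Gaussian-type even-moment comparison $\E_{\mu}(f^{2k})\le \frac{(2k)!}{2^k k!}v^k$, so that $\sum_{k\ge 0}\frac{\lambda^k}{k!}\E_{\mu}(f^{2k})\le (1-2v\lambda)^{-1/2}$, followed by verifying the elementary one-variable inequality $\econst^{-u}\l[u-x+(1-2x)^{-1/2}\r]\le \econst^{4x^2}$ for $0\le u\le x\le 1/6$. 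This last step is a routine but constant-sensitive calculus check (the maximiser in $u$ lies at the endpoint $u=0$ on this range, reducing it to a single-variable comparison in $x$).
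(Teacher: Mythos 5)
Your diagnosis of the statement is correct, and it matches the paper: with the square in the exponent the left-hand side is $+\infty$ whenever, say, $\mu$ is Gaussian and $f$ is linear (strictly speaking, falsity should be shown by such an explicit example rather than by tail \emph{upper} bounds, but the example is immediate), and the paper's own proof in fact establishes the linear-exponent version --- its displayed series $1+\sum_{p\ge 2}(\sqrt{2}\lambda L^2/m)^p(p-1)^p/p!$ is precisely $\sum_{p}\frac{\lambda^p}{p!}\E_{\mu}[g^p]$ for $g=f^2-\E_{\mu}(f^2)$, and the invocation in Lemma \ref{lem:momgenbeta2} (with $f=\beta-\E_{\pi}(\beta)$ and $\lambda=2s$) uses exactly $\E_{\mu}[\econst^{\lambda g}]\le \econst^{4\lambda^2L^4/m^2}$. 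Where you diverge from the paper is in how the moments are produced: the paper does not use the Herbst MGF here, but applies the moment form of the log-Sobolev inequality (Corollary 3.4 of \cite{huang2020nonlinear}) \emph{twice} --- first to $g$ itself via $\nr{\grad g}\le 2L|f|$, then to $f$ --- obtaining $\E_{\mu}|g|^p\le 2^{p/2}(p-1)^pL^{2p}/m^p$, and then sums $\sum_p \lambda^p\E_{\mu}[g^p]/p!$ with Stirling's bound and a geometric tail. Because it centers \emph{before} expanding, $\E_{\mu}(g)=0$ kills the first-order term and the $p=2$ term $\frac{\lambda^2}{2}\E_{\mu}(g^2)\le \lambda^2L^4/m^2$ is what drives the constant $4$; the paper also adds a smoothing/limiting step for non-differentiable $f$, which your tail-based route would avoid.

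The genuine gap is the constant. What you actually prove is $\E_{\mu}[\econst^{\lambda g}]\le \econst^{12\lambda^2L^4/m^2}$; the step you invoke to recover $4$ --- the exact Gaussian moment domination $\E_{\mu}(f^{2k})\le \frac{(2k)!}{2^k k!}(L^2/m)^k$ --- is asserted, not proved, and it is far from routine. It does not follow from the sub-Gaussian MGF (the Chernoff-to-moment conversion loses a factor of order $\sqrt{k}$ relative to Gaussian moments), and the standard sharp comparisons do not deliver it either: Borell--TIS-type bounds are sharp around the \emph{median}, not the mean, and Pisier's convex-order inequality carries a factor $\pi/2$; a proof would need something like Caffarelli's contraction theorem combined with an extremality statement for linear functions that you neither prove nor cite. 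Your claim that this sharp comparison is \emph{required} for the constant $4$ is also incorrect: the paper reaches $4$ from the weaker $(p-1)^{p/2}$-type moment bounds precisely because expanding the centered variable $g$ is less lossy than expanding $\E_{\mu}[\econst^{\lambda f^2}]$ and paying the factor $\econst^{-\lambda \E_{\mu}(f^2)}$ afterwards. As it stands, your argument establishes the corrected lemma with constant $12$ but not \eqref{eq:f2momgenbnd}; to close it, either carry out the paper's centered-moment computation or supply a proof (or reference) for the asserted moment domination.
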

\begin{proof}
First, we are going to assume that $f$ is continuously differentiable. Then by the Lipschitz property, we have $\|\grad f(\btheta)\|\le L$ for every $\btheta \in \R^d$, and using Corollary 3.4 of \cite{huang2020nonlinear}, we can bound the moments of $f^2(\btheta)-\E_{\mu}(f^2)$ as follows.
For integers $p\ge 2$, by using this Corollary 3.4 twice 
(first on the function $f^2-\E_{\mu}(f^2)$, and then on $f$ in the third line), we have \begin{align*}\nonumber
\E_{\mu} \left[(f^2-\E_{\mu}(f^2))^p\right] &\le \E_{\mu} \left[\l|f^2-\E_{\mu}(f^2)\r|^p\right] \\
&\le \frac{1}{m^{p/2}} (p-1)^{p/2} \E_{\mu} \left[(2\|\grad f\|^2 f^2)^{p/2}\right]\\
&\le \frac{1}{m^{p/2}} 2^{p/2} (p-1)^{p/2} L^{p}\E_{\mu} \left[ \l|f\r|^p\right]\\
&\le \frac{1}{m^{p}} 2^{p/2} (p-1)^{p} L^{2p}.
\label{eq:mombndodd}
\end{align*}
Therefore, using the power series representation of the exponential function, we have
\begin{align}\nonumber
&\E_{\mu} \left[\econst^{ \lambda (f^2-\E_{\mu}(f^2))^2}\right] 
\le 1 +\sum_{p=2}^{\infty}   \l(\frac{\sqrt{2} \lambda L^2 }{m}\r)^{p}  \frac{(p-1)^{p}}{p!}\\
&\nonumber \le 1+\frac{1}{2}\l(\frac{\sqrt{2} \lambda L^2 }{m}\r)^{2}+
\sum_{p=3}^{\infty}   \l(\frac{\sqrt{2} \lambda L^2 }{m}\r)^{p}  \frac{(p-1)^{p}}{(p/\econst)^p \sqrt{2\pi p}}\\
&\nonumber \le  1+\frac{1}{2}\l(\frac{\sqrt{2} \lambda L^2 }{m}\r)^{2}+\sum_{p=3}^{\infty}   \l(\frac{\sqrt{2} \lambda L^2 }{m}\r)^{p}  \frac{\econst^{p-1}}{\sqrt{2 \pi\cdot 3}},
\end{align}
where we have used the facts that $p!\ge (p/\econst)^p \sqrt{2\pi p}$ by \cite{robbins1955remark}, and $((p-1)/p)^{p}\le 1/\econst$ for every $p\ge 1$. By summing up the geometric series, we have that for $0\le \frac{\econst\sqrt{2} \lambda L^2 }{m}<1$,
\[\E_{\mu} \left[\econst^{ \lambda (f^2-\E_{\mu}(f^2))^2}\right] 
\le  1+\frac{1}{2}\l(\frac{\sqrt{2} \lambda L^2 }{m}\r)^{2}+\frac{\l(\frac{\sqrt{2} \lambda L^2 }{m}\r)^{3}\econst^2}{\sqrt{6\pi}}\cdot \frac{1}{1- \frac{\econst\sqrt{2} \lambda L^2 }{m}}.\]
It is easy to show that for $0
\le \frac{\lambda L^2 }{m}\le \frac{1}{6}$, the above sum is bounded by $\econst^{\frac{4\lambda^2  L^4 }{m^2}}$, implying \eqref{eq:f2momgenbnd}. Finally, the proof without assuming differentiability of $f$ follows by using Theorem 6 of \cite{azagra2007smooth}, and a limiting argument using the dominated convergence theorem.
\end{proof}

The next result presents some bounds on certain moment generating functions.

\begin{lemma}\label{lem:momgenbeta2}
Suppose that Assumptions \asszero, \asstwo, \assfive, and \assseven\, hold. 
Let 
\[
\sigma^2_U:=L_{\beta}^2 \cdot m_U^{-1}=\|\B{A}^\top \B{A}\| (\max_{i\le b}M_i)^2 \cdot m_U^{-1}.\]
Then for $0\le s\le \frac{1}{12 \sigma^2_U}$, we have
\[\int_{\btheta}\pi(\btheta) e^{s\cdot \beta^2(\btheta)} \mathrm{d} \btheta \le \econst^{s \E_{\pi}(\beta^2)}\cdot \econst^{s^2 (8 \sigma^4_U+4 (\E_{\pi}(\beta))^2 \sigma^2_U)}.\]
Moreover, we have $(\E_{\pi}(\beta))^2\le \E_{\pi}(\beta^2)\le d \sigma^2_U$.
\end{lemma}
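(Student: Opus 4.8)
The plan is to reduce the whole estimate to the concentration of the Lipschitz function $\beta$ under the $m_U$-strongly log-concave measure $\pi$. First I would record the two structural facts that drive the argument. Under \assfive\ the potential $U(\btheta)=\sum_i U_i(\B{A}_i\btheta)$ satisfies $\grad^2 U(\btheta)=\sum_i \B{A}_i^T\grad^2 U_i(\B{A}_i\btheta)\B{A}_i\succeq \sum_i m_i\B{A}_i^T\B{A}_i\succeq m_U\B{I}_d$, so $\pi$ is $m_U$-strongly log-concave; and by Lemma \ref{lem:betaLipschitz} (using \asstwo) the map $\beta$ is $L_{\beta}$-Lipschitz with $L_{\beta}^2=\|\B{A}^T\B{A}\|(\max_i M_i)^2$, so that $\sigma^2_U=L_{\beta}^2 m_U^{-1}$.

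The core step is to split the deviation of $\beta^2$ into a ``quadratic'' and a ``linear'' piece. Writing $g:=\beta-\E_{\pi}(\beta)$, which is mean-zero and still $L_{\beta}$-Lipschitz, a direct expansion gives
\[\beta^2-\E_{\pi}(\beta^2)=\bigl(g^2-\E_{\pi}(g^2)\bigr)+2\,\E_{\pi}(\beta)\,g.\]
I would then factor the moment generating function via the Cauchy--Schwarz inequality $\E[\econst^{X+Y}]\le \E[\econst^{2X}]^{1/2}\E[\econst^{2Y}]^{1/2}$ with $X=s(g^2-\E_{\pi}g^2)$ and $Y=2s\,\E_{\pi}(\beta)\,g$. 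The quadratic factor $\E_{\pi}[\econst^{2s(g^2-\E_{\pi}g^2)}]$ is controlled by Lemma \ref{lem:f2momgenbnd} applied with $f=g$, $L=L_{\beta}$, $m=m_U$ and $\lambda=2s$: its hypothesis $\lambda\le m/(6L^2)$ becomes exactly $s\le m_U/(12L_{\beta}^2)=1/(12\sigma^2_U)$, the stated range, and yields the bound $\exp(16 s^2\sigma_U^4)$. The linear factor $\E_{\pi}[\econst^{4s\E_{\pi}(\beta)g}]$ is handled by the standard sub-Gaussian concentration of Lipschitz functions under an $m_U$-strongly log-concave measure (Bakry--Émery log-Sobolev inequality together with Herbst's argument), which gives $\E_{\pi}[\econst^{t g}]\le \econst^{t^2\sigma^2_U/2}$ and hence $\exp(8s^2(\E_{\pi}\beta)^2\sigma^2_U)$. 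Taking square roots, multiplying, and re-inserting the prefactor $\econst^{s\E_{\pi}(\beta^2)}$ (since $\int\pi\,\econst^{s\beta^2}=\econst^{s\E_{\pi}(\beta^2)}\E_{\pi}[\econst^{s(\beta^2-\E_{\pi}\beta^2)}]$) produces exactly $\econst^{s\E_{\pi}(\beta^2)}\econst^{s^2(8\sigma_U^4+4(\E_{\pi}\beta)^2\sigma^2_U)}$.

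For the ``moreover'' statement, $(\E_{\pi}\beta)^2\le\E_{\pi}(\beta^2)$ is just Jensen's inequality. For the upper bound I would use the centering assumption \assseven, which forces $\grad U_i(\B{A}_i\bthetastar)=\B{0}$ for all $i$ and hence $\beta(\bthetastar)=0$; combined with the Lipschitz property this gives $\beta(\btheta)\le L_{\beta}\nr{\btheta-\bthetastar}$, so $\E_{\pi}(\beta^2)\le L_{\beta}^2\,\E_{\pi}\nr{\btheta-\bthetastar}^2$. I would then bound the second moment by integration by parts: since $\grad U(\bthetastar)=\B{0}$ and $U$ is $m_U$-strongly convex, $(\btheta-\bthetastar)^T\grad U(\btheta)\ge m_U\nr{\btheta-\bthetastar}^2$, while the divergence identity $\E_{\pi}[(\btheta-\bthetastar)^T\grad U(\btheta)]=\E_{\pi}[\mathrm{div}(\btheta-\bthetastar)]=d$ gives $\E_{\pi}\nr{\btheta-\bthetastar}^2\le d/m_U$; multiplying by $L_{\beta}^2$ yields $\E_{\pi}(\beta^2)\le d\sigma^2_U$.

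I expect the only delicate points to be bookkeeping. The main thing to watch is that the factor $2$ lost to Cauchy--Schwarz in the quadratic term is precisely compensated, so that the admissible range of $s$ coming from Lemma \ref{lem:f2momgenbnd} (applied with $\lambda=2s$) coincides exactly with $[0,1/(12\sigma^2_U)]$. I would also confirm that the boundary terms in the integration-by-parts identity vanish, which is guaranteed by the integrability under \asszero\ together with the Gaussian-type tails of a strongly log-concave density.
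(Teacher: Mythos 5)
Your proof is correct and follows essentially the same route as the paper's: the same decomposition $\beta^2-\E_{\pi}(\beta^2)=(g^2-\E_{\pi}g^2)+2\E_{\pi}(\beta)\,g$ with $g=\beta-\E_{\pi}(\beta)$, the same Cauchy--Schwarz factorization of the moment generating function, Lemma \ref{lem:f2momgenbnd} applied with $\lambda=2s$ (whose hypothesis yields exactly the stated range $s\le 1/(12\sigma_U^2)$), and the Herbst/log-Sobolev sub-Gaussian bound for the linear factor, with all constants matching. The only immaterial difference is in the ``moreover'' part, where you derive $\E_{\pi}\nr{\btheta-\bthetastar}^2\le d/m_U$ by a self-contained integration-by-parts argument (and route through the Lipschitz property of $\beta$ rather than the pointwise bound $\nr{\grad U_i(\B{A}_i\btheta)}\le M_i\nr{\B{A}_i(\btheta-\bthetastar)}$), whereas the paper cites \citet[Proposition 1(ii)]{durmus2018high} for the same second-moment bound.
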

\begin{proof}
From Assumption \assfive, we have
\begin{equation}\label{eq:Uhesslwbnd}\grad^2 U(\btheta)=\sum_{i=1}^b \B{A}_i^\top \grad^2 U_i(\B{A}_i\btheta) \B{A}_i\succeq \sum_{i=1}^b m_i \B{A}_i^\top \B{A}_i\succeq \lambda_{\min}\left(\sum_{i=1}^b m_i \B{A}_i^\top \B{A}_i\right) \B{I}_d=m_U \B{I}_d.\end{equation}
 By Theorem 5.2 of \cite{Ledouxconcentrationofmeasure}, $\pi$ satisfies a log-Sobolev inequality with constant $C:=m_U^{-1}$. Therefore, by Herbst's argument (see equation (5.8) on page 95 of \cite{Ledouxconcentrationofmeasure}), and the $L_{\beta}$-Lipschitz property of the function $\beta$ shown in Lemma \ref{lem:betaLipschitz}, for every $\lambda\in \R$, we have  
 \begin{align}\label{eq:genfuncbeta}
 &\E_{\pi}(\econst^{\lambda (\beta-\E_{\pi}(\beta))})\le \econst^{C \lambda^2 L_{\beta}^2/2}.
\end{align}
By the decomposition $\beta^2(\btheta)=\E_{\pi}(\beta)^2+(\beta(\btheta)-\E_{\pi}(\beta))^2+2 \E_{\pi}(\beta) (\beta(\btheta)-\E_{\pi}(\beta))$, we have
\begin{align*}
\E_{\pi}(\econst^{s \beta^2})&=
\econst^{s \E_{\pi}(\beta)^2} \cdot \E_{\pi}\l(\econst^{s (\beta-\E_{\pi}(\beta))^2} \cdot \econst^{2 s \E_{\pi}(\beta) (\beta-\E_{\pi}(\beta))} \r)\nonumber\\
\intertext{ by the Cauchy-Schwarz inequality}
&\le \econst^{s \E_{\pi}(\beta)^2} \cdot \l[\E_{\pi}\l(\econst^{2s (\beta-\E_{\pi}(\beta))^2}\r)\r]^{1/2} \cdot \l[\E_{\pi}\l(\econst^{4 s \E_{\pi}(\beta) (\beta-\E_{\pi}(\beta))} \r)\r]^{1/2}.
\end{align*}
Lemma \ref{lem:f2momgenbnd} implies that for $0\le s\le \frac{1}{12 \sigma^2_U}$, we have
\begin{equation}\label{eq:momsbeta2}
\E_{\pi}\l(\econst^{2s (\beta(\btheta)-\E_{\pi}(\beta))^2 } \r)\le  \econst^{2s \E_{\pi}[(\beta(\btheta)-\E_{\pi}(\beta))^2]} \cdot \econst^{16 s^2 \sigma^4_U}.\end{equation}
By \eqref{eq:genfuncbeta} for $\lambda=4 s \E_{\pi}(\beta)$, we have
\begin{align*}
\E_{\pi}\l(\econst^{4 s \E_{\pi}(\beta) (\beta-\E_{\pi}(\beta))} \r)\le \econst^{8 s^2 C  (\E_{\pi}(\beta))^2 L_{\beta}^2}.
\end{align*}
The first claim of the lemma now follows by rearrangement. Additionally we have $(\E_{\pi}(\beta))^2\le \E_{\pi}(\beta^2)$, which can be further bounded as
\begin{align}
    \nonumber\E_{\pi}(\beta^2)&=\E_{\pi}(\sum_{i=1}^{b} \|\grad U_{i}(\B{A}_{i}\btheta)\|^2)
    \intertext{ using Assumptions \asstwo \, and \assseven}
    \nonumber&\le \E_{\pi}(\sum_{i=1}^{b} M_i^2\|\B{A}_{i}(\btheta-\bthetastar)\|^2)\\
    \nonumber&\le (\max_{1\le i\le b}M_i)^2 \|\B{A}^\top \B{A}\| \E_{\pi}(\|\btheta-\bthetastar\|^2)\\
    \nonumber &\le (\max_{1\le i\le b}M_i)^2 \|\B{A}^\top \B{A}\| d m_U^{-1}=d\sigma^2_U,
\end{align}
where we have used the fact that $\E_{\pi}(\|\btheta-\bthetastar\|^2)\le \frac{d}{m_U}$ by Proposition 1 part (ii) of \cite{durmus2018high}.
\end{proof}

\begin{lemma}\label{lem:ratioofnormalizingconstants}
If we assume that $b=1$, $d_1=d$, and $\B{A}_1$ is of full rank, then we have $Z_{\pi}=Z_{\pi_{\rho}}$ for any $\rho$. 
More generally, assume that \asszero, \asstwo, \assfive and \assseven\, hold. Then for $\rho^2\le \frac{1}{6 \sigma^2_U}$,
\begin{equation}\label{eq:Zratiolowerbnd}
    \log\l(\frac{Z_{\pi}}{Z_{\pi_{\rho}}}\r)\ge 
-\E_{\pi}(\overline{B}(\btheta))-\rho^4 (2+d)\sigma^4_U.
\end{equation}

\end{lemma}
\begin{proof}
Firstly, we that $b=1$, $d_1=d$, and $\B{A}_1$ is of full rank. Then one can show that
\begin{align*}
&\int_{\btheta\in \R^d}\exp(-U_1^{\rho}(\B{A}_1\btheta))\mathrm{d}\btheta \\
&=
\int_{\btheta\in \R^d}\int_{\B{z}_1\in \R^d}\exp\l(-U_1(\B{z}_1)-\frac{\|\B{z}_1-\B{A}_1\btheta\|^2}{2\rho^2}\r)\cdot \frac{\mathrm{d} \B{z}_1}{(2\pi \rho^2)^{d/2}}\mathrm{d} \btheta\\
&=\int_{\btheta\in \R^d}\exp(-U_1(\B{A}_1\btheta))\mathrm{d}\btheta.
\end{align*}
Hence, in this case $Z_{\pi}=Z_{\pi_{\rho}}$.

Now we look at the general multiple splitting case. Note that using the fact that $\int_{\btheta \in \mathbb{R}^d} \pi(\btheta)\l( 1 - \frac{\marginal(\btheta)}{\pi(\btheta)}\r)\mathrm{d}\btheta=0$, it follows that
\begin{equation}\label{eq:ZpiZpirho}
    \frac{Z_{\pi}}{Z_{\pi_{\rho}}}=
    \l( \int_{\btheta \in \mathbb{R}^d} \pi(\btheta) \exp\l(\sum_{i=1}^{b} 
    \l(U_i(\B{A}_i\btheta)- U_i^{\rho}(\B{A}_i\btheta)\r)\r)  \mathrm{d}\btheta \r)^{-1}.
\end{equation}
By \eqref{eq:proof_theorem_2_2} and \eqref{eq:ZpiZpirho}, we have
\begin{align}
    \frac{Z_{\pi}}{Z_{\pi_{\rho}}}\nonumber
    &\ge 
    \l( \int_{\btheta \in \mathbb{R}^d} 
    \pi(\btheta)\exp(\overline{B}(\btheta))
    \mathrm{d}\btheta \r)^{-1}.
\end{align}

Note that  $\overline{B}(\btheta)= \frac{\rho^2}{2} \beta(\btheta)^2$, hence by Lemma \ref{lem:momgenbeta2}, we have that for $\rho^2\le \frac{1}{6 \sigma^2_U}$,
\[\int_{\btheta \in \mathbb{R}^d} 
    \pi(\btheta)\exp(\overline{B}(\btheta))
    \mathrm{d}\btheta
    \le \econst^{\E_{\pi}(\overline{B}(\btheta))}\cdot \econst^{2\rho^4 \sigma^4_U+\rho^4 \E_{\pi}(\btheta)^2 \sigma^2_U}\le \econst^{\E_{\pi}(\overline{B}(\btheta))}\cdot \econst^{(2+d)\rho^4 \sigma^4_U}.\]
The claim now follows by rearrangement.
\end{proof}

Now we are ready to prove our bias bound.
\begin{proof}[Proof of Proposition \ref{proposition:2}]
In the single splitting case, assuming \asszero, $b=1$, $d_1=d$, and that $\B{A}_1$ is invertible, by Lemma \ref{lem:ratioofnormalizingconstants}, we have $\frac{Z_{\pi}}{Z_{\pi_{\rho}}}=1$. Combining this and \eqref{eq:proof_theorem_2_4} with our bound \eqref{eq:TVpipirhoZpiZhroUUrho2}, we obtain that
\begin{align*}
    I(U,U^{\rho})
    &\le \int_{\btheta \in \mathbb{R}^d} \pi(\btheta)\l(
    \underline{B}(\btheta) \r)_{-}\mathrm{d}\btheta \\
    &= \int_{\btheta \in \mathbb{R}^d} \pi(\btheta)\l(
    \dfrac{\rho^2\nr{\nabla U_1(\B{A}_1\btheta)}^2}{2(1+\rho^2M_1)} - \frac{d}{2}\log(1+\rho^2M_1) \r)_{-}\mathrm{d}\btheta \\
    &\le \int_{\btheta \in \mathbb{R}^d} \pi(\btheta)\l(- \frac{d}{2}\log(1+\rho^2M_1) \r)_{-}\mathrm{d}\btheta \\
    &\le  \frac{d}{2} M_1\rho^2.
\end{align*}
In the general case, note that the $m_U$-strong convexity follows by \eqref{eq:Uhesslwbnd}. We have the lower bound \eqref{eq:Zratiolowerbnd} on $\log\l(\frac{Z_{\pi}}{Z_{\pi_{\rho}}}\r)$. By combining this and \eqref{eq:proof_theorem_2_4} with our bound \eqref{eq:TVpipirhoZpiZhroUUrho2}, we obtain that for $\rho^2\le \frac{1}{6 \sigma^2_U}$, we have
\begin{align*}
    &I(U,U^{\rho})
    \\
    \nonumber
    &\le \sum_{i=1}^b \frac{d_i}{2}\log(1+\rho^2 M_i)+\sum_{i=1}^b \frac{\rho^4 M_i }{2(1+\rho^2 M_i)} \E_{\pi}(\|\grad U_i(A_i\btheta)\|^2)+(2+d)\rho^4 \sigma^4_U\\
    &\le \frac{\rho^2}{2}(\sum_{i=1}^b d_i M_i)+\frac{\rho^4}{2}(\max_{1\le i\le b}M_i)  \E_{\pi}(\beta^2)+(2+d)\rho^4 \sigma^4_U\\
    &\le \frac{\rho^2}{2}(\sum_{i=1}^b d_i M_i) +\l(2+\frac{3}{2}d\r)\rho^4 \sigma^4_U,
\end{align*}
where in the last step we have used the facts that $\E_{\pi}(\beta^2)\le \sigma^2_U d$ (by Lemma \ref{lem:momgenbeta2}) and that $\max_{1\le i\le b}M_i\le \sigma^2_U$ (by the definition of $\sigma^2_U$).
\end{proof}

\subsection{Non-Asymptotic Bounds for the 1-Wasserstein Distance}\label{appendix:subsec:bounds_bias_Wasserstein}

The result shown in Proposition \ref{thm:Wassersteinpipirhosinglesplitting} shows a Wasserstein error rate bound in the single splitting case ($b=1$). 
Its proof is given below.
\begin{proof}[Proof of Proposition \ref{thm:Wassersteinpipirhosinglesplitting}]
The integrability of $U_1$ follows from assumption \eqref{eq:fradunbnd}. Assume without loss of generality that $U_1(\btheta)$ is normalised, i.e.  $\int_{\btheta\in \R^d}\exp(-U_1(\btheta))\mathrm{d}\btheta=1$ (if it is not, we can add to the potential an appropriate constant). Then the distribution 
\[\pi_{\rho}(\btheta)=  \frac{1}{(2\pi\rho^2)^{d/2}}\int_{\bz\in \R^d}\exp\l(-U_1(\bz)-\frac{\|\btheta-\bz\|^2}{2\rho^2}\r)\mathrm{d} \bz\]
is the convolution of $\pi(\btheta) = \exp(-U_1(\btheta))$ and a $d$-dimensional Gaussian random variable with mean zero and covariance $\rho^2 \B{I}_d$. In particular, it is clear that 
\begin{align*}\int_{\btheta\in \R^d} \pi_{\rho}(\btheta)\mathrm{d} \btheta&=\int_{\bz\in \R^d}\frac{1}{(2\pi\rho^2)^{d/2}}\int_{\btheta\in \R^d}\exp\l(-U_1(\bz)-\frac{\|\btheta-\bz\|^2}{2\rho^2}\r)\mathrm{d} \btheta \mathrm{d} \bz
\\
&=\int_{\bz\in \R^d}\exp(-U_1(\bz))=1.
\end{align*}
The first part of the bound follows from the fact that the expectation of the norm of this Gaussian random variable is bounded by $\rho \sqrt{d}$ (since the expectation of the square of the norm is $\rho^2 d$, this follows by Jensen's inequality).

In order to obtain the second part, we are going to use the dual formulation of the 1-Wasserstein distance (see e.g. Remark 6.5 of \cite{Villani2008}),
\begin{align}
    \nonumber W_1(\pi,\pi_{\rho})&=\sup_{g: \|g\|_{\mathrm{Lip}}\le 1} \int_{\btheta} g(\btheta) \pi(\btheta) d\btheta-\int_{\btheta} g(\btheta)\pi_{\rho}(\btheta) \mathrm{d}\btheta\\
    \label{eq:dualform}&=\sup_{g\in C^1(\R^d): \|\grad g\|_{\infty}\le 1}\int_{\btheta} g(\btheta) \pi(\btheta) d\btheta-\int_{\btheta} g(\btheta)\pi_{\rho}(\btheta) \mathrm{d}\btheta,
\end{align}
where the second equality follows from the fact that differentiable functions $g$ with $\|\grad g\|_{\infty}\le 1$ are dense among 1-Lipschitz functions on $\R^d$.

The evolution of a density $\pi_{\rho}$ as we increase the variance $\rho^2$ is known to follow the heat equation, see Section 2.4 of \cite{lawler2010random},
\[
\frac{\mathrm{d}}{\mathrm{d}(\rho^2)}\pi_{\rho}(\btheta)=
\frac{1}{2}\Laplace{\pi_{\rho}}(\btheta),
\]
where $\Laplace \pi_{\rho}(\btheta)=\sum_{i=1}^{d}\frac{\partial^2}{\partial \theta_i^2} \pi_{\rho}(\btheta)$ denotes the Laplacian of $\pi_{\rho}$. 
By integration, we obtain that
\begin{align*}
\sup_{g\in C^1(\R^d): \|\grad g\|_{\infty}<1}\frac{\mathrm{d}}{\mathrm{d}(\rho^2)} \int_{\btheta} g(\btheta)\pi_{\rho}(\btheta) \mathrm{d}\btheta&=
\sup_{g\in C^1(\R^d): \|\grad g\|_{\infty}\le 1}\frac{1}{2} \int_{\btheta\in \R^d} g(\btheta) \Laplace{\pi_{\rho}}(\btheta) \mathrm{d}\btheta.
\end{align*}
Now if we define the functional 
\[\F(\mu):=\sup_{g\in C^1(\R^d): \|\grad g\|_{\infty}\le 1}\frac{1}{2} \int_{\btheta\in \R^d} g(\btheta) \Laplace\mu(\btheta) \mathrm{d}\btheta.\]
Then it is easy to see that this is convex ($\F(\alpha \mu+(1-\alpha)\nu)\le \alpha \F(\mu)+(1-\alpha)\F(\nu)$ for $\alpha\in[0,1]$) and shift-invariant (if $\nu(\B{x})=\mu(\B{x}-\B{a})$ some constant $\B{a}\in \R^d$, then $\F(\nu)=\F(\mu)$). Therefore it follows by the argument on pages 1-2 of \cite{bennett2015generating} (monotonicity property of the heat semigroup for convex functionals) that
$\F(\pi_{\rho})\le \F(\pi)$ for every $\rho\ge 0$. 

Initially, we have
\begin{align*}
\F(\pi)&=\sup_{g\in C^1(\R^d): \|\grad g\|_{\infty}\le 1}\frac{1}{2} \sum_{i=1}^{d}\int_{\btheta\in \R^d} g(\btheta) \frac{\partial^2}{\partial \theta_i^2}\pi(\btheta) \mathrm{d}\btheta.
\end{align*}
After separating $\btheta$ to $\theta_i \in \mathbb{R}$ and $\btheta_{-i} \in \mathbb{R}^{d-1}$ (denoting the rest of the coordinates), we have
\begin{align*}
    &\int_{\btheta\in \R^d} g(\btheta) \frac{\partial^2}{\partial \theta_i^2}\pi(\btheta) \mathrm{d}\btheta\\
    &=
    \int_{\btheta_{-i}\in \R^{d-1}} 
    \l[\int_{\theta_i\in \R}
    g(\btheta) \frac{\partial^2}{\partial \theta_i^2}\pi(\btheta) \mathrm{d}\theta_i\r]\mathrm{d}\btheta_{-i},
\end{align*}
and now integration by parts and using condition \eqref{eq:fradunbnd} and the Lipschitz continuity of $g$ leads to
\begin{align*}&\int_{\theta_i\in \R}
    g(\btheta) \frac{\partial^2}{\partial \theta_i^2}\pi(\btheta) \mathrm{d}\theta_i\\
    &=\l[ -g(\btheta) \frac{\partial}{\partial \theta_i}U_1(\btheta)\cdot \exp(-U_1(\btheta)) \r]_{\theta_i=-\infty}^{\theta_i=\infty}
    \\
    &+\int_{\theta_i\in \R}
    \frac{\partial }{\partial \theta_i}g(\btheta) \frac{\partial}{\partial \theta_i}U_1(\btheta) \exp(-U_1(\btheta)) \mathrm{d}\theta_i\\
    &=\int_{\theta_i\in \R}
    \frac{\partial }{\partial \theta_i}g(\btheta) \frac{\partial}{\partial \theta_i}U_1(\btheta) \pi(\btheta) \mathrm{d}\theta_i.
\end{align*}
By summing up in $i$, we obtain that
\begin{align*}
\F(\pi)&\le \frac{1}{2}\sup_{g\in C^1(\R^d): \|\grad g\|_{\infty}\le 1}\sum_{i=1}^{d}\int_{\btheta\in \R^d}  \frac{\partial}{\partial \theta_i}U_1(\btheta) \frac{\partial}{\partial \theta_i}g(\btheta) \pi(\btheta)\mathrm{d}\btheta\\
&\le \frac{1}{2} \int_{\btheta\in \R^d} \|\grad U_1(\btheta)\| \pi(\btheta) \mathrm{d}\btheta.
\end{align*}
Using the monotonicity property of $F(\pi_{\rho})$, now the second bound of the theorem follows based on formula
\eqref{eq:dualform}. The finiteness of this integral follows from assumption \ref{eq:fradunbnd}.

Now we are going to consider the $m$-strongly convex and $M$-smooth $U_1$ case. In such situations, it is straightforward to see that condition \eqref{eq:fradunbnd} holds with $a_1 = m\nr{\bthetastar}^2/2$, $a_2=m/2$, $a_3=0$, $a_4 = M$, $\alpha=2$ and $\beta=1$; where $\bthetastar$ is the minimum of $U_1$. For the integral of the norm of the gradient, we have by Jensen's inequality
\[\int_{\btheta\in \R^d} \|\grad U_1(\btheta)\| \pi(\btheta) \mathrm{d}\btheta\le \l(\int_{\btheta\in \R^d} \|\grad U_1(\btheta)\|^2 \pi(\btheta) \mathrm{d}\btheta\r)^{1/2}.\]
For some index $1\le i\le d$, we have
\begin{align*}
\int_{\btheta\in \R^d} \l(\frac{\partial}{\partial \theta_i} U_1(\btheta)\r)^2 \pi(\btheta)  \mathrm{d}\btheta =
\int_{\btheta_{-i}\in \R^{d-1}} \l[\int_{\theta_i\in \R} \l(\frac{\partial}{\partial \theta_i} U_1(\btheta)\r)^2 \exp\l(-U_1(\btheta)\r) \mathrm{d}\theta_i \r]\mathrm{d}\btheta_{-i},
\end{align*}
and using integration by parts, and the conditions of strong convexity and smoothness, we have
\begin{align*}
    &\int_{\theta_i\in \R} \l(\frac{\partial}{\partial \theta_i} U_1(\btheta)\r)^2 \exp\l(-U_1(\btheta)\r) \mathrm{d}\theta_i\\
    &=\l[-\exp\l(-U_1(\btheta)\r) \frac{\partial}{\partial \theta_i} U_1(\btheta)\r]_{\theta_i=-\infty}^{\theta_i=\infty}
    +\int_{\theta_i\in \R}\exp\l(-U_1(\btheta)\r)\frac{\partial^2}{\partial \theta_i^2} U_1(\btheta) \mathrm{d}\theta_i\\
    &\le \int_{\theta_i\in \R}\exp\l(-U_1(\btheta)\r)M_1 \mathrm{d}\theta_i.
\end{align*}
By integrating this expression w.r.t. $\btheta_{-i}$ and summing up in $i$, we obtain that \[\int_{\btheta\in \R^d} \|\grad U_1(\btheta)\|^2 \pi(\btheta) d\btheta\le M_1d,\]
so the last claim of the theorem follows.
\end{proof}

\section{Proofs of the Results of Section \ref{sec:main_result_convergence_rates}}

This section aims at proving the results claimed in Section \ref{sec:main_result_convergence_rates}.

\subsection{Proof of Theorem \ref{thm:RicciSGS} }
\label{appendix_proof_theorem_K_SGS}

The following two propositions are going to be used for the proof of Theorem \ref{thm:RicciSGS}. The first one will allow us to bound the Wasserstein distance of two log-concave distributions based on the differences between their gradients. This is achieved by coupling processes evolving according to the Langevin dynamics with common Brownian noise.
\begin{proposition}\label{prop:graddiffWass}
Let $\mu$ and $\mu'$ be two distributions on $\R^d$ that are absolutely continuous with respect to the Lebesgue measure, and whose negative log-likelihoods are continuously differentiable, strongly convex and smooth (gradient-Lipschitz).
Denote the strong convexity constants $m(\mu), m(\mu')$ and smoothness constants $M(\mu)$ and $M(\mu')$. Then the Wasserstein distance of order $1\le p\le \infty$ of these two distributions can be upper bounded as
\begin{equation*}
    W_p(\mu,\mu')\le \frac{\|D_{\mu,\mu'}\|_{L^{p}(\mu)}}{m(\mu')}\quad \text { for } \quad D_{\mu,\mu'}(\B{z})= \grad \log \mu(\B{z})-\grad \log \mu'(\B{z}).
\end{equation*}
\end{proposition}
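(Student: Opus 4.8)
The plan is to bound $W_p(\mu,\mu')$ by coupling two Langevin diffusions driven by a \emph{common} Brownian motion. Let $(B_t)_{t\ge 0}$ be a $d$-dimensional Brownian motion and consider the pair of stochastic differential equations
\begin{equation*}
\mrd X_t = \grad \log \mu(X_t)\,\mrd t + \sqrt{2}\,\mrd B_t, \qquad \mrd Y_t = \grad \log \mu'(Y_t)\,\mrd t + \sqrt{2}\,\mrd B_t,
\end{equation*}
started from the same point $X_0 = Y_0 \sim \mu$. Since both drifts are Lipschitz (by the smoothness hypotheses), strong solutions exist and are unique; moreover $\mu$ is stationary for the first diffusion, so $X_t \sim \mu$ for every $t\ge 0$, while the $m(\mu')$-strong log-concavity of $\mu'$ guarantees that $\mathrm{Law}(Y_t)$ converges to $\mu'$ in $W_p$ as $t\to\infty$. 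Because $(X_t,Y_t)$ is a coupling of $\mu$ and $\mathrm{Law}(Y_t)$, it will suffice to control $\nr{X_t-Y_t}_{L^p}$ and pass to the limit.

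First I would observe that the common noise cancels in the difference $\Delta_t := X_t - Y_t$, so that $t\mapsto \Delta_t$ is absolutely continuous and solves the pathwise ordinary differential equation
\begin{equation*}
\frac{\mrd}{\mrd t}\Delta_t = \left(\grad \log \mu'(X_t) - \grad \log \mu'(Y_t)\right) + D_{\mu,\mu'}(X_t),
\end{equation*}
where I have added and subtracted $\grad \log \mu'(X_t)$ and used the definition $D_{\mu,\mu'}(\bz)=\grad\log\mu(\bz)-\grad\log\mu'(\bz)$. Writing $r_t = \nr{\Delta_t}$ and differentiating $\tfrac12 r_t^2$, the first bracket is controlled by the $m(\mu')$-strong convexity of $-\log\mu'$, giving $\inner{\Delta_t}{\grad\log\mu'(X_t)-\grad\log\mu'(Y_t)} \le -m(\mu')\, r_t^2$, while the second is bounded by Cauchy--Schwarz by $r_t\,\nr{D_{\mu,\mu'}(X_t)}$. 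Dividing by $r_t$ yields the differential inequality $\tfrac{\mrd}{\mrd t} r_t \le -m(\mu')\, r_t + \nr{D_{\mu,\mu'}(X_t)}$, valid almost everywhere (with the usual care at the times where $\Delta_t=0$, handled by working with $\tfrac12 r_t^2$ or with upper Dini derivatives, since $r_t$ is Lipschitz in $t$).

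Since $r_0 = 0$, integrating this linear inequality with the integrating factor $\econst^{m(\mu') t}$ gives the pathwise bound $r_t \le \int_0^t \econst^{-m(\mu')(t-s)}\nr{D_{\mu,\mu'}(X_s)}\,\mrd s$. Taking $L^p$ norms and applying Minkowski's integral inequality, together with the stationarity $X_s\sim\mu$ (which makes $\E[\nr{D_{\mu,\mu'}(X_s)}^p]^{1/p} = \nr{D_{\mu,\mu'}}_{L^p(\mu)}$ independent of $s$), I obtain
\begin{equation*}
\nr{r_t}_{L^p} \le \nr{D_{\mu,\mu'}}_{L^p(\mu)} \int_0^t \econst^{-m(\mu')(t-s)}\,\mrd s = \nr{D_{\mu,\mu'}}_{L^p(\mu)}\cdot \frac{1-\econst^{-m(\mu') t}}{m(\mu')}.
\end{equation*}
The case $p=\infty$ is identical, using the almost-sure bound $\nr{D_{\mu,\mu'}(X_s)}\le \nr{D_{\mu,\mu'}}_{L^\infty(\mu)}$. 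Finally, since $W_p(\mu,\mathrm{Law}(Y_t)) \le \nr{X_t-Y_t}_{L^p} = \nr{r_t}_{L^p}$ and $W_p(\mu,\mathrm{Law}(Y_t))\to W_p(\mu,\mu')$, letting $t\to\infty$ gives $W_p(\mu,\mu') \le \nr{D_{\mu,\mu'}}_{L^p(\mu)}/m(\mu')$, as claimed.

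I expect the main obstacle to be the limiting/ergodicity step rather than the gradient computation: one must justify that $\mathrm{Law}(Y_t)\to\mu'$ in $W_p$ (which follows from synchronous-coupling contraction for the strongly log-concave $\mu'$, and is where the smoothness of $\mu'$ is genuinely needed to well-pose the diffusion) and that $W_p(\mu,\mathrm{Law}(Y_t))$ converges to $W_p(\mu,\mu')$ so the $t\to\infty$ limit is legitimate. A secondary technical point is the almost-everywhere differentiability of $r_t=\nr{\Delta_t}$ at the times where $\Delta_t=0$, which is dealt with by passing through $\tfrac12 r_t^2$.
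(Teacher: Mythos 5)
Your proof is correct, but it takes a genuinely different route from the paper's. The paper works in the short-time limit: it starts from an \emph{optimal} $W_p$-coupling $(\B{X}_1(0),\B{X}_3(0))$ of $\mu$ and $\mu'$, runs three synchronously coupled Langevin diffusions ($\B{X}_1$ with drift $-\grad U$, and $\B{X}_2,\B{X}_3$ with drift $-\grad U'$, where $\B{X}_2(0)=\B{X}_1(0)$), decomposes $\B{X}_1(t)-\B{X}_3(t)$ as $t\,(\grad U'-\grad U)(\B{X}_1(0))$ plus a remainder $D_{12}(t)$ plus the contraction term $\B{X}_2(t)-\B{X}_3(t)$ bounded by $\econst^{-m(\mu')t}\|\B{X}_1(0)-\B{X}_3(0)\|$, and obtains $W_p\le t\|D_{\mu,\mu'}\|_{L^p(\mu)}+\econst^{-m(\mu')t}W_p+o(t)$, after which it rearranges and lets $t\searrow 0$; the bulk of the technical work goes into showing $\|D_{12}(t)\|_{L^p}=o(t)$, which requires comparison with the deterministic gradient flow and reflection-principle bounds on $\sup_{s\le t}\|\B{B}_s\|$. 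You instead run a two-diffusion synchronous coupling to the long-time limit: Gr\"onwall with the strong monotonicity of $\grad(-\log\mu')$ gives $r_t\le \int_0^t \econst^{-m(\mu')(t-s)}\|D_{\mu,\mu'}(X_s)\|\,\mrd s$, Minkowski's integral inequality plus stationarity of $X_s\sim\mu$ gives $\|r_t\|_{L^p}\le \|D_{\mu,\mu'}\|_{L^p(\mu)}(1-\econst^{-m(\mu')t})/m(\mu')$, and you pass to $t\to\infty$. Your route buys a cleaner argument — no optimal coupling at time zero and no delicate $o(t)$ remainder analysis — at the price of the ergodic step, which you correctly identify as the crux and which is easily justified: a second synchronous coupling with a stationary copy of the $\mu'$-diffusion shows $\mathrm{Law}(Y_t)\to\mu'$ weakly, and lower semicontinuity of $W_p$ under weak convergence then suffices (so you do not even need finiteness of $W_p(\mu,\mu')$ a priori). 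One small caveat: for $p=\infty$ the phrase ``identical'' is slightly quick, since $W_\infty(\mu,\mathrm{Law}(Y_t))\to W_\infty(\mu,\mu')$ is not automatic; either invoke lower semicontinuity of $W_\infty$ under weak convergence, or do as the paper does and deduce the $p=\infty$ case from the finite-$p$ bounds via $W_\infty=\lim_{p\to\infty}W_p$ together with $\|D_{\mu,\mu'}\|_{L^p(\mu)}\le\|D_{\mu,\mu'}\|_{L^\infty(\mu)}$.
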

\begin{proof}
Let $\mu(\bz) = \exp(-U(\bz))$ and $\mu'(\bz) = \exp(-U'(\bz))$.

First, we are going to consider the case $1\le p<\infty$. Note that it is easy to show that under the strong convexity and smoothness assumptions of this proposition, the Wasserstein distance of order $p$ between $\mu$ and $\mu'$ is finite for such $p$. Assume that $(\B{X}_1(0),\B{X}_3(0))$ is an optimal coupling in Wasserstein distance of order $p$ between $\mu$ and $\mu'$, so that $\B{X}_1(0)\sim \mu$, $\B{X}_3(0)\sim \mu'$, and 
\[\l[\E\l(\|\B{X}_1(0)-\B{X}_3(0)\|^p\r)\r]^{1/p}=W_p(\mu,\mu').\]
The existence of such a coupling follows from Theorem 4.1 of \cite{Villani2008}. Let $\B{X}_2(0)=\B{X}_1(0)$.
We now define three Langevin diffusions $(\B{X}_1(t),\B{X}_2(t),\B{X}_3(t))_{t\geq0}$ with a common noise (synchronous coupling)
\begin{align*}
    \mathrm{d}\B{X}_1(t)&=-\grad U(\B{X}_1(t))\mathrm{d}t+\sqrt{2}\mathrm{d}\B{B}_t,\\
    \mathrm{d}\B{X}_2(t)&=-\grad U'(\B{X}_2(t))\mathrm{d}t+\sqrt{2}\mathrm{d}\B{B}_t,\\
    \mathrm{d}\B{X}_3(t)&=-\grad U'(\B{X}_3(t))\mathrm{d}t + \sqrt{2}\mathrm{d}\B{B}_t.
\end{align*}
Under the strong convexity and smoothness assumptions on the log-densities, these SDEs admit unique strong solutions (see Theorem 3.1 of \cite{pavliotis2014stochastic} and \cite{ArnoldSDE}). Since $\B{X_1}(0)\sim \mu$ and $X_3\sim \mu'$, we can see that $\B{X}_1(t)\sim \mu$ and $\B{X}_3(t)\sim \mu'$ for every $t \geq 0$. $\B{X}_{2}(t)$ is initialized at $\mu$ since $\B{X}_2(0)=\B{X}_1(0)$ and converges towards $\mu'$. The proof of this proposition is based on a coupling argument based on these three diffusions. Let
\begin{equation*}
    D_{12}(t)=\B{X}_1(t)-\B{X}_2(t)-t(\grad U'(\B{X}_1(0))-\grad U(\B{X}_1(0))).
\end{equation*}
Then we can decompose $\B{X}_1(t)-\B{X}_3(t)$ as
\begin{equation}\label{eq:Z1Z3decomposition}
    \B{X}_1(t)-\B{X}_3(t)=
    t(\grad U'(\B{X}_1(0))-\grad U(\B{X}_1(0)))+D_{12}(t)+(\B{X}_2(t)-\B{X}_3(t)).
\end{equation}
In the next two paragraphs of the proof, we are going to establish the following auxiliary inequalities
\begin{align}
&\label{eq:Z23bnd}\|\B{X}_2(t)-\B{X}_3(t)\|\le \exp(-m(\mu') t)\cdot \|\B{X}_1(0)-\B{X}_3(0)\|,\\
&\label{eq:D12bnd}\|D_{12}(t)\|\le C_0 t^3+C_1 t^{2}(\|\grad U(\B{X}_1(0))\|+\|\grad U'(\B{X}_1(0))\|) +C_2 t \sup_{0\le s\le t}\|\B{B}_s\| \text{ for }0\le t\le C_3, 
\end{align}
for positive constants $C_0$, $C_1$, $C_2$, $C_3$ that only depend on the dimension $d$ and the convexity parameters $m(\mu),m(\mu'),M(\mu),M(\mu')$.
Let $\|X\|_{L^p}=(\E(\|X\|^p))^{1/p}$ denote the $L^p$ norm of a random variable. By taking the $L^p$ norms of both sides of \eqref{eq:Z1Z3decomposition}, and using Minkowski's inequality, we can see that 
\begin{equation*}
\|\B{X}_1(t)-\B{X}_3(t)\|_{L^p}\le 
    t\|\grad U'(\B{X}_2(0))-\grad U(\B{X}_1(0))\|_{L^p}+\|D_{12}(t)\|_{L^p}+\|\B{X}_2(t)-\B{X}_3(t)\|_{L^p}.
\end{equation*}
By the definition of the Wasserstein distance, we know that
$W_{p}(\mu,\mu')\le \|\B{X}_1(t)-\B{X}_3(t)\|_{L^p}$, and by assuming inequalities \eqref{eq:Z23bnd} and \eqref{eq:D12bnd} are true, we obtain that for $0\le t\le C_3$, 
\begin{align}\label{eq:WpWpexpbnd}
W_{p}(\mu,\mu')&\le 
    t\|\grad U'(\B{X}_1(0))-\grad U(\B{X}_1(0))\|_{L^p}+
W_{p}(\mu,\mu')\exp(-m(\mu') t)\\
\nonumber
&+C_0 t^3+ C_1 t^2(\|\grad U(\B{X}_1(0))\|_{L^p}+\|\grad U'(\B{X}_2(0))\|_{L^p})+C_2 t \l\|\sup_{0\le s\le t}\|\B{B}_s\| \r\|_{L^p}.
\end{align}
It is easy to show that under the strong convexity and smoothness assumptions of this proposition, the terms $\|\grad U(\B{X}_1(0))\|_{L^p}$ and $\|\grad U'(\B{X}_1(0))\|_{L^p}$ are finite. By the reflection principle for the Brownian motion (see \cite{levy1940certains}), in one dimension, the distribution of $\sup_{0\le s\le t}B_s$ is the same as the distribution of $|B_t|$. Using the triangle inequality, and the fact that $\|Y\|_{L^p}\le \sqrt{p}$ for a standard Gaussian random variable $Y$, it follows that 
\[\l\|\sup_{0\le s\le t}\|\B{B}_s\| \r\|_{L^p}\le 2d\sqrt{t}\sqrt{p}.\]
Hence all of the terms bounding 
$\|D_{12}(t)\|_{L^p}$ in  \eqref{eq:WpWpexpbnd} are of order $o(t)$, and the claim of the proposition follows by rearrangement and letting $t\searrow 0$.

Now we are going to prove the two auxiliary inequalities. We start with \eqref{eq:Z23bnd}. 
From It\^o's formula (see Lemma 3.2 of \cite{pavliotis2014stochastic}), 
$\|\B{X}_2(t)-\B{X}_3(t)\|^2$ is differentiable in $t$ and satisfies
\begin{align*}
    \frac{\mathrm{d}}{\mathrm{d}t}\|\B{X}_2(t)-\B{X}_3(t)\|^2
    &=-2\inner{\B{X}_2(t)-\B{X}_3(t)}{\grad U'(\B{X}_2(t)-\grad U'(\B{X}_3(t))} \\
    &\le -2m(\mu')\|\B{X}_2(t)-\B{X}_3(t)\|^2,
\end{align*}
where the last step follows from the strong convexity of $U'$. We obtain \eqref{eq:Z23bnd} by Gr\"{o}nwall's inequality and rearrangement.

We continue with the proof of \eqref{eq:D12bnd}. By It\^o's formula, we can see that
\begin{align*}
D_{12}(t)&=\B{X}_1(t)-\B{X}_2(t)-t(\grad U'(\B{X}_1(0))-\grad U(\B{X}_1(0)))\\
&=\int_{s=0}^{t}(\grad U'(\B{X}_2(t))-\grad U(\B{X}_1(t)))\mathrm{d}s -t(\grad U'(\B{X}_1(0))-\grad U(\B{X}_1(0)))\\
&=\int_{s=0}^{t}[\grad U'(\B{X}_2(t))-\grad U'(\B{X}_1(0))]\mathrm{d}s  +\int_{s=0}^t[\grad U(\B{X}_1(0))-\grad U(\B{X}_1(t))]\mathrm{d}s.
\end{align*}
Using the smoothness assumption for $U$ and $U'$, and the fact that $\B{X}_1(0)=\B{X}_2(0)$, we have
\begin{equation}\label{eq:D12auxbnd}
    \|D_{12}(t)\|\le M(\mu')\int_{s=0}^{t}\|\B{X}_2(t)-\B{X}_2(0)\|\mathrm{d}s+
    M(\mu)\int_{s=0}^t\|\B{X}_1(t)-\B{X}_1(0)\|\mathrm{d}s.
\end{equation}
Let
\begin{align*}
    \B{Y}_1'(t)&=-\grad U(\B{Y}_1(t)),\\
    \B{Y}_2'(t)&=-\grad U'(\B{Y}_2(t)),
\end{align*}
and assume that $\B{Y}_1(0)=\B{Y}_2(0)=\B{X}_1(0)=\B{X}_2(0)$.
Then these ODEs have a unique solution (see page 74 of \cite{Perko}). Now by the triangle inequality, and the fact that $\B{Y}_1(0)=\B{X}_1(0)$, we have
\[\|\B{X}_1(s)-\B{X}_1(0)\|\le \|\B{Y}_1(s)-\B{Y}_1(0)\|+\|\B{Y}_1(s)-\B{X}_1(s)\|.\]
For the first part, by Taylor's expansion, and the smoothness assumption on $U$, we have
\[\|\B{Y}_1(t)-\B{Y}_1(0)\|\le s\|\grad U(\B{Y}_1(0))\|+\frac{1}{2}M(\mu) s^2.\]
For the second part, by It\^o's formula, we have
\begin{align*}
\B{Y}_1(s)-\B{X}_1(s)&=\int_{r=0}^{s}[\grad U(\B{X}_1(r))-\grad U(\B{Y}_1(r))] \mathrm{d}r + \sqrt{2}\B{B}_s,\\
\|\B{Y}_1(s)-\B{X}_1(s)\|&\le M(\mu)\int_{r=0}^{s}\|\B{X}_1(r)-\B{Y}_1(r)\| \mathrm{d}r+\sqrt{2}\|\B{B}_s\|,\\
\sup_{0\le r\le s}\|\B{Y}_1(s)-\B{X}_1(s)\|&\le M(\mu) s\sup_{0\le r\le s}\|\B{Y}_1(s)-\B{X}_1(s)\|+\sqrt{2}\sup_{0\le r\le s}\|\B{B}_r\|.
\end{align*}
Hence for $s\le 1/(2M(\mu))$, we have
$\sup_{0\le r\le s}\|\B{Y}_1(s)-\B{X}_1(s)\|\le 2\sqrt{2} \sup_{0\le r\le s}\|\B{B}_r\|$. By combining the above two bounds, for $0
\le s\le 1/(2M(\mu))$, we have
\[\|\B{X}_1(s)-\B{X}_1(0)\| \le s\|\grad U(\B{Y}_1(0))\|+\frac{1}{2}M(\mu) s^2+2\sqrt{2} \sup_{0\le r\le s}\|\B{B}_r\|,\]
and by the same argument, for $0
\le s\le 1/(2M(\mu'))$,
\[\|\B{X}_2(s)-\B{X}_2(0)\| \le s\|\grad U'(W_2(0))\|+\frac{1}{2}M(\mu') s^2+2\sqrt{2} \sup_{0\le r\le s}\|\B{B}_r\|.\]
Inequality now \eqref{eq:D12bnd} follows by substituting these into \eqref{eq:D12auxbnd} and doing some rearrangement.

Finally, the result for $p=\infty$ follows from a limiting argument. By Proposition 3 of \cite{Wassersteininfinity}, we have
\begin{align*}
W_{\infty}(\mu,\mu')=\lim_{p\to \infty}W_p(\mu,\mu')\le \sup_{1\le p<\infty} \frac{\|D_{\mu,\mu'}\|_{L^{p}(\mu)}}{m(\mu')} \le 
\frac{\|D_{\mu,\mu'}\|_{L^{\infty}(\mu)}}{m(\mu')}.
\end{align*}
\end{proof}

\begin{proposition}\label{prop:contractionSGS}
Let $\btheta, \btheta'\in \R^d$ be two parameter values, and $\mu_i$, resp. $\mu'_i$, denotes the conditional distributions of $\B{z}_i$ given $\btheta$ under $\Pi_{\rho}$, resp. $\btheta'$. Then under Assumption  \assfive, for every $1\le p\le \infty$, we have
\begin{equation}\label{eq:prop:contractionSGS}
W_{p}(\mu_i,\mu'_i)\le \frac{1}{1+\rho^2 m_i}\|\B{A}_i(\btheta-\btheta')\|.
\end{equation}
\end{proposition}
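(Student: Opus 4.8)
The plan is to recognise $\mu_i$ and $\mu_i'$ as two log-concave measures on $\R^{d_i}$ whose negative log-densities differ only through the centre of the quadratic penalty, and then to apply Proposition \ref{prop:graddiffWass} directly (the proposition is stated on $\R^d$ but applies verbatim in dimension $d_i$). Concretely, from \eqref{eq:conditionalunderartificialtargetzi} we have $\mu_i(\bz_i)\propto \exp(-V_i^{\btheta}(\bz_i))$ and $\mu_i'(\bz_i)\propto \exp(-V_i^{\btheta'}(\bz_i))$ with
\[V_i^{\btheta}(\bz_i)=U_i(\bz_i)+\frac{1}{2\rho^2}\|\bz_i-\B{A}_i\btheta\|^2,\]
and similarly with $\btheta'$. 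Applying Proposition \ref{prop:graddiffWass} reduces the task to two computations: the gradient difference $D_{\mu_i,\mu_i'}$ and the strong convexity constant $m(\mu_i')$.

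First I would compute
\[D_{\mu_i,\mu_i'}(\bz_i)=\grad\log\mu_i(\bz_i)-\grad\log\mu_i'(\bz_i)=-\frac{1}{\rho^2}(\bz_i-\B{A}_i\btheta)+\frac{1}{\rho^2}(\bz_i-\B{A}_i\btheta')=\frac{1}{\rho^2}\B{A}_i(\btheta-\btheta'),\]
so that the $\grad U_i$ terms cancel and $D_{\mu_i,\mu_i'}$ is a \emph{constant} vector, independent of $\bz_i$. Consequently $\|D_{\mu_i,\mu_i'}\|_{L^p(\mu_i)}=\rho^{-2}\|\B{A}_i(\btheta-\btheta')\|$ for every $1\le p\le\infty$ (the $L^p$ norm of a constant is the same for all $p$, including $p=\infty$). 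Next, under \assfive\ the Hessian of $V_i^{\btheta'}$ satisfies $\grad^2 V_i^{\btheta'}(\bz_i)=\grad^2 U_i(\bz_i)+\rho^{-2}\B{I}_{d_i}\succeq (m_i+\rho^{-2})\B{I}_{d_i}$, so $\mu_i'$ is strongly log-concave with constant $m(\mu_i')=m_i+\rho^{-2}$. Substituting both into Proposition \ref{prop:graddiffWass} yields
\[W_p(\mu_i,\mu_i')\le \frac{\rho^{-2}\|\B{A}_i(\btheta-\btheta')\|}{m_i+\rho^{-2}}=\frac{\|\B{A}_i(\btheta-\btheta')\|}{1+\rho^2 m_i},\]
which is exactly \eqref{eq:prop:contractionSGS}.

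The one genuine obstacle is that Proposition \ref{prop:graddiffWass} requires the negative log-densities to be continuously differentiable and gradient-Lipschitz, whereas \assfive\ alone supplies neither (it only guarantees strong convexity). I would bridge this by an approximation argument: replace $U_i$ by a sequence of smooth, $m_i$-strongly convex approximants $U_i^{(n)}$ (via mollification or Moreau--Yosida regularisation) converging to $U_i$, apply the bound above to the resulting pairs $\mu_i^{(n)},\mu_i'^{(n)}$, and pass to the limit using lower semicontinuity of $W_p$ together with the uniform strong log-concavity that controls the tails. Since the right-hand side of the bound involves only $m_i$ and not the smoothness constants $M_i^{(n)}$, it is stable under this limit and the inequality is preserved.
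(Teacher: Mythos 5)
Your proof is correct and takes essentially the same route as the paper: the paper's own proof also applies Proposition \ref{prop:graddiffWass} with the constant gradient difference $\rho^{-2}\|\B{A}_i(\btheta-\btheta')\|$ and strong convexity constant $m_i+\rho^{-2}$, and it bridges the missing smoothness exactly as you propose, via the Moreau--Yosida envelope $U_i^{\lambda}$ (which is $\lambda^{-1}$-gradient-Lipschitz and $\frac{m_i}{1+\lambda m_i}$-strongly convex) followed by a $\lambda\to 0$ limit. The only difference is in how the limit is justified: the paper establishes $W_p(\mu_i^{\lambda},\mu_i)\to 0$ for $1\le p<\infty$ via Theorem 6.15 of Villani with monotone and dominated convergence, and then obtains $p=\infty$ from $W_\infty=\lim_{p\to\infty}W_p$ (the bound being uniform in $p$), whereas you invoke lower semicontinuity of $W_p$ and tail control from uniform strong log-concavity, which accomplishes the same thing.
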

\begin{proof}
    We have 
    $\mu_i(\bz)\propto \exp\left(-U_i(\bz)-\frac{\nr{\B{A}_i\btheta-\bz}^2}{2\rho^2}\right)$ and $\mu'_i(\bz)\propto \exp\left(-U_i(\bz)-\frac{\nr{\B{A}_i\btheta'-\bz}^2}{2\rho^2}\right)$.
    
    Proposition  \ref{prop:graddiffWass} requires the smoothness (gradient Lipschitz) property, so it cannot be applied directly to these potentials under our assumptions. To overcome this difficulty, we are going to use the Moreau-Yosida envelope of $U_i$ \citep{Durmus2018} defined for any  regularisation parameter $\lambda>0$ as
    \begin{equation*}
    U_i^{\lambda}(\bz):=\min_{\B{y}\in \R^{d}} \l\{U_i(\B{y})+(2\lambda)^{-1} \|\B{y}-\bz\|^2\r\}.
    \end{equation*}
    By Theorem 1.25 of \cite{RockafellarVariational}, $U_i^{\lambda}$ converges pointwise to $U_i$, that is for any $\bz\in \R^d$, 
    \begin{equation}\label{eq:MYpointwise}
    \lim_{\lambda\to 0}U_i^{\lambda}(\bz)=U_i(\bz).
    \end{equation}
    Moreover, from Proposition 12.19 of \cite{RockafellarVariational} and Theorem 2.2 of \cite{lemarechal1997practical} it follows that $U_i^{\lambda}$ is $\lambda^{-1}$ gradient Lipschitz and $\frac{m_i}{1+\lambda m_i}$-strongly convex.
    
    Let $\mu_i^{\lambda}(\bz)\propto \exp\left(-U_i^{\lambda}(\bz)-\frac{\nr{\B{A}_i\btheta-\bz}^2}{2\rho^2}\right)$ and ${\mu'_i}^{\lambda}(\bz)\propto \exp\left(-U_i^{\lambda}(\bz)-\frac{\nr{\B{A}_i\btheta'-\bz}^2}{2\rho^2}\right)$, then we have
    
    \[\|\grad \log (\mu_i^{\lambda}(\bz))-\grad \log ({\mu_i'}^{\lambda}(\bz))\|=\frac{\|\B{A}_i\btheta-\B{A}_i\btheta'\|}{\rho^2}.\] 
    Since $-\log \mu_i^{\lambda}(\bz)$ and $-\log\mu_i'^{\lambda}(\bz)$ are $\frac{m_i}{1+\lambda m_i}+\frac{1}{\rho^2}$-strongly convex and $\frac{1}{\lambda}+\frac{1}{\rho^2}$-smooth, it follows from Proposition \ref{prop:graddiffWass} that we have for every $1
    \le p\le \infty$
    \begin{equation}\label{eq:Wpmuilambda}
    W_{p}(\mu_i^{\lambda},{\mu'_i}^{\lambda})\le \frac{\|\B{A}_i\btheta-\B{A}_i\btheta'\|}{1+\rho^2 m_i/(1+m_i\lambda)}.
    \end{equation}
    Now we are going to consider the case $1\le p<
    \infty$ first. To complete the proof, we still need to bound $W_p(\mu_i^{\lambda}, \mu_i)$.
    By Theorem 6.15 of \cite{Villani2008}, we have
    \begin{equation}\label{eq:W1Villanibnd}W_p(\mu_i^{\lambda}, \mu_i)\le \l[\int_{\bz\in \R^d} \|\bz-\btheta\|^p |\mu_i(\bz)-\mu_i^{\lambda}(\bz)|\mrd \bz\r]^{1/p}.
    \end{equation}
    Note that $|\mu_i(\bz)-\mu_i^{\lambda}(\bz)|\le \mu_i(\bz)+\mu_i^{\lambda}(\bz)$.  Moreover, from the definition of the Moreau-Yosida envelope $U_i^{\lambda}$, it follows that $U_i^{\lambda}(\bz)\leq U_i^{\lambda'}(\bz)$ for $\lambda'<\lambda$, hence it is monotone increasing towards $U_i(\bz)$ as $\lambda\to 0$. This implies that the normalising constant
    \[Z_i^{\lambda}=\int_{\bz} \exp\l(-U_i^{\lambda}(\bz)-\frac{\|\bz-\B{A}_i\btheta\|^2}{2\rho^2}\r)\mrd \bz\]
    is monotone decreasing towards $Z_i=\int_{\bz} \exp\l(-U_i(\bz)-\frac{\|\bz-\B{A}_i\btheta\|^2}{2\rho^2}\r)\mrd \bz$ as $\lambda\to 0$ by the monotone convergence theorem.
    Therefore we have for any fixed $\Lambda>0$ and $0< \lambda<\Lambda$ 
    \[\mu_i^{\lambda}(\bz)=\frac{\exp\l(-U_i^{\lambda}(\bz)-\frac{\|\bz-\B{A}_i\btheta\|^2}{2\rho^2}\r)}{Z_i^{\lambda}}
    \le \frac{\exp\l(-U_i^{\Lambda}(\bz)-\frac{\|\bz-\B{A}_i\btheta\|^2}{2\rho^2}\r) }{Z_i}.\]
    This means that for $\lambda<\Lambda$, we have
    \[ \|\bz-\btheta\|^p |\mu_i(\bz)-\mu_i^{\lambda}(\bz)|\le \|\bz-\btheta\|^p \l(\mu_i(\bz)+\frac{\exp\l(-U_i^{\Lambda}(\bz)-\frac{\|\bz-\B{A}_i\btheta\|^2}{2\rho^2}\r)}{Z_i}\r).\]

    Using the strong-convexity of $-\log\mu_i$, it follows that it has a unique minimizer which we denote by $\bz_i^*$. In particular, we have  
    \begin{align*}
        \int_{\bz\in \R^d} \|\bz-\btheta\|^p \mu_i(\bz)\mrd \bz\le \mu_i(\bz_i^*) \int_{z\in \R^{d}}\|\bz-\btheta\|^p \exp\l(-(m_i+1/\rho^2)\|\bz-\bz_i^*\|^2/2\r)\mrd \bz<\infty,
    \end{align*}
    and with the same argument we can also show that 
    \[ \int_{\bz\in \R^d}\|\bz-\btheta\|^p \frac{\exp\l(-U_i^{\Lambda}(\bz)-\frac{\|\bz-\B{A}_i\btheta\|^2}{2\rho^2}\r)}{Z_i}<\infty.\]
    Hence using the pointwise convergence \eqref{eq:MYpointwise} it follows from the dominated convergence theorem and the bound \eqref{eq:W1Villanibnd} that $W_p(\mu_i^{\lambda}, \mu_i)\to 0$ as $\lambda\to 0$. The same also holds for $W_p({\mu_i'}^{\lambda}, \mu_i')$, so we can conclude using \eqref{eq:Wpmuilambda} and the triangle inequality
    \[W_p(\mu_i, \mu_i')\le W_p(\mu_i,\mu_i^{\lambda}) + W_p({\mu_i}^{\lambda},{\mu_i'}^{\lambda})+W_p({\mu_i'}^{\lambda}, \mu_i').\]
    Finally, since we have shown the inequality \eqref{eq:prop:contractionSGS} for $1\le p<\infty$, the bound for $p=\infty$ follows by Proposition 3 of \cite{Wassersteininfinity}. 
\end{proof}

The following result is an elementary fact from linear algebra (proof is included for completeness).
\begin{lemma}\label{lem:symmmxscaling}Suppose that $\B{u},\B{v}\in \R^d$, and $\|\B{v}\|\le \|\B{u}\|$. Then there exists a symmetric matrix $\B{W}\in \R^{d\times d}$ such that $\B{W}\B{u}=\B{v}$, and $-\B{I}\preceq \B{W}\preceq \B{I}$ ($\preceq$ denotes the partial Loewner ordering).
\end{lemma}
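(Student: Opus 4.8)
The plan is to construct $\B{W}$ explicitly as a scaled reflection. The key observation is that when $\|\B{v}\|=\|\B{u}\|$ a single Householder reflection sends $\B{u}$ to $\B{v}$, and a reflection is symmetric with operator norm exactly $1$; the general inequality $\|\B{v}\|\le\|\B{u}\|$ is then handled by composing with a scalar dilation $c\B{I}$, which keeps the matrix symmetric and shrinks its norm to $c\le 1$.

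Concretely, I would first dispose of the degenerate cases: if $\B{v}=\B{0}$ take $\B{W}=\B{0}$, and note that $\B{u}=\B{0}$ forces $\B{v}=\B{0}$ by the norm hypothesis. For $\B{v}\ne\B{0}$ (hence $\B{u}\ne\B{0}$), set $c:=\|\B{v}\|/\|\B{u}\|\in(0,1]$ and $\B{v}':=\B{v}/c$, so $\|\B{v}'\|=\|\B{u}\|$. If $\B{v}'=\B{u}$ I take $\B{H}=\B{I}$; otherwise I take the Householder reflection
\[\B{H}=\B{I}-2\,\frac{(\B{u}-\B{v}')(\B{u}-\B{v}')^T}{\|\B{u}-\B{v}'\|^2}.\]
A one-line computation using $\|\B{v}'\|=\|\B{u}\|$ gives $\langle \B{u}-\B{v}',\B{u}\rangle=\tfrac12\|\B{u}-\B{v}'\|^2$, from which $\B{H}\B{u}=\B{v}'$. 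I then set $\B{W}:=c\,\B{H}$.

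The verification is routine: $\B{W}$ is symmetric as a scalar multiple of the symmetric $\B{H}$; it satisfies $\B{W}\B{u}=c\,\B{v}'=\B{v}$; and since $\B{H}$ is an orthogonal symmetric involution its eigenvalues lie in $\{-1,+1\}$, so the eigenvalues of $\B{W}$ lie in $\{-c,+c\}\subseteq[-1,1]$, which is exactly the statement $-\B{I}\preceq\B{W}\preceq\B{I}$. There is no genuine difficulty here; the only point requiring attention — the ``main obstacle'' such as it is — is the case bookkeeping, namely the degenerate cases $\B{v}=\B{0}$ and $\B{v}'=\B{u}$ where the Householder denominator would vanish.

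As an equally elementary alternative that avoids reflections entirely, one can restrict to the subspace $\mathrm{span}(\B{u},\B{v})$: in an orthonormal basis $(\B{e}_1,\B{e}_2)$ with $\B{e}_1=\B{u}/\|\B{u}\|$, write $\B{v}/\|\B{u}\|=a\B{e}_1+b\B{e}_2$ and take $\B{W}$ to act as the block $\left(\begin{smallmatrix} a & b\\ b & -a\end{smallmatrix}\right)$ on this subspace and as $\B{0}$ on its orthogonal complement. Then $\B{W}\B{u}=\B{v}$, and the eigenvalues on the two-dimensional block are $\pm\sqrt{a^2+b^2}=\pm\|\B{v}\|/\|\B{u}\|\in[-1,1]$, giving $-\B{I}\preceq\B{W}\preceq\B{I}$ again.
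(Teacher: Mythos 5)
Your proof is correct and follows essentially the same route as the paper's: reduce to the equal-norm case by the rescaling $c=\|\B{v}\|/\|\B{u}\|$ and exhibit an explicit symmetric matrix of norm at most one sending $\B{u}$ to $\B{v}$. Indeed, the paper's matrix $(\B{u}+\B{v})(\B{u}+\B{v})^T/\|\B{u}+\B{v}\|^2-(\B{u}-\B{v})(\B{u}-\B{v})^T/\|\B{u}-\B{v}\|^2$ is precisely your two-dimensional block construction (the in-plane reflection across the bisector, extended by zero), and your Householder variant differs from it only by acting as the identity rather than zero on the orthogonal complement of $\mathrm{span}(\B{u},\B{v})$ — which is harmless since the eigenvalues stay in $[-1,1]$; you are also slightly more careful than the paper about the degenerate cases $\B{v}=\B{0}$ and $\B{v}'=\B{u}$.
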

\begin{proof}
First we assume that $\|\B{u}\|=\|\B{v}\|$. If $\B{u}=\B{v}$, then $\B{W}=\B{I}$ works, otherwise it is easy to check that 
 \[\B{W}=(\B{u}+\B{v}) (\B{u}+\B{v})^\top /\|\B{u}+\B{v}\|^2 -(\B{u}-\B{v}) (\B{u}-\B{v})^\top /\|\B{u}-\B{v}\|^2\] satisfies the requirements. The general case follows by rescaling.
\end{proof}
Now we are ready to prove our contraction bound.
\begin{proof}[Proof of Theorem \ref{thm:RicciSGS}]
Let $(\bZ_{1:b}, \bZ'_{1:b})$ be a coupling of the two distributions $\joint(\bZ_{1:b}|\btheta)$ and $\joint(\bZ_{1:b}'|\btheta)$ such that 
\begin{equation}\label{eq:ZiZipbnd}
\|\B{Z}_i-\B{Z}_i'\|\le \frac{1}{1+\rho^2 m_i}\|\B{A}_i(\btheta-\btheta')\| \text{ almost surely.}
\end{equation}
The existence of such a coupling follows from Proposition \ref{prop:contractionSGS}.
Given this coupling $(\bZ_{1:b}, \bZ'_{1:b})$, our next step is to couple the two conditional distributions
\begin{align*}
    \joint(\btheta|\bZ_{1:b})\sim \mathcal{N}(\mu_{\btheta}(\bZ_{1:b}),\Sigma_{\btheta}),
    \\
    \joint(\btheta|\bZ'_{1:b})\sim \mathcal{N}(\mu_{\btheta}(\bZ'_{1:b}),\Sigma_{\btheta}),
\end{align*}
where $\Sigma_{\btheta}=\rho^2(\sum_{i=1}^b \B{A}_{i}^\top \B{A}_i)^{-1}$ and $\mu_{b\theta}(\bz_{1:b})=(\sum_{i=1}^b 
\B{A}_{i}^\top \B{A}_i)^{-1} \sum_{i=1}^{b}\B{A}_i^\top \bz_i$.
Since these two Gaussian distributions have the same covariance matrix, coupling them can be done in a straightforward way, and we can see that for the metric $w$ introduced in the statement of Theorem \ref{thm:RicciSGS}, for every $1\le p\le \infty$, we have
\begin{equation}\label{eq:WpPSGSonestepbnd}
W_{p}^{w}(
\B{P}_{\mathrm{SGS}}(\btheta,\cdot),\B{P}_{\mathrm{SGS}}(\btheta',\cdot))\le \l[\E(w(\mu_{\btheta}(\bZ_{1:b}),\mu_{\btheta}(\bZ'_{1:b}))^p)\r]^{1/p},
\end{equation}
where $W_p^w$ denotes Wasserstein distance of order $p$ with respect to the metric $w$. Note that
\[\mu_{\btheta}(\bZ_{1:b})-\mu_{\btheta}(\bZ'_{1:b})=(\sum_{i=1}^b 
\B{A}_{i}^\top \B{A}_i)^{-1} \sum_{i=1}^{b}\B{A}_i^\top (\bZ_i-\bZ'_i).\]
For each $i \in [b]$, we now apply Lemma \ref{lem:symmmxscaling} with $\B{v} = \B{Z}_i - \B{Z}_i'$ and $\B{u} = \B{A}_i(\btheta-\btheta') / (1+\rho^2 m_i)$.
Using \eqref{eq:ZiZipbnd}, the assumption $\nr{\B{v}} \leq \nr{\B{u}}$ of Lemma \ref{lem:symmmxscaling} is satisfied and there exist some symmetric matrices $\B{W}_1,\ldots, \B{W}_b\in \R^{d\times d}$ such that $-\B{I}\preceq \B{W}_i\preceq \B{I}$, and
$$
\bZ_i-\bZ'_i = \B{W}_i \frac{\B{A}_i(\btheta-\btheta')}{1+\rho^2 m_i}, \quad \forall i \in [b].
$$
This yields
\[
\mu_{\btheta}(\bZ_{1:b})-\mu_{\btheta}(\bZ'_{1:b})=\l(\sum_{i=1}^b 
\B{A}_{i}^\top \B{A}_i\r)^{-1} \sum_{i=1}^{b} \frac{\B{A}_i^\top\B{W}_i\B{A}_i}{1+\rho^2 m_i} (\btheta-\btheta')\text{ almost surely}.\]
From the definition of $w$, we can now write
\begin{align*}&w(\mu_{\btheta}(\bZ_{1:b}),\mu_{\btheta}(\bZ'_{1:b}))=\l\|\l(\sum_{i=1}^b 
\B{A}_{i}^\top \B{A}_i\r)^{1/2}(\mu_{\btheta}(\bZ_{1:b})-\mu_{\btheta}(\bZ'_{1:b}))\r\|\\
&= \l\|\l(\sum_{i=1}^b 
\B{A}_{i}^\top \B{A}_i\r)^{-1/2} \sum_{i=1}^{b} \frac{\B{A}_i^\top\B{W}_i\B{A}_i}{1+\rho^2 m_i} (\btheta-\btheta')\r\|\\
&=\l\|\l(\sum_{i=1}^b 
\B{A}_{i}^\top \B{A}_i\r)^{-1/2} \sum_{i=1}^{b} \frac{\B{A}_i^\top\B{W}_i\B{A}_i }{1+\rho^2 m_i} \l(\sum_{i=1}^b 
\B{A}_{i}^\top \B{A}_i\r)^{-1/2} \cdot \l(\sum_{i=1}^b 
\B{A}_{i}^\top \B{A}_i\r)^{1/2}(\btheta-\btheta')\r\|\\
&\le \l\|\l(\sum_{i=1}^b 
\B{A}_{i}^\top \B{A}_i\r)^{-1/2} \l(\sum_{i=1}^{b} \frac{\B{A}_i^\top\B{A}_i }{1+\rho^2 m_i}\r) \l(\sum_{i=1}^b 
\B{A}_{i}^\top \B{A}_i\r)^{-1/2}\r\| w(\btheta,\btheta') \text{ almost surely}.
\end{align*}
Hence the result follows from \eqref{eq:WpPSGSonestepbnd}.
\end{proof}

\subsection{Proof of Corollary \ref{coro:convergence_rates_SGS}}
\label{appendix_proof_coro_convergence_rates}

First, we will show the convergence results in Wasserstein distance of order $p$ for $1\le p<\infty$. Let $(\btheta_0,\btheta'_0)$ be the optimal coupling of the initial distribution $\nu$ and the stationary distribution $\pi_{\rho}$ that achieves the Wasserstein distance of order $p$ for the metric $w$ (see Theorem 4.1 of \cite{Villani2008} for proof of existence), i.e.
\[W_{p}^{w}(\nu,\pi_{\rho})=\|w(\btheta_0,\btheta'_0)\|_{L^p}.
\]
For $i\ge 1$, assuming that $(\btheta_{0:i-1}, \btheta'_{0:i-1})$
has been defined, add two more elements $(\btheta_{i}, \btheta'_{i})$ by defining their conditional distribution based on the past elements as the optimal coupling between $P_{\mathrm{SGS}}(\btheta_{i-1},\cdot)$ and
$P_{\mathrm{SGS}}(\btheta'_{i-1},\cdot)$ achieving the Wasserstein distance of order $p$ for the metric $w$. Using that $K_{p}(\btheta,\btheta')\ge K_{\mathrm{SGS}}$ by Theorem \ref{thm:RicciSGS}, we have
\[\E(w(\btheta_{1},\btheta'_{1})^p|\btheta_0,\btheta'_0)\le (1-K_{\mathrm{SGS}})^p w(\btheta_0,\btheta'_0)^p,\]
and so by the tower property, we have
\[\|w(\btheta_{1},\btheta'_{1})\|_{L^p}\le (1-K_{\mathrm{SGS}}) W_{p}^{w}(\nu,\pi_{
\rho}).\]
Similarly, by induction, it follows that 
\[\|w(\btheta_{i}, \btheta'_{i})\|_{L^p}\le (1-K_{\mathrm{SGS}})^i W_p^w(\nu,\pi_{
\rho}).\]
Now \eqref{eq:W1convergencesec4} for $1
\le p<\infty$ follows by noticing that $\btheta'_{i}\sim \pi_{\rho}$ since the Markov chain $(\btheta'_{j})_{j\ge 0}$ was initialized in its stationary distribution. Finally, the $p=\infty$ case follows from Proposition 3 of \cite{Wassersteininfinity}.

Regarding the convergence rate in total variation distance stated in Theorem \ref{thm:RicciSGS}, we will use Corollary \ref{cor:spectralgapSGS} and Proposition \ref{prop:tvfromspectralgap} detailed below.

\begin{corollary}[Lower bound on the spectral gap of SGS]
\label{cor:spectralgapSGS}
SGS defines a reversible Markov chain. Under Assumptions \assfive and \asssix, its absolute spectral gap $\gamma_{\mathrm{SGS}}^*$ is lower bounded by $K_{\mathrm{SGS}}$, see \eqref{eq:kappasgsdef}. 
\end{corollary}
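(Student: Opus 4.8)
The plan is to establish the two claimed facts in turn: first reversibility of SGS, then the spectral gap bound. For reversibility, I would argue that SGS is a deterministic-scan Gibbs sampler on the extended space $(\btheta,\bz_{1:b})$ whose stationary distribution is $\joint$. The standard theory of Gibbs samplers gives reversibility of the two-variable (data-augmentation) chain in which one alternately draws $\bz_{1:b}\sim\joint(\cdot|\btheta)$ and $\btheta\sim\joint(\cdot|\bz_{1:b})$. The marginal $\btheta$-chain with kernel $\B{P}_{\mathrm{SGS}}$ is then reversible with respect to $\marginal$: indeed, since $\joint(\bz_{1:b}|\btheta)$ and $\joint(\btheta|\bz_{1:b})$ are both conditionals of the same joint distribution, the kernel $\B{P}_{\mathrm{SGS}}(\btheta,\btheta')=\int \joint(\bz_{1:b}|\btheta)\joint(\btheta'|\bz_{1:b})\mrd\bz_{1:b}$ has the symmetric form $\marginal(\btheta)\B{P}_{\mathrm{SGS}}(\btheta,\btheta')=\int \joint(\btheta,\bz_{1:b})\joint(\btheta'|\bz_{1:b})\mrd\bz_{1:b}$, which upon inspection is symmetric in $\btheta,\btheta'$ (this is the familiar fact that a two-block Gibbs sampler's marginal chain is reversible). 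Hence $\B{P}_{\mathrm{SGS}}$ is self-adjoint on $L^2(\marginal)$ and its spectrum is real, so the absolute spectral gap $\gamma_{\mathrm{SGS}}^*$ is well-defined.

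For the spectral gap bound, the key tool is Theorem \ref{thm:RicciSGS}, which under Assumption \assfive gives the coarse Ricci-curvature lower bound $K_p(\btheta,\btheta')\ge K_{\mathrm{SGS}}$ for all $1\le p\le\infty$. Specializing to $p=2$ gives the contraction $W_2^w(\B{P}_{\mathrm{SGS}}(\btheta,\cdot),\B{P}_{\mathrm{SGS}}(\btheta',\cdot))\le(1-K_{\mathrm{SGS}})\,w(\btheta,\btheta')$ for the weighted metric $w$ of \eqref{ew:wmetricdef}. Under Assumption \asssix, where $\B{A}_1=\cdots=\B{A}_b=\B{I}_d$, the metric $w$ is a constant multiple of the Euclidean metric (since $\sum_i\B{A}_i^T\B{A}_i=b\,\B{I}_d$), so the contraction holds directly in $W_2$. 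The plan is then to invoke the standard link between Wasserstein contraction and the spectral gap for a reversible chain: a reversible Markov operator that is a $W_2$-contraction with rate $1-K_{\mathrm{SGS}}$ has absolute spectral gap at least $K_{\mathrm{SGS}}$. Concretely, I would apply the contraction to the pair of chains coupled from Dirac initializations, differentiate the eigenfunction relation, or more cleanly cite the known result (e.g.\ via Kantorovich duality / the variational characterization of the gap) that $1-\gamma_{\mathrm{SGS}}^*$ is bounded above by the $W_2$-contraction coefficient.

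The main obstacle I anticipate is making precise the passage from the $W_2$ one-step contraction to the $L^2$ spectral gap, since the two are not identical in general — the contraction coefficient in $W_2$ controls the gap but the argument requires either reversibility (which we have established) together with a Lipschitz-to-$L^2$ comparison, or a direct spectral argument. A clean route is to note that for a reversible kernel the contraction constant in $W_2$ coincides with the operator norm restricted to mean-zero Lipschitz observables, and then use that Lipschitz functions are dense enough in $L^2(\marginal)$ (which holds because $\marginal$ is strongly log-concave under \assfive and \asssix, hence has a spectral gap and sub-Gaussian tails) to conclude $\gamma_{\mathrm{SGS}}^*\ge K_{\mathrm{SGS}}$. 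The strong log-concavity of $\marginal$ guaranteed by Assumption \assfive is what ensures the requisite integrability and density so that the Wasserstein contraction transfers to a genuine $L^2$ spectral bound.
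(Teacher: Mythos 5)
Your proposal is correct and takes essentially the same route as the paper, whose proof simply cites Lemma 3.1 of \cite{liu1994covariance} for the reversibility of the marginal $\btheta$-chain of the two-block data-augmentation scheme and Proposition 30 of \cite{ollivier2009ricci} for passing from the coarse Ricci (Wasserstein) contraction of Theorem \ref{thm:RicciSGS} to the $L^2$ spectral gap of a reversible kernel — exactly the two arguments you reconstruct by hand. One minor correction: the Kantorovich duality with Lipschitz observables underlying that passage concerns the $W_1$ (not $W_2$) contraction coefficient, but since Theorem \ref{thm:RicciSGS} holds for all $1\le p\le \infty$, your argument goes through verbatim with $p=1$.
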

\begin{proof}
The reversibility follows by a standard argument for data augmentation schemes given in Lemma 3.1 of \cite{liu1994covariance}. 
The lower bound on the absolute spectral gap follows by Proposition 30 of \cite{ollivier2009ricci}.
\end{proof}

The following proposition is well known in the MCMC literature but we have only found a proof for Markov chains on finite state spaces.
Hence for completeness, we include a short proof here.

\begin{proposition}
\label{prop:tvfromspectralgap}
Suppose that $\B{P}(\bz,\cdot)$ is a reversible Markov kernel on a Polish state space $\Omega$ with absolute spectral gap $\gamma^*>0$, and unique stationary distribution $\pi$. Then for any initial distribution $\nu$ that is absolutely continuous with respect to $\pi$, and any number of steps $t\in \Z_+$, we have
\[\nr{\nu \B{P}^{t} - \pi}_{\mathrm{TV}} 
\le \frac{1}{2}\l(\E_{\pi}
\l[  \l(\frac{\mathrm{d} \nu}{\mathrm{d} \pi}\r)^2 \r]  -1\r)^{1/2}
\cdot (1-\gamma^*)^{t}.\]
\end{proposition}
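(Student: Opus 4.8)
The plan is to pass to the $L^2(\pi)$ picture and exploit the self-adjointness of $\B{P}$ that comes from reversibility. We may assume $\E_{\pi}[(\mathrm{d}\nu/\mathrm{d}\pi)^2]<\infty$, since otherwise the right-hand side is infinite and there is nothing to prove; write $h:=\mathrm{d}\nu/\mathrm{d}\pi\in L^2(\pi)$. First I would identify the density of $\nu\B{P}^t$ with respect to $\pi$. For any bounded measurable $\phi$ we have $(\nu\B{P}^t)(\phi)=\nu(\B{P}^t\phi)=\int (\B{P}^t\phi)\,h\,\mathrm{d}\pi$, and reversibility means $\B{P}$ is self-adjoint on $L^2(\pi)$, so this equals $\int \phi\,(\B{P}^t h)\,\mathrm{d}\pi$. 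Hence $\mathrm{d}(\nu\B{P}^t)/\mathrm{d}\pi=\B{P}^t h$. Setting $f:=h-1$, we get $\E_{\pi}[f]=\nu(\Omega)-1=0$, i.e. $f$ lies in the orthogonal complement $\mathbf{1}^{\perp}$ of the constants in $L^2(\pi)$, and since $\B{P}\mathbf{1}=\mathbf{1}$ this gives $\mathrm{d}(\nu\B{P}^t)/\mathrm{d}\pi-1=\B{P}^t f$.

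Next, using the identity between the total variation norm and the $L^1$ distance of densities (as in the proof of Proposition~\ref{proposition:2}), $\nr{\nu\B{P}^t-\pi}_{\mathrm{TV}}=\tfrac12\int|\B{P}^t f|\,\mathrm{d}\pi=\tfrac12\|\B{P}^t f\|_{L^1(\pi)}\le \tfrac12\|\B{P}^t f\|_{L^2(\pi)}$, the last step by Cauchy--Schwarz since $\pi$ is a probability measure. Because $\langle \B{P}g,\mathbf{1}\rangle_{\pi}=\langle g,\B{P}\mathbf{1}\rangle_{\pi}=\langle g,\mathbf{1}\rangle_{\pi}$, the subspace $\mathbf{1}^{\perp}$ is $\B{P}$-invariant, and by the definition of the absolute spectral gap the operator norm of $\B{P}$ restricted to $\mathbf{1}^{\perp}$ equals $1-\gamma^*$. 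Submultiplicativity of the operator norm then yields $\|\B{P}^t f\|_{L^2(\pi)}\le (1-\gamma^*)^t\|f\|_{L^2(\pi)}$. Finally $\|f\|_{L^2(\pi)}^2=\E_{\pi}[(h-1)^2]=\E_{\pi}[h^2]-1$, and assembling these bounds gives the claim.

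The density identification and the moment computation are routine; the only point requiring genuine care is the functional-analytic input on a general Polish state space. Specifically, I would need to argue that reversibility produces a bounded self-adjoint operator on $L^2(\pi)$, that $\mathbf{1}^{\perp}$ is a closed invariant subspace, and that the absolute spectral gap really equals one minus the operator norm of $\B{P}|_{\mathbf{1}^{\perp}}$. This last identity is where the spectral theorem for bounded self-adjoint operators enters: uniqueness of the stationary distribution $\pi$ makes the eigenvalue $1$ simple, so $\sigma(\B{P}|_{\mathbf{1}^{\perp}})=\sigma(\B{P})\setminus\{1\}$, and the spectral radius of a self-adjoint operator coincides with its operator norm, giving $\|\B{P}|_{\mathbf{1}^{\perp}}\|=1-\gamma^*$. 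Everything else in the argument is elementary.
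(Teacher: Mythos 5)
Your proof is correct and takes essentially the same route as the paper's: you identify $\mathrm{d}(\nu\B{P}^t)/\mathrm{d}\pi=\B{P}^t(\mathrm{d}\nu/\mathrm{d}\pi)$ from reversibility, bound the total variation distance by half the $L^2(\pi)$ distance via Cauchy--Schwarz, and contract on the mean-zero subspace at rate $(1-\gamma^*)^t$, which matches the paper's argument with $\B{P}-\B{\Pi}$ since $\|\B{P}-\B{\Pi}\|_{\pi}=\|\B{P}|_{\mathbf{1}^{\perp}}\|$. The only cosmetic difference is that your ``WLOG $\mathrm{d}\nu/\mathrm{d}\pi\in L^2(\pi)$'' reduction lets you get the density identification directly from self-adjointness, whereas the paper proves it for arbitrary $\nu\ll\pi$ via a monotone-convergence argument (Lemma \ref{lemma:revabscont}).
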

Our proof is based on the following lemma.
\begin{lemma}\label{lemma:revabscont}
Suppose that $\B{Q}(x,\mathrm{d} y)$ is a reversible Markov kernel on a Polish state space $\Omega$ with stationary distribution $\pi$. Then for any distribution $\nu$ that is absolutely continuous with respect to $\pi$, $\nu \B{Q}$ is also absolutely continuous with respect to $\pi$, and for $\pi$-almost every $x\in \Omega$, we have
\[\frac{\mathrm{d} (\nu\B{Q})}{\mathrm{d} \pi}(x)=\l(\B{Q} \l(\frac{\mathrm{d} \nu}{\mathrm{d} \pi}\r)\r)(x).\]
\end{lemma}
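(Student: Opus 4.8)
The plan is to work directly with the Radon--Nikodym derivative $f := \mathrm{d}\nu/\mathrm{d}\pi$, which exists and is nonnegative and $\pi$-integrable by the assumed absolute continuity $\nu \ll \pi$. The goal is to establish that for every Borel set $A \subseteq \Omega$ one has $(\nu\B{Q})(A) = \int_A (\B{Q}f)\,\mathrm{d}\pi$, where $(\B{Q}f)(x) := \int_\Omega f(y)\,\B{Q}(x,\mathrm{d}y)$ denotes the usual action of the kernel on functions. This single identity simultaneously yields the absolute continuity $\nu\B{Q}\ll\pi$ and the claimed formula for the density.

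First I would rewrite the action of the kernel on the measure as an iterated integral:
\[
(\nu\B{Q})(A) = \int_{\Omega} \nu(\mathrm{d}x)\,\B{Q}(x,A) = \int_{x\in\Omega}\int_{y\in A} f(x)\,\pi(\mathrm{d}x)\,\B{Q}(x,\mathrm{d}y).
\]
The central step is then to invoke reversibility. On the Polish space $\Omega$, detailed balance can be phrased as the equality of the two finite measures $\pi(\mathrm{d}x)\,\B{Q}(x,\mathrm{d}y)$ and $\pi(\mathrm{d}y)\,\B{Q}(y,\mathrm{d}x)$ on the product space $\Omega\times\Omega$; both are well defined since $\B{Q}(x,\cdot)$ is a regular conditional kernel. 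Substituting this symmetry into the previous display interchanges the roles of the two copies of $\Omega$, giving
\[
(\nu\B{Q})(A) = \int_{x\in\Omega}\int_{y\in A} f(x)\,\pi(\mathrm{d}y)\,\B{Q}(y,\mathrm{d}x).
\]

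Finally, since $f\ge 0$, Tonelli's theorem permits interchanging the order of integration and pulling the $y$-integral to the outside:
\[
(\nu\B{Q})(A) = \int_{y\in A}\left(\int_{x\in\Omega} f(x)\,\B{Q}(y,\mathrm{d}x)\right)\pi(\mathrm{d}y) = \int_{A} (\B{Q}f)\,\mathrm{d}\pi.
\]
Because $A$ was an arbitrary Borel set, this characterizes $\B{Q}f$ as the $\pi$-density of $\nu\B{Q}$, proving both assertions at once (with the identity holding for $\pi$-almost every $x$, as densities are determined only up to $\pi$-null sets). I expect the only delicate point to be the measure-theoretic bookkeeping in the general Polish setting: namely, verifying that reversibility is genuinely equivalent to the equality of the two joint measures on $\Omega\times\Omega$, and confirming that $\B{Q}f$ is a bona fide measurable function so that Tonelli applies. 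Once these standard facts are in place, the computation reduces to a one-line application of detailed balance.
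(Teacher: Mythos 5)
Your proof is correct and is essentially the paper's own argument: both reduce the claim to an integral identity against test sets/functions, invoke reversibility as the symmetry of the joint measure $\pi(\mathrm{d}x)\,\B{Q}(x,\mathrm{d}y)$ under the swap $(x,y)\mapsto(y,x)$, and then disintegrate in the other order to recover $\B{Q}\left(\mathrm{d}\nu/\mathrm{d}\pi\right)$ as the density. The only difference is bookkeeping: the paper phrases reversibility via bounded measurable test functions and therefore truncates the density at $\min(\mathrm{d}\nu/\mathrm{d}\pi, M)$ and removes the truncation by two applications of monotone convergence, whereas your measure-level formulation of detailed balance lets Tonelli handle the unbounded nonnegative integrand in a single step---a mild streamlining, not a different route.
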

\begin{proof}
The claim of the lemma is equivalent to showing that for every bounded measurable function $f:\Omega\to \R$, we have 
\begin{equation}\label{eq:lemmarevabsconteqv}\int_{x\in \Omega}\frac{\mrd (\nu\B{Q})}{\mrd \pi}(x) f(x)\pi(\mrd x)=\int_{x\in \Omega}\l(\B{Q} \l(\frac{\mrd \nu}{\mrd \pi}\r)\r)(x) f(x)\pi(\mrd x) .\end{equation}
Since if we add a constant to $f$, both sides increase by this constant, we can assume without loss of generality that $f$ is non-negative. Under this assumption, we have
\begin{align*}
    &\int_{x\in \Omega}\frac{\mrd (\nu\B{Q})}{\mrd \pi}(x) f(x)\pi(\mrd x)=\int_{x\in \Omega}f(x) (\nu \B{Q})(\mrd x)\\
    &=\int_{x,y\in \Omega}f(x) \nu(\mrd y) \B{Q}(y,\mrd x)=\int_{x,y\in \Omega}f(y) \nu(\mrd x) \B{Q}(x,\mrd y)\\
    &=\int_{x,y\in \Omega}f(y) \frac{\mrd \nu}{\mrd \pi}(x) \pi(\mrd x) \B{Q}(x,\mrd y)\\
    \intertext{ by the monotone convergence theorem (using the non-negativity of $f$)}
    &=\lim_{M\to \infty}\int_{x,y\in \Omega}f(y) \min\l(\frac{\mrd \nu}{\mrd \pi}(x),M\r) \pi(\mrd x) \B{Q}(x,\mrd y)\\
    \intertext{ using the reversibility of $\B{Q}$ (in the equivalent bounded measurable test function formulation)}
    &=\lim_{M\to \infty}\int_{x,y\in \Omega}f(y) \min\l(\frac{\mrd \nu}{\mrd \pi}(x),M\r) \pi(\mrd y) \B{Q}(y,\mrd x)\\
    \intertext{ by the monotone convergence theorem (using the non-negativity of $f$)}
    &=\int_{x,y\in \Omega}f(y) \frac{\mrd \nu}{\mrd \pi}(x) \pi(\mrd y) \B{Q}(y,\mrd x)\\
    &=\int_{y\in \Omega}f(y) \l(\B{Q}\l(\frac{\mrd \nu}{\mrd \pi}\r)\r)(y) \pi(\mrd y), 
\end{align*}
hence \eqref{eq:lemmarevabsconteqv} and the claim of our lemma holds.
\end{proof}

\begin{proof}[Proof of Proposition \ref{prop:tvfromspectralgap}]
We define the Hilbert space $L^2(\pi)$ as measurable functions $f$ on $\Omega$ satisfying $\E_{\pi}(f^2)<\infty$, endowed with the scalar product $\inner{f}{g}_{\pi}=\int_{\bz\in \Omega}f(\bz)g(\bz)\pi(\mrd \bz)$. Let us define the linear operator $\B{\Pi}(f)(\bz):=\E_{\pi}(f)$ for any $f\in L^2(\pi)$, $\bz\in \Omega$.

Using Lemma \ref{lemma:revabscont} with $\B{Q}=\B{P}^t$, it follows that
\begin{align}
    \nonumber&\|\nu \B{P}^t-\pi\|_{\mathrm{TV}}=\frac{1}{2}\int_{x\in \Omega} \l|\frac{\mrd \nu \B{P}^t}{\mrd \pi}(x)-1\r| \pi (\mrd x)
    \nonumber\intertext{using Jensen's inequality, we have}
    \label{eq:tvsqrtbnd}&\le  \frac{1}{2} \sqrt{ \int_{x\in \Omega} \l(\frac{\mrd \nu \B{P}^t}{\mrd \pi}(x)-1\r)^2 \pi (\mrd x)}.
\end{align}
Using Lemma \ref{lemma:revabscont} again, the integral inside the square root can be further bounded as
\begin{align*}
    &\int_{x\in \Omega} \l(\frac{\mrd \nu \B{P}^t}{\mrd \pi}(x)-1\r)^2 \pi (\mrd x)=
    \int_{x\in \Omega} \l(\l(\B{P}^t \l(\frac{\mrd \nu}{\mrd \pi}\r)\r)(x)-1\r)^2 \pi (\mrd x)\\
    &=\int_{x\in \Omega} \l(\l((\B{P}^t-\B{\Pi}) \l(\frac{\mrd \nu}{\mrd \pi}\r)\r)(x)\r)^2 \pi(\mrd x)=\int_{x\in \Omega} \l(\l((\B{P}-\B{\Pi})^t \l(\frac{\mrd \nu}{\mrd \pi}\r)\r)(x)\r)^2 \pi(\mrd x)\\
    &=\int_{x\in \Omega} \l(\l((\B{P}-\B{\Pi})^t \l(\frac{\mrd \nu}{\mrd \pi}-1\r)\r)(x)\r)^2 \pi(\mrd x)=\inner{\frac{\mrd \nu}{\mrd \pi}-1}{ (\B{P}-\B{\Pi})^{2t} \l(\frac{\mrd \nu}{\mrd \pi}-1\r)}_{\pi}\\
    &\le\|\B{P}-\B{\Pi}\|_{\pi}^{2t} \l\|\frac{\mrd \nu}{\mrd \pi}-1\r\|_{\pi}^2=(1-\gamma^*)^{2t}\l\|\frac{\mrd \nu}{\mrd \pi}-1\r\|_{\pi}^2,
\end{align*}
and the claim of the proposition follows by substituting this into  \eqref{eq:tvsqrtbnd}.
\end{proof}

Now we are ready to prove our convergence bound in total variation distance.
\begin{proof}[Proof of Theorem \ref{prop:compcomplexityTV}]
    From Corollary \ref{cor:spectralgapSGS}, we know that the absolute spectral gap of SGS satisfies that $\gamma^*\ge K_{\mathrm{SGS}}$ (defined in \eqref{eq:kappasgsdef}), and Proposition \ref{prop:tvfromspectralgap} implies that
    \begin{align*} 
    \nr{\nu \B{P}_{\mathrm{SGS}}^{t} - \pi_{\rho}}_{\mathrm{TV}}
    &\le \sqrt{\E_{\pi_{\rho}}
    \l[\l(\frac{\mathrm{d} \nu}{\mathrm{d} \pi_{\rho}}\r)^2\r]-1}
    \cdot (1-\gamma^*)^{t}\\
    &\le \sqrt{\E_{\pi_{\rho}}
    \l[\l(\frac{\mathrm{d} \nu}{\mathrm{d} \pi_{\rho}}\r)^2\r]-1}
    \cdot (1-K_{\mathrm{SGS}})^{t}.
    \end{align*}
\end{proof}

\subsection{Proof of Theorem \ref{prop:compcomplexityWass}}
\label{subsec:mixing_time_bounds_Wasserstein}

\begin{proof}[Proof of Theorem \ref{prop:compcomplexityWass}]
From Theorem \ref{thm:Wassersteinpipirhosinglesplitting}, it follows that if $\rho$ is chosen as in \eqref{eq:rho2singlesplittingwass}, then \begin{equation}\label{eq:w1pirhopi}W_1(\pi_{\rho},\pi)\le \frac{\epsilon}{2}\cdot \frac{\sqrt{d}}{\sqrt{m_1}}.
\end{equation}
From Proposition 1 part (ii) in \cite{durmus2018high} it follows that for the initial distribution $\delta_{\bthetastar}$ (Dirac measure at $\bthetastar$), we have
\[W_1(\delta_{\bthetastar},\pi)\le W_2(\delta_{\bthetastar},\pi)\le \frac{\sqrt{d}}{\sqrt{m_1}},\]
and hence by combining this with \eqref{eq:w1pirhopi} using the triangle inequality and the assumption $\epsilon\le 1$, it follows that 
\[W_1(\delta_{\bthetastar},\pi_{\rho})\le  \frac{3}{2}\frac{\sqrt{d}}{\sqrt{m_1}}.\]
Now from Theorem \ref{thm:RicciSGS}, it follows that the coarse Ricci curvature of SGS is lower bounded by
\[K_{\mathrm{SGS}}:=\frac{\rho^2 m_1}{1+\rho^2 m_1},\]
and therefore by Corollary 21 of \cite{ollivier2009ricci}, we have
\begin{align*}W_1(P_{\mathrm{SGS}}^{t}(\bthetastar,\cdot),\pi_{\rho})\le W_1(\delta_{\bthetastar},\pi_{\rho})\cdot (1-K_{\mathrm{SGS}})^{t}\le \frac{\epsilon}{2}\cdot \frac{\sqrt{d}}{\sqrt{m_1}}.
\end{align*}
The claim of the theorem now follows by the triangle inequality.
\end{proof}

\subsection{Complexity Bounds for Implementing SGS by Rejection Sampling}
\label{sec:RS}
The following bound is a standard result in rejection sampling; see for instance Section 2.3 of \cite{robert2013monte}.

\begin{lemma}\label{lemma:rejectiongeneral}
Suppose that $\mu(\bz)=\tilde{\mu}(\bz)/\tilde{Z}$ is the target density on $\R^d$, and $\nu(\bz)$ is the proposal density (both absolutely continuous w.r.t. the Lebesgue measure). Here $\tilde{\mu}(\bz)$ is the unnormalized target and $\tilde{Z}$ is the normalising constant (which is typically unknown). Suppose that the condition
\begin{equation}
    \tilde{\mu}(\bz)\le M \nu(\bz)
\end{equation}
holds for some constant $M<\infty$ for every $\bz\in \R^d$ . Under this assumption, if we take samples $\bZ_1,\bZ_2,\ldots$ from $\nu$ and accept $\bZ_i$ with probability $\frac{\tilde{\mu}(\bZ_i)}{M \nu(\bZ_i)}$, then the accepted samples will be distributed according to $\mu$. Moreover, the expected number of samples taken until the first acceptance is equal to $M/\tilde{Z}$.
\end{lemma}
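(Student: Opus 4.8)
The plan is to invoke the standard rejection sampling argument, which reduces to two short probability computations: the law of an accepted sample, and the expected number of proposals until acceptance. The condition $\tilde{\mu}(\bz)\le M\nu(\bz)$ ensures the acceptance probability $\tilde{\mu}(\bZ_i)/(M\nu(\bZ_i))$ lies in $[0,1]$ whenever $\nu(\bZ_i)>0$, so the scheme is well-defined; since the proposals are drawn from $\nu$, the set $\{\nu=0\}$ is hit with probability zero, so the accept/reject randomization is defined $\PP$-almost surely.

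First I would compute the per-proposal acceptance probability by conditioning on the proposed value and integrating out. Writing $A$ for the acceptance event and $\bZ\sim\nu$,
\[
\PP(A)=\int_{\R^d}\nu(\bz)\cdot\frac{\tilde{\mu}(\bz)}{M\nu(\bz)}\,\mrd\bz=\frac{1}{M}\int_{\R^d}\tilde{\mu}(\bz)\,\mrd\bz=\frac{\tilde{Z}}{M}.
\]
The successive proposals are i.i.d. and each is accepted independently with this same probability $\tilde{Z}/M$, so the index of the first accepted sample is geometric with success probability $\tilde{Z}/M$, and its expectation is $M/\tilde{Z}$. This yields the second claim.

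For the distribution of accepted samples, I would compute the conditional law of $\bZ$ given $A$. For any Borel set $B\subseteq\R^d$,
\[
\PP(\bZ\in B\mid A)=\frac{\PP(\bZ\in B,\,A)}{\PP(A)}=\frac{\int_B\nu(\bz)\frac{\tilde{\mu}(\bz)}{M\nu(\bz)}\,\mrd\bz}{\tilde{Z}/M}=\frac{\int_B\tilde{\mu}(\bz)\,\mrd\bz}{\tilde{Z}}=\int_B\mu(\bz)\,\mrd\bz,
\]
so the accepted sample has density $\mu$. Because the proposals are i.i.d. and the accept/reject decisions are conditionally independent given the proposals, every accepted sample shares this conditional law, establishing the first claim.

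The argument is entirely routine, so there is no genuine obstacle; the only point requiring mild care is the measurability and well-definedness of the acceptance rule on the $\nu$-null set $\{\nu=0\}$, which I would dispatch by the observation above that the proposals avoid this set almost surely.
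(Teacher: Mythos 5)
Your proof is correct and is exactly the standard rejection-sampling argument; the paper itself gives no proof of this lemma, merely citing Section 2.3 of \cite{robert2013monte}, and your computation (per-proposal acceptance probability $\tilde{Z}/M$, geometric waiting time, conditional law of an accepted draw) is precisely the argument found there. The side remark about well-definedness on the $\nu$-null set $\{\nu=0\}$ is a correct and harmless refinement.
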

The following lemma gives a complexity bound for rejection sampling for log-concave distributions. We assume that we have access to an approximation of the minimum of the strongly convex and smooth potential $U$, which will be denoted by $\tilde{\bz}$. The quality of this approximation is taken into account in the proposal distribution using the norm of $\grad U(\tilde{\bz})$.
\begin{lemma}[Rejection sampling upper bound for 
log-concave densities] \label{lem:rejectionupperbnd}
Suppose \,\, that $\mu(\bz)\propto \exp(-U(\bz))$ is a distribution on $\R^d$ such that $U$ is twice differentiable and 
\begin{align}
    A \B{I}_d\preceq \grad^2 U(\bz)\preceq B \B{I}_d\label{eq:HessUAB}
\end{align}
for some $0<A\le B$ (strongly convex and smooth). Let $\bz^{*}$ be the unique minimizer of $U$, $\tilde{\bz}$ another point (an approximation of $\bz^{*}$), and $\nu(\bz)=\mathcal{N}(\bz;\tilde{\bz},\tA^{-1}\B{I}_d)$, where
\begin{equation}\label{eq:tAdef}
    \tA=A+\frac{\|\grad U(\tbz)\|^2}{2d}-\sqrt{\frac{\|\grad U(\tbz)\|^4}{4d^2} + \frac{A\|\grad U(\tbz)\|^2}{d}}.
\end{equation}
Suppose that we take samples $\bZ_1,\bZ_2,\ldots$ from $\nu$, and accept them with probability
\[\PP(\bZ_j\text{ is accepted})=\exp\left(-\frac{\|\nabla U(\tbz)\|^2}{2(A-\tilde{A})} - [U(\bz)-U(\tbz))] +\frac{\tilde{A} \|\bz-\tbz\|^2}{2}\right).\]
Then these accepted samples are distributed according to $\mu$. Moreover, the expected number of samples taken until one is accepted is less than or equal to $\l(B/\tilde{A}\r)^{d/2}\cdot \exp\l[\frac{\|\grad U(\tbz)\|^2}{2}\l(\frac{1}{A-\tA}-\frac{1}{B}\r)\r]$.
\end{lemma}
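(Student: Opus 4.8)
The plan is to invoke the standard rejection sampling result (Lemma~\ref{lemma:rejectiongeneral}) with unnormalised target $\tilde{\mu}(\bz)=\exp(-U(\bz))$, normalising constant $\tilde{Z}=\int_{\R^d}\exp(-U(\bz))\mrd\bz$, and proposal $\nu(\bz)=\mathcal{N}(\bz;\tbz,\tA^{-1}\B{I}_d)$. The two things to establish are an envelope inequality $\tilde{\mu}(\bz)\le M\,\nu(\bz)$ with an explicit $M$, and a lower bound on $\tilde{Z}$; the expected number of draws is then $M/\tilde{Z}$. First I would record that \eqref{eq:tAdef} forces $0<\tA\le A$: writing $s=\nr{\grad U(\tbz)}^2/d$, one has $\tA=A+s/2-\sqrt{s^2/4+As}$, and squaring gives $s/2\le\sqrt{s^2/4+As}$ (hence $\tA\le A$) while $(A+s/2)^2=A^2+As+s^2/4>s^2/4+As$ (hence $\tA>0$, using $A>0$). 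In particular $A-\tA>0$, so the proposal is well defined.

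For the envelope I would use the strong-convexity lower bound coming from $\grad^2 U\succeq A\B{I}_d$, namely $U(\bz)\ge U(\tbz)+\grad U(\tbz)^T(\bz-\tbz)+\tfrac{A}{2}\nr{\bz-\tbz}^2$. Setting $\bw=\bz-\tbz$, the quantity $-U(\bz)+\tfrac{\tA}{2}\nr{\bz-\tbz}^2$ is then bounded above by a concave quadratic in $\bw$ (concave precisely because $A-\tA>0$), whose maximum is attained at $\bw=-\grad U(\tbz)/(A-\tA)$ and equals $-U(\tbz)+\tfrac{\nr{\grad U(\tbz)}^2}{2(A-\tA)}$. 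Comparing with $\nu(\bz)=(\tA/2\pi)^{d/2}\exp(-\tfrac{\tA}{2}\nr{\bz-\tbz}^2)$, this yields $\tilde{\mu}(\bz)\le M\,\nu(\bz)$ with $M=(2\pi/\tA)^{d/2}\exp\big(\tfrac{\nr{\grad U(\tbz)}^2}{2(A-\tA)}-U(\tbz)\big)$, and a direct substitution (noting $(2\pi/\tA)^{d/2}(\tA/2\pi)^{d/2}=1$) confirms that the acceptance probability $\tilde{\mu}(\bz)/(M\nu(\bz))$ is exactly the stated expression. Lemma~\ref{lemma:rejectiongeneral} then gives both that accepted samples follow $\mu$ and that the expected number of draws equals $M/\tilde{Z}$.

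It remains to bound $M/\tilde{Z}$. For the denominator I would use the smoothness upper bound $U(\bz)\le U(\bz^*)+\tfrac{B}{2}\nr{\bz-\bz^*}^2$, valid since $\grad U(\bz^*)=\B{0}$ and $\grad^2 U\preceq B\B{I}_d$; a Gaussian integration then gives $\tilde{Z}\ge\exp(-U(\bz^*))(2\pi/B)^{d/2}$. Dividing produces $M/\tilde{Z}\le(B/\tA)^{d/2}\exp\big(\tfrac{\nr{\grad U(\tbz)}^2}{2(A-\tA)}-U(\tbz)+U(\bz^*)\big)$. Finally, minimising the same quadratic upper bound over its free argument gives the descent-lemma inequality $U(\bz^*)\le U(\tbz)-\tfrac{1}{2B}\nr{\grad U(\tbz)}^2$; substituting this leaves exactly $(B/\tA)^{d/2}\exp\big[\tfrac{\nr{\grad U(\tbz)}^2}{2}\big(\tfrac{1}{A-\tA}-\tfrac{1}{B}\big)\big]$, as claimed.

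All of the steps are elementary once the correct quadratic completions are used, so I do not expect a genuine obstacle; the only points requiring care are verifying $A-\tA>0$ \emph{before} maximising the envelope quadratic (otherwise $M=\infty$), and noting that the $\tilde{Z}$ lower bound and the descent-lemma inequality both exploit that $\bz^*$ is the exact minimiser, not the approximation $\tbz$. For context I would add that the particular value of $\tA$ in \eqref{eq:tAdef} is the minimiser over $\tA\in(0,A)$ of the final bound, obtained by setting the derivative of $\tfrac{d}{2}\log(B/\tA)+\tfrac{\nr{\grad U(\tbz)}^2}{2}\big(\tfrac{1}{A-\tA}-\tfrac{1}{B}\big)$ to zero and taking the root in $(0,A)$, although this optimality is not needed for the stated inequality itself.
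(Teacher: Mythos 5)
Your proof is correct and follows essentially the same route as the paper's: the envelope constant $M=\exp\bigl(\|\grad U(\tbz)\|^2/(2(A-\tA))\bigr)$ obtained from the strong-convexity quadratic lower bound at $\tbz$, the lower bound $\tilde{Z}\ge (2\pi/B)^{d/2}\exp(-U(\bz^*))$ from smoothness at the minimizer, and the descent-lemma inequality $U(\tbz)-U(\bz^*)\ge \|\grad U(\tbz)\|^2/(2B)$, all combined through the generic rejection-sampling lemma to give $M/\tilde{Z}$. The only differences are cosmetic --- you use an equivalent normalisation of $\tilde{\mu}$ and complete the square in the vector $\bw=\bz-\tbz$ directly where the paper first applies Cauchy--Schwarz to the linear term, and your explicit verification that $0<\tA\le A$ is a welcome addition the paper leaves implicit.
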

\begin{proof}
    The proposal density equals
    \begin{align*}
        \nu(\bz)&=\mathcal{N}(\bz;\tilde{\bz},\tA^{-1}\B{I}_d)\\
        &=\exp\l(-\frac{\tA \|\bz-\tbz\|^2}{2}\r)\cdot \l(\frac{\tA}{2\pi}\r)^{d/2}.
    \end{align*}
    We define the unnormalized version of $\mu$ as 
    \[\tilde{\mu}(\bz)=\exp(-[U(\bz)-U(\tilde{\bz})])\cdot \l(\frac{\tA}{2\pi}\r)^{d/2}.\]
    Notice that
    \begin{align*}
        U(\bz)-U(\tilde{\bz})&=\inner{\int_{t=0}^{1} \grad U(\tilde{\bz}+t (\bz-\tbz)) \mathrm{d}t}{\bz-\tbz}.\\
        \intertext{By the intermediate value theorem, there is some $\bz(t)$ such that}
        &=\inner{\grad U(\tilde{\bz})}{\bz-\tbz}+\inner{\bz-\tbz}{\l(\int_{t=0}^{1} t\grad^2 U(\bz(t)) \mathrm{d}t\r)^\top(\bz-\tbz)},\\
        \intertext{so using the assumption \eqref{eq:HessUAB} it follows that}
        &\ge -\|\grad U(\tbz)\|\|\bz-\tbz\|+\frac{A}{2} \|\bz-\tbz\|^2.
    \end{align*}
    Based on this, one gets
    \begin{align*}
        \frac{\tilde{\mu}(\bz)}{\nu(\bz)}\le \exp\l(\|\grad U(\tbz)\|\cdot \|\bz-\tbz\|-\frac{A-\tA}{2}\|\bz-\tbz\|^2\r)\le \exp\l(\frac{\|\grad U(\tbz)\|^2}{2 (A-\tA)}\r).
    \end{align*}
    Hence we have $\tilde{\mu}(\bz)\le M \nu(\bz)$ for $M=\exp\l(\frac{\|\grad U(\tbz)\|^2}{2 (A-\tA)}\r)$.
    
    For the normalising constant, we have
    \begin{align*}
        \tilde{Z}&=\int_{\bz\in \R^d} \tilde{\mu}(\bz)\mathrm{d}\bz=\exp(U(\tbz)-U(\bz^*))\cdot \l(\frac{\tA}{2\pi}\r)^{d/2} \cdot \int_{\bz\in \R^d} \exp(-(U(\bz)-U(\bz^*))) \mathrm{d} \bz\\
        \intertext{ using Taylor's expansion with second order remainder term, and assumption \eqref{eq:HessUAB} yields}
        &\ge \exp(U(\tbz)-U(\bz^*))\cdot \l(\frac{\tA}{2\pi}\r)^{d/2} \cdot 
         \int_{\bz\in \R^d} \exp\l(-\frac{B}{2}\|\bz-\bz^*\|^2\r) \mathrm{d} \bz\\
         &= \l(\frac{\tA}{B}\r)^{d/2} \cdot \exp\l(U(\tbz)-U(\bz^*)\r)\ge \l(\frac{\tA}{B}\r)^{d/2} \exp\l(\frac{\|\grad U(\tbz)\|^2}{2B}\r),
    \end{align*}
    where in the last step we have used the fact that
    for $\bz'=\tbz-\frac{\|\grad U(\tbz)\|}{B}$, we have 
    \begin{align*}
    &U(\tbz)-U(\bz^*)\\
    &\ge U(\tbz)-U(\bz')=\inner{\int_{t=0}^{1} \grad U(\tbz+t(\bz'-\tbz)) \mathrm{d}t}{\tbz-\bz'}\\
    \intertext{ using the fact that $\bz^*$ is the minimum of $U$.}
    \intertext{By the intermediate value theorem, there is some $\tbz(t)\in \R^d$ such that}
    &= \inner{\grad U(\tbz)}{\tbz-\bz'}+\inner{\tbz-\bz'}{\l(\int_{t=0}^{1} t \grad^2 U(\tbz(t)) \mathrm{d}t\r)\cdot (\tbz-\bz')}\\
    &\ge \inner{\grad U(\tbz)}{\tbz-\bz'}-\frac{B}{2} \|\tbz-\bz'\|^2=\frac{\|\grad U(\tbz)\|^2}{2B}.
    \end{align*}
    Now it follows by Lemma \ref{lemma:rejectiongeneral} and the above bound on $\tilde{Z}$ that the expected number of samples until the first acceptance is less than or equal to
    \begin{equation*}
    E(\tA):=\exp\l(\|\grad U(\tbz)\|^2 \l(\frac{1}{2 (A-\tA)}-\frac{1}{2B}\r)\r)\l(\frac{B}{\tA}\r)^{d/2}.
    \end{equation*}
    The parameter $\tA$ in \eqref{eq:tAdef} is chosen such that $E(\tA)$ is minimized. Note that the minimizer of $E(\tA)$ is the same as the minimizer of  \begin{align*}
    \log(E(\tA))=\frac{d}{2}\log(B)-\frac{\|\grad U(\tbz)\|^2}{2B}+\frac{\|\grad U(\tbz)\|^2}{2(A-\tA)}-\frac{d}{2}\log(\tA).
    \end{align*}
    It is easy to check that this is a strictly convex function of $\tA$ on the interval $(0,A)$, and hence the unique minimum is taken at a point where the derivative is zero. This point, denoted by $\tA_{\min}$, thus satisfies 
    \begin{align*}
        &\l.\frac{\partial \log(E(\tA))}{\partial \tA}\r|_{\tA=\tA_{\min}}=\frac{\|\grad U(\tbz)\|^2}{2(A-\tA)^2}-\frac{d}{2}\cdot \frac{1}{\tA}=0.
        \intertext{Hence by rearrangement }
        &(\tA-A)^2-(\|\grad U(\tbz)\|^2/d)\tA=0\\
        &\tA^2-(2A+\|\grad U(\tbz)\|^2/d)\tA + A^2 =0\\
        &\tA=\frac{(2A+\|\grad U(\tbz)\|^2/d)\pm \sqrt{(2A+\|\grad U(\tbz)\|^2/d)^2-4A^2}}{2}\\
        &\quad =A +\|\grad U(\tbz)\|^2/(2d)\pm \sqrt{\|\grad U(\tbz)\|^4/(4d^2) + A\|\grad U(\tbz)\|^2/d}.
    \end{align*}
    Only the solution with the $-$ sign falls in the interval $(0,A)$, hence it is the minimizer of $M/\tilde{Z}$.
\end{proof}

\begin{proof}[Proof of Proposition \ref{prop:rejectionsamplingcomplexity}]
The fact that the accepted samples are distributed according to $\joint(\bz_i|\btheta)$ and the formula \eqref{eq:cor2Eidef} about the expected number of samples until acceptance follows from Lemma \ref{lem:rejectionupperbnd}. 

Let $G:=\|\grad V_i(\tbz_i(\btheta))\|$, then
$\tA_i=1/\rho^2+m_i+G^2/(2d_i)-\sqrt{G^4/(4d_i^2) + G^2\l(1/\rho^2+m_i\r)/d_i}$, and we have
\begin{align}
\log(E_i)&=\frac{d_i}{2}\log\l(\frac{1/\rho^2+M_i}{1/\rho^2+m_i+G^2/(2d_i)-\sqrt{G^4/(4d_i^2) + G^2\l(1/\rho^2+m_i\r)/d_i} }\r)\nonumber
\\
&+\frac{G^2}{2}\l(\frac{1}{\sqrt{G^4/(4d_i^2) + G^2\l(1/\rho^2+m_i\r)/d_i}-G^2/(2d_i)}-\frac{1}{1/\rho^2+M_i}\r).\label{eq:Eisecondpart}
\end{align}
For the first part, notice that 
\begin{align*}&\log\l(\frac{1/\rho^2+M_i}{1/\rho^2+m_i+G^2/(2d_i)-\sqrt{G^4/(4d_i^2) + G^2\l(1/\rho^2+m_i\r)/d_i} }\r)\\
&=\log\l(\frac{1/\rho^2+M_i}{1/\rho^2+m_i}\r)+
\log\l(\frac{1/\rho^2+m_i}{1/\rho^2+m_i+G^2/(2d_i)-\sqrt{G^4/(4d_i^2) + G^2\l(1/\rho^2+m_i\r)/d_i}}\r)\\
&=\log\l(1+\frac{\rho^2 (M_i-m_i)}{1+\rho^2 m_i}\r)+\log\l(\frac{1}{1+c-\sqrt{c^2+2c}}\r),
\end{align*}
where $c=\frac{G^2/(2d_i)}{1/\rho^2+m_i}$.
Now using the fact that $\log(1+x)\le x$ for $x>0$, and that $\log\l(\frac{1}{1+c-\sqrt{c^2+2c}}\r)\le \sqrt{2c}$ for $c\ge 0$, it follows that we have
\begin{align*}
    &\frac{d_i}{2}\log\l(\frac{1/\rho^2+M_i}{1/\rho^2+m_i+G^2/(2d_i)-\sqrt{G^4/(4d_i^2) + G^2\l(1/\rho^2+m_i\r)/d_i} }\r)\\
    &\le \frac{d_i}{2}\l(\frac{\rho^2 (M_i-m_i)}{1+\rho^2 m_i}+  \frac{G}{\sqrt{d_i(1/\rho^2+m_i)}}\r).
\end{align*}
For the second part \eqref{eq:Eisecondpart},
\begin{align*}
&\frac{G^2}{2}\l(\frac{1}{\sqrt{G^4/(4d_i^2) + G^2\l(1/\rho^2+m_i\r)/d_i}-G^2/(2d_i)}-\frac{1}{1/\rho^2+M_i}\r)\\
&=\frac{d_i}{\sqrt{1+4\l(1/\rho^2+m_i\r)d_i/G^2}-1}-\frac{G^2}{2}\cdot \frac{1}{1/\rho^2+M_i}
\intertext{using the fact that $\frac{1}{\sqrt{1+x}-1}\le \frac{2}{\sqrt{x}}$ for $x\ge 2$, for $G\le \sqrt{2d_i(1/\rho^2+m_i)}$, we have}
&\le G\cdot \frac{\sqrt{d_i}}{\sqrt{1/\rho^2+m_i}} -\frac{G^2}{2}\cdot \frac{1}{1/\rho^2+M_i}.
\end{align*}
Hence by combining these terms, we obtain that for $G\le \sqrt{2d_i(1/\rho^2+m_i)}$,
\begin{equation*}
\label{eq:logEifinalbnd}
\log(E_i)\le \frac{d_i}{2}\frac{\rho^2 (M_i-m_i)}{1+\rho^2 m_i}+ G\cdot \frac{3}{2}\cdot  \frac{\sqrt{d_i}}{\sqrt{(1/\rho^2+m_i)}}-\frac{G^2}{2}\cdot \frac{1}{1/\rho^2+M_i}
\end{equation*}
Under the first part of assumption \eqref{eq:lessthan2bnd}, $\rho^2(2d_i (M_i-m_i)-m_i)\le 1$, one can check that 
$\frac{d_i}{2}\frac{\rho^2 (M_i-m_i)}{1+\rho^2 m_i}\le \frac{1}{4}$. Using the second part of \eqref{eq:lessthan2bnd}, $G\le \frac{2}{7}\cdot \frac{\sqrt{1/\rho^2+m_i}}{\sqrt{d_i}}$, it follows that
$G\cdot \frac{3}{2}\cdot  \frac{\sqrt{d_i}}{\sqrt{(1/\rho^2+m_i)}}-\frac{G^2}{2}\cdot \frac{1}{1/\rho^2+M_i}\le \log(2)-\frac{1}{4}$, so $\log(E_i)\le \log(2)$ and our claim holds.
\end{proof}

\subsection{Proof of Theorems \ref{PROP:COMPCOMPLEXITYTV_SINGLE} and \ref{prop:compcomplexityTV}}
\label{subsec:mixing_time_bounds_TV}

The next two lemmas will be used for obtaining our total variation distance convergence rates.
\begin{lemma}\label{lem:grad2upperbnd}
Suppose that $U:\R^d\to \R$ is continuously differentiable and $M$-gradient-Lipschitz. Then for every $\bx\in \R^d$, we have
\[\|\grad U(\bx)\|^2\le 2M (U(\bx)-\inf_{\bx\in \R^d} U(\bx)).\]
\end{lemma}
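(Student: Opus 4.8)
The plan is to use the classical ``descent lemma'' for smooth functions, which is the standard route for this well-known inequality (sometimes called the Polyak--Łojasiewicz-type bound that holds for all smooth functions, not just convex ones). The key observation is that $M$-gradient-Lipschitzness alone is enough; no convexity is needed.

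First I would establish the descent lemma: for any $\bx,\by\in\R^d$,
\[
U(\by)\le U(\bx)+\inner{\grad U(\bx)}{\by-\bx}+\frac{M}{2}\nr{\by-\bx}^2.
\]
This follows by writing $U(\by)-U(\bx)-\inner{\grad U(\bx)}{\by-\bx}=\int_0^1 \inner{\grad U(\bx+t(\by-\bx))-\grad U(\bx)}{\by-\bx}\,\mrd t$, applying Cauchy--Schwarz inside the integral, and bounding $\nr{\grad U(\bx+t(\by-\bx))-\grad U(\bx)}\le Mt\nr{\by-\bx}$ by the gradient-Lipschitz assumption; integrating $Mt$ over $[0,1]$ gives the factor $M/2$.

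Next I would exploit the freedom in $\by$ by choosing the ``gradient step'' $\by=\bx-\frac{1}{M}\grad U(\bx)$. Substituting this into the descent lemma collapses the right-hand side to
\[
U(\by)\le U(\bx)-\frac{1}{M}\nr{\grad U(\bx)}^2+\frac{1}{2M}\nr{\grad U(\bx)}^2=U(\bx)-\frac{1}{2M}\nr{\grad U(\bx)}^2.
\]
Finally, since $U(\by)\ge \inf_{\bx\in\R^d}U(\bx)$ trivially, I would rearrange this to obtain $\nr{\grad U(\bx)}^2\le 2M\l(U(\bx)-\inf_{\bx\in\R^d}U(\bx)\r)$, which is exactly the claim.

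There is essentially no hard part here: the only mild care needed is making sure the descent lemma is invoked in its full (non-convex) generality, so that the inequality is valid for every $\bx$ regardless of whether $U$ attains its infimum or is convex. Because we only use $U(\by)\ge\inf U$, the argument goes through even if the infimum is not attained, so no compactness or coercivity assumption is required.
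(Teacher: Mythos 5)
Your proof is correct and follows essentially the same route as the paper: the paper's argument is exactly the descent-lemma computation carried out inline at the gradient step $\bx'=\bx-\grad U(\bx)/M$ (via the integral representation of $U(\bx)-U(\bx')$ and the Lipschitz bound on the remainder), followed by $U(\bx')\ge \inf U$. The only difference is presentational—you state the descent lemma as a standalone inequality before substituting, whereas the paper fuses the two steps—and your remark that attainment of the infimum is not needed is likewise implicit in the paper's argument.
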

\begin{proof}
    Let $\bx'=\bx-\grad U(\bx)/M$, then we have
    \begin{align*}U(\bx)-U(\bx')&=\int_{t=0}^{1} \inner{\grad U(\bx+t(\bx'-\bx))}{\bx-\bx'}\mathrm{d}t\\
    &=\inner{\grad U(\bx)}{\bx-\bx'}+\int_{t=0}^{1} \inner{\grad U(\bx+t(\bx'-\bx))-\grad U(\bx)}{\bx-\bx'}\mathrm{d}t\\
    \intertext{using the $M$-gradient Lipschitz property}
    &\ge \inner{\grad U(\bx)}{\bx-\bx'}-\int_{t=0}^{1} Mt\|\bx-\bx'\|^2\mathrm{d}t\\
    &\ge \frac{\|\grad U(\bx)\|^2}{2M},
    \end{align*}
    hence the result.
\end{proof}

\begin{lemma}\label{lem:Crho}
Suppose that Assumptions \asszero, \asstwo, \assfive, \assseven\  and $\mathrm{det}\l(\sum_{i=1}^b m_i\B{A}_i^\top\B{A}_i\r)>0$ hold.
Let $\bthetastar$ be the unique minimizer of $U(\btheta)=\sum_{i=1}^{b} U_i(\B{A}_i\btheta)$, and \[\nu(\btheta)=\mathcal{N}\l(\btheta;\bthetastar, \l(\sum_{i=1}^{b}M_i\B{A}_i^\top\B{A}_i\r)^{-1}\r).\] 
If $b=1$, $d=d_1$, and $\B{A}_1$ is of full rank, 
then for any $\rho>0$, we have $\frac{\nu(\btheta)}{\pi_{\rho}(\btheta)}\le C_{\rho}$ for every $\btheta\in \R^d$, where
\begin{align}
C_{\rho}&:=(1+\rho^2 M_1)^{d/2}\cdot \l(\frac{M_1}{m_1}\r)^{\frac{d}{2}}.\label{eq:Crho_single}
\end{align}
More generally, for multiple splitting, for $\rho^2\le \frac{1}{6 \sigma^2_U}$, we have $\frac{\nu(\btheta)}{\pi_{\rho}(\btheta)}\le C_{\rho}$ for every $\btheta\in \R^d$, where
\begin{align}
C_{\rho}&:=\exp\pr{d\sigma^2_U + \rho^4(2+d)\sigma_U^4}
      \cdot\prod_{i=1}^{b}(1+\rho^2 M_i)^{d_i/2}\cdot \frac{\mathrm{det}\l(\sum_{i=1}^b M_i\B{A}_i^\top\B{A}_i\r)^{1/2}}{\mathrm{det}\l(\sum_{i=1}^b m_i\B{A}_i^\top\B{A}_i\r)^{1/2}},\label{eq:Crho}
      \end{align}
with $\sigma_U^2$ defined as in \eqref{eq:sigma2def}.
\end{lemma}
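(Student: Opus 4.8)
The plan is to bound the density ratio pointwise by writing out both densities explicitly. Set $\B{S}_M := \sum_{i=1}^b M_i \B{A}_i^T\B{A}_i$ and $\B{S}_m := \sum_{i=1}^b m_i \B{A}_i^T\B{A}_i$, so that $\nu(\btheta) = (2\pi)^{-d/2}\det(\B{S}_M)^{1/2}\exp\l(-\tfrac12(\btheta-\bthetastar)^T\B{S}_M(\btheta-\bthetastar)\r)$ while $\pi_{\rho}(\btheta) = Z_{\pi_{\rho}}^{-1}\exp(-U^{\rho}(\btheta))$. Hence
\[
\frac{\nu(\btheta)}{\pi_{\rho}(\btheta)} = \frac{\det(\B{S}_M)^{1/2}}{(2\pi)^{d/2}}\, Z_{\pi_{\rho}}\, \exp\l(-\tfrac12(\btheta-\bthetastar)^T\B{S}_M(\btheta-\bthetastar) + U^{\rho}(\btheta)\r),
\]
and the whole task reduces to (i) controlling the $\btheta$-dependent exponent uniformly in $\btheta$, and (ii) bounding the prefactor $Z_{\pi_{\rho}}\det(\B{S}_M)^{1/2}$. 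The key structural observation is that the precision matrix $\B{S}_M$ of the initialization $\nu$ is chosen precisely so that the quadratic in $-\log\nu$ cancels the smoothness upper bound on $U$.

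For (i), I first use smoothness \asstwo\ together with centering \assseven\ ($\grad U_i(\B{A}_i\bthetastar)=\B{0}$) in a second-order Taylor expansion to obtain $U(\btheta)\le U(\bthetastar)+\tfrac12(\btheta-\bthetastar)^T\B{S}_M(\btheta-\bthetastar)$. Then I apply the lower bound $\underline B(\btheta)\le U(\btheta)-U^{\rho}(\btheta)$ of Lemma \ref{lem:UrhoUratiobnd}, which yields $U^{\rho}(\btheta)\le U(\btheta)-\underline B(\btheta)$; dropping the nonnegative gradient term inside $\underline B$ gives $-\underline B(\btheta)\le \sum_{i=1}^b \tfrac{d_i}{2}\log(1+\rho^2 M_i)$. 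Combining the two bounds, the quadratic in $\btheta$ disappears:
\[
-\tfrac12(\btheta-\bthetastar)^T\B{S}_M(\btheta-\bthetastar)+U^{\rho}(\btheta)\le U(\bthetastar)+\sum_{i=1}^b \tfrac{d_i}{2}\log(1+\rho^2 M_i).
\]

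For (ii), I bound $Z_{\pi_{\rho}}e^{U(\bthetastar)}$ in two pieces. Strong convexity \assfive\ gives $\grad^2 U\succeq \B{S}_m$ (equation \eqref{eq:Uhesslwbnd}), hence $U(\btheta)\ge U(\bthetastar)+\tfrac12(\btheta-\bthetastar)^T\B{S}_m(\btheta-\bthetastar)$ and therefore $Z_{\pi}e^{U(\bthetastar)}\le (2\pi)^{d/2}\det(\B{S}_m)^{-1/2}$. In the single-splitting case ($b=1$, $\B{A}_1$ full rank) Lemma \ref{lem:ratioofnormalizingconstants} gives $Z_{\pi_{\rho}}=Z_{\pi}$ exactly, and assembling the $\det$-ratio, the factor $(M_1/m_1)^{d/2}$, and $(1+\rho^2 M_1)^{d/2}$ produces precisely \eqref{eq:Crho_single} for any $\rho>0$. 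In the general case I write $Z_{\pi_{\rho}}=\tfrac{Z_{\pi_{\rho}}}{Z_{\pi}}Z_{\pi}$ and invoke the ratio bound of Lemma \ref{lem:ratioofnormalizingconstants}, namely $\log(Z_{\pi}/Z_{\pi_{\rho}})\ge -\E_{\pi}(\overline B)-\rho^4(2+d)\sigma_U^4$, together with $\E_{\pi}(\overline B)=\tfrac{\rho^2}{2}\E_{\pi}(\beta^2)\le \tfrac{\rho^2}{2}d\sigma_U^2$ from Lemma \ref{lem:momgenbeta2} (applicable since $\rho^2\le \tfrac{1}{6\sigma_U^2}$). An elementary bound then replaces $\tfrac{\rho^2}{2}d\sigma_U^2$ by $d\sigma_U^2$, and collecting the factors yields \eqref{eq:Crho}.

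I expect the main difficulty to be purely bookkeeping rather than any genuine analytic obstacle, since all the substantive estimates are already available in Lemmas \ref{lem:UrhoUratiobnd}, \ref{lem:momgenbeta2} and \ref{lem:ratioofnormalizingconstants}. The delicate points are: ensuring the $\btheta$-dependence cancels \emph{exactly} (which is what dictates the specific covariance $\B{S}_M^{-1}$ of $\nu$), and combining the normalizing-constant ratio from Lemma \ref{lem:ratioofnormalizingconstants} with the correct sign so that the surviving factor $\det(\B{S}_M)^{1/2}/\det(\B{S}_m)^{1/2}$ appears with the intended exponent. Beyond that, the single- versus multiple-splitting dichotomy is handled by the fact that $Z_{\pi_{\rho}}=Z_{\pi}$ collapses the normalizing-constant argument in the former case while the moment-generating-function machinery of Lemma \ref{lem:momgenbeta2} is needed in the latter.
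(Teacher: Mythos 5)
Your proposal is correct and takes essentially the same route as the paper's own proof: the lower bound $\underline{B}(\btheta)\le U(\btheta)-U^{\rho}(\btheta)$ from Lemma \ref{lem:UrhoUratiobnd} with the gradient term dropped, the second-order Taylor bound $U(\btheta)\le U(\bthetastar)+\tfrac{1}{2}(\btheta-\bthetastar)^T\l(\sum_{i=1}^b M_i\B{A}_i^T\B{A}_i\r)(\btheta-\bthetastar)$ (via \asstwo\ and \assseven) cancelling the Gaussian quadratic of $\nu$, the strong-convexity bound $Z_{\pi}\econst^{U(\bthetastar)}\le (2\pi)^{d/2}\det\l(\sum_{i=1}^b m_i\B{A}_i^T\B{A}_i\r)^{-1/2}$, and the normalizing-constant control from Lemma \ref{lem:ratioofnormalizingconstants} (exact equality $Z_{\pi_\rho}=Z_{\pi}$ when $b=1$, the $\E_{\pi}(\overline{B})$-bound with Lemma \ref{lem:momgenbeta2} otherwise). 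Writing the density ratio first and bounding its exponent uniformly, instead of lower-bounding $\pi_{\rho}(\btheta)$ pointwise and then comparing with $\nu(\btheta)$, is only a cosmetic rearrangement of the same argument.
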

\begin{proof}
Let $U^{\rho}$ be defined as in \eqref{eq:Urhodef}. By \eqref{eq:proof_theorem_2_2_needed_for_Lemma_26} and \eqref{eq:proof_theorem_2_4_needed_for_Lemma_26}, we have
    \begin{align}
    \exp(-U^{\rho}(\btheta))
    &\le \exp\l(-U(\btheta)\r)\cdot 
    \prod_{i=1}^{b}\frac{1}{\l(1+\rho^2 m_i\r)^{d_i/2}} \cdot \exp\l(\sum_{i=1}^b \frac{\rho^2 \nr{\grad U_i(\B{A}_i\btheta)}^2}{2 (1+\rho^2 m_i)}\r),\nonumber\\
    \exp(-U^{\rho}(\btheta))
    &\ge \exp\l(-U(\btheta)\r)\cdot 
    \prod_{i=1}^{b}\frac{1}{\l(1+\rho^2 M_i\r)^{d_i/2}} \cdot \exp\l(\sum_{i=1}^b \frac{\rho^2 \nr{\grad U_i(\B{A}_i\btheta)}^2}{2 (1+\rho^2 M_i)}\r)\label{eq:Urhoulower}.
    \end{align}
    Using \eqref{eq:Urhoulower}, we have
    \begin{align}
    \nonumber
    \pi_{\rho}(\btheta)&=\frac{\exp(-U^{\rho}(\btheta))}{Z_{\rho}}\\
    \nonumber
       &\ge \frac{\exp(-U(\btheta))}{Z_{\rho}} \cdot \frac{1}{\prod_{i=1}^{b}(1+\rho^2 M_i)^{d_i/2}}\\
    \label{eq:pirholowerbnd1}&\ge \frac{\exp\l(-U(\bthetastar)-\frac{1}{2} (\btheta-\bthetastar)^\top(\sum_{i=1}^bM_i\B{A}_i^\top\B{A}_i)(\btheta-\bthetastar)\r)}{Z_{\rho}}\cdot \frac{1}{\prod_{i=1}^{b}(1+\rho^2 M_i)^{d_i/2}}.
\end{align}
To lower bound $\pi_{\rho}(\btheta)$, we need to upper bound $Z_{\rho}$. Using Lemma  \ref{lem:ratioofnormalizingconstants}, we can do this based on an upper bound on $Z$. Using \assfive, we have 
\begin{align}
    Z &= \int_{\mathbb{R}^d} \exp\pr{- \sum_{i=1}^b U_i(\B{A}_i\btheta)}\mathrm{d}\btheta \nonumber \\
    &\leq \exp\pr{- \sum_{i=1}^b U_i(\B{A}_i\bthetastar)}\int_{\mathbb{R}^d} \exp\pr{- \sum_{i=1}^b\frac{m_i}{2} \nr{\B{A}_i\btheta - \B{A}_i\bthetastar}^2}\mathrm{d}\btheta \nonumber \\
    &= \exp\pr{- U(\bthetastar)}(2\pi)^{d/2}\mathrm{det}\l(\sum_{i=1}^b m_i\B{A}_i^\top\B{A}_i\r)^{-1/2}.  \label{eq:Zupperbnd}
\end{align}
Note that the proposal density is of the form
    \begin{align}
    \nu(\btheta)&=\mathcal{N}\pr{\btheta;\bthetastar, \pr{\sum_{i=1}^{b}M_i\B{A}_i^\top\B{A}_i}^{-1}}\nonumber\\
    &=\exp\l(-\frac{1}{2} (\btheta-\bthetastar)^\top(\sum_{i=1}^bM_i\B{A}_i^\top\B{A}_i)(\btheta-\bthetastar)\r)\frac{\mathrm{det}\l(\sum_{i=1}^b M_i\B{A}_i^\top\B{A}_i\r)^{1/2}}{(2\pi)^{d/2}}\label{eq:nubthetadens}.
    \end{align}

Under the assumption that $b=1$, $d_1=d$ and $\B{A}_1$ is of full rank, we have $Z_{\rho}=Z$ by Lemma \ref{lem:ratioofnormalizingconstants}. The claim of the lemma in this single splitting case follows by comparing \eqref{eq:nubthetadens}, \eqref{eq:pirholowerbnd1} and using \eqref{eq:Zupperbnd}.

More generally, from Lemma \ref{lem:ratioofnormalizingconstants}, it follows for $\rho^2 \leq \frac{1}{6 \sigma^2_U}$ that
\begin{align}
    Z_{\rho} &\leq Z \exp\pr{\mathbb{E}_{\pi}(\bar{B}(\btheta)) + \rho^4(2+d)\sigma_U^4} \nonumber \\
    &\leq \exp\pr{- U(\bthetastar)}(2\pi)^{d/2}\mathrm{det}\l(\sum_{i=1}^b m_i\B{A}_i^\top\B{A}_i\r)^{-1/2} \exp\pr{d\sigma^2_U + \rho^4(2+d)\sigma_U^4}, \nonumber
\end{align}
where, in the last line, we used the fact that $\mathbb{E}_{\pi}(\bar{B}(\btheta)) \leq d \sigma^2_U$, see Lemma \ref{lem:momgenbeta2}. The claim of the lemma in this multiple splitting case now follows by comparing \eqref{eq:nubthetadens} and \eqref{eq:pirholowerbnd1}, and using the above upper bound on $Z_{\rho}$.
\end{proof}

Now we are ready to prove our convergence bound in total variation distance.

\begin{proof}[Proof of Theorem \ref{PROP:COMPCOMPLEXITYTV_SINGLE}]
 From Propositions \ref{proposition:2} and \ref{prop:IUUrho}, a sufficient condition to satisfy $\|\pi_{\rho} - \pi\|_{\mathrm{TV}}\le \epsilon/2$ is to have 
    $$
    \rho^2 \leq \frac{\epsilon}{dM_1}.
    $$
    From Corollary \ref{cor:spectralgapSGS}, we know that the absolute spectral gap of SGS satisfies that $\gamma^*\ge K_{\mathrm{SGS}}$ (defined in \eqref{eq:kappasgsdef}), and Proposition \ref{prop:tvfromspectralgap} implies that
\begin{align*} 
\nr{\nu \B{P}_{\mathrm{SGS}}^{t} - \pi_{\rho}}_{\mathrm{TV}}
&\le \sqrt{\E_{\pi_{\rho}}
\l[\l(\frac{\mathrm{d} \nu}{\mathrm{d} \pi_{\rho}}\r)^2\r]-1}
\cdot (1-\gamma^*)^{t}\\
&\le \sqrt{\E_{\nu}
\l(\frac{\mathrm{d} \nu}{\mathrm{d} \pi_{\rho}}\r)}\cdot (1-K_{\mathrm{SGS}})^{t}\\
&\le \sqrt{C_{\rho}}(1-K_{\mathrm{SGS}})^{t},
\end{align*}
where in the last step we have used Lemma \ref{lem:Crho} ($C_{\rho}$ is defined as in Equation \ref{eq:Crho_single}).
By some algebra, using the definition of $t_{\mathrm{mix}}(\epsilon;\nu)$, and the fact that $\frac{1}{\log(1/(1-x))}\le \frac{1}{x}$ for $0<x<1$, the above bound implies that
\begin{equation*}
\nr{\nu \B{P}_{\mathrm{SGS}}^{t(\epsilon)} - \pi_{\rho}}_{\mathrm{TV}}\le \frac{\epsilon}{2},
\end{equation*}
with the choice
\begin{equation}
    t \ge  \frac{\log\l(\frac{2}{\epsilon}\r) + C/2}{K_{\mathrm{SGS}}}. \label{eq:t_epsilon_ini_1}
\end{equation} 
Here
\[C=\frac{5d}{8} + \frac{d}{2}\log\l(\frac{ M_1}{m_1}\r).
\]
With the above choice for $\rho^2$ and the condition \eqref{eq:t_epsilon_ini_1}, the claim of Theorem \ref{PROP:COMPCOMPLEXITYTV_SINGLE} then follows by the triangle inequality.
\end{proof}

\begin{proof}[Proof of Theorem \ref{prop:compcomplexityTV}]
From \eqref{eq:proof_theorem_2}, we have for $\rho^2 \leq \frac{1}{6 \sigma^2_U}$,
\begin{equation*}
\|\pi_{\rho} - \pi\|_{\mathrm{TV}}\le \rho^2 \frac{1}{2}\sum_{i=1}^b d_i M_i + \rho^4 \sigma_{U}^4\pr{2+\frac{3}{2}d}.
\end{equation*}
Then, a sufficient condition to satisfy $\|\pi_{\rho} - \pi\|_{\mathrm{TV}}\le \epsilon/2$ is to have
\begin{align*}
    \rho^2 \frac{1}{2}\sum_{i=1}^b d_i M_i + \rho^4 \sigma_{U}^4\pr{2+\frac{3}{2}d} &\le \frac{\epsilon}{2} \\
    \rho^4 \sigma_{U}^4\pr{2+\frac{3}{2}d} + \frac{1}{2}\rho^2 \sum_{i=1}^b d_i M_i - \frac{\epsilon}{2} &\le 0 \\
    R^2 \sigma_{U}^4\pr{2+\frac{3}{2}d} + R \frac{1}{2}\sum_{i=1}^b d_i M_i - \frac{\epsilon}{2} &\le 0, \ \text{ with $R = \rho^2$.}
\end{align*}
This inequality is satisfied under the condition \begin{equation*}\label{eq:rho2TVquadratic_ini}
    \rho^2  \le \frac{\displaystyle\sum_{i=1}^bd_i M_i\pr{\sqrt{1 + 8\epsilon \sigma_U^4\pr{2+\frac{3}{2}d} \pr{\displaystyle\sum_{i=1}^bd_i M_i}^{-2}}-1}}{4\sigma_U^4\pr{2+\frac{3}{2}d}} \wedge \frac{1}{6 \sigma^2_U}.
\end{equation*}
From Corollary \ref{cor:spectralgapSGS}, we know that the absolute spectral gap of SGS satisfies that $\gamma^*\ge K_{\mathrm{SGS}}$ (defined in \eqref{eq:kappasgsdef}), and Proposition \ref{prop:tvfromspectralgap} implies that
\begin{align*} 
\nr{\nu \B{P}_{\mathrm{SGS}}^{t} - \pi_{\rho}}_{\mathrm{TV}}
&\le \sqrt{\E_{\pi_{\rho}}
\l[\l(\frac{\mathrm{d} \nu}{\mathrm{d} \pi_{\rho}}\r)^2\r]-1}
\cdot (1-\gamma^*)^{t}\\
&\le \sqrt{\E_{\nu}
\l(\frac{\mathrm{d} \nu}{\mathrm{d} \pi_{\rho}}\r)}\cdot (1-K_{\mathrm{SGS}})^{t}\\
&\le \sqrt{C_{\rho}}(1-K_{\mathrm{SGS}})^{t},
\end{align*}
where in the last step we have used Lemma \ref{lem:Crho} ($C_{\rho}$ is defined as in \eqref{eq:Crho}). 
Again, by some algebra, using the definition of $t_{\mathrm{mix}}(\epsilon;\nu)$, and the fact that $\frac{1}{\log(1/(1-x))}\le \frac{1}{x}$ for $0<x<1$, the above bound implies that
\begin{equation*}
\nr{\nu \B{P}_{\mathrm{SGS}}^{t(\epsilon)} - \pi_{\rho}}_{\mathrm{TV}}\le \frac{\epsilon}{2},
\end{equation*}
with the choice
\begin{equation}
    t \ge  \frac{\log\l(\frac{2}{\epsilon}\r) + C/2}{K_{\mathrm{SGS}}}. \label{eq:t_epsilon_ini}
\end{equation} 
Here
\[C=d\sigma^2_U + \rho^4(2+d)\sigma_U^4 + \frac{17}{32} \sum_{i=1}^{b}d_i + \frac{1}{2}\log\l(\frac{\mathrm{det}\l(\sum_{i=1}^b M_i\B{A}_i^\top\B{A}_i\r)}{\mathrm{det}\l(\sum_{i=1}^b m_i\B{A}_i^\top\B{A}_i\r)}\r).
\]
With the above choice for $\rho^2$ and the condition \eqref{eq:t_epsilon_ini}, the claim of Theorem \ref{prop:compcomplexityTV} then follows by the triangle inequality.
\end{proof}

\subsection{Additional Details for the Toy Gaussian Example}
\label{appendix:toy_Gaussian_example_2}

This section gives additional details concerning the results depicted on Figure \ref{fig:toy_gaussian_1}.
For each splitting strategy introduced in Section \ref{subsec:toy_Gaussian_model}, we give explicit formulas for the bounds on both TV and 1-Wasserstein distances.

\subsubsection{Splitting Strategy 1}

Starting from an initial value $\theta^{[0]} \sim \nu$, we now show the explicit form of the Markov transition kernel $\nu P_{\mathrm{SGS}}^t$ after $t$ iterations.
To this purpose, we take advantage that the $\theta$-chain corresponds in this case to an auto-regressive process of order 1.
Indeed, the conditional distributions of $\theta$ and $\bz_{1:b}$ writing
\begin{align*}
    \joint(\bz_i|\theta) &= \mathcal{N}\pr{\bz_i;\dfrac{\mu\rho^2+\theta\sigma^2}{\sigma^2+\rho^2},\dfrac{\rho^2\sigma^2}{\rho^2 + \sigma^2}}, \forall i \in [b]\\
    \joint(\theta|\bz_{1:b}) &= \mathcal{N}\pr{\theta;\bar{z},\dfrac{\rho^2}{b}}, \text{ where } \bar{z} \coloneqq \dfrac{1}{b}\sum_{i=1}^b \bz_i,
\end{align*}
we have 
\begin{align*}
    P_{\mathrm{SGS}} \coloneqq \mathrm{Pr}\pr{\theta^{[t]} | \theta^{[t-1]}} = \mathcal{N}\pr{\theta^{[t]};\dfrac{\sigma^2}{\sigma^2+\rho^2}\theta^{[t-1]} + \dfrac{\rho^2}{\sigma^2+\rho^2}\mu, \dfrac{2\rho^2\sigma^2 + \rho^4}{b(\rho^2 + \sigma^2)}}.
\end{align*}
By a straightforward induction, it follows that the Markov transition kernel $\nu P^t$ after $t$ iterations and with initial distribution $\nu$ has the form
\begin{align*}
    &\nu P_{\mathrm{SGS}}^t \coloneqq \mathrm{Pr}\pr{\theta^{[t]} | \theta^{[0]} \sim \nu} \nonumber \\ &=\mathcal{N}\pr{\theta^{[t]};\pr{\dfrac{\sigma^2}{\sigma^2+\rho^2}}^t \theta^{[0]} + \dfrac{\rho^2\mu}{\sigma^2+\rho^2}\sum_{i=0}^{t-1}\pr{\dfrac{\sigma^2}{\sigma^2+\rho^2}}^i, \dfrac{2\rho^2\sigma^2 + \rho^4}{b(\rho^2 + \sigma^2)}\sum_{i=0}^{t-1}\pr{\dfrac{\sigma^4}{(\sigma^2+\rho^2)^2}}^i}.
\end{align*}

\subsubsection{Splitting Strategy 2}

Similar calculus as in the above section can be undertaken by simply replacing $\rho^2$ by $\rho^2 b$.

\vskip 0.2in
\bibliography{bibliography.bib}

\end{document}